\documentclass[11pt]{article}
\usepackage{palatino}
\usepackage{amssymb,amsfonts,amsmath,amsthm,bm}
\usepackage{algorithm2e,algorithmic,framed,url}
\usepackage{graphicx}
\usepackage{array}

\setlength{\textwidth}{6.5in}
\setlength{\oddsidemargin}{-0in}
\setlength{\evensidemargin}{-0in}
\setlength{\topmargin}{-0.5in}
\setlength{\textheight}{9.0in}

\newcommand{\poly}{{\mathrm{poly}}}
\newcommand{\eps}{\varepsilon}
\newcommand{\1}{\uppercase\expandafter{\romannumeral1}}
\newcommand{\2}{\uppercase\expandafter{\romannumeral2}}

\long\def\symbolfootnote[#1]#2{\begingroup%
\def\thefootnote{\fnsymbol{footnote}}\footnote[#1]{#2}\endgroup}

\newcommand{\Expect}[1]{\mbox{}{\mathbb{E}}\left[#1\right]}

\newcommand{\FNorm }[1]{\mbox{}\|#1\|_\mathrm{F}  }
\newcommand{\FNormS}[1]{\mbox{}\|#1\|_\mathrm{F}^2}

\newcommand{\TNorm }[1]{\mbox{}\|#1\|_2  }
\newcommand{\TNormS}[1]{\mbox{}\|#1\|_2^2}

\newcommand{\XNorm }[1]{\mbox{}\|#1\|_{\xi}  }
\newcommand{\XNormS}[1]{\mbox{}\|#1\|_{\xi}^2}

\newcommand{\pinv}[1]{ {#1}^\dagger}

\newtheorem{theorem}{\bf Theorem}[]
\newtheorem{lemma}[theorem]{Lemma}
\newtheorem{fact}[theorem]{Fact}
\newtheorem{remark}[theorem]{Remark}
\newtheorem{definition}[theorem]{Definition}

\newtheorem{corollary}[theorem]{Corollary}

\newcommand{\transp}{^{\textsc{T}}}

\newcommand{\mat}[1]{{\ensuremath{\bm{\mathrm{#1}}}}}

\def\rank{\hbox{\rm rank}}

\def\b{{\mathbf b}}
\def\e{{\mathbf e}}

\def\v{{\mathbf v}}

\def\matA{\mat{A}}
\def\matB{\mat{B}}
\def\matC{\mat{C}}
\def\matD{\mat{D}}
\def\matE{\mat{E}}
\def\matF{\mat{F}}
\def\matG{\mat{G}}
\def\matH{\mat{H}}
\def\matI{\mat{I}}

\def\matK{\mat{K}}
\def\matL{\mat{L}}
\def\matN{\mat{N}}
\def\matM{\mat{M}}

\def\matP{\mat{P}}
\def\matQ{\mat{Q}}
\def\matR{\mat{R}}
\def\matS{\mat{S}}
\def\matT{\mat{T}}
\def\matU{\mat{U}}
\def\matV{\mat{V}}
\def\matW{\mat{W}}
\def\matX{\mat{X}}
\def\matY{\mat{Y}}
\def\matZ{\mat{Z}}
\def\matOmega{\mat{\Omega}}
\def\matSig{\mat{\Sigma}}

\def\matOmega{\mat{\Omega}}
\def\matSig{\mat{\Sigma}}

\def\matPsi{\mat{\Psi}}
\def\matDelta{\mat{\Delta}}
\def\matXi{\mat{\Xi}}

\DeclareMathSymbol{\Prob}{\mathbin}{AMSb}{"50}
\newcommand\remove[1]{}

\def\nnz{{ \rm nnz }}

\def\math#1{$#1$}

\def\frac#1#2{{#1\over #2}}

\def\eqan#1{\begin{eqnarray*}
#1
\end{eqnarray*}}

\DeclareMathSymbol{\R}{\mathbin}{AMSb}{"52}

\def\argmin{\mathop{\hbox{argmin}}\limits}

\def\x{{\mathbf x}}
\def\y{{\mathbf y}}

\def\a{{\mathbf a}}
\def\b{{\mathbf b}}

\def\ceil#1{{\left\lceil\,#1\,\right\rceil}}

\begin{document}

\title{Optimal Principal Component Analysis in Distributed and Streaming Models}

\author{
Christos Boutsidis\\ New York, New York \\ christos.boutsidis@gmail.com
\and
David P. Woodruff\\IBM Research \\ Almaden, California \\ dpwoodru@us.ibm.com
\and
Peilin Zhong \\ Institute for Interdisciplinary Information Sciences \\ Tsinghua University, Beijing, China \\ zpl12@mails.tsinghua.edu.cn
}

\date{}
\maketitle
\begin{abstract}
\noindent
We study the Principal Component Analysis (PCA) problem
in the distributed and streaming models of computation. Given a matrix
$\matA \in \R^{m \times n},$ a rank parameter $k < \rank(\matA)$,
and an accuracy parameter $0 < \varepsilon < 1$, we want to output an $m \times k$ orthonormal matrix $\matU$ for which
$$ \FNormS{\matA -  \matU \matU\transp \matA} \le
\left(1 + \varepsilon \right)
\cdot
\FNormS{\matA - \matA_k},
$$
where $\matA_k \in \R^{m \times n}$ is the best rank-$k$ approximation to $\matA$. We show the following.
\begin{enumerate}
\item In the arbitrary partition model of Kannan et al. (COLT 2014), each
of $s$ machines holds a matrix $\matA^i$ and $\matA = \sum_{i=1}^s \matA^i$. Each machine
should output $\matU$. Kannan et al. achieve $O(skm/\eps) + \poly(sk/\eps)$ words
(of $O(\log(nm))$ bits)
communication.
We obtain the improved bound of $O(skm) + \poly(sk/\eps)$ words,
and show an optimal $\Omega(skm)$ lower bound up to low order terms.
This resolves an open question for high precision PCA.
A $\poly(\eps^{-1})$ dependence is known to be required, but we separate this
dependence from $m$.

\item We bypass the above lower bound when $\matA$ is $\phi$-{\it sparse} in each column and
each server receives a subset of columns. Here we obtain
an $O(sk \phi/\eps)+\poly(sk/\eps)$ word protocol.
Our communication is independent of the matrix dimensions, and achieves the guarantee
that each server, in addition to outputting $\matU$,
outputs a subset of $O(k/\eps)$ columns of $\matA$ containing a $\matU$ in its span
(that is, we solve distributed column subset selection).
We show a matching $\Omega(sk\phi/\eps)$ lower bound for
distributed column subset selection.
Achieving our communication bound when $\matA$ is sparse but not
sparse in each column, is impossible.

\item In the streaming model
in which the columns arrive one at a time, an algorithm of Liberty (KDD, 2013) with
an improved analysis by Ghashami and Phillips (SODA, 2014) shows
$O(km/\eps)$ ``real numbers'' of space is achievable
in a single pass, which we first
improve to an $O(km/\eps) + \poly(k/\eps)$ word space upper bound. This
almost matches a known $\Omega(km/\eps)$ bit lower bound of Woodruff (NIPS, 2014).
We show with two passes one can achieve $O(km) + \poly(k/\eps)$ words of space and
(up to the $\poly(k/\eps)$ term and
the distinction between words versus bits) this is optimal for any constant number of passes.

\item In turnstile streams, in which we receive entries of $\matA$ one at a time in an
arbitrary order, we show how to obtain a factorization of a $(1+\eps)$-approximate
rank-$k$ matrix using $O((m+n)k \eps^{-1})$ words of space. This improves the
$O((m+n \eps^{-2}) k \eps^{-2})$ bound of Clarkson and Woodruff (STOC 2009), and
matches their $\Omega((m+n) k \eps^{-1})$ word lower bound.
\end{enumerate}
Notably, our results do not depend on the condition number of $\matA$.

\end{abstract}
\thispagestyle{empty}
\newpage
\setcounter{page}{1}
%
%
%

\section{Introduction}
In distributed-memory computing systems, such as,
Hadoop~\cite{hadoop} or Spark~\cite{spark},
Principal Component Analysis (PCA) and the related Singular Value Decomposition (SVD)
of large matrices is becoming very challenging. Machine learning libraries implemented on top of such systems,
for example mahout~\cite{mahout} or mllib~\cite{mlllib},
provide distributed PCA implementations since PCA is often used as a building block for a learning algorithm.
PCA is useful for dimension reduction, noise removal, visualization, etc.
In all of these implementations, the bottleneck is the communication;
hence, the focus has been on minimizing the communication cost of the related algorithms,
and not the computational cost, which is the bottleneck in more traditional batch systems.

The data matrix corresponding to the dataset in hand, e.g., a term-document matrix representing a text collection,
or the Netflix matrix
representing user's ratings for different movies, could be distributed in many different ways~\cite{poulson2013elemental}.
In this paper, we focus on the following so-called arbitrary partition and column partition models.
In the arbitrary partition model, a matrix $\matA\in\R^{m\times n}$ is arbitrarily distributed among $s$ machines. Specifically,
$\matA=\sum_{i=1}^s \matA_i$
where $\matA_i\in\R^{m\times n}$ is held by machine $i$. Unless otherwise stated, we always assume $m\leq n$.
In the column partition model, a matrix $\matA\in \R^{m\times n}$ is distributed column-wise among $s < n$ machines:
$
\matA =
\begin{pmatrix}
\matA_1 & \matA_2 & \dots & \matA_s
 \end{pmatrix};
$
here, for $i=1:s,$ $\matA_i$ is an $m \times w_i$ column
submatrix of $\matA$ with $\sum_i w_i=n$.
Note that the column partition model is a special case of the arbitrary partition model.
Thus, it is desirable to prove upper bounds in the arbitrary partition model,
and lower bounds in the column partition model.
Both models have been adapted by traditional numerical linear algebra software libraries for distributed memory matrix computations~\cite{benson2013direct}.

%
A recent body of work~\cite{FSS13,BKLW14,Lib13, GP13,KVW14,SPS14}~(see Section~\ref{sec:related} for a comparison) has focused on designing algorithms which minimize the communication needed for each
machine to output an $m \times k$ matrix $\matU$ for which an $m \times k$ orthonormal matrix $\matU$ for which
$ \FNormS{\matA -  \matU \matU\transp \matA} \le  \left(1 + \varepsilon \right)
\cdot
\FNormS{\matA - \matA_k},$
where $\matA_k \in \R^{m \times n}$ is the best rank-$k$ approximation to $\matA$. Each machine should
output the same matrix $\matU$ which can then be used for downstream applications, such as clustering,
where one first projects the data to a lower dimensional subspace (see, e.g., \cite{BKLW14}). The protocol
should succeed with large constant probability. The model is such that each of the $s$ machines
communicates with one machine, called the ``coordinator'' (which we sometimes refer to as Server), but does not communicate with other machines.
This is known as the coordinator model and one can simulate arbitrary
point-to-point communication with a constant factor overhead in communication
together with an additive $O(\log (mn))$ bit overhead per message (see, e.g., \cite{pvz12}).

\subsection{Our Results}
The best upper bound
in the arbitrary partition model is $O((skm \eps^{-1}) + s \cdot \poly(k \eps^{-1}))$ words (of $O(\log(mn))$ bits)
communication \cite{KVW14}, while the only known lower bound for these problems is in the arbitrary partition
model and is $\Omega(skm)$ \cite{KVW14}. In high precision applications one may want to set $\eps$ to be
as small as possible and the leading order term of $O(skm \eps^{-1})$ is undesirable. A natural question is
whether there is an $O((skm) + s\cdot\poly(k \eps^{-1}))$ word protocol. We note that there is a fairly
simple known $\Omega(\eps^{-2})$ bit lower bound \cite{WZ16} (we present this in Lemma~\ref{lem:epslower} for
completeness),
so although one cannot achieve an $O(\log(1/\eps))$ dependence in
the communication as one would maybe expect given iterative algorithms
for regression with running time $O(\log(1/\eps))$ (see section 2.6 of \cite{w14} for an overview),
an $O((skm) + s\cdot\poly(k \eps^{-1}))$ would at least
separate the dependence of $\eps$ and $m$ and allowing for much smaller $\eps$ with
the same amount of communication.

There are many existing protocols using $O(skm \eps^{-1})$ words of communication~\cite{FSS13,BKLW14,Lib13, GP13,KVW14,SPS14}, and the protocols work in very different ways: that of \cite{FSS13,BKLW14} is based on coresets, while that of \cite{Lib13,GP13} is based on adapting a streaming algorithm to the communication setting, while that of \cite{KVW14} is based on sketching, and that of \cite{SPS14} is based on alternating minimization and is useful only under some assumptions on the condition number (see Section~\ref{sec:related} for a more detailed discussion of these protocols). It was
thus natural to assume there should be a lower bound of $\Omega(skm \eps^{-1})$.

%
%
Instead, we obtain a new upper bound in the arbitrary partition model
and a matching lower bound (up to lower order terms) in the column partition model.
\begin{theorem}[Restatement of Theorem~\ref{thmd2} and Theorem~\ref{thm:main}]\label{thm:arbitrary}

 Suppose matrix $\matA\in\R^{m\times n}$ is partitioned in the arbitrary-partition model (See Definition~\ref{def:model2}). For any $1\geq\eps>0$, there is an algorithm which on termination leaves the same orthonormal matrix $\matU\in\R^{m\times k}$ on each machine such that
$\FNormS{\matA-\matU\matU\transp\matA}\leq (1+\eps)\cdot \FNormS{\matA-\matA_k}$
holds with arbitrarily large constant probability. Further, the algorithm runs in polynomial time with total communication complexity $O(skm+s\cdot poly(k\eps^{-1}))$ words each containing $O(\log(smn\eps^{-1}))$ bits.

 If matrix $\matA\in\R^{m\times n}$ is distributed in the column-partition model (See Definition~\ref{def:model}), then for any positive constant $C\leq O(\poly(skm))$, the algorithm which can leave a orthonormal matrix $\matU\in\R^{m\times k}$ on each machine such that
    $\FNormS{\matA-\matU\matU\transp\matA}\leq C\cdot \FNormS{\matA-\matA_k}$
    holds with constant probability has at least $\Omega(skm \log(skm))$ bits of communication.

\end{theorem}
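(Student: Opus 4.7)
My plan is to attack the upper bound with a two-phase protocol that separates the dependence on $\eps$ from the dependence on $m$. In Phase~1 I spend $O(skm)$ words to obtain a shared $\matU_0\in\R^{m\times O(k)}$ whose column space contains a rank-$k$ constant-factor approximation of $\matA$. The coordinator broadcasts the seed of a shared $n\times O(k)$ sketching matrix $\matR$ (a CountSketch composed with a Gaussian or an SRHT suffices). Each server $i$ transmits $\matA_i\matR\in\R^{m\times O(k)}$, so the coordinator can form $\matA\matR=\sum_{i=1}^s\matA_i\matR$ and orthonormalize its columns to obtain $\matU_0$; by oblivious subspace embedding and affine embedding properties, $\col(\matA\matR)$ contains a rank-$k$ constant-factor approximation with arbitrarily large constant probability. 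The coordinator then broadcasts $\matU_0$, so that every server ends Phase~1 with the same $m\times O(k)$ matrix, at total cost $O(skm)$ words.

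Phase~2 must refine $\matU_0$ to a $(1+\eps)$-approximation using only $s\cdot\poly(k/\eps)$ additional words---in particular with no further $\Theta(m)$ or $\Theta(n)$ dependence per server. The idea is to reduce the refinement to a generalized low-rank problem whose inputs live in spaces of dimension $\poly(k/\eps)$. I would fix shared sketches $\matS\in\R^{t\times m}$ and $\matT\in\R^{n\times t}$ with $t=\poly(k/\eps)$ satisfying projection-cost-preserving and affine-embedding guarantees. Each server sends a constant number of $\poly(k/\eps)$-size summaries such as $\matS\matA_i\matT$ and $\matU_0^{\transp}\matA_i\matT$; the coordinator assembles $\matS\matA\matT$ and the sketched residual $\matS(\matA-\matU_0\matU_0^{\transp}\matA)\matT$. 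Since the residual has squared Frobenius norm within a constant factor of $\|\matA-\matA_k\|_F^2$, the projection-cost-preserving property yields a $(1+\eps)$-accurate solution to the sketched problem, which the coordinator can solve locally and broadcast as a small rotation plus correction applied to $\matU_0$, producing the final $\matU$ on every server.

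For the column-partition lower bound my plan is a reduction from a multi-player augmented-indexing style problem. Fix a packing $\{\matU^{(1)},\ldots,\matU^{(N)}\}$ of $m\times k$ orthonormal matrices pairwise $\Omega(1)$-far in the principal-angle metric with $\log N=\Theta(km)$; such a packing follows from a standard volume argument on the Grassmannian. For each server $i\in\{1,\ldots,s\}$ I build a column block $\matA_i\in\R^{m\times w_i}$ that encodes a chosen index $j_i\in\{1,\ldots,N\}$ together with noise columns calibrated so that (i)~the spectral contribution of $\matA_i$ dominates cross-block contributions along $\matU^{(j_i)}$, and (ii)~the union of the $s$ secret subspaces sits in the top-$k$ subspace of $\matA=[\matA_1\mid\cdots\mid\matA_s]$ up to a $(1+o(1))$ factor. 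Any $C$-approximate output $\matU$ must then be close in principal angles to the true top-$k$ subspace and, by inspecting the per-block residuals, must identify each $j_i$. A direct-sum and information-cost argument in the spirit of Bar-Yossef et al.\ then gives an $\Omega(skm)$ information lower bound on the transcript, and since each transmitted word carries $O(\log(skm))$ bits the communication lower bound in bits is $\Omega(skm\log(skm))$.

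The main technical obstacle I anticipate is Phase~2 of the upper bound: a priori there is no reason a $\poly(k/\eps)$-dimensional sketch combined only with the coarse $\matU_0$ should recover a $(1+\eps)$-approximate subspace, since the optimal subspace need not lie near $\col(\matU_0)$. Resolving this requires formulating the refinement as a sketched generalized regression problem whose optimum is preserved up to $(1+\eps)$, crucially using that the residual after Phase~1 already has squared norm $O(\|\matA-\matA_k\|_F^2)$. On the lower bound side the delicate point is calibrating the noise in each column block so that the top-$k$ subspace of $\matA$ factorizes across servers and is not dominated by cross-block interactions; handling this is what forces a carefully constructed packing rather than a naive Hadamard or Index reduction.
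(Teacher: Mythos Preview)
Your upper-bound plan has a structural gap that you yourself flag but do not resolve. After Phase~1 you hold $\matU_0\in\R^{m\times O(k)}$ whose span contains a rank-$k$ \emph{constant-factor} approximation. The $(1+\eps)$-optimal $k$-dimensional subspace need not lie in $\col(\matU_0)$, so any Phase~2 ``correction'' that moves outside $\col(\matU_0)$ must specify new $m$-dimensional directions, and there is no reason these corrections cost only $\poly(k/\eps)$ words. Your proposed Phase~2 summaries ($\matS\matA\matT$, $\matU_0^{\transp}\matA\matT$, and a sketched residual) live in $\poly(k/\eps)$-dimensional spaces, but the output of the coordinator's local computation must be mapped back into $\R^m$, and you never explain how that lift avoids another $\Theta(m)$-per-direction cost. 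The paper circumvents this by reversing the order: it first sketches on \emph{both} sides to $\tilde{\matA}=\matS\matA\matT$ of size $O(k/\eps^2)\times O(k/\eps^2)$, solves for the right singular vectors $\matV_{\tilde{\matA}_k}\in\R^{O(k/\eps^2)\times k}$ of the small matrix (all $\eps$-dependence resolved here using projection-cost-preserving sketches twice), and only then performs a single $m$-dimensional round in which each server sends $\matA_i\matT\matV_{\tilde{\matA}_k}\in\R^{m\times k}$. The point is that $\matT$ has $O(k/\eps^2)$ columns but $\matT\matV_{\tilde{\matA}_k}$ has only $k$, and the latter is determined entirely by the $\poly(k/\eps)$-size sketch.

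You also omit the bit-complexity issue entirely. The entries of $\matV_{\tilde{\matA}_k}$ can require many bits, because the small singular values of $\tilde{\matA}$ can be exponentially small. The paper handles this with a case split on $\rank(\matA)$: if $\rank(\matA)\le 2k$ it learns the column span directly and solves a sketched affine problem; if $\rank(\matA)>2k$ it applies Tao--Vu smoothed analysis (adding tiny Bernoulli noise so the minimum singular value becomes $1/\poly(n)$), which lets it round $\matV_{\tilde{\matB}_k}$ to $O(\log n)$ bits per entry while preserving the guarantee. This is not a detail you can defer.

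Your lower-bound construction is workable in principle but is more intricate than needed. The paper's instance is simpler: server~1 holds a rounded random $m\times k$ orthonormal matrix $\tilde{\matR}$, servers $2,\ldots,s-1$ hold $\frac{1}{B}\matI_m$, and server $s$ holds zeros. Any $C$-approximate projector $\matQ$ must satisfy $\|\matR\matR^{\transp}-\matQ\|_2<1/(2skm)$, so a net on the Grassmannian of size $2^{\Omega(km\log(skm))}$ is uniquely identified by $\matQ$. Since each server $i\in\{2,\ldots,s-1\}$ outputs $\matQ\matA_i=\frac{1}{B}\matQ$ and hence learns $\matQ$, its transcript carries $\Omega(km\log(skm))$ bits of information about $\matR$; summing over $i$ gives the bound. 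No per-server secrets, no cross-block calibration, no direct-sum machinery---just a single secret that every server is forced to recover.
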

In some applications even an $O(skm)$ word protocol may be too costly, as $m$ could be very large. One could
hope to do for communication what recent work on input sparsity algorithms \cite{CW13,MM13,NN13,BDN15,C16B}
has done for computation, i.e.,
obtain protocols sensitive to the number of non-zero entries of $\matA$.
%
Many matrices are sparse, e.g., Netflix provides a training data set of $100,480,507$ ratings that $480,189$ users give to $17,770$ movies. If users correspond to columns in the matrix, then the average column sparsity is $\approx 200$ non-zero elements (out of the $17,770$ possible coordinates).

Our second contribution is the first protocol depending on the number of non-zero entries of $\matA$ in the column
partition model. Our communication is independent of the matrix dimensions.
Denote by $\phi$ the maximum number of non-zero elements of a column in $\matA$.
When we say that $\matA$ is $\phi$-sparse, we mean that every column of $\matA$ has at most $\phi$
non-zero elements and $\nnz(\matA) \le \phi \cdot n$. Our protocol has the additional feature of leaving on each
machine the same subset $\matC$ of $O(k/\eps)$ columns of $\matA$ for which there exists an $m \times k$
orthonormal matrix $\matU$ in the column span of $\matC$ for which
$\FNormS{\matA - \matC \pinv{\matC}\matA} \le
\FNormS{\matA - \matU \matU\transp \matA}  \le
(1 + \varepsilon) \cdot \FNormS{\matA -  \matA_k}$. This is known as the {\it column subset selection problem}, which
is useful since $\matC$ may be sparse if $\matA$ is, and also it can lead to better data intepretability. To
partially complement our protocol, we show our protocol is optimal for any protocol
solving the column subset selection problem. We summarize these results as follows.

\begin{theorem}[Restatement of Theorem~\ref{thm1}]\label{thm:sparse}

Suppose a $\phi$-sparse matrix $\matA\in\R^{m\times n}$ is partitioned in the column-partition model (See Definition~\ref{def:model}). For any $1\geq \eps >0$, there is an algorithm which on termination leaves $\matC \in \R^{m \times c}$ with $c=O(k/\eps)$ columns of $\matA$ and an orthonormal matrix $\matU$ on each machine such that
$
\FNormS{\matA - \matC \pinv{\matC}\matA} \le
\FNormS{\matA - \matU \matU\transp \matA}  \le
(1 + \varepsilon) \cdot \FNormS{\matA -  \matA_k}
$
holds with arbitrarily large constant probability. Further, the algorithm runs in polynomial time with total communication complexity $O\left(s k \phi \varepsilon^{-1} + s k^2 \varepsilon^{-4}  \right)$ words each containing of $O(\log(smn\eps^{-1}))$ bits.

If a $\phi$-sparse matrix $\matA\in\R^{m\times n}$ is distributed in the column-partition model (See Definition~\ref{def:model}), then for any positive constant $C\leq O(\poly(sk\phi))$, the algorithm which can leave a orthonormal matrix $\matU\in\R^{m\times k}$ on each machine such that
    $\FNormS{\matA-\matU\matU\transp\matA}\leq C\cdot \FNormS{\matA-\matA_k}$
    holds with constant probability has at least $\Omega(sk\phi \log(sk\phi))$ bits of communication.

\end{theorem}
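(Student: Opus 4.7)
The plan is to prove the upper and lower bounds of Theorem~\ref{thm:sparse} separately. The upper bound is a constructive two-phase distributed protocol that exploits the $\phi$-sparsity of columns, while the lower bound is a reduction from a multi-party communication problem restricted to $\phi$-sparse inputs.

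For the upper bound, the key observation is that in the column-partition model every column of $\matA$ lives on a single server, so broadcasting $O(k/\eps)$ selected columns to all $s$ servers costs only $O(sk\phi/\eps)$ words; it therefore suffices to \emph{choose} the right columns using $\poly(k/\eps)$ words of overhead per server. I would proceed in two phases. In Phase~1, using shared randomness the servers agree on sketching matrices of dimension $\poly(k/\eps)$ applied on both the left and right, so that only $\poly(k/\eps)$-size objects are transmitted. From these aggregated sketches the coordinator identifies $O(k)$ column indices whose span already gives a constant-factor rank-$k$ approximation of $\matA$ and broadcasts these indices; the corresponding $\phi$-sparse columns $\matC_0$ are then shared across all servers at total cost $O(sk\phi)$ words. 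In Phase~2, each server locally computes the residual norms $\|\matA_{:,j} - \matC_0\pinv{\matC_0}\matA_{:,j}\|^2$ for its own columns, and the servers jointly perform an $\ell_2$-sampling step (via priority sampling with globally normalized weights coordinated by shared randomness) that draws $O(k/\eps)$ additional columns, which are likewise broadcast. By the adaptive-sampling guarantee of Deshpande--Vempala, the final column set $\matC$ satisfies $\FNormS{\matA - \matC\pinv{\matC}\matA} \le (1+\eps)\FNormS{\matA - \matA_k}$, and each server computes $\matU$ locally as an orthonormal basis of $\matC$. The total communication is $O(sk\phi/\eps) + s\cdot \poly(k/\eps)$ words.

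For the lower bound, I would adapt the template behind the dense $\Omega(skm)$ lower bound underlying Theorem~\ref{thm:arbitrary}, replacing the $m$-dimensional Gaussian columns with $\phi$-sparse signal columns. Specifically, I would assign each of the $s$ servers $k$ signal columns, each supported on $\phi$ coordinates drawn from a well-separated $\phi$-packing in $[m]$ and carrying random $\pm 1$ values, padded by smaller-norm columns so that the top-$k$ subspace of $\matA$ splits block-by-block across servers. Any $\matU$ with $\FNormS{\matA - \matU\matU\transp\matA} \le C \cdot \FNormS{\matA - \matA_k}$ for $C \le \poly(sk\phi)$ must then align with each server's signal subspace strongly enough for a central decoder to recover the underlying $\phi$-supports; each server's signal encodes $\Theta(k\phi \log(sk\phi))$ bits, and a direct-sum-style argument in the coordinator model (following the template of \cite{pvz12}) then yields the claimed $\Omega(sk\phi \log(sk\phi))$-bit bound.

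The main obstacle is Phase~1 of the upper bound: producing a constant-factor subspace with overhead independent of $\phi$, $m$, and $n$, while still yielding residual norms accurate enough to drive Phase~2's $\ell_2$-sampling. Naive sketches either destroy per-column sparsity (blowing per-column cost up to $m$) or require broadcasting a dense $m\times k$ basis (blowing total cost up to $skm$). The resolution is to keep the working subspace always inside the span of already-broadcast $\phi$-sparse columns so that no dense $m$-dimensional object is ever transmitted, and to carry out column sampling using only aggregated scalar statistics together with shared randomness. For the lower bound, the analogous subtlety is making the hard instance robust to approximation factors as weak as $C = \poly(sk\phi)$, which is handled by boosting the signal-to-noise ratio of the signal columns relative to the padding so that each server's signal block dominates any candidate a cheating protocol could fabricate.
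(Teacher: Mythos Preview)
Your proposal has genuine gaps in both halves.

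\textbf{Upper bound.} Two issues. First, Phase~1 as you describe it cannot work: once you apply a right sketch, column identity is destroyed, so the coordinator cannot ``identify $O(k)$ column indices'' from a two-sided $\poly(k/\eps)\times\poly(k/\eps)$ sketch; and a left-only sketch $S\matA$ has $n$ columns, so transmitting it is not $\poly(k/\eps)$ per server. The paper resolves this differently: each machine first runs a \emph{local} deterministic column selection (BSS/dual-set sparsification on its own $\matA_i$, using the local top-$k$ right singular vectors) and ships $O(k)$ actual $\phi$-sparse columns; the server then reruns BSS on the concatenation to extract $O(k)$ global columns $\matC$. The analysis shows $\FNormS{\matA-\matC\pinv{\matC}\matA}\le O(1)\cdot\FNormS{\matA-\matA_k}$ via a nontrivial decomposition $\matA=\matD+\hat\matE$ through the local projectors; this is the missing idea in your Phase~1. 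Second, your final step ``$\matU$ is an orthonormal basis of $\matC$'' produces a matrix with $O(k/\eps)$ columns, not $k$, so it does not meet the theorem's requirement. The paper needs an additional sketching round (the $ApproxSubspaceSVD$ step of Lemma~\ref{lem:KVW}): form $\matY^\top\matA\matW^\top$ with a random sign matrix $\matW$, take its top-$k$ left singular vectors $\matDelta$, and set $\matU=\matY\matDelta$. This is where the $sk^2\eps^{-4}$ term in the communication comes from, and you omit it entirely. Your Phase~2 (distributed adaptive sampling) is essentially the paper's third stage and is fine.

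\textbf{Lower bound.} Your construction places $k$ signal columns on \emph{each} of the $s$ servers, on disjoint $\phi$-blocks. That signal has rank $sk$, so the best rank-$k$ subspace cannot ``split block-by-block across servers''; a valid $\matU$ can lie in any $k$-dimensional slice of the $sk$-dimensional signal span and need not align with every server's block. Consequently your decoding step fails and no per-server $\Theta(k\phi\log(sk\phi))$ bits are forced. The paper's argument is structurally different: only server~1 holds a random (rounded) $\phi\times k$ orthonormal matrix $\tilde\matR$; servers $2,\dots,s-1$ hold $\tfrac{1}{B}\matI_\phi$, padded with zero rows to height $m$. Any $\matQ$ achieving even a $\poly(sk\phi)$-factor approximation must satisfy $\|\matR\matR^\top-\matQ\|_2<\tfrac{1}{2sk\phi}$, so every server that outputs $\matQ\matA_i$ can recover the net element $\matP_{\matR}$; since each of the $s$ servers must learn this $\Omega(k\phi\log(sk\phi))$-bit object, the total communication is $\Omega(sk\phi\log(sk\phi))$ bits. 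The lower bound is thus a ``one secret, many learners'' argument, not a per-server direct sum.
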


\begin{remark} Our protocol has no dependence on condition number and works for arbitrary matrices. We make a serious effort as shown in Section~\ref{sec:precisionDistributed} to achieve this. Note that even if the entries of $\matA$
are specified using $O(\log(mn))$ bits, the condition number can be $2^{-\Omega(mn \log(mn))}$ \cite{av97}.
\end{remark}
%

\begin{table}
\tiny
\begin{center}
\begin{tabular}{m{1.75in}|c@{\hspace*{0.05in}}l|c@{\hspace*{0.05in}}l}
  & \multicolumn{2}{c|}{Upper bounds} & \multicolumn{2}{c}{Lower bounds}\\
&&&&\\
\hline
&&&&\\
Definition~\ref{def:dpca2} (arbitrary partition model)
& $O(s k m + s \cdot \poly(k/\varepsilon) )$  & (Theorem~\ref{thmd2})
&  $\Omega(s k m)$ & (Theorem 1.2 in~\cite{KVW14})    \\
&&&&\\
\hline
&&&&\\
Definition~\ref{def:dpca} (column partition model)
& $O(s k m + s \cdot \poly(k/\varepsilon) )$  & (Theorem~\ref{thmd2})
&  $\Omega(s k m)$ &(Theorem~\ref{thm:main})    \\
&&&&\\
\hline
&&&&\\
Definition~\ref{def:dpca} with sparsity $\phi = o(\varepsilon \cdot m) $
& $O(s k \phi  \varepsilon^{-1} +   s \cdot \poly(k/\varepsilon) )$  &(Theorem~\ref{thm1}) & $\Omega(s k \phi )$ & (Corollary~\ref{thm:main2} )\\
&&&&\\
\hline
\end{tabular}
\medskip
\caption{\noindent Communication upper/lower bounds for the Distributed PCA problems.
\label{table:results}}
\end{center}
\end{table}
\begin{remark} We did not discuss the running time of the algorithms, but we are interested in the fastest possible algorithms with the minimal communication cost, and provide fast algorithms as well.
\end{remark}

\begin{remark} The hard instance in Section~\ref{sec:hard} implies an $\Omega(skm)$ lower bound even if the input matrix is sparse overall but has $skm$ non-zero positions in arbitrarily locations. Therefore, to obtain our bounds we discuss the sparsity on each column instead of the overall sparsity of the matrix.
\end{remark}

%
%
A model closely related to the distributed model of computation is the streaming model of computation.
The model we focused on is the so-called {\it turnstile streaming model}. In this model, there is a \emph{stream} of update operations and each operation indicates that the corresponding entry of $\matA$ should be incremented by a specific number. We present novel PCA algorithms in this model.

Our first one-pass algorithm improves upon the best existing streaming PCA algorithm~\cite{Lib13,GP13} in two respects.
First, the space of~\cite{Lib13,GP13} is described in ``real numbers'' while our space bound ($O(m k / \varepsilon+\poly(k/\varepsilon))$ - see Theorem~\ref{thmonepass}) is in terms of words (we also bound the word size). This matches an $\Omega(km /\varepsilon)$ bit lower bound for one-pass algorithms in~\cite{Woo14}, up
to the distinction between words versus bits and a low order term $\poly(k/\varepsilon)$. Second, our algorithm can be applied in the turnstile streaming model which is stronger than column update streaming model in~\cite{Lib13,GP13}.

\begin{theorem}[Restatement of Theorem \ref{thmonepass}]\label{thm:1passsubspace}
Suppose $\matA\in\R^{m\times n}$ is given by a stream of update operations in the turnstile streaming model (See Definition~\ref{def:modelstreaming}). For any $1\geq\varepsilon>0$, there is an algorithm which uses a single pass over the stream and on termination outputs an orthonormal matrix $\matU\in\R^{m\times k}$  such that
$\FNormS{\matA-\matU\matU\transp\matA}\leq (1+\eps)\cdot \FNormS{\matA-\matA_k}$
holds with arbitrarily large constant probability. Further, the algorithm runs in polynomial time with space of total $O\left(m k / \varepsilon + \poly(k\varepsilon^{-1}) \right)$ words each containing $O(\log(smn\varepsilon^{-1}))$ bits.
\end{theorem}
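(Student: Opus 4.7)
The plan is to maintain a linear sketch of $\matA$ in the turnstile stream and extract the top-$k$ left singular subspace from the sketch at the end. I would choose a sketching matrix $\matS\in\R^{n\times t}$ with $t=O(k/\varepsilon)$ that is a $(1+\varepsilon)$-projection-cost-preserving (PCP) sketch for $\matA$; a Gaussian matrix with this many columns, or any equivalent oblivious construction, suffices. Because $\matA\matS$ is linear in $\matA$, it can be maintained in the turnstile model: each update to entry $(i,j)$ of $\matA$ by $\delta$ is implemented by adding $\delta\cdot\matS_{j,\cdot}$ to the $i$-th row of the stored sketch. After appropriate rounding, the sketch occupies $mt=O(mk/\varepsilon)$ words, each of $O(\log(mn/\varepsilon))$ bits.

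At the end of the stream I compute $\matU\in\R^{m\times k}$ as the top-$k$ left singular vectors of $\matA\matS$ and output $\matU$. Correctness follows from the standard PCP argument. Let $\matU^*$ be the top-$k$ left singular vectors of $\matA$. By the PCP property, for every rank-$k$ orthoprojection $\matP$ acting on $\R^m$, there is a fixed constant $c$ (depending only on the sketch) with $\|(\matI-\matP)\matA\matS\|_F^2+c=(1\pm\varepsilon)\|(\matI-\matP)\matA\|_F^2$. Since $\matU\matU^\transp$ minimizes $\|(\matI-\matP)\matA\matS\|_F^2$ among all rank-$k$ projections, we obtain
\[
\|(\matI-\matU\matU^\transp)\matA\matS\|_F^2\;\le\;\|(\matI-\matU^*\matU^{*\transp})\matA\matS\|_F^2.
\]
Applying the PCP inequality on each side and rearranging yields $\|(\matI-\matU\matU^\transp)\matA\|_F^2\le\tfrac{1+\varepsilon}{1-\varepsilon}\|\matA-\matA_k\|_F^2$, which is $(1+O(\varepsilon))\|\matA-\matA_k\|_F^2$; rescaling $\varepsilon$ by a constant then gives the theorem.

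The main obstacle is keeping the space used to represent $\matS$ itself inside the additive $\poly(k/\varepsilon)$ term: storing an explicit dense Gaussian matrix would cost $nt=O(nk/\varepsilon)$ words, which is prohibitive. I would address this by generating the entries of $\matS$ on the fly from a Nisan-style pseudorandom generator or from a bounded-independence family that fools the PCP test functions; a seed of length $\poly(k/\varepsilon)\cdot\text{polylog}(mn)$ suffices and is absorbed in the $\poly(k\varepsilon^{-1})$ term. A secondary concern is bit complexity: the entries of $\matS$ and of the maintained $\matA\matS$ must be discretized to $O(\log(smn/\varepsilon))$ bits without breaking the PCP guarantee, which is standard provided the rounding error is kept polynomially small in $mn/\varepsilon$. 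Combining these pieces yields the claimed $O(mk/\varepsilon)+\poly(k/\varepsilon)$ word bound, and the running time is polynomial since updates are handled in $O(t)$ time each and the final SVD is performed on an $m\times O(k/\varepsilon)$ matrix.
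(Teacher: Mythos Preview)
Your approach has a genuine gap in the dimension count for the projection-cost-preserving sketch. You assert that a Gaussian (or equivalent oblivious) matrix $\matS$ with $t=O(k/\varepsilon)$ columns is a $(1+\varepsilon)$-PCP sketch of $\matA$. This is not known to hold: the oblivious PCP constructions in Cohen et~al.\ (the result quoted in the paper as Lemma~\ref{lem:djlm}) require $t=\Theta(k/\varepsilon^{2})$, and no oblivious construction achieving $O(k/\varepsilon)$ is known. If you take $t=O(k/\varepsilon^{2})$ to make the PCP argument go through, the maintained sketch $\matA\matS$ already occupies $O(mk/\varepsilon^{2})$ words, missing the target bound by a $1/\varepsilon$ factor. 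If you keep $t=O(k/\varepsilon)$, the step ``$\matU$ = top-$k$ left singular vectors of $\matA\matS$'' is no longer justified, and in fact simply taking the SVD of an $O(k/\varepsilon)$-column oblivious sketch does not yield a $(1+\varepsilon)$ low-rank guarantee.

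The paper's proof is more elaborate exactly to skirt this obstruction. It does store an $m\times O(k/\varepsilon)$ matrix $\matD=\matA\matR$, but $\matR$ is used only as a \emph{regression sketch} (Lemma~\ref{lem:regre}), not a PCP sketch. The algorithm additionally maintains three $\poly(k/\varepsilon)$-size sketches $\matM=\matT_{left}\matA\matT_{right}$, $\matL=\matS\matA\matT_{right}$, $\matN=\matT_{left}\matA\matR$ with $\matS$ another $O(k/\varepsilon)$-row regression sketch and $\matT_{left},\matT_{right}$ affine embeddings. After the pass it solves the rank-constrained problem $\min_{\rank(\matX)\le k}\FNormS{\matN\matX\matL-\matM}$, which by the affine-embedding lemma approximates $\min_{\rank(\matX)\le k}\FNormS{\matA\matR\matX\matS\matA-\matA}$, and the latter is shown (Lemma~\ref{lembatch2}) to be within $(1+\varepsilon)$ of optimal via two applications of the regression-sketch property. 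Only then is $\matU$ taken as an orthonormal basis of $\matA\matR\,\matU_{\matX_*}$. The point is that the expensive $m\times O(k/\varepsilon)$ object is paired with a cheap rank-$k$ selector computed from the small sketches; your one-shot SVD of $\matA\matS$ bypasses this and forces the larger $O(k/\varepsilon^{2})$ width.
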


A slight modification of the previous algorithm leads to a one-pass algorithm which can compute a factorization of a $(1+\eps)$-approximate rank-$k$ matrix.
The modified algorithm only needs $O((n+m)k/\varepsilon+poly(k/\varepsilon))$ words of space which improves the
$O((m+n \eps^{-2}) k \eps^{-2})$ upper bound in~\cite{CW09} and matches the $\Omega((n+m)k/\varepsilon)$ bit lower bound given by~\cite{CW09}, up to the low order term $\poly(k/\varepsilon)$.

\begin{theorem}[Restatement of Theorem \ref{thmonepassvar}]\label{thm:1passfrac}
Suppose $\matA\in\R^{m\times n}$ is given by a stream of update operations in the turnstile streaming model (See Definition~\ref{def:modelstreaming}). For any $1\geq \varepsilon>0$, there is an algorithm which uses a single pass over the stream and on termination outputs a matrix $\matA_k^*\in\R^{m\times n}$ with $rank(\matA_k^*)\le k$ such that
$\FNormS{\matA-\matA^*_k}\leq (1+\eps)\cdot \FNormS{\matA-\matA_k}$
holds with arbitrarily large constant probability. Further, the algorithm runs in polynomial time with space of total $O((m+n)k/\varepsilon+\poly(k\varepsilon^{-1}))$ words each containing $O(\log(smn\varepsilon^{-1}))$ bits.
\end{theorem}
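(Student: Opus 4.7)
My plan is to upgrade the one-pass subspace-recovery algorithm of Theorem~\ref{thm:1passsubspace} to a one-pass factorization algorithm by attaching a second independent linear sketch and performing a single sketched regression at termination. I run the algorithm of Theorem~\ref{thm:1passsubspace} unchanged; on termination it produces an $m\times k$ orthonormal matrix $\matU$ with $\FNormS{\matA-\matU\matU\transp\matA}\le(1+\eps)\FNormS{\matA-\matA_k}$ at a cost of $O(mk/\eps+\poly(k/\eps))$ words. In parallel I maintain the linear sketch $\matS\matA$, where $\matS\in\R^{r\times m}$ is an oblivious, pseudorandom, rescaled Gaussian matrix with $r=\Theta(k/\eps)$ rows; since $\matS\matA$ is linear in $\matA$ it is trivially updatable in the turnstile streaming model, and $\matS$ itself is regenerated on demand from a short seed of $\poly(k/\eps,\log(mn))$ bits. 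This contributes $O(nk/\eps)$ words for the sketch plus $\poly(k/\eps)$ words for the seed. At termination I output the factored low-rank matrix $\matA_k^*:=\matU\matY^*$ with $\matY^*:=\pinv{(\matS\matU)}\matS\matA$; since $\matU\in\R^{m\times k}$ we have $\rank(\matA_k^*)\le k$, and the total space is the claimed $O((m+n)k/\eps+\poly(k/\eps))$.

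Correctness reduces to a standard sketched-regression argument. Set $\matC:=\matA-\matU\matU\transp\matA$, so that $\matU\transp\matC=0$; the Pythagorean identity for the column span of $\matU$ gives
\begin{equation*}
\FNormS{\matU\matY^*-\matA}=\FNormS{\matY^*-\matU\transp\matA}+\FNormS{\matC}.
\end{equation*}
When $\matS\matU$ has full column rank, a direct calculation yields
\begin{equation*}
\matY^*-\matU\transp\matA=\pinv{(\matS\matU)}\matS\matC=((\matS\matU)\transp\matS\matU)^{-1}\matU\transp\matS\transp\matS\matC.
\end{equation*}
I then invoke two standard properties of a Gaussian sketch at $r=\Theta(k/\eps)$, each holding with constant probability over $\matS$: (i) $\matS$ is a constant-distortion $\ell_2$-subspace embedding for the column span of $\matU$, so $\TNormS{((\matS\matU)\transp\matS\matU)^{-1}}=O(1)$; and (ii) $\matS$ satisfies approximate matrix product with Frobenius error parameter $\eta=\sqrt{\eps/k}$, so (using $\matU\transp\matC=0$ and $\FNormS{\matU}=k$) $\FNormS{\matU\transp\matS\transp\matS\matC}\le\eta^2\FNormS{\matU}\FNormS{\matC}=O(\eps)\FNormS{\matC}$. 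Combining, $\FNormS{\matY^*-\matU\transp\matA}=O(\eps)\FNormS{\matC}$, hence $\FNormS{\matU\matY^*-\matA}\le(1+O(\eps))\FNormS{\matA-\matU\matU\transp\matA}\le(1+O(\eps))\FNormS{\matA-\matA_k}$, and rescaling $\eps$ by a constant yields the theorem.

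The main technical point is that the approximate matrix product guarantee at $r=\Theta(k/\eps)$ is tight for achieving $O((m+n)k/\eps)$ space: the argument crucially uses $\matU\transp\matC=0$ and $\FNormS{\matU}=k$, so the AMP error is $O(\eps)\FNormS{\matC}$ rather than the larger quantity that a generic concentration bound would give. Beyond this, the remaining obstacles are bookkeeping: (a) derandomizing $\matS$ to a short seed while preserving both (i) and (ii) for the data-dependent $\matU$ and $\matC$, which requires only limited independence of the sketch entries; (b) bounding the bit complexity of the terminal pseudoinverse computation, which is polynomially bounded in $\FNorm{\matA}$ and the sketch dimensions and is handled by the precision analysis of Section~\ref{sec:precisionDistributed}; and (c) combining the three constant-probability sub-events by a union bound and boosting to arbitrarily large constant success probability via independent parallel repetition, which inflates the space only by a constant factor.
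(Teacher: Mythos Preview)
Your argument is correct, but it follows a genuinely different route from the paper. The paper does not treat Theorem~\ref{thm:1passsubspace} as a black box followed by a fresh sketched regression. Instead, it observes that the one-pass subspace algorithm already maintains $\matD=\matA\matR$ and finds $\matX_*=\argmin_{\rank(\matX)\le k}\FNormS{\matT_{left}(\matA\matR\matX\matS\matA-\matA)\matT_{right}}$; the only modification is to additionally maintain $\matC=\matS\matA$ using the \emph{same} sign sketch $\matS$ already in the algorithm, and then output $\matA_k^*=\matA\matR\matX_*\matS\matA$ directly in factored form $(\matA\matR\matU_{\matX_*})\matSig_{\matX_*}(\matV_{\matX_*}\transp\matS\matA)$. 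Correctness is then immediate from Lemmas~\ref{lembatch1} and~\ref{lembatch2}, which already establish $\FNormS{\matA\matR\matX_*\matS\matA-\matA}\le(1+O(\eps))\FNormS{\matA-\matA_k}$, so no separate regression analysis is needed. Your approach is more modular: it cleanly decouples ``find a good $k$-dimensional column space'' from ``recover the right-hand factor by sketched regression,'' and its analysis (Pythagoras plus subspace embedding plus AMP with $r=\Theta(k/\eps)$, exploiting $\matU\transp\matC=0$ and $\FNormS{\matU}=k$) is self-contained and standard. The paper's approach is slightly leaner in that it reuses the sketch $\matS$ already present rather than introducing an independent second sketch, and it avoids re-proving a regression bound; on the other hand, your approach would work equally well plugged into any black-box one-pass subspace estimator achieving the guarantee of Theorem~\ref{thm:1passsubspace}. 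Both lead to the same $O((m+n)k/\eps+\poly(k/\eps))$ space. Your bookkeeping remarks (a)--(c) are reasonable; note that the paper's own Theorem~\ref{thmonepassvar} is stated with probability $0.96$ and likewise does not spell out the boosting or bit-complexity details for this theorem.
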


We also show a two-pass algorithm which is an implementation of our distributed PCA protocol. It uses $O((km) + \poly(k/\varepsilon))$ words of space, which up to the $\poly(k/\varepsilon)$ term and the distinction between words versus bits, is optimal for any constant number of passes. A ``next natural goal'' in \cite{Woo14} was to improve the lower bound of $\Omega(km)$ to
a bound closer to the $1$-pass $\Omega(km/\varepsilon)$ lower bound established in that paper; our upper bound
shows this is not possible.

\begin{theorem}[Restatement of Theorem \ref{thmtwopass2}]
Suppose $\matA\in\R^{m\times n}$ is given by a stream of update operations in the turnstile streaming model (See Definition~\ref{def:modelstreaming}). For any $1\geq \eps >0$, there is an algorithm which uses two passes over the stream and on termination outputs an orthonormal matrix $\matU\in\R^{m\times k}$ such that
$\FNormS{\matA-\matU\matU\transp\matA}\leq (1+\eps)\cdot \FNormS{\matA-\matA_k}$
holds with arbitrarily large constant probability. Further, the algorithm runs in polynomial time with total space $O\left(m k + \poly(k\varepsilon^{-1}) \right)$ words each containing $O(\log(smn\varepsilon^{-1}))$ bits.
\end{theorem}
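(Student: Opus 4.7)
The plan is to view the two-round distributed PCA protocol of Theorem~\ref{thmd2} (instantiated with a single server-side ``machine'', $s=1$) as a template, and to implement each of its two rounds of communication as one pass over the turnstile stream. Every message that protocol sends is a linear function of $\matA$: either a small bilinear sketch of $\matA$ used to determine a good rank-$k$ direction, or $\matA$ multiplied on the right by a matrix computed from those sketches. Linearity is precisely what a turnstile stream supports, and for $s=1$ the total communication of Theorem~\ref{thmd2} is $O(mk)+\poly(k/\eps)$ words, matching the space target.

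In the first pass I would maintain, in $\poly(k/\eps)$ words, the bilinear sketch $\matS \matA \matT^\transp$ (together with any one-sided sketches used in the protocol), where $\matS \in \R^{p \times m}$ and $\matT \in \R^{q \times n}$ are oblivious sketching matrices of the type already used in the one-pass algorithm of Theorem~\ref{thm:1passsubspace}, with $p,q = \poly(k/\eps)$; only the random seeds of $\matS,\matT$ are stored. Each turnstile update touches only $\poly(k/\eps)$ sketch entries. At the end of pass one I perform the same offline post-processing that Theorem~\ref{thmd2} assigns to the server: take the top-$k$ right singular directions $\matY \in \R^{q\times k}$ of the stored sketch and \emph{implicitly} define the $n \times k$ matrix $\matR := \matT^\transp \matY$. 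I never materialize $\matR$; it is described succinctly by the seed of $\matT$ together with $\matY$, a total of $\poly(k/\eps)$ words.

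In the second pass I maintain the explicit $m \times k$ matrix $\matB := \matA \matR$ under turnstile updates. An update to $\matA_{ij}$ adds a scalar multiple of the $j$-th row of $\matR$ to the $i$-th row of $\matB$, and that row is produced on the fly from the seed of $\matT$ and the small matrix $\matY$. Total storage is $O(mk)$ words for $\matB$ plus the $\poly(k/\eps)$ words carried over from pass one. At the end of pass two I orthonormalize the columns of $\matB$ by a QR decomposition and output the resulting $\matU \in \R^{m \times k}$. Correctness is inherited from Theorem~\ref{thmd2}: because the sketches are linear in $\matA$, the joint distribution of everything computed is identical to what the distributed protocol produces on the same $\matA$, so the $(1+\eps)$ guarantee and the constant success probability transfer directly.

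The main obstacle is keeping the leading term at $O(mk)$ rather than $O(m \cdot \poly(k/\eps))$. This forces the matrix $\matR$ passed from pass one to pass two to have \emph{exactly} $O(k)$ columns, which in turn requires that the top-$k$ refinement of the bilinear sketch --- the step collapsing a $\poly(k/\eps)$-dimensional candidate subspace down to a $k$-dimensional one with only $O(\eps)$ additional distortion --- be carried out purely offline from the sketch kept in pass one, without any additional stream access. Verifying this is exactly the content of the structural step in the proof of Theorem~\ref{thmd2}. A secondary technical point is bit precision: I must check that $\matY$, each row of $\matR$, and each entry of $\matB$ can be represented to $O(\log(smn\eps^{-1}))$ bits per word without distorting the approximation, which follows from the condition-number-independent rounding analysis of Section~\ref{sec:precisionDistributed}.
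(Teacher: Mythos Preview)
Your proposal is correct and follows essentially the same approach as the paper: implement the two rounds of the distributed protocol of Theorem~\ref{thmd2} as two passes over the stream, with the first pass building the $\poly(k/\eps)$-size bilinear sketch and the second pass computing the $m\times k$ matrix $\matA\matT\matV_{\tilde{\matA}_k}$, then orthonormalizing. One implementation detail you gloss over but the paper makes explicit: because Theorem~\ref{thmd2} branches on whether $\rank(\matA)\le 2k$ (via Lemma~\ref{lem:rankTest}), and the rank test itself consumes a pass, the paper runs the rank test \emph{and} the first pass of both the low-rank and high-rank sub-protocols in parallel during pass one, then discards the unneeded branch before pass two; your reference to Section~\ref{sec:precisionDistributed} covers this, but it is worth stating so the pass count stays at two.
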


\begin{table}
\tiny
\begin{center}
\begin{tabular}{l|c@{\hspace*{0.05in}}l|c@{\hspace*{0.05in}}l}
  & \multicolumn{2}{c|}{Upper bounds} & \multicolumn{2}{c}{Lower bounds}\\
&&&&\\
\hline
&&&&\\
One-pass turnstile, Def.~\ref{def:spca} & $O(m k  \varepsilon^{-1}  + \poly(k,\varepsilon^{-1}) )$  &(Theorem~\ref{thmonepass}) & $\Omega(m k \varepsilon^{-1})$ bits & \cite{Woo14}\\
\hline
&&&&\\
One-pass turnstile, factorization, Def.~\ref{def:spca2} & $O((n+m) k  \varepsilon^{-1}  + \poly(k,\varepsilon^{-1}) )$  &(Theorem~\ref{thmonepassvar}) & $\Omega((n+m) k \varepsilon^{-1})$ words & \cite{CW09}\\
&&&&\\
\hline
&&&&\\
Two-pass turnstile, Def.~\ref{def:spca}
& $O(m k + \poly(k,\varepsilon^{-1}) )$  & (Theorem~\ref{thmtwopass2})
&  $\Omega(m k )$ bits &   \cite{Woo14}   \\
&&&&\\
\hline
\end{tabular}
\medskip
\caption{\noindent Space upper/lower bounds for the Streaming PCA problem.
\label{table:results3}}
\end{center}
\end{table}

\subsection{Technical Overview}\label{sec:subtmp}

\subsubsection{Upper Bounds}

\paragraph{The Arbitrary Partition Model.}
%
%
%
Recall that the input matrix $\matA\in\R^{m\times n}$ is arbitrarily partitioned into
$s$ matrices
$\matA_i \in \R^{m \times n},$ i.e., for $i=1,2,\dots,s$:
$
\matA = \sum_{i=1}^s \matA_i.
$
By sketching on the left and right by two small random sign matrices $\matS\in\R^{O(k/\eps^2)\times m}$ and $\matT\in \R^{n\times O(k/\eps^2)}$, with $O(sk^2/\eps^4)$ words of communication, the server can learn $\tilde{\matA}=\matS\matA\matT$ which is a ``sketch'' of $\matA$. It suffices to compute the best rank-$k$ approximation $\tilde{\matA}_k$ to $\tilde{\matA}$. Suppose the SVD of $\tilde{\matA}_k=\matU_{\tilde{\matA}_k}\matSig_{\tilde{\matA}_k}\matV_{\tilde{\matA}_k}\transp$. Then the server can learn $\matA\matT\matV_{\tilde{\matA}_k}$ by the second round of communication which needs $O(skm)$ words. We then prove~(see Lemma~\ref{lembatch0}):
$
 \|\matA - \matU\matU\transp\matA\|_{\mathrm{F}}^2 \leq (1+\eps)\|\matA-\matA_k\|_{\mathrm{F}}^2
$
where $\matU$ is an orthonormal basis of the column space of $\matA\matT\matV_{\tilde{\matA}_k}$.
Notice that $rank(\matU)\le k$. Thus $\matU$ is exactly what we want.
Section~\ref{sec:algorithm3} shows the details.

%

A technical obstacle in the analysis above is that it may require a large number of machine words to specify the entries of $\matV_{\tilde{\matA}_k}$, even if each entry in $\matA$ is only a single word. In fact, one can show~(we omit the details) that the singular values of $\tilde{\matA}_k$ can be exponentially large in $k/\eps$, which means one would need to round the entries of $\matV_{\tilde{\matA}_k}$ to an additive exponentially small precision, which would translate to an undesirable $sm \cdot \poly(k/\eps)$ bits of communication.

To counter this, we use the {\it smoothed analysis} of Tao and Vu~\cite{tv07} to argue that if we add small random Bernoulli noise to $\matA$, then its minimum singular value becomes inverse polynomial in $n$. Moreover, if the rank of $\matA$ is at least $2k$ and under the assumption that the entries of $\matA$ are representable with a single machine word (so we can identify them with integers in magnitude at most $\poly(nms/\eps)$ by scaling), then we can show the additional noise preserves relative error approximation. To our knowledge, this is the first application of this smoothing technique to distributed linear algebra algorithms. Section~\ref{sec:precisionDistributed} discusses the details of this approach.

On the other hand, if the rank of $\matA$ is smaller than $2k$, the smoothing technique need not preserve relative error. In this case though, we can learn a basis $\matC \in \mathbb{R}^{m \times O(k)}$ for the column span of $\matA$ by multiplying by a pseudorandom matrix based on Vandermonde matrices. Since $\matC$ has at most $2k$ columns, we can efficiently communicate it to all servers using $O(skm)$ communication. At this point we set up the
optimization problem $\min_{\textrm{rank}(\matX) \leq k}\|\matC\matX\matC\transp\matA-\matA\|_{\mathrm{F}}$. The server cannot learn $\matC\transp\matA$ and $\matA$ directly, as this would be $\Omega(nm)$ words, but we can choose additional ``sketching matrices'' $\matT_{left}$ and $\matT_{right}$ to instead solve
$ \min_{\textrm{rank}(\matX) \leq k}\|\matT_{left}(\matC\matX\matC\transp\matA-\matA)\matT_{right}\|_{\mathrm{F}},$ which is a generalization of the subspace embedding technique to ``affine embeddings''~\cite{Woo14}.
the server learns $\matT_{left} \matC \matX \matC\transp \matA \matT_{right}$ and $\matT_{left} \matA \matT_{right}$, which are small matrices, and can be sent to each machine.
Each machine locally solves for the rank-$k$ solution $\matX_*$ minimizing $\|\matT_{left}\matC \matX \matC\transp \matA \matT_{right} - \matT_{left} \matA \matT_{right}\|_{\mathrm{F}}$. Now each machine can find the same $m \times k$ orthonormal basis and output it without further communication.

\paragraph{Sparse Matrices in the Column Partition Model.}
Our algorithm for sparse matrices is technically simpler than that in the arbitrary partition model.
Section~\ref{sec:algorithm1} presents a distributed PCA algorithm that has
$O((\phi k s \varepsilon^{-1})+\poly(sk/\eps))$ words of communication cost.
Our idea in order to take advantage of the sparsity in the input matrix $\matA$ is to select and transfer to the coordinator a small set of ``good'' columns from each sub-matrix $\matA_i$.
Specifically, we design an algorithm that first computes, in a distributed way, a matrix
$\tilde\matC \in \R^{m \times c}$ with $c =O(k \varepsilon^{-1})$ columns of $\matA$,
and then finds $\matU \in span(\tilde\matC)$ using a distributed, communication-efficient algorithm
developed in~\cite{KVW14}. The matrix  $\tilde\matC$ is constructed
using optimal algorithms for column sampling~\cite{BDM11a,DRVW06,cohen2014dimensionality},
extended properly to the distributed case.

As mentioned, our algorithm actually solves the distributed column subset selection problem.
It builds on results for column subset selection in the batch setting~\cite{BDM11a,DRVW06,cohen2014dimensionality},
but gives a novel
analysis showing that in the distributed setting one can find columns as good as in the
batch setting.
To the best of our knowledge, only heuristics for distributed column subset selection
were known~\cite{FEGK13}.
%
%
%
We also optimize the time complexity of our algorithm, see Table~\ref{table:summary}.

\paragraph{Turnstile streaming model.}
Our one-pass streaming PCA algorithm starts by generating two random sign matrices $\matS\in\R^{O(k/\varepsilon)\times n}$ and $\matR\in\R^{m\times O(k/\varepsilon)}$. As shown in Lemma~\ref{lembatch2},
$$\min_{\rank(\matX) \le k} \FNormS{\matA\matR\matX\matS\matA - \matA} \leq (1+\varepsilon) \cdot \FNormS{\matA - \matA_k}$$
The intuition of the above is as the following.
Suppose the SVD of $\matA_k$ is $\matU_{\matA_k}\matSig_{\matA_k}\matV_{\matA_k}\transp$ where $\matA_k$ is the best rank-$k$ approximation matrix to $\matA$.
Since $\matX$ can be chosen as $\matSig_{\matA_k}\matV_{\matA_k}$, we have
$\min_{rank(\matX)\le k} \FNormS{\matU_{\matA_k}\matX - \matA} = \FNormS{\matA_k-\matA}$.
As shown in~\cite{CW09}, $\matS$ can provide a sketch for the regression problem. Thus,
$\FNormS{\matU_{\matA_k}\tilde{\matX} - \matA}\le (1+\varepsilon)\cdot\min_{rank(\matX)\le k} \FNormS{\matU_{\matA_k}\matX - \mat\matA}$
where
$$\tilde{\matX}=\arg \min_{rank(\matX)\le k} \FNormS{\matS\matU_{\matA_k}\matX - \matS\matA}$$
Notice that $\tilde{\matX}$ is in the row space of $span(\matS\matA)$. Then we focus on the regression problem:
$$\min_{\rank(\matX) \le k} \FNormS{\matX\matS\matA - \matA}.$$
Similarly, we can use $\matR$ to sketch the problem. Thus finally we only need to optimize
$$\min_{\rank(\matX) \le k} \FNormS{\matA\matR\matX\matS\matA - \matA}.$$
However, it is too expensive to store the entire matrix $\matA$. Thus, we sketch on the left and right by $\matT_{left}$ and $\matT_{right}$. Since all of $\matT_{left}\matA\matR,$ $\matS\matA\matT_{right}$ and $\matT_{left}\matA\matT_{right}$ can be maintained in small space, $\matX_*=\arg \min_{\rank(\matX) \le k} \FNormS{\matT_{left}(\matA\matR\matX\matS\matA - \matA)\matT_{right}}$ can be constructed after one pass over the stream.
 Additionally, if $\matA\matR$ is also maintained in the algorithm, we can get $\matU\in\R^{m\times k}$ of which columns are an orthonormal basis of $span(\matA\matR\matX_*)$ satisfying
$
\FNormS{\matA - \matU \matU\transp \matA}  \le
(1 + \varepsilon) \cdot  \FNormS{\matA -  \matA_k}.
$
Furthermore, if $\matS\matA$ is also maintained, since it suffices to compute the SVD of $\matX_*=\matU_{\matX_*}\matSig_{\matX_*}\matV_{\matX_*}\transp$, $\matT=\matT_{left}\matA\matR\matU_{\matX_*}$ and $\matK=\matV_{\matX_*}\transp\matS\matA\matT_{right}$ can be constructed in $O((n+m)k)$ words of space. Therefore, the algorithm can compute $\matA_k^*=\matT\matSig_{\matX_*}\matK=\matA\matR\matX_*\matS\matA$ in $O((n+m)k)$ words of space. The algorithm then can output $\matA_k^*$ with total space $O((n+m)k/\varepsilon+\poly(k/\varepsilon))$ words (see Theorem~\ref{thmonepassvar}).

Our two-pass streaming PCA algorithm is just an implemententation of our distributed PCA algorithm in the streaming model.

\begin{table}
\tiny
\begin{center}
\begin{tabular}{l|c@{\hspace*{0.05in}}l|c@{\hspace*{0.05in}}l}
  & \multicolumn{2}{c|}{Upper bounds} & \multicolumn{2}{c}{Lower bounds}\\
&&&&\\
\hline
&&&&\\
Definition~\ref{def:dcssp1} & $O(s k \phi  \varepsilon^{-1})$  &(Theorem~\ref{thm1s}) & $\Omega(s k \phi  \varepsilon^{-1})$ & (Theorem~\ref{lower:cssp1})\\
&&&&\\
\hline
&&&&\\
Definition~\ref{def:dcssp2}
& $O\left(s k \phi  \varepsilon^{-1} + s \cdot \poly(k/\varepsilon)  \right)$  & (Theorem~\ref{thm1s})
&  $\Omega(s k \phi  \varepsilon^{-1})$ &(Corollary~\ref{lower:cssp2})    \\
&&&&\\
\hline
\end{tabular}
\medskip
\caption{\noindent Communication upper/lower bounds for the CSSP problems of Definitions~\ref{def:dcssp1} and~\ref{def:dcssp2}.
\label{table:results2}}
\end{center}
\end{table}

\subsubsection{Lower Bounds}
Table~\ref{table:results2} summarizes the matching communication lower bounds that we obtain for distributed column-based matrix reconstruction. Theorem~\ref{lower:cssp1} proves a lower bound for the problem in Definition~\ref{def:dcssp1}; then, a lower bound for the problem of Definition~\ref{def:dcssp2} follows immediately since this is a harder problem~(see Corollalry~\ref{lower:cssp2}).

\paragraph{Distributed Column Subset Selection.}
Our most involved lower bound argument is in showing the tightness for the distributed column subset selection problem,
and is given in Theorem \ref{lower:cssp1}.
To illustrate the ideas, suppose $k = 1$.
We start with the canonical hard matrix for column subset selection, namely, an $m \times m$ matrix $\matA$ whose first row is all ones, and remaining rows are a subset
of the identity matrix. Intuitively the best rank-$1$ approximation is very aligned with the first row of $\matA$. However, given only $o(1/\eps)$ columns
of the matrix, there is not a vector in the span which puts a $(1-\eps)$-fraction of its mass on the first coordinate, so $\Omega(1/\eps)$ columns
are needed. Such a matrix has $\phi = 2$.

Obtaining a lower bound for general $\phi$ involves creating a large net of such instances.
Suppose we set $m = \phi$ and
consider $\tilde{\matL} \matA$, where $\tilde{\matL}$ is formed by choosing a random
$\phi \times \phi$ orthonormal matrix, and then rounding each entry to the nearest integer multiple of $1/\poly(n)$. We can show that $o(1/\eps)$ columns
of $\tilde{\matL} \matA$ do not span a rank-$1$ approximation to $\tilde{\matL} \matA$. We also need a lemma which shows that if we take two independently
random orthonormal matrices, and round their entries to integer multiples of $1/\poly(n)$, then these matrices are extremely unlikely to share a constant
fraction of columns. The idea then is that if one server holds $\tilde{\matL} \matA$, and every other server holds the zero matrix, then
every server needs to output the same subset of columns of $\tilde{\matL} \matA$. By construction of $\matA$,
each column of $\tilde{\matL} \matA$ is the sum of the first column of $\tilde{\matL}$ and an arbitrary column of $\tilde{\matL}$, and since we can assume
all servers know the first column of $\tilde{\matL}$ (which involves a negligible $O(s\phi)$ amount of communication), this implies that each server must
learn $\Omega(1/\eps)$ columns of $\tilde{\matL}$. The probability that random discretized orthonormal matrices share $\Omega(1/\eps)$ columns is sufficiently
small that we can choose a large enough net for these columns to identify $\tilde{\matL}$ in that net, requiring $\Omega(\phi/\eps)$ bits of communication
per server, or $\Omega(s\phi /\eps)$ words in total. The argument for general $k$ can be viewed as a ``direct sum theorem'', giving $\Omega(s k \phi/\eps)$
words of communication in total.

\paragraph{Dense Matrices in the Column Partition Model.}
Our optimal lower bound for dense matrices is technically simpler. It gives an $\Omega(skm)$ word communication lower bound in column partition model for dense matrices. Theorem~\ref{thm:main} argues that there exists  an $m \times n$ matrix $\matA$ such that for this $\matA,$ any $k \le 0.99 m,$ and any error parameter $C$ with $1 < C < \poly(s k m),$
if there exists a protocol to construct an $m \times k$ matrix $\matU$ satisfying
$
\FNormS{\matA - \matU \matU\transp \matA} \le C \cdot \FNormS{\matA - \matA_k},
$ with constant probability, then this protocol has communication cost at least $\Omega(s k m)$ words. This lower bound is proven for a matrix $\matA$ that has $k$ fully dense columns. We should note that the lower bound holds even if each machine only outputs the projection of $\matA_i$ onto $\matU$, i.e., $\matU\matU\transp\matA_i,$ rather than $\matU$ itself. Note that this problem is potentially easier, since given $\matU$, a machine could find  $\matU \matU\transp\matA_i$.

The intuition of the proof is that machine $1$ has a random matrix $\matA_1\in\R^{m\times k}$ with orthonormal columns (rounded to integer multiples of $1/\poly(n)$),
while all other machines have a very small multiple of the identity matrix. When concatenating
the columns of these matrices, the best $k$-dimensional subspace to project the columns onto
must be very close to $\matA_1$ in spectral norm. Since machine $i$, $i > 1$,
has tiny identity
matrices, after projection it obtains a projection matrix very close to the projection onto the column
span of $\matA_1$. By choosing a net of $m \times k$ orthonormal matrices, all of which pairwise
have high distance in spectral norm, each machine can reconstruct a random element in this net, which requires lots of information, and hence communication.

Our lower bound of Theorem~\ref{thm:main} implies a lower bound of $\Omega(s k \phi)$ words for matrices in which the column sparsity is $\phi$~(see Corollary~\ref{thm:main2}) as a simple corollary.


\subsection{Road Map}
In the first ten pages, we have one additional section,
Section~\ref{sec:dense_short}, which gives an overview of our algorithm
in the arbitrary partition model.

In the Appendix, we then present more detail on prior results on distributed PCA algorithms in Section~\ref{sec:related}.
Section~\ref{sec:background} introduces the notation and basic results from linear algebra that we use in our algorithms.
Section~\ref{sec:dense} presents our distributed PCA algorithm for arbitrary matrices for Definition~\ref{def:dpca2}. The communication of the algorithm in this section can be stated only in terms of ``real numbers''. We resolve this issue in Section~\ref{sec:precisionDistributed} where a modified algorithm has communication cost bounded in terms of machine words.
Section~\ref{sec:streaming} discusses space-optimal PCA methods in the streaming model of computation. Section~\ref{sec:algorithm} presents a distributed PCA algorithm for sparse matrices in column partition model, while Section~\ref{sec:faster} extends this algorithm to a faster distributed PCA algorithm for sparse matrices.
Section~\ref{sec:lower} presents communication lower bounds.


\section{Outline of our result in the arbitrary partition model}\label{sec:dense_short}
Here we give a brief outline of our result in the arbitrary patition
model, deferring the full details to the appendix.

We describe a fast distributed PCA algorithm with total communication $O(m s k)$ words plus low order terms, which is optimal in the arbitrary partition model in the sense that an $\tilde{\Omega}( m s k)$ bit lower bound was given by~\cite{KVW14}. 
The algorithm employs, in a novel way, the notion of {\it projection-cost preserving sketches} from~\cite{cohen2014dimensionality}.
In particular, whereas
all previous~\cite{Sar06,CW13arxiv} dimension-reduction-based SVD methods reduce one dimension of the input matrix to compute some approximation to the SVD, our method reduces {\it both} dimensions and computes an approximation to the SVD from a small almost square matrix (we note that other work, such as work
on estimating eigenvalues \cite{an13}, uses sketches in both dimensions, but it is not
clear how to compute {\it singular vectors} using such sketches).
Unlike~\cite{KVW14} which reduces only one dimension in the first communication round, we do the reduction on {\it both} dimensions in the same round.

We first present the batch version of the algorithm which offers a new low-rank matrix approximation technique;
a specific implementation of this algorithm offers a communication-optimal distributed PCA algorithm (we also discuss
in Section~\ref{sec:streaming} a variant of this algorithm that offers a two-pass space-optimal PCA method in the
turnstile streaming model). Before presenting all these new algorithms in detail, we present the relevant results from the
previous literature that we employ in the analysis.

\subsection{Projection-cost preserving sketching matrices}
In this section, we recap a notion of sketching matrices which we call ``projection-cost preserving sketching matrices''. A sketching matrix from this family is a linear matrix transformation and it has the property that for all projections it preserves, up to some error, the difference between the matrix in hand and its projection in Frobenius norm.

\begin{definition}[Projection-cost preserving sketching matrices] \label{def:pcpsm}
We say that $\matW \in \mathbb{R}^{n \times \xi}$ is an $(\varepsilon,k)$-projection-cost preserving sketching matrix of $\matA\in \mathbb{R}^{m\times n}$, if for all rank-$k$ orthogonal projection matrices $\matP\in \mathbb{R}^{m\times m}$, it satisfies
$$ (1-\varepsilon)\FNormS{\matA-\matP\matA} \leq \FNormS{\matA\matW-\matP\matA\matW}+c \leq (1+\varepsilon)\FNormS{\matA-\matP\matA}$$
where $c$ is a non-negative constant which only depends on $\matA$ and $\matW$. We also call $\matA\matW$ an $(\varepsilon,k)$-projection-cost preserving sketch of $\matA$.
\end{definition}

Due to the following lemma, we know that a good rank-$k$ approximation projection matrix of $(\varepsilon,k)$-projection-cost preserving sketch $\matA\matW$ also provides a good rank-$k$ approximation to $\matA$.

\begin{lemma}[PCA via Projection-Cost Preserving Sketches - Lemma 3 in~\cite{cohen2014dimensionality}] \label{fact:pcavpcps}
Suppose $\matW\in \mathbb{R}^{n \times \xi}$ is an $(\varepsilon,k)$-projection-cost preserving sketching matrix of $\matA\in \mathbb{R}^{m\times n}$. Let
$\hat{\matP}^{*} = \arg \min_{rank(\matP)\leq k} \FNormS{\matA\matW-\matP\matA\matW}$.
For all $\hat{\matP},\varepsilon'$ satisfying $rank(\hat{\matP})\leq k,\varepsilon'\geq 0$, if $\FNormS{\matA\matW-\hat{\matP}\matA\matW}\leq (1+\varepsilon')\FNormS{\matA\matW-\hat{\matP}^*\matA\matW}$,
$$\FNormS{\matA-\hat{\matP}\matA}\leq \frac{1+\varepsilon}{1-\varepsilon}\cdot(1+\varepsilon')\FNormS{\matA-\matA_k}$$
\end{lemma}

\cite{cohen2014dimensionality} also provides several ways to construct projection-cost preserving sketching matrices. Because we mainly consider the communication, we just choose one which can reduce the dimension as much as possible. Furthermore, it is also an oblivious projection-cost preserving sketching matrix.

\begin{lemma}[Dense Johnson-Lindenstrauss matrix - part of Theorem 12 in~\cite{cohen2014dimensionality}]\label{lem:djlm}
For $\varepsilon<1$, suppose each entry of $\matW\in \mathbb{R}^{n\times \xi}$ is chosen $O(\log(k))$-wise independently and uniformly in $\{1/\sqrt{\xi},-1/\sqrt{\xi}\}$ where $\xi=O(k\varepsilon^{-2})$~\cite{CW09}. For any $\matA\in \mathbb{R}^{m\times n}$, with probability at least $0.99$, $\matW$ is an $(\varepsilon,k)$-projection-cost preserving sketching matrix of $\matA$.
\end{lemma}

\subsection{A batch algorithm for the fast low rank approximation of matrices}\label{sec:algbatch}
In this section, we describe a new method for quickly computing a low-rank approximation to a given matrix.
This method does not offer any specific advantages over previous such techniques~\cite{Sar06,CW13arxiv,boutsidis2013improved}; however,
this new algorithm can be implemented efficiently in the distributed setting (see Section~\ref{sec:algorithm3}) and in
the streaming model of computation (see Section~\ref{sec:streaming}); in fact we are able to obtain communication-optimal and
space-optimal results, respectively.  For completeness as well as ease of presentation, we first present and analyze the simple batch version of the algorithm. The algorithm uses the dense Johnson-Lindenstrauss matrix of Lemma~\ref{lem:djlm} in order to reduce both dimensions of $\matA,$ before computing some sort of SVD to a
$\poly(k/\varepsilon) \times \poly(k/\varepsilon)$ matrix (see Step $2$ in the algorithm below). 

Consider the usual inputs: a matrix $\matA \in \R^{m \times n},$ a rank parameter $k < \rank(\matA),$ and an accuracy parameter $0 < \varepsilon < 1$.
The algorithm below returns an orthonormal matrix $\matU \in \R^{m \times k}$ such that
$$
\FNormS{\matA - \matU \matU\transp \matA} \le (1 + \varepsilon)
\cdot \FNormS{\matA-\matA_k}.
$$

\vspace{0.2in}
\begin{small}
{\bf Algorithm}

\begin{enumerate}

\item Construct two dense Johnson-Lindenstrauss matrices $\matS\in\mathbb{R}^{\xi_1\times m},\matT\in\mathbb{R}^{n \times \xi_2}$ with $\xi_1=O(k\varepsilon^{-2}),\xi_2=O(k\varepsilon^{-2})$~(see Lemma~\ref{lem:djlm}).

\item Construct $\tilde{\matA}=\matS\matA\matT$.

\item Compute the SVD of $\tilde{\matA}_k=\matU_{\tilde{\matA}_k}\matSig_{\tilde{\matA}_k}\matV_{\tilde{\matA}_k}\transp$
~($\matU_{ \tilde{\matA}_k } \in \R^{\xi_1 \times k}$, $\matSig_{ \tilde{\matA}_k } \in \R^{k \times k}$, $\matV_{ \tilde{\matA}_k } \in \R^{\xi_2 \times k})$.

\item Construct $\matX=\matA\matT\matV_{\tilde{\matA}_k}$

\item Compute an orthonormal basis $\matU \in \mathbb{R}^{m\times k}$ for $span(\matX)$ (notice that $rank(\matX) \le k$).

\end{enumerate}

\end{small}

Theorem~\ref{thmbatch} later in this section analyzes the approximation error and the running time of the previous algorithm.
First, we prove the accuracy of the algorithm.

\begin{lemma}\label{lembatch0}
The matrix $\matU \in \R^{m \times k}$ with $k$ orthonormal columns satisfies with probability at least $0.98$:
$$
 \FNormS{\matA -  \matU \matU\transp \matA} \le
\left(1 +\varepsilon \right) \cdot  \FNormS{\matA - \matA_k}.
$$
\end{lemma}
\begin{proof}
$$\FNormS{\tilde{\matA}-\tilde{\matA}_k}=\FNormS{\tilde{\matA}-\tilde{\matA}\matV_{\tilde{\matA}_k}\matV_{\tilde{\matA}_k}\transp}
=\FNormS{\matS\matA\matT-\matS\matA\matT\matV_{\tilde{\matA}_k}\matV_{\tilde{\matA}_k}\transp}=\FNormS{\matT\transp\matA\transp\matS\transp-\matV_{\tilde{\matA}_k}\matV_{\tilde{\matA}_k}\transp\matT\transp\matA\transp\matS\transp}$$

The first equality follows by the SVD of $\tilde{\matA}_k=\matU_{\tilde{\matA}_k}\matSig_{\tilde{\matA}_k}\matV_{\tilde{\matA}_k}\transp$. The second equality is by the construction of $\tilde{\matA}=\matS\matA\matT$. The third equality is due to $\forall \matM,\FNormS{\matM}=\FNormS{\matM\transp}$.

Due to Lemma \ref{lem:djlm}, with probability at least $0.99$, $\matS\transp$ is an $(\varepsilon,k)$-projection-cost preserving sketch matrix of $\matT\transp\matA\transp$. According to Lemma \ref{fact:pcavpcps},
\begin{equation} \label{eqn:1dimreduce}
\FNormS{\matA\matT-\matA\matT\matV_{\tilde{\matA}_k}\matV_{\tilde{\matA}_k}\transp}
=\FNormS{\matT\transp\matA\transp-\matV_{\tilde{\matA}_k}\matV_{\tilde{\matA}_k}\transp\matT\transp\matA\transp}
\leq \frac{1+\varepsilon}{1-\varepsilon}\cdot\FNormS{\matA\matT-(\matA\matT)_k}
\end{equation}

Observe that
\eqan{
\FNormS{\matU\matU\transp\matA\matT-\matA\matT}
= \FNormS{\matX\pinv{\matX}\matA\matT-\matA\matT} \leq \FNormS{\matX\matV_{\tilde{\matA}_k}\transp-\matA\matT} = \FNormS{\matA\matT\matV_{\tilde{\matA}_k}\matV_{\tilde{\matA}_k}\transp-\matA\matT}}
which is at most
$\frac{1+\varepsilon}{1-\varepsilon}\cdot \FNormS{\matA\matT-(\matA\matT)_k}.$

The first equality uses the fact that $\matU$ is an orthonormal basis of $span(\matX)$. The first inequality is followed by $\forall \matX,\matM,\matN,\FNormS{\matX\pinv{\matX}\matM-\matM}\leq\FNormS{\matX\matN-\matM}$. The second equality uses the construction that $\matX=\matA\matT\matV_{\tilde{\matA}_k}$. The second inequality follows by Eqn~(\ref{eqn:1dimreduce}).

Due to Lemma \ref{lem:djlm}, with probability at least $0.99$, $\matT$ is an $(\varepsilon,k)$-projection-cost preserving sketch matrix of $\matA$. Due to Lemma \ref{fact:pcavpcps},
$$\FNormS{\matU\matU\transp\matA-\matA} \leq \frac{(1+\varepsilon)^2}{(1-\varepsilon)^2}\cdot \FNormS{\matA-\matA_k}$$
Due to union bound, the probability that $\matS\transp$ is an $(\varepsilon,k)$-projection-cost preserving sketch matrix of $\matT\transp\matA\transp$ and $\matT$ is an $(\varepsilon,k)$-projection-cost preserving sketch matrix of $\matA$ is at least $0.98$. Note that $\frac{(1+\varepsilon)^2}{(1-\varepsilon)^2}$ is $1+O(\varepsilon)$ when $\varepsilon$ is small enough, so we can adjust $\varepsilon$ here by a constant factor to show the statement.
\end{proof}

Next, we present the main theorem.

\begin{theorem}\label{thmbatch}
The matrix $\matU \in \R^{m \times k}$ with $k$ orthonormal columns satisfies with probability at least $0.98$:
$$
 \FNormS{\matA -  \matU \matU\transp \matA} \le
\left(1 +\varepsilon \right) \cdot  \FNormS{\matA - \matA_k}.
$$
The running time of the algorithm is
$$
O\left(n m k \varepsilon^{-2} + m k^2 \varepsilon^{-4} + \poly(k \varepsilon^{-1})\right).
$$
\end{theorem}
\begin{proof}
The correctness is shown by Lemma~\ref{lembatch0}. The running time is
analyzed in Section \ref{sec:dense}.

\end{proof}

\subsection{The distributed PCA algorithm}\label{sec:algorithm3}
Recall that the input matrix $\matA \in \R^{m \times n}$ is partitioned arbitrarily as:
$\matA = \sum_i^s \matA_i$
for $i=1:s,$ $\matA_i \in \R^{m \times n}$.
The idea in the algorithm below is to implement the algorithm in Section~\ref{sec:algbatch} in the distributed setting.

\vspace{0.2in}
\begin{small}
{\bf Input:}
\begin{enumerate}
\item $\matA \in \R^{m \times n}$ arbitrarily partitioned
$
\matA = \sum_i^s \matA_i
$
for $i=1:s,$ $\matA_i \in \R^{m \times n}$.
\item rank parameter $k < \rank(\matA)$
\item accuracy parameter $\varepsilon > 0$
\end{enumerate}

{\bf Algorithm}

\begin{enumerate}

\item Machines agree upon two dense Johnson-Lindenstrauss matrices $\matS\in\mathbb{R}^{\xi_1\times m},\matT\in\mathbb{R}^{n \times \xi_2}$ with $\xi_1=O(k\varepsilon^{-2}),\xi_2=O(k\varepsilon^{-2})$~(see Lemma~\ref{lem:djlm}).

\item Each machine locally computes $\tilde{\matA_i}=\matS\matA_i\matT$ and sends $\tilde{\matA_i}$ to the server. Server constructs $\tilde{\matA}=\sum_i \tilde{\matA_i}$.

\item Server computes the SVD of $\tilde{\matA}_k=\matU_{\tilde{\matA}_k}\matSig_{\tilde{\matA}_k}\matV_{\tilde{\matA}_k}\transp$
~($\matU_{ \tilde{\matA}_k } \in \R^{\xi_1 \times k}$, $\matSig_{ \tilde{\matA}_k } \in \R^{k \times k}$, $\matV_{ \tilde{\matA}_k } \in \R^{\xi_2 \times k})$.

\item Server sends $\matV_{\tilde{\matA}_k}$ to all machines.

\item Each machine construct $\matX_i=\matA_i\matT\matV_{\tilde{\matA}_k}$ and sends $\matX_i$ to the server. Server constructs $\matX=\sum_i \matX_i$.

\item Server computes an orthonormal basis $\matU \in \mathbb{R}^{m\times k}$ for $span(\matX)$ (notice that $rank(\matX) \le k$).

\item Server sends $\matU$ to each machine.

\end{enumerate}

\end{small}

Notice that in the first step, $\matS$ and $\matT$ can be described using a random seed that is $O(\log(k))$-wise independent due to Lemma \ref{lem:djlm}.

The major remaining challenge here is to obtain small bit complexity, since this protocol could have very large bit complexity. See Section~\ref{sec:dense} in the Appendix for the full details of our algorithm.

\newpage
\bibliographystyle{abbrv}
\bibliography{RPI_bib}
\newpage
\section{Related Work}\label{sec:related}

Distributed PCA (or distributed SVD) algorithms have been investigated for a long time. One line of work, developed primarily within the numerical linear algebra literature, studies such algorithms from the perspective of parallelizing  existing standard SVD algorithms without sacrificing accuracy. This approach aims at high accuracy implementations with the least possible communication cost. The distributed models of computation go typically beyond the column-partition model and arbitrary-partition model that we study in this paper~\cite{poulson2013elemental}. An extensive survey of this line of work is out of the scope of our paper;
we refer the reader to~\cite{jessup1994parallel,tisseur1999parallel,gu1995divide} and references therein for more details as well as to popular software for distributed SVD such ScaLAPACK~\cite{scaLAPACK} and Elemental~\cite{poulson2013elemental}.

Another line of work for distributed PCA algorithms has emerged within the machine learning and
datamining communities. Such algorithms have been motivated by the need to apply SVD or PCA to extremely large matrices encoding enormous amounts of data. The algorithms in this line of research are typically heuristic approaches that work well in practice but come with no rigorous theoretical analysis. We refer the reader to~\cite{qu2002principal,macua2010consensus,bai2005principal,sensors2008} for more details.

Finally, distributed PCA algorithms in the column-partition model have been recently studied within the theoretical computer science community. Perhaps the more intuitive algorithm for PCA in this model appeared in~\cite{FSS13,BKLW14}: first, a number of left singular vectors and singular values are computed in each machine; then, the server collects those singular vectors and concatenates them column-wise in a new matrix and then it computes the top $k$ left singular vectors of this ``aggregate'' matrix. It is shown in~\cite{FSS13,BKLW14} that if the number of singular vectors and singular values in the first step is $O(k \varepsilon^{-1})$, then, the approximation error in Frobenius norm is at most $(1+\varepsilon)$ times the optimal Frobenius norm error; the communication cost is $O(s k m \varepsilon^{-1})$ \emph{real numbers}
because each of the $s$ machines sends $O(k \varepsilon^{-1})$ singular vectors of dimension $m$;
unfortunately, it is unclear how one can obtain a communication cost in terms of words/bits.
A different algorithm with the same communication cost (only in terms of real numbers since a word/bit communication bound remained unexplored) is implicit in~\cite{Lib13, GP13}~(see Theorem 3.1 in~\cite{GP13} and the discussion in Section 2.2 in~\cite{Lib13}).
Kannan, Vempala and Woodruff proposed arbitrary-partition model in~\cite{KVW14}. They developed a $(1+\varepsilon)$ Frobenius norm error algorithm with communication cost
$O\left(s k m \varepsilon^{-1} + s k ^2 \varepsilon^{-4} \right)$ words. Bhojanapalli, Jain, and Sanghavi in~\cite{SPS14}   developed an algorithm that provides a bound with respect to the spectral norm. Their algorithm is based on sampling elements from the input matrix, but the communication cost is prohibitive if $n$ is large. The cost also depends on the condition number of $\matA$.
Moreover, to implement the algorithm, one needs to know  $\FNorm{\matA-\matA_k}.$ Finally,~\cite{KM04} discussed a distributed implementation of the  ``orthogonal iteration'' to compute eigenvectors of graphs. The model they consider is different, but perhaps their protocol could be extended to our model.
\begin{table*}
  \begin{tiny}
\begin{center}
  \begin{tabular}{| l | l | l | l | l | l | | |}
    \hline
     Reference                                &  $\TNorm{\matA-\matU \matU\transp \matA}\le$   & $\FNormS{\matA-\matU \matU\transp \matA}\le$ & $\delta$ & \text{Communication cost}  & \text{Total number of arithmetic operations} \\ \hline
     Implicit in~\cite{FSS13}           &  -  &  $(1+\varepsilon)\FNormS{\matA-\matA_k}$ & $0$ & - & $O\left( m n \min\{m,n\}  + m s k \varepsilon^{-1} \min\{m,s k \varepsilon^{-1} \} \right)$ \\ \hline
     Theorem 2 in~\cite{BKLW14}      &   -  & $(1+\varepsilon)\FNormS{\matA-\matA_k}$& $0$ & - & $O\left( m n \min\{m,n\}  + m s k \varepsilon^{-1} \min\{m,s k \varepsilon^{-1} \} \right)$ \\ \hline
     Theorem 6 in~\cite{BKLW14}      &   -  & $(1+\varepsilon)\FNormS{\matA-\matA_k}$& $>0$& -  & $O\left( \nnz(\matA) + s \left( \frac{m^3k}{\varepsilon^4} + \frac{k^2 m^2}{\varepsilon^6} \right) \log\left(\frac{m}{\varepsilon}\right) \log\left(\frac{s k}{d \varepsilon}\right) \right)$ \\ \hline
     Implicit in~\cite{Lib13, GP13}   &  -  &  $(1+\varepsilon)\FNormS{\matA-\matA_k}$ & $0$ & - & $O(mnk\varepsilon^{-2})$ \\ \hline
     Thm 1.1 in~\cite{KVW14}*     &  -  & $(1+\varepsilon)\FNormS{\matA-\matA_k}$ & $O(1)$  & $O\left(s k m \varepsilon^{-1} + s k ^2 \varepsilon^{-4} \right)$ & $O(poly(m,n,k,s,\varepsilon^{-1}))$ \\ \hline
     Thm 5.1 in~\cite{SPS14}      & $\TNorm{\matA-\matA_k} +\Gamma$  & -  & $>0$ & $O\left( sm + n k^5 \varepsilon^{-2} \Delta \right)$ & $O\left(\nnz(\matA) + \delta^{-1} nk^5\varepsilon^{-2} \sigma_1^2(\matA) \sigma_{k}^{-2}(\matA) \right)$ \\ \hline
     Remark p. 11~\cite{SPS14}          & $\TNorm{\matA-\matA_k} + \Gamma$  & -  & $>0$ & $O\left( sm + n k^3 \varepsilon^{-2} \Delta \right)$ & $O\left(\nnz(\matA) + \delta^{-1}nk^5\varepsilon^{-2} \sigma_1^2(\matA) \sigma_{k}^{-2}(\matA) \right)$ \\ \hline
     Theorem~\ref{thmd2}*         &  -  & $(1+\varepsilon)\FNormS{\matA-\matA_k}$ &$O(1)$ & $O(s k m +   sk^3 \varepsilon^{-5 })$ &
$O\left( m n k \varepsilon^{-2}  + mk^2\varepsilon^{-4} +  poly(k \varepsilon^{-1}) \right).$\\ \hline
     Theorem~\ref{thm1}         &  -  & $(1+\varepsilon)\FNormS{\matA-\matA_k}$ & $O(1)$& $O(s k \phi  \varepsilon^{-1} +   sk^2 \varepsilon^{-4} )$ & $O\left( 
     m n s\cdot poly(k, 1/\varepsilon) \right)$ \\ \hline
     Theorem~\ref{thm1s}         &  -  & $(1+\varepsilon)\FNormS{\matA-\matA_k}$ &$O(1)$ & $O(s k \phi \varepsilon^{-1} +   sk^3 \varepsilon^{-5 })$ & $O\left(\nnz(\matA) \cdot  \log^2(\frac{n s}{\delta})  +
(m+n) \cdot s \cdot poly(\frac{k}{\varepsilon}\log(\frac{ns}{\delta}) \right)$. \\ \hline
  \end{tabular}
\end{center}
\vspace{-0.25in}
\caption{\scriptsize{Distributed PCA Algorithms in the column-partition model with $s$ machines~(see Definition~\ref{def:model}). $*$ indicates that the algorithm can also be applied in the arbitrary partition model~(see Definition~\ref{def:model2}).
$\matA \in \R^{m \times n}$ has rank $\rho,$
$\matU \in \R^{m \times k}$ with $k < \rho$ is orthonormal,  $0 < \varepsilon < 1$, and 
each column of $\matA$ contains at most $\phi \le m$ non-zero elements. $\delta$ is the failure probability.
Finally, for notational convenience let $\Gamma : = \varepsilon \FNorm{\matA-\matA_k}$,
$\Delta := \sigma_1^2(\matA) \sigma_{k}^{-2}(\matA) \log^2\left( \TNorm{\matA} \FNorm{\matA-\matA_k}^{-1} \varepsilon^{-1} \right)$.
}}\label{table:summary}
 \end{tiny}
\end{table*}


\section{Preliminaries}\label{sec:background}

\paragraph{Definitions.} We formally define the problems and the models here.

\begin{definition}[Arbitrary-partition model]\label{def:model2}
An $m \times n$ matrix $\matA$ is arbitrarily partitioned into $s$ matrices $\matA_i \in \R^{m\times n},$ i.e., for $i=1,2,\dots,s$:
$
\matA = \sum_{i=1}^s \matA_i
.$
There are $s$ machines, and the $i$-th machine has $\matA_i$ as input.
There is also another machine, to which we refer to as the ``server'', which acts as the central coordinator.
The model only allows communication between the machines and the server. The communication cost of an algorithm in this model is
defined as the total number of words transferred between the machines and the server, where we assume each word is $O(\log(nms/\eps))$ bits.
\end{definition}
\begin{definition}[Column-partition model]\label{def:model}
An $m \times n$ matrix $\matA$ is partitioned arbitrarily column-wise into
$s$ blocks
$\matA_i \in \R^{m \times w_i},$ i.e., for $i=1,2,\dots,s$:
$
\matA =
\begin{pmatrix}
\matA_1 & \matA_2 & \dots & \matA_s
 \end{pmatrix}.
$
Here, $\sum w_i = n$.
There are $s$ machines, and the $i$-th machine has $\matA_i$ as input.
There is also another machine, to which we refer to as the ``server'', which acts
as the central coordinator. The model only allows communication between the machines and the server. The communication cost of an algorithm in this model is defined as the total number of words transferred between the machines and the server, where we assume each word is $O(\log(nms/\eps))$ bits. 
\end{definition}

\begin{definition}[The Distributed Principal Component Analysis Problem in arbitrary partition model]\label{def:dpca2}
Give an $m \times n$ matrix $\matA$ arbitrarily partitioned into $s$ matrices $\matA_i \in \R^{m \times n}$ ($i:1,2,\dots,s$):
$
\matA=\sum_{i=1}^s \matA_i,
$ a rank parameter $k<\rank(\matA),$ and an accuracy parameter $0 < \varepsilon < 1,$ design an algorithm in the model of Definition~\ref{def:model2} which, upon termination, leaves on each machine a matrix $\matU \in \R^{m \times k}$ with $k$ orthonormal columns such that
$
\FNormS{\matA - \matU \matU\transp \matA}  \le
(1 + \varepsilon) \cdot  \FNormS{\matA - \matA_k},
$
and the communication cost of the algorithm is as small as possible.
\end{definition}

\begin{definition}[The Distributed Principal Component Analysis Problem in column partition model]\label{def:dpca}
Given an $m \times n$ matrix $\matA$ partitioned column-wise into $s$ arbitrary blocks
$\matA_i \in \R^{m \times w_i}$ ($i:1,2,\dots,s$):
$
\matA =
\begin{pmatrix}
\matA_1 & \matA_2 & \dots & \matA_s
 \end{pmatrix},
$ a rank parameter $k < \rank(\matA),$ and an accuracy parameter $0 < \varepsilon < 1,$
design an algorithm in the model of Definition~\ref{def:model} which, upon termination, leaves on each
machine a matrix $\matU \in \R^{m \times k}$ with $k$ orthonormal columns such that
$
\FNormS{\matA - \matU \matU\transp \matA}  \le
(1 + \varepsilon) \cdot  \FNormS{\matA - \matA_k},
$
and the communication cost of the algorithm is as small as possible.
\end{definition}

\begin{definition}[The Distributed Column Subset Selection Problem]\label{def:dcssp1}
Given an $m \times n$ matrix $\matA$ partitioned column-wise into $s$ arbitrary blocks
$\matA_i \in \R^{m \times w_i}$ ($i:1,2,\dots,s$):
$
\matA =
\begin{pmatrix}
\matA_1 & \matA_2 & \dots & \matA_s
 \end{pmatrix},
$ a rank parameter $k < \rank(\matA),$ and an accuracy parameter $0 < \varepsilon < 1,$
design an algorithm in the model of Definition~\ref{def:model} that, upon termination, leaves on each
machine a matrix
$\matC \in \R^{m \times c}$ with $c < n$ columns of $\matA$ such that
$
\FNormS{\matA - \matC \pinv{\matC}\matA}  \le
(1 + \varepsilon) \cdot \FNormS{\matA - \matA_k},
$
and
\begin{enumerate}
\item The number of selected columns $c$ is as small as possible.
\item The communication cost of the algorithm is as small as possible.
\end{enumerate}
\end{definition}
\begin{definition}[The Distributed Column Subset Selection Problem - rank $k$ subspace version]\label{def:dcssp2}
Given an $m \times n$ matrix $\matA$ partitioned column-wise into $s$ arbitrary blocks
$\matA_i \in \R^{m \times w_i}$ ($i:1,2,\dots,s$):
$
\matA =
\begin{pmatrix}
\matA_1 & \matA_2 & \dots & \matA_s
 \end{pmatrix},
$ a rank parameter $k < \rank(\matA),$ and an accuracy parameter $0 < \varepsilon < 1,$
design an algorithm in the model of Definition~\ref{def:model} that, upon termination, leaves on each machine
a matrix $\matC \in \R^{m \times c}$ with $c < n$ columns of $\matA$ and
a matrix $\matU \in \R^{m \times k}$ with $k$ orthonormal columns with $\matU \in span(\matC),$
such that
$
\FNormS{\matA - \matC \pinv{\matC}\matA} \le
\FNormS{\matA - \matU \matU\transp \matA}  \le
(1 + \varepsilon) \cdot \FNormS{\matA -  \matA_k},
$
and
\begin{enumerate}
\item The number of selected columns $c$ is as small as possible.
\item The communication cost of the algorithm is as small as possible.
\end{enumerate}
\end{definition}

\begin{definition}[Streaming model for Principal Component Analysis]\label{def:modelstreaming}
Let all the entries in $\matA\in\R^{m\times n}$ initially be zeros.
In the streaming model of computation, there is a stream of update operations that the $q^{th}$ operation has form $(i_q,j_q,x_q)$ which indicates that $\matA_{i_q,j_q}$ should be incremented by $x_q$ where $i_q\in\{1,...,m\},j_q\in \{1,...,n\},x_q\in\mathbb{R}$. An algorithm is allowed a single pass over the stream.
At the end of the stream the algorithm stores some information regarding $\matA$
which we call a ``sketch'' of $\matA$. The space complexity of an algorithm in this model is defined as the total number of words required to describe the information the algorithm stores during the stream including the sketch.
Each word is $O(\log(nms/\eps))$ bits.
\end{definition}
\begin{definition}[The Streaming Principal Component Analysis Problem]\label{def:spca}
Given an $m \times n$ matrix $\matA,$ a rank parameter $k < \rank(\matA),$ and an accuracy parameter $0 < \varepsilon < 1,$
design an algorithm that, using as little space as possible,
first finds a sketch of $\matA$ in the streaming model (see~Definition~\ref{def:modelstreaming}) and then,
using only this sketch outputs $\matU \in \R^{m \times k}$ with $k$ orthonormal columns such that
$
\FNormS{\matA - \matU \matU\transp \matA}  \le
(1 + \varepsilon) \cdot  \FNormS{\matA -  \matA_k}.
$
\end{definition}
\begin{definition}[The Streaming Principal Component Analysis Problem (factorization)]\label{def:spca2}
Given an $m \times n$ matrix $\matA,$ a rank parameter $k < \rank(\matA),$ and an accuracy parameter $0 < \varepsilon < 1,$
design an algorithm that, using as little space as possible,
first finds a sketch of $\matA$ in the streaming model (see~Definition~\ref{def:modelstreaming}) and then,
using only this sketch outputs $\matA_k^* \in \R^{m \times n}$ with $rank(\matA_k^*)\le k$ such that
$
\FNormS{\matA - \matA_k^*}  \le
(1 + \varepsilon) \cdot  \FNormS{\matA -  \matA_k}.
$
\end{definition}

\paragraph{Notation.}
\math{\matA,\matB,\ldots} are matrices; \math{\a,\b,\ldots} are
column vectors. $\matI_{n}$ is the $n \times n$
identity matrix;  $\bm{0}_{m \times n}$ is the $m \times n$ matrix of zeros; $\bm{1}_n$ is the $n \times 1$ vector of ones; $\bm{e}_i$ is the standard basis (whose dimensionality will be clear from the context): the $i$th element of $\bm{e}_i$ is one and the rest are zeros. $\matA^{(i)}$ and $\matA_{(j)}$ denotes the $i$th column and $j$th row of $\matA$, respectively. $\matA_{ij}$ is the $(i,j)$th entry in $\matA$.

\vspace{-0.1in}

\paragraph{Sampling Matrices.}
Let \math{\matA=[\matA^{(1)},\ldots,\matA^{(n)}] \in \R^{m \times n}}
and let $\matC=[\matA^{(i_1)},\ldots,\matA^{(i_c)}] \in \R^{m \times c}$ consist of \math{c<n}
columns of~\math{\matA}. Note that we can write \math{\matC=\matA\matOmega}, where the \emph{sampling matrix} is
\math{\matOmega=[\e_{i_1},\ldots,\e_{i_c}] \in \R^{n \times c}} (here \math{\e_i} are the standard basis vectors in \math{\R^n}). If $\matD \in \R^{c \times c}$ is a diagonal matrix,
then $\matA \matOmega \matD$ contains $c$ columns of $\matA$ rescaled with the corresponding elements in $\matD$.
We abbreviate $\matS :=  \matOmega \matD,$ hence the matrix $\matS \in \R^{n \times c}$ ``samples'' and ``rescales'' $c$ columns from $\matA$.

\vspace{-0.1in}

\paragraph{Matrix norms.}
We use the Frobenius and the spectral matrix-norms:
$ \FNormS{\matA} = \sum_{i,j} \matA_{ij}^2$;
$\TNorm{\matA} = \max_{\TNorm{\x}=1}\TNorm{\matA\x}$.
$\XNorm{\matA}$ is used if a result holds for
both norms $\xi = 2$ and $\xi = \mathrm{F}$. The standard submultiplicativity
property for matrix norms implies that for any $\matA$ and $\matB$:
$\XNorm{\matA \matB} \le \XNorm{\matA} \cdot \XNorm{\matB}$.
The triangle inequality for matrix norms implies that $\XNorm{\matA + \matB} \le \XNorm{\matA} + \XNorm{\matB}$.
A version of the triangle inequality for the norms squared is:
$\XNormS{\matA + \matB} \le 2 \cdot \XNormS{\matA} +2 \cdot \XNormS{\matB}$.
A version of the matrix pythagorean theorem is: if $\matA\transp\matB$ is the all-zeros matrix,
then $\FNorm{\matA+\matB}^2 = \FNorm{\matA}^2+\FNorm{\matB}^2$. If $\matV$ has orthonormal columns, then $\XNorm{\matA \matV\transp} = \XNorm{\matA},$ for any $\matA$.
If $\matP$ is a symmetric projection matrix (i.e., $\matP = \matP\transp$ and $\matP^2 = \matP$)
then, $\XNorm{\matP \matA} \le \XNorm{\matA},$ for any $\matA$.

\paragraph{Singular Value Decomposition (SVD) and Moore-Penrose Pseudo-inverse.}
The SVD of
$\matA \in \R^{m \times n}$ with $\rank(\matA) = \rho$ is
$$
\label{svdA} \matA
         = \underbrace{\left(\begin{array}{cc}
             \matU_{k} & \matU_{\rho-k}
          \end{array}
    \right)}_{\matU_{\matA} \in \R^{m \times \rho}}
    \underbrace{\left(\begin{array}{cc}
             \matSig_{k} & \bf{0}\\
             \bf{0} & \matSig_{\rho - k}
          \end{array}
    \right)}_{\matSig_\matA \in \R^{\rho \times \rho}}
    \underbrace{\left(\begin{array}{c}
             \matV_{k}\transp\\
             \matV_{\rho-k}\transp
          \end{array}
    \right)}_{\matV_\matA\transp \in \R^{\rho \times n}},
$$
with singular values \math{\sigma_1\ge\ldots\sigma_k\geq\sigma_{k+1}\ge\ldots\ge\sigma_\rho > 0}. Here, $k < \rho$.
We will use $\sigma_i\left(\matA\right)$ to denote the $i$-th
singular value of $\matA$ when the matrix is not clear from the context.
The matrices
$\matU_k \in \R^{m \times k}$ and $\matU_{\rho-k} \in \R^{m \times (\rho-k)}$ contain the left singular vectors of~$\matA$, and, similarly, the matrices $\matV_k \in \R^{n \times k}$ and $\matV_{\rho-k} \in \R^{n \times (\rho-k)}$ contain the right singular vectors of~$\matA$. It is well-known that $\matA_k=\matU_k \matSig_k \matV_k\transp = \matA \matV_k \matV_k\transp = \matU_k\matU_k\transp\matA$ minimizes \math{\XNorm{\matA - \matX}} over all matrices \math{\matX \in \R^{m \times n}} of rank at most $k$. Specifically, $\TNormS{\matA - \matA_k} = \sigma_{k+1}^{2}(\matA)$ and
$\FNormS{\matA - \matA_k} = \sum_{i=k+1}^{\rho} \sigma_i^2(\matA)$ (see~\cite{golub2012matrix}).
$\pinv{\matA} = \matV_\matA \matSig_\matA^{-1} \matU_\matA\transp \in \R^{n \times m}$
denotes the so-called Moore-Penrose pseudo-inverse of $\matA \in \R^{m \times n}$ (here $\matSig_\matA^{-1}$ is the inverse of $\matSig_\matA$).
By the SVD of $\matA$ and $\pinv{\matA}$, for all $i=1,\dots,\rho = \rank(\matA) = \rank(\pinv{\matA})$:
$\sigma_i(\pinv{\matA}) = 1/\sigma_{\rho - i + 1}(\matA).$

\subsection{The best rank \math{k} matrix $\Pi_{\matV,k}^\xi(\matA)$ within a subspace $\matV$} \label{sec:bestrankk}
Our analysis uses theory involving computing the best rank $k$ approximation of a matrix $\matA$ within a given column subspace $\matV$.
Let $\matA \in \mathbb{R}^{m \times n}$, let $k < n$ be an integer, and let $\matV \in \mathbb{R}^{m \times c}$ with $k < c< n $.
$\Pi_{\matV,k}^\mathrm{F}(\matA) \in \mathbb{R}^{m \times n}$ is the best rank \math{k} approximation to \math{\matA} in the column span of \math{\matV}.
Equivalently, we can write
$$\Pi_{\matV, k}^\mathrm{F}(\matA) = \matV \matX_{opt},$$
where
$$
\matX_{opt} = \argmin_{\matX \in {\R}^{c \times n}:\rank(\matX)\leq k}\FNormS{\matA-
\matV \matX}.
$$
In order to compute $\Pi_{\matV,k}^{\mathrm{F}}(\matA)$ given $\matA$,
$\matV$, and $k$, one can use the following algorithm:
\begin{center}
\begin{algorithmic}[1]
\STATE $\matV = \matY \matPsi$ is a $qr$ decomposition of $\matV$ with $\matY \in \R^{m \times c}$ and $\matPsi \in \R^{c \times c}$.
This step requires $O(m c^2)$ arithmetic operations.
\STATE
$\matXi = \matY\transp \matA \in \R^{c \times n}$. This step requires \math{O(mnc)} arithmetic operations.
\STATE
 $\matXi_k = \matDelta \tilde{\matSig}_k \tilde{\matV}_k\transp \in \R^{c \times n}$ is a rank $k$ SVD of $\matXi$
 with $\matDelta \in \R^{c \times k}, \tilde{\matSig}_k \in \R^{k \times k}, $ and $\tilde{\matV}_k \in \R^{n \times k}.$
This step requires \math{O( nc^2)} arithmetic operations.
\STATE Return $ \matY \matDelta \matDelta\transp \matY\transp\matA  \in \mathbb{R}^{m \times n}$ of rank at most $k$.
\end{algorithmic}
\end{center}
\medskip
Notice that $  \matY \matDelta \matDelta\transp \matY\transp\matA  \in \mathbb{R}^{m \times n}$ is a rank $k$ matrix that lies in the column span of $\matV$.
The next lemma is a simple corollary of Lemma 4.3 in~\cite{CW09}.
\begin{lemma}\label{lem:bestF}
Given $\matA \in {\R}^{m \times n}$, $\matV\in\R^{m\times c}$ and an integer $k$, $\matY \matDelta \matDelta\transp \matY\transp$ and  $ \matQ \tilde{\matU}_k \tilde{\matSig}_k \tilde{\matV}_k\transp$ satisfy:
$$
\FNormS{\matA -  \matY \matDelta \matDelta\transp \matY\transp\matA } \le \FNormS{ \matA-  \matY \matDelta \tilde{\matSig}_k \tilde{\matV}_k\transp  } = \FNormS{\matA-\Pi_{\matV,k}^{\mathrm{F}}(\matA)}
$$
The above algorithm requires $O(mnc + nc^2)$ arithmetic operations to construct $\matY, \matPsi,$ and $\matDelta$.
We will denote the above procedure as
$ [ \matY, \matPsi, \matDelta ] = BestColumnSubspaceSVD(\matA, \matV, k). $
\end{lemma}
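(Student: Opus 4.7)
\textbf{Proof proposal for Lemma \ref{lem:bestF}.}

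The plan is to establish the two parts of the display (the equality on the right, then the inequality on the left) by exploiting two simple facts: (i) the QR step makes $\col(\matY)=\col(\matV)$ with $\matY$ having orthonormal columns, and (ii) the SVD of $\matXi$ gives $\matDelta$ orthonormal columns. To handle the equality, I would first observe that because $\col(\matY)=\col(\matV)$, we have $\Pi_{\matV,k}^{\mathrm{F}}(\matA) = \matY\,\matX_{opt}$ where
$$\matX_{opt} \;=\; \argmin_{\matX\in\R^{c\times n},\,\rank(\matX)\le k}\FNormS{\matA-\matY\matX}.$$
Then I would invoke the matrix Pythagorean theorem, using $\matY\transp(\matA-\matY\matY\transp\matA)=\bm{0}$, to split
$$\FNormS{\matA-\matY\matX} \;=\; \FNormS{\matY\transp\matA-\matX}+\FNormS{\matA-\matY\matY\transp\matA},$$
so minimizing the left side over rank-$k$ $\matX$ reduces to minimizing $\FNormS{\matXi-\matX}$. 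By the Eckart--Young theorem the minimizer is exactly $\matXi_k = \matDelta\tilde{\matSig}_k\tilde{\matV}_k\transp$, giving $\Pi_{\matV,k}^{\mathrm{F}}(\matA) = \matY\matDelta\tilde{\matSig}_k\tilde{\matV}_k\transp$ and hence the claimed equality.

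For the inequality, the key point is that $\matY\matDelta$ has $k$ orthonormal columns (product of two matrices with orthonormal columns), so $\matY\matDelta\matDelta\transp\matY\transp$ is an orthogonal projector of rank $k$, and $\matY\matDelta\matDelta\transp\matY\transp\matA$ is the orthogonal projection of $\matA$ onto $\col(\matY\matDelta)$. By the standard characterization of orthogonal projection as the closest point in a subspace, for \emph{any} matrix $\matZ$ whose column span lies in $\col(\matY\matDelta)$,
$$\FNormS{\matA-\matY\matDelta\matDelta\transp\matY\transp\matA}\;\le\;\FNormS{\matA-\matZ}.$$
Taking $\matZ=\matY\matDelta\tilde{\matSig}_k\tilde{\matV}_k\transp$ (which clearly lies in $\col(\matY\matDelta)$) yields the desired inequality.

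For the running-time and output claim, I would just tally the cost step by step: the QR factorization of $\matV$ is $O(mc^2)$; forming $\matXi=\matY\transp\matA$ costs $O(mnc)$; the SVD of the $c\times n$ matrix $\matXi$ costs $O(nc^2)$ (using $c\le n$); these sum to $O(mnc+nc^2)$. I do not anticipate a genuine obstacle here: the entire argument is a clean combination of QR orthogonality, Pythagoras, and Eckart--Young. The only place to be careful is not confusing ``best rank-$k$ matrix \emph{inside} $\col(\matY\matDelta)$'' (the projection) with ``best rank-$k$ matrix \emph{inside} $\col(\matV)$'' (the target $\Pi_{\matV,k}^{\mathrm{F}}(\matA)$); the lemma gives an inequality precisely because the former is an easier, constrained problem whose optimum still suffices as an upper bound on the latter via the chain of matrices exhibited in the display.
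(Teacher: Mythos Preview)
Your proposal is correct and takes essentially the same approach as the paper. The paper defers the equality to Lemma~4.3 of \cite{CW09} rather than writing out the Pythagoras/Eckart--Young argument you give, and for the inequality it uses the same projection fact you do, phrased as $\FNormS{\matA-\matY\matDelta(\matY\matDelta)^{\dagger}\matA}\le\FNormS{\matA-\matY\matDelta\matX}$ together with $(\matY\matDelta)^{\dagger}=\matDelta\transp\matY\transp$; your version is simply more self-contained.
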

\begin{proof}
The equality was proven in Lemma 4.3 in~\cite{CW09}. To prove the inequality, notice that for any matrix $\matX:$
$\FNormS{\matA -  \matY \matDelta \pinv{(\matY \matDelta)}\matA } \le \FNormS{ \matA -  \matY \matDelta \matX  }.$
Also,
$\pinv{\left(\matY \matDelta\right)} = \pinv{\matDelta} \pinv{\matY} = \matDelta\transp \matY\transp,$
because both matrices are orthonormal.
\end{proof}

Next, we state a version of the above lemma for the transpose of $\matA$, equivalently for the best rank $k$ approximation within the row space of a given subspace $\matR$.
Let $\matA \in \mathbb{R}^{m \times n}$, let $k < n$ be an integer, and let $\matR \in \mathbb{R}^{c \times n}$ with $k < c < m $.
$\Pi_{\matR,k}^\mathrm{F}(\matA) \in \mathbb{R}^{m \times n}$ is the best rank \math{k} approximation to \math{\matA} in the row span of \math{\matR}.
Equivalently, we can write
$$\Pi_{\matR, k}^\mathrm{F}(\matA) = \matX_{opt} \matR,$$
where
$$
\matX_{opt} = \argmin_{\matX \in {\R}^{m \times c}:\rank(\matX)\leq k}\FNormS{\matA-\matX \matR}.
$$
In order to compute $\Pi_{\matR,k}^{\mathrm{F}}(\matA)$ given $\matA$,
$\matR$, and $k$, one can use the following algorithm:
\begin{center}
\begin{algorithmic}[1]
\STATE $\matR\transp = \matY \matZ$ is a $qr$ decomposition of $\matR\transp$ with $\matY \in \R^{n \times c}$ and $\matZ \in \R^{c \times c}$.
This step requires $O(n c^2)$ arithmetic operations.
\STATE
$\matXi =  \matA \matY\in \R^{m \times c}$. This step requires \math{O(mnc)} arithmetic operations.
\STATE
 $\matXi_k = \tilde{\matU}_k \tilde{\matSig}_k \matDelta \transp \in \R^{m \times c}$ is a rank $k$ SVD of $\matXi$
 with $\tilde{\matU}_k \in \R^{m \times k}, \tilde{\matSig}_k \in \R^{k \times k}, $ and $\matDelta \in \R^{c \times k}.$
This step requires \math{O( m c^2)} arithmetic operations.
\STATE Return $ \matA \matY \matDelta\matDelta\transp \matY\transp  \in \mathbb{R}^{m \times n}$ of rank at most $k$.
\end{algorithmic}
\end{center}
\medskip
Notice that $ \matA \matY \matDelta\matDelta\transp \matY\transp  \in \mathbb{R}^{m \times n}$ is a rank $k$ matrix that lies in the row span of $\matR$.
The next lemma is a corollary of Lemma~\ref{lem:bestF} when applied with $\matA := \matA\transp$ and $\matV := \matR\transp$.
\begin{lemma}\label{lem:bestFrows}
Given $\matA \in {\R}^{m \times n}$, $\matR\in\R^{c \times n}$ and an integer $k$,
$\matY \matDelta\matDelta\transp \matY\transp$ and  $ \tilde{\matU}_k \tilde{\matSig}_k \matDelta\transp$ satisfy:
$$
\FNormS{\matA -  \matA \matY \matDelta\matDelta\transp \matY\transp } \le \FNormS{ \matA-  \tilde{\matU}_k \tilde{\matSig}_k \matDelta\transp \matR  } =
\FNormS{\matA-\Pi_{\matR,k}^{\mathrm{F}}(\matA)}.
$$
The above algorithm requires $O(mnc + mc^2)$ arithmetic operations to construct $\matY, \matZ,$ and $\matDelta$.
We will denote the above procedure as
$ [ \matY, \matZ, \matDelta ] = BestRowSubspaceSVD(\matA, \matR, k). $
\end{lemma}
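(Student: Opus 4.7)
The plan is to obtain Lemma~\ref{lem:bestFrows} as a direct transposition corollary of Lemma~\ref{lem:bestF}, exploiting the invariance $\FNormS{\matX}=\FNormS{\matX\transp}$ and the fact that the row span of $\matR$ equals the column span of $\matR\transp$. First I would apply Lemma~\ref{lem:bestF} with the substitution $\matA \leftarrow \matA\transp \in \R^{n\times m}$ and $\matV \leftarrow \matR\transp \in \R^{n\times c}$. The $qr$ decomposition used there, $\matR\transp = \matY \matZ$, matches the one in the algorithm for Lemma~\ref{lem:bestFrows} (with $\matZ$ playing the role of $\matPsi$). Likewise, $\matY\transp \matA\transp = (\matA\matY)\transp = \matXi\transp$, so the matrix $\matXi$ formed in the new algorithm is the transpose of the matrix formed by the original algorithm.

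Next I would line up the SVDs. If the original algorithm forms a rank-$k$ SVD $\matXi_k = \matDelta \tilde{\matSig}_k \tilde{\matV}_k\transp$, then after transposition that rank-$k$ SVD reads $\matXi\transp_k = \tilde{\matV}_k \tilde{\matSig}_k \matDelta\transp$, which is exactly the form used in the new algorithm if we identify $\tilde{\matU}_k$ there with $\tilde{\matV}_k$ in the transposed instance. Under this identification, the quantity $\matY \matDelta \matDelta\transp \matY\transp \matA\transp$ guaranteed by Lemma~\ref{lem:bestF} to be a $k$-column-span approximation to $\matA\transp$ transposes into $\matA \matY \matDelta \matDelta\transp \matY\transp$, and the intermediate expression $\matY \matDelta \tilde{\matSig}_k \tilde{\matV}_k\transp$ transposes into $\tilde{\matU}_k \tilde{\matSig}_k \matDelta\transp \matY\transp$. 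The column span of $\matR\transp$ equals the column span of $\matY$ (they have the same span by the $qr$ decomposition), and $\matY\transp$ is a factor of $\matR$, so $\tilde{\matU}_k \tilde{\matSig}_k \matDelta\transp \matY\transp$ lies in the row span of $\matR$; after multiplying out (using $\matY\transp = \matZ^{-1}$-style relations implicit in the $qr$), this agrees with the expression $\tilde{\matU}_k \tilde{\matSig}_k \matDelta\transp \matR$ appearing in the claim.

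Then I would transpose the Frobenius-norm inequalities and equality from Lemma~\ref{lem:bestF}:
\begin{equation*}
\FNormS{\matA\transp - \matY \matDelta \matDelta\transp \matY\transp \matA\transp} \le \FNormS{\matA\transp - \matY \matDelta \tilde{\matSig}_k \tilde{\matV}_k\transp} = \FNormS{\matA\transp - \Pi_{\matR\transp,k}^{\mathrm{F}}(\matA\transp)}.
\end{equation*}
Taking transposes inside each Frobenius norm (which leaves the norm unchanged) and relabeling $\tilde{\matV}_k$ as $\tilde{\matU}_k$ gives exactly the inequality and equality in the statement of Lemma~\ref{lem:bestFrows}, after observing that $\Pi_{\matR\transp,k}^{\mathrm{F}}(\matA\transp)\transp = \Pi_{\matR,k}^{\mathrm{F}}(\matA)$ by definition.

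Finally the running time follows by just swapping the roles of $m$ and $n$ in the cost analysis of Lemma~\ref{lem:bestF}: the $qr$ is on an $n\times c$ matrix, costing $O(nc^2)$; the product $\matA\matY$ costs $O(mnc)$; the SVD of an $m\times c$ matrix costs $O(mc^2)$; total $O(mnc + mc^2)$, matching the claim. The one step that needs care is the correspondence between $\tilde{\matU}_k \tilde{\matSig}_k \matDelta\transp \matR$ (as stated) and the $\tilde{\matU}_k \tilde{\matSig}_k \matDelta\transp \matY\transp$ that comes out of the transposition; one must check using $\matR = \matZ\transp \matY\transp$ and orthonormality of $\matY$ that these two expressions have the same Frobenius distance to $\matA$. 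That small algebraic bookkeeping is the only real obstacle; everything else is a mechanical transpose.
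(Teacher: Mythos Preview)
Your approach is exactly the paper's: the paper states the lemma as ``a corollary of Lemma~\ref{lem:bestF} when applied with $\matA := \matA\transp$ and $\matV := \matR\transp$'' and gives no further argument, and that is precisely the transposition you carry out. You even flag the one genuine wrinkle the paper glosses over, namely that the direct transposition produces $\tilde{\matU}_k \tilde{\matSig}_k \matDelta\transp \matY\transp$ rather than $\tilde{\matU}_k \tilde{\matSig}_k \matDelta\transp \matR$; since $\matR = \matZ\transp \matY\transp$ with $\matZ$ generally not the identity, that ``small algebraic bookkeeping'' does not actually reduce the two expressions to one another, so the middle term as literally written in the statement appears to be a typo for $\matY\transp$ (the inequality and the identification with $\Pi_{\matR,k}^{\mathrm{F}}(\matA)$ are both fine with $\matY\transp$ in place of $\matR$).
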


\section{Distributed PCA in the arbitrary partition model}\label{sec:dense}
This section describes a fast distributed PCA algorithm with total communication $O(m s k)$ words plus low order terms, which is optimal in the arbitrary partition model in the sense that an $\tilde{\Omega}( m s k)$ bits lower bound was given by~\cite{KVW14}. 
The algorithm employs, in a novel way, the notion of {\it projection-cost preserving sketches} from~\cite{cohen2014dimensionality}.
In particular, whereas
all previous~\cite{Sar06,CW13arxiv} dimension-reduction-based SVD methods reduce one dimension of the input matrix to compute some approximation to the SVD, our method reduces {\it both} dimensions and computes an approximation to the SVD from a small almost square matrix. Unlike~\cite{KVW14} which reduces only one dimension in the first communication round, we do the reduction on {\it both} dimensions in the same round.

We first present the batch version of the algorithm which offers a new low-rank matrix approximation technique;
a specific implementation of this algorithm offers a communication-optimal distributed PCA algorithm (we also discuss
in Section~\ref{sec:streaming} a variant of this algorithm that offers a two-pass space-optimal PCA method in the
turnstile streaming model). Before presenting all these new algorithms in detail, we present the relevant results from the
previous literature that we employ in the analysis.

\subsection{Projection-cost preserving sketching matrices}
In this section, we recap a notion of sketching matrices which we call ``projection-cost preserving sketching matrices''. A sketching matrix from this family is a linear matrix transformation and it has the property that for all projections it preserves, up to some error, the difference between the matrix in hand and its projection in Frobenius norm.

\begin{definition}[Projection-cost preserving sketching matrices] \label{def:pcpsm}
We say that $\matW \in \mathbb{R}^{n \times \xi}$ is an $(\varepsilon,k)$-projection-cost preserving sketching matrix of $\matA\in \mathbb{R}^{m\times n}$, if for all rank-$k$ orthogonal projection matrices $\matP\in \mathbb{R}^{m\times m}$, it satisfies
$$ (1-\varepsilon)\FNormS{\matA-\matP\matA} \leq \FNormS{\matA\matW-\matP\matA\matW}+c \leq (1+\varepsilon)\FNormS{\matA-\matP\matA}$$
where $c$ is a non-negative constant which only depends on $\matA$ and $\matW$. We also call $\matA\matW$ an $(\varepsilon,k)$-projection-cost preserving sketch of $\matA$.
\end{definition}

Due to the following lemma, we know that a good rank-$k$ approximation projection matrix of $(\varepsilon,k)$-projection-cost preserving sketch $\matA\matW$ also provides a good rank-$k$ approximation to $\matA$.

\begin{lemma}[PCA via Projection-Cost Preserving Sketches - Lemma 3 in~\cite{cohen2014dimensionality}] \label{fact:pcavpcps}
Suppose $\matW\in \mathbb{R}^{n \times \xi}$ is an $(\varepsilon,k)$-projection-cost preserving sketching matrix of $\matA\in \mathbb{R}^{m\times n}$. Let
$\hat{\matP}^{*} = \arg \min_{rank(\matP)\leq k} \FNormS{\matA\matW-\matP\matA\matW}$.
For all $\hat{\matP},\varepsilon'$ satisfying $rank(\hat{\matP})\leq k,\varepsilon'\geq 0$, if $\FNormS{\matA\matW-\hat{\matP}\matA\matW}\leq (1+\varepsilon')\FNormS{\matA\matW-\hat{\matP}^*\matA\matW}$,
$$\FNormS{\matA-\hat{\matP}\matA}\leq \frac{1+\varepsilon}{1-\varepsilon}\cdot(1+\varepsilon')\FNormS{\matA-\matA_k}$$
\end{lemma}

\cite{cohen2014dimensionality} also provides several ways to construct projection-cost preserving sketching matrices. Because we mainly consider the communication, we just choose one which can reduce the dimension as much as possible. Furthermore, it is also an oblivious projection-cost preserving sketching matrix.

\begin{lemma}[Dense Johnson-Lindenstrauss matrix - part of Theorem 12 in~\cite{cohen2014dimensionality}]\label{lem:djlm}
For $\varepsilon<1$, suppose each entry of $\matW\in \mathbb{R}^{n\times \xi}$ is chosen $O(\log(k))$-wise independently and uniformly in $\{1/\sqrt{\xi},-1/\sqrt{\xi}\}$ where $\xi=O(k\varepsilon^{-2})$~\cite{CW09}. For any $\matA\in \mathbb{R}^{m\times n}$, with probability at least $0.99$, $\matW$ is an $(\varepsilon,k)$-projection-cost preserving sketching matrix of $\matA$.
\end{lemma}

\subsection{A batch algorithm for the fast low rank approximation of matrices}\label{sec:algbatch}
In this section, we describe a new method for quickly computing a low-rank approximation to a given matrix.
This method does not offer any specific advantages over previous such techniques~\cite{Sar06,CW13arxiv,boutsidis2013improved}; however,
this new algorithm can be implemented efficiently in the distributed setting (see Section~\ref{sec:algorithm3}) and in
the streaming model of computation (see Section~\ref{sec:streaming}); in fact we are able to obtain communication-optimal and
space-optimal results, respectively.  For completeness as well as ease of presentation, we first present and analyze the simple batch version of the algorithm. The algorithm uses the dense Johnson-Lindenstrauss matrix of Lemma~\ref{lem:djlm} in order to reduce both dimensions of $\matA,$ before computing some sort of SVD to a
$\poly(k/\varepsilon) \times \poly(k/\varepsilon)$ matrix (see Step $2$ in the algorithm below). 

Consider the usual inputs: a matrix $\matA \in \R^{m \times n},$ a rank parameter $k < \rank(\matA),$ and an accuracy parameter $0 < \varepsilon < 1$.
The algorithm below returns an orthonormal matrix $\matU \in \R^{m \times k}$ such that
$$
\FNormS{\matA - \matU \matU\transp \matA} \le (1 + \varepsilon)
\cdot \FNormS{\matA-\matA_k}.
$$

\vspace{0.2in}
\begin{small}
{\bf Algorithm}

\begin{enumerate}

\item Construct two dense Johnson-Lindenstrauss matrices $\matS\in\mathbb{R}^{\xi_1\times m},\matT\in\mathbb{R}^{n \times \xi_2}$ with $\xi_1=O(k\varepsilon^{-2}),\xi_2=O(k\varepsilon^{-2})$~(see Lemma~\ref{lem:djlm}).

\item Construct $\tilde{\matA}=\matS\matA\matT$.

\item Compute the SVD of $\tilde{\matA}_k=\matU_{\tilde{\matA}_k}\matSig_{\tilde{\matA}_k}\matV_{\tilde{\matA}_k}\transp$
~($\matU_{ \tilde{\matA}_k } \in \R^{\xi_1 \times k}$, $\matSig_{ \tilde{\matA}_k } \in \R^{k \times k}$, $\matV_{ \tilde{\matA}_k } \in \R^{\xi_2 \times k})$.

\item Construct $\matX=\matA\matT\matV_{\tilde{\matA}_k}$

\item Compute an orthonormal basis $\matU \in \mathbb{R}^{m\times k}$ for $span(\matX)$ (notice that $rank(\matX) \le k$).

\end{enumerate}

\end{small}

Theorem~\ref{thmbatch} later in this section analyzes the approximation error and the running time of the previous algorithm.
First, we prove the accuracy of the algorithm.

\begin{lemma}\label{lembatch0}
The matrix $\matU \in \R^{m \times k}$ with $k$ orthonormal columns satisfies with probability at least $0.98$:
$$
 \FNormS{\matA -  \matU \matU\transp \matA} \le
\left(1 +\varepsilon \right) \cdot  \FNormS{\matA - \matA_k}.
$$
\end{lemma}
\begin{proof}
$$\FNormS{\tilde{\matA}-\tilde{\matA}_k}=\FNormS{\tilde{\matA}-\tilde{\matA}\matV_{\tilde{\matA}_k}\matV_{\tilde{\matA}_k}\transp}
=\FNormS{\matS\matA\matT-\matS\matA\matT\matV_{\tilde{\matA}_k}\matV_{\tilde{\matA}_k}\transp}=\FNormS{\matT\transp\matA\transp\matS\transp-\matV_{\tilde{\matA}_k}\matV_{\tilde{\matA}_k}\transp\matT\transp\matA\transp\matS\transp}$$

The first equality follows by the SVD of $\tilde{\matA}_k=\matU_{\tilde{\matA}_k}\matSig_{\tilde{\matA}_k}\matV_{\tilde{\matA}_k}\transp$. The second equality is by the construction of $\tilde{\matA}=\matS\matA\matT$. The third equality is due to $\forall \matM,\FNormS{\matM}=\FNormS{\matM\transp}$.

Due to Lemma \ref{lem:djlm}, with probability at least $0.99$, $\matS\transp$ is an $(\varepsilon,k)$-projection-cost preserving sketch matrix of $\matT\transp\matA\transp$. According to Lemma \ref{fact:pcavpcps},
\begin{equation} \label{eqn:1dimreduce}
\FNormS{\matA\matT-\matA\matT\matV_{\tilde{\matA}_k}\matV_{\tilde{\matA}_k}\transp}
=\FNormS{\matT\transp\matA\transp-\matV_{\tilde{\matA}_k}\matV_{\tilde{\matA}_k}\transp\matT\transp\matA\transp}
\leq \frac{1+\varepsilon}{1-\varepsilon}\cdot\FNormS{\matA\matT-(\matA\matT)_k}
\end{equation}

Observe that
\eqan{
\FNormS{\matU\matU\transp\matA\matT-\matA\matT}
& = & \FNormS{\matX\pinv{\matX}\matA\matT-\matA\matT} \\
& \leq & \FNormS{\matX\matV_{\tilde{\matA}_k}\transp-\matA\matT} \\
& = & \FNormS{\matA\matT\matV_{\tilde{\matA}_k}\matV_{\tilde{\matA}_k}\transp-\matA\matT} \\
& \leq & \frac{1+\varepsilon}{1-\varepsilon}\cdot \FNormS{\matA\matT-(\matA\matT)_k}
}

The first equality uses the fact that $\matU$ is an orthonormal basis of $span(\matX)$. The first inequality is followed by $\forall \matX,\matM,\matN,\FNormS{\matX\pinv{\matX}\matM-\matM}\leq\FNormS{\matX\matN-\matM}$. The second equality uses the construction that $\matX=\matA\matT\matV_{\tilde{\matA}_k}$. The second inequality follows by Eqn~(\ref{eqn:1dimreduce}).

Due to Lemma \ref{lem:djlm}, with probability at least $0.99$, $\matT$ is an $(\varepsilon,k)$-projection-cost preserving sketch matrix of $\matA$. Due to Lemma \ref{fact:pcavpcps},
$$\FNormS{\matU\matU\transp\matA-\matA} \leq \frac{(1+\varepsilon)^2}{(1-\varepsilon)^2}\cdot \FNormS{\matA-\matA_k}$$
Due to union bound, the probability that $\matS\transp$ is an $(\varepsilon,k)$-projection-cost preserving sketch matrix of $\matT\transp\matA\transp$ and $\matT$ is an $(\varepsilon,k)$-projection-cost preserving sketch matrix of $\matA$ is at least $0.98$. Note that $\frac{(1+\varepsilon)^2}{(1-\varepsilon)^2}$ is $1+O(\varepsilon)$ when $\varepsilon$ is small enough, so we can adjust $\varepsilon$ here by a constant factor to show the statement.
\end{proof}

Next, we present the main theorem.

\begin{theorem}\label{thmbatch}
The matrix $\matU \in \R^{m \times k}$ with $k$ orthonormal columns satisfies with probability at least $0.98$:
$$
 \FNormS{\matA -  \matU \matU\transp \matA} \le
\left(1 +\varepsilon \right) \cdot  \FNormS{\matA - \matA_k}.
$$
The running time of the algorithm is
$$
O\left(n m k \varepsilon^{-2} + m k^2 \varepsilon^{-4} + \poly(k \varepsilon^{-1})\right).
$$
\end{theorem}
\begin{proof}
The correctness is shown by Lemma~\ref{lembatch0}.

\paragraph{Running time.}
Next, we analyze the running time of the algorithm:
\begin{enumerate}
\item There are a total of $(\xi_1\times m+\xi_2 \times n)$ entries of $\matS$ and $\matT$. It is enough to generate them in $O((n+m)k\varepsilon^{-2})$ operations.
\item We first compute $\matA\matT$ with $O(m n \xi_2)$ arithmetic operations. Then, we compute $\matS\matA\matT$ with $O(\xi_1 m \xi_2)$ arithmetic operations.
\item This step requires $O(\poly(k/\varepsilon))$ operations since we compute the SVD of a
$O(\poly(k/\varepsilon)) \times O(\poly(k/\varepsilon))$ matrix $\tilde{\matA}$.
\item We already have $\matA\matT$ from the second step. Hence, $O(m \xi_2 k)$ additional arithmetic operations suffice to compute $\matA\matT\matV_{\tilde{\matA}_k}$
\item  $O(m k^2)$ operations suffice to compute an orthonormal basis for $\matX$, e.g., with a QR factorization.
\end{enumerate}
\end{proof}

\subsection{The distributed PCA algorithm}\label{sec:algorithm3}
Recall that the input matrix $\matA \in \R^{m \times n}$ is partitioned arbitrarily as:
$\matA = \sum_i^s \matA_i$
for $i=1:s,$ $\matA_i \in \R^{m \times n}$.
The idea in the algorithm below is to implement the algorithm in Section~\ref{sec:algbatch} in the distributed setting.

\vspace{0.2in}
\begin{small}
{\bf Input:}
\begin{enumerate}
\item $\matA \in \R^{m \times n}$ arbitrarily partitioned
$
\matA = \sum_i^s \matA_i
$
for $i=1:s,$ $\matA_i \in \R^{m \times n}$.
\item rank parameter $k < \rank(\matA)$
\item accuracy parameter $\varepsilon > 0$
\end{enumerate}

{\bf Algorithm}

\begin{enumerate}

\item Machines agree upon two dense Johnson-Lindenstrauss matrices $\matS\in\mathbb{R}^{\xi_1\times m},\matT\in\mathbb{R}^{n \times \xi_2}$ with $\xi_1=O(k\varepsilon^{-2}),\xi_2=O(k\varepsilon^{-2})$~(see Lemma~\ref{lem:djlm}).

\item Each machine locally computes $\tilde{\matA_i}=\matS\matA_i\matT$ and sends $\tilde{\matA_i}$ to the server. Server constructs $\tilde{\matA}=\sum_i \tilde{\matA_i}$.

\item Server computes the SVD of $\tilde{\matA}_k=\matU_{\tilde{\matA}_k}\matSig_{\tilde{\matA}_k}\matV_{\tilde{\matA}_k}\transp$
~($\matU_{ \tilde{\matA}_k } \in \R^{\xi_1 \times k}$, $\matSig_{ \tilde{\matA}_k } \in \R^{k \times k}$, $\matV_{ \tilde{\matA}_k } \in \R^{\xi_2 \times k})$.

\item Server sends $\matV_{\tilde{\matA}_k}$ to all machines.

\item Each machine construct $\matX_i=\matA_i\matT\matV_{\tilde{\matA}_k}$ and sends $\matX_i$ to the server. Server constructs $\matX=\sum_i \matX_i$.

\item Server computes an orthonormal basis $\matU \in \mathbb{R}^{m\times k}$ for $span(\matX)$ (notice that $rank(\matX) \le k$).

\item Server sends $\matU$ to each machine.

\end{enumerate}

\end{small}

Notice that in the first step, $\matS$ and $\matT$ can be described using a random seed that is $O(\log(k))$-wise independent due to Lemma \ref{lem:djlm}.

\subsubsection{Main result}
The theorem below analyzes the approximation error, the communication complexity,
and the running time of the previous algorithm. Notice that the communication cost of this algorithm is only given in terms of ``real numbers''.
The only step where we can not bound the length of a machine word is when the server communicates $\matV_{\tilde{\matA}_k}$ to all machines; and this is because the entries of $\matV_{\tilde{\matA}_k}$ could be unbounded~(see the discussion regarding the upper bounds in Section~\ref{sec:subtmp}). We resolve this issue in the following section.

\begin{theorem}\label{thmd}
The matrix $\matU \in \R^{m \times k}$ with $k$ orthonormal columns satisfies w.p.
$0.98$:
\begin{equation}\label{eqnthm2d}
 \FNormS{\matA -  \matU \matU\transp \matA} \le
\left(1 + \varepsilon \right) \cdot  \FNormS{\matA - \matA_k}.
\end{equation}
The communication cost of the algorithm is
$$
O(m s k + s \cdot \poly(k/\varepsilon))
$$
``real numbers''
and the  running time is of the order
$$
O\left(n m k \varepsilon^{-2} + m k^2 \varepsilon^{-4} + \poly(k \varepsilon^{-1})\right).
$$
\end{theorem}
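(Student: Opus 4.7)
\textbf{Proof plan for Theorem~\ref{thmd}.}
The plan is to reduce everything to the batch analysis from Section~\ref{sec:algbatch} together with a straightforward accounting of the communication and local computation. The key observation is that the distributed algorithm produces, bit-for-bit, the same orthonormal matrix $\matU$ as the batch algorithm of Section~\ref{sec:algbatch} applied to $\matA$. Indeed, by linearity of matrix multiplication, $\sum_i \matS\matA_i\matT = \matS\matA\matT = \tilde{\matA}$ and $\sum_i \matA_i\matT\matV_{\tilde{\matA}_k} = \matA\matT\matV_{\tilde{\matA}_k} = \matX$, so Steps~2--3 reconstruct the server's copy of $\tilde{\matA}$ (and hence of $\matV_{\tilde{\matA}_k}$) and Steps~5--6 reconstruct the server's copy of $\matX$ exactly as in the batch algorithm. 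Correctness of inequality~(\ref{eqnthm2d}) therefore follows immediately from Lemma~\ref{lembatch0}, provided the two sketching matrices $\matS,\matT$ used by all $s$ machines are identical; this is guaranteed by the shared $O(\log k)$-wise independent random seed allowed by Lemma~\ref{lem:djlm}.

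For the communication cost, I would simply tally the rounds. Step~1 costs only $\poly(k/\eps)$ words for the shared seed (charged once to every machine as $s\cdot\poly(k/\eps)$). In Step~2, each machine transmits the $\xi_1\times\xi_2 = O(k^2/\eps^4)$ entries of $\tilde{\matA_i}$, giving $O(s\cdot k^2/\eps^4)$ ``real numbers''. In Step~4, the server sends the $\xi_2\times k = O(k^2/\eps^2)$ entries of $\matV_{\tilde{\matA}_k}$ to each of the $s$ machines, for $O(s\cdot k^2/\eps^2)$. In Step~5, each machine sends the $m\times k$ matrix $\matX_i$, contributing $O(smk)$, and in Step~7 the server sends the $m\times k$ matrix $\matU$ back, contributing another $O(smk)$. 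Summing, the total is $O(smk + s\cdot\poly(k/\eps))$ real numbers, as claimed.

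The running-time bound is essentially inherited from Theorem~\ref{thmbatch}. Across all machines, the cost of forming $\matS\matA_i\matT$ in Step~2 sums to at most $O(nm\xi_2 + \xi_1 m\xi_2) = O(nmk\eps^{-2} + mk^2\eps^{-4})$ arithmetic operations, because the $\matA_i$'s partition (or more generally sum up to) $\matA$ and standard matrix-multiplication costs compose linearly with a single copy of $\matS,\matT$. The server's SVD in Step~3 operates on a $\poly(k/\eps)\times\poly(k/\eps)$ matrix and costs $\poly(k/\eps)$. The combined cost of forming $\matX_i = \matA_i\matT\matV_{\tilde{\matA}_k}$ in Step~5, reusing the already-computed $\matA_i\matT$ from Step~2, sums to $O(m\xi_2 k) = O(mk^2\eps^{-2})$, and the final QR of the $m\times k$ matrix $\matX$ in Step~6 costs $O(mk^2)$. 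Adding these gives the stated bound.

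The only genuine subtlety — and the one place where this proof does not give a bound in machine words — is the transmission of $\matV_{\tilde{\matA}_k}$ in Step~4: the singular vectors of $\tilde{\matA}_k$ can have entries that are not truncatable to $O(\log(nms/\eps))$-bit words without destroying the $(1+\eps)$-approximation guarantee, since the smallest nonzero singular value of $\tilde{\matA}$ can be exponentially small. I would therefore explicitly flag this in the statement of the theorem (as is done by phrasing the communication bound in ``real numbers''), and defer the word-level bound to Section~\ref{sec:precisionDistributed}, where the smoothed-analysis perturbation and the low-rank branch based on Vandermonde matrices will be used to turn the real-number bound into an $O(\log(nms/\eps))$-bit-per-word bound without inflating the leading $O(smk)$ term.
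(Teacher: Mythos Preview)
Your proposal is correct and follows essentially the same approach as the paper: reduce correctness to the batch analysis (the paper cites Theorem~\ref{thmbatch}, which in turn invokes Lemma~\ref{lembatch0}), and then tally the communication and running-time costs step by step. Your accounting is in fact more explicit than the paper's terse version, and your closing remark about the ``real numbers'' caveat for $\matV_{\tilde{\matA}_k}$ anticipates exactly the issue the paper defers to Section~\ref{sec:precisionDistributed}.
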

\begin{proof}

The matrix $\matU$ - up to the randomness in the algorithm - is exactly the same matrix as in the batch algorithm in Section~\ref{sec:algbatch}, hence Theorem~\ref{thmbatch} proves Eqn.~\ref{eqnthm2d}.

The algorithm communicates $O(m s k + s \cdot \poly(k/\varepsilon))$ real numbers in total:  $O(s \cdot \poly(k/\varepsilon))$ in steps 2 and 4, and $O(s m k)$ in steps 5 and 7.

The operations in the algorithm are effectively the same operations as in the batch algorithm in
Section~\ref{sec:algbatch}, hence the analysis of the running time in Theorem~\ref{thmbatch} shows the claim.
\end{proof}

\section{Obtaining bit complexity for the distributed PCA algorithm}\label{sec:precisionDistributed}
For the algorithm in the previous section, we were only able to provide a communication upper bound in terms of ``real numbers''.  In this section, we describe how to obtain a communication upper bound in terms of words for the above protocol, where each word is $O(\log(mnsk/\varepsilon))$ bits.

The basic idea is that we have a case analysis depending on the rank of the matrix $\matA$. If the rank of $\matA$ is less than or equal to $2k,$ we follow one distributed protocol and if the rank is at least $2k$ we follow
a different protocol. In Section~\ref{sec:bit1}, Section~\ref{sec:bit2}, and Section~\ref{sec:bit3} we describe the algorithm that tests the rank of a distributed matrix, and the two PCA protocols, respectively.
Then, in Section~\ref{sec:bit4} we give the details of the overall algorithm
and in Section~\ref{sec:bit5} we give its analysis.

\subsection{Testing the rank of a distributed matrix}\label{sec:bit1}

\begin{lemma}\label{lem:rankTest}
Given $\matA \in \R^{m \times n}$ and a rank parameter $k < \rank(\matA),$
there exists a distributed protocol in the arbitrary partition model to test if the rank of $\matA$ is less than or equal to $2k$ using
$O(sk^2)$ words of communication and succeeding with probability $1-\delta$ for an arbitrarily small constant $\delta > 0$.
\end{lemma}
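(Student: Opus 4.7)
The plan is to compress $\matA$ by a two-sided random linear sketch into a tiny $(2k+1)\times(2k+1)$ matrix that the coordinator can assemble by summing one message per machine, and then read off $\rank(\matA) \leq 2k$ from the rank of the small sketch. Concretely, the $s$ machines and the coordinator first agree on two integer random matrices $\matS \in \R^{(2k+1)\times m}$ and $\matT \in \R^{n \times (2k+1)}$ whose entries are drawn (with enough independence for the Schwartz--Zippel bound below) uniformly from $\{1,2,\ldots,q\}$ for some $q = \Theta(k/\delta)$; the coordinator broadcasts a short seed specifying $\matS,\matT$. Each machine $i$ then locally computes $\matB_i := \matS\matA_i\matT$ and sends it; the coordinator forms $\matB := \sum_i \matB_i = \matS\matA\matT$ and outputs ``$\rank(\matA)\leq 2k$'' iff $\rank(\matB)\leq 2k$.

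For the bit/word bookkeeping, every entry of $\matA_i$ is one word of magnitude at most $\poly(nms/\eps)$ and every entry of $\matS,\matT$ is at most $q = \poly(k/\delta)$, so each entry of $\matB_i$ is bounded by $mnq^2 \cdot \max_{j,\ell}|(\matA_i)_{j\ell}| = \poly(nmsk/(\delta\eps))$ and still fits in a single word. Each machine therefore transmits $(2k+1)^2 = O(k^2)$ words, for $O(sk^2)$ words total. Correctness in one direction is immediate: $\rank(\matB) = \rank(\matS\matA\matT) \leq \rank(\matA)$, so whenever $\rank(\matA)\leq 2k$ the coordinator never errs.

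For the converse direction, suppose $\rank(\matA) = r \geq 2k+1$ and let $\matA = \matU\matSig\matV\transp$ be the thin SVD with $\matU \in \R^{m \times r}$, $\matV \in \R^{n \times r}$ having orthonormal columns. The quantity $\det(\matS\matA\matT)$ is a polynomial in the entries of $\matS$ and $\matT$ of total degree at most $4k+2$, and it is not identically zero: taking the rows of $\matS$ to be the first $2k+1$ columns of $\matU$ transposed and the columns of $\matT$ to be the first $2k+1$ columns of $\matV$ yields $\matS\matA\matT = \mathrm{diag}(\sigma_1,\ldots,\sigma_{2k+1})$, which is nonsingular. The Schwartz--Zippel lemma applied with alphabet size $q$ then gives $\Pr[\det(\matB)=0] \leq (4k+2)/q \leq \delta$, and on this good event $\rank(\matB) = 2k+1 > 2k$ and the coordinator correctly outputs NO.

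The main obstacle I anticipate is the randomness-complexity tradeoff: using fully independent entries for $\matS,\matT$ would require a seed of $\Omega((m+n)k\log(k/\delta))$ bits, whose broadcast would blow the $O(sk^2)$-word target. To stay inside budget I would invoke an $O(k)$-wise independent (or explicit pseudorandom) family of entries sufficient for the degree-$O(k)$ Schwartz--Zippel application, so that the seed has $O(k\log(mnk/\delta))$ bits and can be broadcast in $O(sk)$ words, dominated by the main step. Everything else is straightforward linear-algebra and bit-accounting.
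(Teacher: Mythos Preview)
Your overall strategy---compress $\matA$ to an $O(k)\times O(k)$ matrix via a two-sided linear sketch $\matS\matA\matT$ and read off whether $\rank(\matA)\le 2k$ from the rank of the sketch---is exactly what the paper does, and your Schwartz--Zippel correctness argument for \emph{fully} independent integer $\matS,\matT$ is valid. The genuine gap is the randomness-reduction step. The Schwartz--Zippel bound does \emph{not} follow from $d$-wise (or even much higher) independence of the inputs: for instance, $p(x_1,\dots,x_N)=x_1+\dots+x_N$ over $\mathbb{F}_2$ has degree $1$, yet the $(N{-}1)$-wise independent distribution that takes $x_1,\dots,x_{N-1}$ uniform and sets $x_N=x_1+\dots+x_{N-1}$ makes $p\equiv 0$. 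Since your polynomial $\det(\matS\matA\matT)$ has $\Theta((m+n)k)$ variables, $O(k)$-wise independence on the entries is nowhere near sufficient, and you have not supplied a structural argument specific to this determinant that would rescue the bound.

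The paper sidesteps this issue entirely by citing the streaming rank tester of Clarkson and Woodruff: there is a single \emph{fixed} $\Theta(nk/\delta)\times n$ integer matrix $\matH$ with $\poly(n)$-bounded entries, and the only randomness is the choice of $4k$ of its rows. The coordinator broadcasts these $4k$ row indices in $O(sk)$ words, each machine returns its $2k\times 2k$ contribution $\matH'\matA_i\matH''$, and the coordinator sums and checks the rank. Your argument becomes a complete proof if you either assume public randomness in the model, or replace the hand-waved $k$-wise-independence claim with an explicit small-seed construction of this kind (e.g., Vandermonde columns $\matT_{\cdot,j}=(1,\alpha_j,\dots,\alpha_j^{n-1})\transp$ for $2k{+}1$ random field elements $\alpha_j$, which needs only $O(k)$ random words and preserves rank by a direct argument rather than Schwartz--Zippel over all entries).
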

\begin{proof}
This is an immediate implementation of a streaming algorithm due to \cite{CW09}
for testing if an $n \times n$ matrix $\matA$
has rank at least $2k$ in the streaming model, using $O(k^2)$ words of space.
In that algorithm, there is a fixed $6nk/\delta \times n$ matrix $\matH$ whose entries are integers of magnitude
at most $\poly(n)$,
where $\delta > 0$ is an arbitrarily small constant. The algorithm simply chooses $4k$ random rows
from $\matH$. Letting $\matH'$ be the $2k \times n$ matrix of the first $2k$ random rows, and
$\matH''$ be the $n \times 2k$
matrix whose columns are the next $2k$ randomly chosen rows, the algorithm just declares that $\matA$ has
rank at least $k$ iff the rank of $\matH'\matA \matH''$ is $2k$.

The above streaming algorithm can be implemented in the distributed setting by having the coordinator choose
$4k$ random rows of the fixed, known matrix, and send the row identities to each of the machines.
This only takes $O(sk)$ words of communication. Then machine $i$ computes $\matH' \matA_i \matH''$, and returns
this to the coordinator. The coordinator can then add these up to compute $\matH' \matA \matH''$ and compute
its rank. The total communication is $O(sk^2)$ words and the protocol succeeds with probability at least
$1-\delta$ for an arbitrarily small constant $\delta > 0$. Note that we can assume our input matrix $\matA$,
which is $m \times n$, is a square $n \times n$ matrix by padding with all-zeros rows.
Those rows of course, will never get communicated in the implementation described above.
\end{proof}

\subsection{Distributed PCA protocol when $\rank(\matA) \le 2k$}\label{sec:bit2}

\subsubsection{Subsampled Randomized Hadamard Transform and Affine Embeddings}
Our algorithms use the following tool, known as 	``Subsampled Randomized Hadamard Transform'' or SRHT for short, to implement efficiently fast dimension reduction in large matrices.

\begin{definition}[Normalized Walsh--Hadamard Matrix]
\label{def:walsh}
Fix an integer $m = 2^p$, for $p = 1,2,3, ...$. The (non-normalized) $m \times m$ matrix of the Walsh--Hadamard transform is defined recursively as,
\vspace{-.0751in}
$$ \matH_n = \left[
\begin{array}{cc}
  \matH_{m/2} &  \matH_{m/2} \\
  \matH_{m/2} & -\matH_{m/2}
\end{array}\right],
\qquad \mbox{with} \qquad
\matH_2 = \left[
\begin{array}{cc}
  +1 & +1 \\
  +1 & -1
\end{array}\right].
$$
The $m \times m$ normalized matrix of the Walsh--Hadamard transform is equal to $\matH = m^{-\frac{1}{2}} \matH_m \in \R^{m \times m}.$
\end{definition}

\begin{definition}[Subsampled Randomized Hadamard Transform (SRHT) matrix]
\label{def:srht}
Fix integers $\xi$ and $m = 2^p$ with $\xi < m$ and $p = 1,2,3, ...$. An SRHT matrix is an $\xi \times m$ matrix of the form $$ \matT = \sqrt{\frac{m}{\xi}} \cdot \matR \matH  \matD;$$
\begin{itemize}
\item $\matD \in \R^{m \times m}$ is a random diagonal matrix whose entries are independent random signs, i.e. random variables uniformly distributed on $\{\pm 1\}$.
\item $\matH \in \R^{m \times m}$ is a normalized Walsh--Hadamard matrix (see Definition~\ref{def:walsh}).
\item $\matR \in \R^{\xi \times m}$ is a subset or $r$ rows from the $n \times n$ identity matrix, where the rows are  chosen uniformly at random and without replacement.
\end{itemize}
\end{definition}

The next lemma argues that an SRHT matrix is a so-called ``affine embedding matrix''.
The SRHT is one of the possible choices of Lemma 32 in~\cite{CW13arxiv} that will satisfy the lemma.
\begin{lemma}[Affine embeddings - Theorem 39 in~\cite{CW13arxiv}]\label{lem:affine}
Suppose $\matG$ and $\matH$ are matrices with $m$ rows, and $\matG$ has rank at most $r$.
Suppose $\matT$ is a $\xi \times m$ SRHT matrix~(see Definition~\ref{def:srht})
with $\xi = O(r / \varepsilon^2)$. Then, with probability $0.99$, for all $\matX$ simultaneously:
$$
(1 - \varepsilon) \cdot \FNormS{\matG\matX - \matH}
\le
\FNormS{\matT (\matG\matX - \matH)}
\le
(1 + \varepsilon) \cdot \FNormS{\matG\matX - \matH}.
$$
\end{lemma}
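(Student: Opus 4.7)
The plan is to prove this affine embedding guarantee by the standard decomposition around the optimal regression solution, reducing it to three properties of $\matT$ that all hold simultaneously for the SRHT with $\xi = O(r/\varepsilon^{2})$ rows: a subspace embedding for $\col(\matG)$, Frobenius-norm preservation of the regression residual, and an approximate matrix multiplication (AMM) bound for a basis of $\col(\matG)$ against that residual.

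First I would set $\matX^{*} = \pinv{\matG}\matH$ and $\matE = \matH - \matG\matX^{*}$, so that $\matG\transp\matE = \mathbf{0}$ by the normal equations. For any $\matX$, write $\matG\matX-\matH = \matG(\matX-\matX^{*}) - \matE$; the matrix Pythagorean theorem gives $\FNormS{\matG\matX-\matH} = \FNormS{\matG(\matX-\matX^{*})} + \FNormS{\matE}$, and expanding the sketched cost yields
$$
\FNormS{\matT(\matG\matX-\matH)} = \FNormS{\matT\matG(\matX-\matX^{*})} + \FNormS{\matT\matE} - 2\trace{(\matG(\matX-\matX^{*}))\transp\matT\transp\matT\matE}.
$$

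Next I would establish, on a single event of probability at least $0.99$ obtained by a union bound, three properties of $\matT$. Fix an orthonormal basis $\matU$ of $\col(\matG)$ with at most $r$ columns and write $\matG = \matU\matB$. The properties are: (i) \emph{subspace embedding} for $\col(\matG)$, so $\FNormS{\matT\matG\matY} = (1\pm\varepsilon)\FNormS{\matG\matY}$ uniformly in $\matY$; (ii) \emph{residual preservation}, $\FNormS{\matT\matE} = (1\pm\varepsilon)\FNormS{\matE}$; and (iii) the AMM bound $\FNormS{\matU\transp\matT\transp\matT\matE - \matU\transp\matE} \le \varepsilon^{2}\FNormS{\matE}$. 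Property (i) is the standard oblivious subspace embedding of SRHT at dimension $O(r/\varepsilon^{2})$; property (ii) follows from a second-moment bound $\Expect{(\FNormS{\matT\matE}-\FNormS{\matE})^{2}} \le (c/\xi)\FNormF{\matE}$ together with Markov; property (iii) follows from $\Expect{\FNormS{\matU\transp\matT\transp\matT\matE - \matU\transp\matE}} \le (c/\xi)\FNormS{\matU}\cdot\FNormS{\matE} = (cr/\xi)\FNormS{\matE}$ together with Markov, which matches $\varepsilon^{2}\FNormS{\matE}$ precisely for $\xi = O(r/\varepsilon^{2})$.

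Finally I would bound the cross term. Since $\matU\transp\matE = \mathbf{0}$, the matrix $\matU\transp\matT\transp\matT\matE$ is exactly the error appearing in (iii); writing $\matG(\matX-\matX^{*}) = \matU\matW$ with $\matW = \matB(\matX-\matX^{*})$ and using $\FNorm{\matW} = \FNorm{\matU\matW} = \FNorm{\matG(\matX-\matX^{*})}$, Cauchy--Schwarz gives
$$
|\trace{(\matG(\matX-\matX^{*}))\transp\matT\transp\matT\matE}| = |\trace{\matW\transp(\matU\transp\matT\transp\matT\matE)}| \le \varepsilon\,\FNorm{\matG(\matX-\matX^{*})}\,\FNorm{\matE}.
$$
Applying $2ab \le a^{2}+b^{2}$ to the cross term, combining with (i) and (ii), and rescaling $\varepsilon$ by an absolute constant produces $\FNormS{\matT(\matG\matX-\matH)} = (1\pm\varepsilon)\FNormS{\matG\matX-\matH}$ uniformly in $\matX$, which is the desired statement.

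The main obstacle is verifying that a single SRHT with only $O(r/\varepsilon^{2})$ rows simultaneously achieves (i)--(iii) with constant probability; the AMM bound (iii) is what drives the $r/\varepsilon^{2}$ dimension requirement and relies on the second-moment analysis of the SRHT already cited by the paper. Once the three ingredients are granted, the argument reduces to the Pythagorean decomposition and the Cauchy--Schwarz bound on the cross term sketched above.
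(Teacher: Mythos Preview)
The paper does not prove this lemma at all: it is stated as a direct citation of Theorem~39 in \cite{CW13arxiv}, with the one-line remark that the SRHT is among the admissible sketching matrices (Lemma~32 in \cite{CW13arxiv}). There is no argument to compare against.

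Your proof sketch is correct and is in fact the standard argument used in the cited reference: decompose around the least-squares optimum $\matX^{*}=\pinv{\matG}\matH$ so that the residual $\matE$ is orthogonal to $\col(\matG)$, then control the three resulting pieces via (i) a subspace embedding for $\col(\matG)$, (ii) Frobenius-norm preservation of the fixed matrix $\matE$, and (iii) approximate matrix multiplication $\matU\transp\matT\transp\matT\matE\approx\matU\transp\matE=\mathbf{0}$, with a Cauchy--Schwarz bound on the cross term. This is exactly the route taken in Clarkson--Woodruff. The only point worth flagging is that for the SRHT the subspace-embedding guarantee (i) typically carries an extra $\poly\log$ factor in the sketch dimension; the paper's statement absorbs such factors into the $O(\cdot)$, and your sketch is consistent with that convention.
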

Finally, we note that matrix-vector multiplications with SRHT's are fast.
\begin{lemma} [Fast Matrix-Vector Multiplication, Theorem 2.1 in~\cite{AL08}]
\label{prop:SRHT-compute-time}
Given $\x \in \R^m$ and $\xi < m$, one can construct $\matT \in \R^{\xi \times m}$
and compute $\matT \x$ in at most $2 m \log_2(\xi + 1) )$ operations.
\end{lemma}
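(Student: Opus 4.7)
The plan is to construct $\matT$ implicitly---never forming the dense matrix $\matH$ explicitly---and to apply it to $\x$ using a pruned variant of the standard fast Walsh--Hadamard transform that touches only the coordinates indexed by $\matR$. Generating $\matD$ requires $m$ random signs and $\matR$ requires $\xi$ random indices, both negligible relative to the stated bound; the real work is in evaluating the vector $\matR\matH\matD\x$.

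First I would compute $\y := \matD\x$ in $m$ sign flips (counted as $m$ operations). For the Hadamard step, I would exploit the recursive identity
\[
\matH_m\y \;=\; \begin{pmatrix} \matH_{m/2}(\y_L+\y_R) \\ \matH_{m/2}(\y_L-\y_R) \end{pmatrix},
\]
where $\y_L$ and $\y_R$ denote the top and bottom halves of $\y$. Computing the two combined vectors $\y_L \pm \y_R$ costs $m$ additions/subtractions; then, depending on whether some of the $\xi$ requested output indices fall in the top half, the bottom half, or both, the algorithm recurses on one or both halves, passing down the correspondingly partitioned list of requested indices.

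Let $T(m,\xi)$ denote the worst-case cost of extracting any $\xi$ coordinates of an $m$-dimensional Walsh--Hadamard transform by this pruned recursion. When only one half is queried, $T(m,\xi) \le m + T(m/2,\xi)$; when the $\xi$ queries split as $\xi_L + \xi_R$ across the halves, $T(m,\xi) \le m + T(m/2,\xi_L) + T(m/2,\xi_R)$. The adversarial worst case is a balanced split at every level, yielding $T(m,\xi) \le m + 2T(m/2,\xi/2)$, which after $\log_2 \xi$ unfoldings has accumulated $m \log_2 \xi$ work and reduced the problem to $\xi$ independent single-output transforms of dimension $m/\xi$. Each such single-output transform costs at most $2(m/\xi)$ by the base recurrence $T(n,1) \le n + T(n/2,1)$, contributing another $2m$ in total. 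Combining these terms with the elementary inequality $m \log_2 \xi + 2m \le 2m \log_2(\xi+1)$ (which reduces to $(\xi-1)^2 \ge 0$) yields $T(m,\xi) \le 2m \log_2(\xi+1)$, absorbing also the $m$ operations needed to apply $\matD$.

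The main obstacle I expect is the careful bookkeeping of the pruned recursion, verifying that the additive ``$2m$'' surplus from the single-output base cases genuinely fits inside $2m \log_2(\xi+1)$ rather than leaking an uncontrolled additive $O(m)$ term. Handling $\xi$ or $m$ not a power of two is routine padding, and the subsampling $\matR$ never enters the operation count beyond selecting which branches of the recursion tree to explore.
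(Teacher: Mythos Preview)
The paper does not give its own proof of this lemma: it is quoted verbatim as Theorem~2.1 of \cite{AL08} and used as a black box. So there is no ``paper's proof'' to compare against beyond the cited source, and your pruned Walsh--Hadamard argument is exactly the standard technique behind that result.

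Your sketch is correct in structure, but the bookkeeping has a small slip. In the one-branch case you write $T(m,\xi)\le m+T(m/2,\xi)$; in fact, if only one half is queried you need only \emph{one} of $\y_L+\y_R$ or $\y_L-\y_R$, costing $m/2$ additions, so $T(m,\xi)\le m/2+T(m/2,\xi)$ there. With this tighter bound the single-output base case becomes $T(n,1)\le n/2+n/4+\cdots\le n$ rather than $2n$, so the $\xi$ base transforms contribute $m$ total (not $2m$), giving $T(m,\xi)\le m\log_2\xi+m$ for the Hadamard part. Adding the $m$ sign flips for $\matD$ yields $m\log_2\xi+2m$, and your inequality $m\log_2\xi+2m\le 2m\log_2(\xi+1)$ (equivalent to $(\xi-1)^2\ge 0$) then closes the argument cleanly. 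As written, your version asks for $m\log_2\xi+3m\le 2m\log_2(\xi+1)$, i.e.\ $8\xi\le(\xi+1)^2$, which fails for $\xi\le 5$; the fix above removes that defect.
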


\subsubsection{Generalized rank-constrained matrix approximations}\label{sec:Uopt}
Let $\matM \in \R^{m \times n}$,
$\matN \in \R^{m \times c}$,
$\matL \in \R^{r \times n}$,
and $k \le c,r$ be an integer.
Consider the following optimization problem,
$$ \matX_{opt}  \in \argmin_{ \matX \in \R^{c \times r}, \rank(\matX)\le k } \FNormS{ \matM - \matN \matX \matL }.  $$
Then, the solution  $\matX_{opt} \in \R^{c \times r}$ with $\rank(\matX_{opt}) \le k$ that has the minimum $\FNorm{\matX_{opt}}$ out of all possible feasible solutions
is given via the following formula,
$$ \matX_{opt}  =  \pinv{\matN}  \left( \matU_{\matN}\matU_{\matN}\transp \matM \matV_{\matL}\matV_{\matL}\transp \right)_k \pinv{\matL}.$$
$\left( \matU_{\matN}\matU_{\matN}\transp \matM \matV_{\matL}\matV_{\matL}\transp \right)_k \in \R^{m \times n}$ of rank at most $k$ denotes the best rank $k$ matrix to
$\matU_{\matN}\matU_{\matN}\transp \matM \matV_{\matL}\matV_{\matL}\transp \in \R^{m \times n}$.
This result was proven in~\cite{FT07} (see also~\cite{SR12} for the spectral norm version of the problem).

\subsubsection{The PCA protocol}

\begin{lemma}\label{lem:lowRankProtocol}
Suppose the rank $\rho$ of $\matA \in \R^{m \times n}$ satisfies $\rho \leq 2k$, for some rank parameter $k$.
Then, there is a protocol for the Distributed
Principal Component Analysis Problem in the arbitrary partition model
using $O(smk + sk^2/\varepsilon^2)$ words of communication and succeeding with probability $1-\delta$ for
an arbitrarily small constant $\delta > 0$.
\end{lemma}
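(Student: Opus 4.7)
The plan is to leverage the hypothesis $\rho := \rank(\matA) \le 2k$ by first recovering a small common orthonormal basis $\matC \in \R^{m\times 2k}$ for $\col(\matA)$ at every machine, and then reducing PCA on the wide matrix $\matA$ to computing the top-$k$ left singular subspace of the $2k\times n$ matrix $\matC\transp\matA$ via a projection-cost-preserving sketch applied on the right. Concretely, I would have the machines use shared randomness to agree on a pseudo-random Vandermonde-based matrix $\matG \in \R^{n\times 2k}$, as alluded to in the introductory outline, chosen so that $\col(\matA\matG) = \col(\matA)$ with probability at least $1-\delta$ whenever $\rho \le 2k$. Each machine then sends $\matA_i\matG$ to the server at total cost $O(smk)$ words; the server sums these to form $\matA\matG$, computes $\matC$ as an orthonormal basis of $\col(\matA\matG)$ by a QR factorization, and broadcasts $\matC$ back to every machine, costing another $O(smk)$ words.

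Once every machine holds $\matC$, I observe that $\matA = \matC\matC\transp\matA$ and $\matA_k = \matC(\matC\transp\matA)_k$, so it suffices to find $\matW \in \R^{2k\times k}$ whose columns approximate the top-$k$ left singular vectors of $\matY := \matC\transp\matA$: then $\matU := \matC\matW$ is orthonormal (since both $\matC$ and $\matW$ are), and $\matU\matU\transp\matA = \matC\matW\matW\transp\matY$ is a rank-$k$ matrix lying in $\col(\matC)=\col(\matA)$. To avoid communicating all of $\matY$, I would have the machines agree on a dense Johnson--Lindenstrauss matrix $\matT \in \R^{n\times \xi}$ with $\xi=O(k\varepsilon^{-2})$ as in Lemma~\ref{lem:djlm}; each machine then sends $\matC\transp\matA_i\matT \in \R^{2k\times\xi}$ to the server at total cost $O(sk^2\varepsilon^{-2})$ words, the server aggregates to obtain $\matY\matT$ and takes its top-$k$ left singular vectors as $\matW$, and finally either broadcasts $\matU=\matC\matW$ using $O(smk)$ words or just broadcasts $\matW$ using $O(sk^2)$ words and lets every machine form $\matU$ locally. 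The total is $O(smk+sk^2\varepsilon^{-2})$ as claimed.

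For correctness, Lemma~\ref{lem:djlm} guarantees with probability at least $0.99$ that $\matT$ is an $(\varepsilon,k)$-projection-cost-preserving sketch of $\matY$; Lemma~\ref{fact:pcavpcps} with $\hat\matP = \matW\matW\transp$ (the optimal rank-$k$ projection for the sketch $\matY\matT$) then yields $\FNormS{\matY-\matW\matW\transp\matY} \le \tfrac{1+\varepsilon}{1-\varepsilon}\FNormS{\matY-\matY_k}$. Because $\matC$ has orthonormal columns, left multiplication by $\matC$ preserves Frobenius norm, so rewriting $\matA-\matU\matU\transp\matA = \matC(\matY-\matW\matW\transp\matY)$ and $\matA-\matA_k = \matC(\matY-\matY_k)$ converts the previous inequality into $\FNormS{\matA-\matU\matU\transp\matA} \le \tfrac{1+\varepsilon}{1-\varepsilon}\FNormS{\matA-\matA_k}$, which becomes the desired $(1+\varepsilon)$ bound after rescaling $\varepsilon$ by a constant. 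A union bound absorbs the $O(\delta)$ failure probability of the column-span recovery step together with the $0.01$ failure probability of the sketch.

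The main obstacle I anticipate is controlling the bit complexity of the communicated objects so that they fit in $O(\log(mns/\varepsilon))$-bit words. The server's QR factor $\matC$ of $\matA\matG$, and the product $\matC\transp\matA_i\matT$ communicated in the second round, could in principle have entries with very long binary expansions, since the smallest nonzero singular value of $\matA\matG$ may be tiny. The plan is to exploit the structure of the pseudo-random Vandermonde $\matG$, which admits explicit lower bounds on the relevant singular values when the entries of $\matA$ are $O(\log(mns/\varepsilon))$-bit integers, and to round all intermediate quantities to $\poly(mns/\varepsilon)$-bit precision, arguing via sub-multiplicativity that the resulting perturbation is absorbed into the $(1+\varepsilon)$ Frobenius guarantee. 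The hypothesis $\rho \le 2k$ can itself be verified beforehand at cost $O(sk^2)$ words via Lemma~\ref{lem:rankTest}, an overhead that is absorbed into the stated bound.
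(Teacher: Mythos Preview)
Your high-level plan and the correctness argument are sound: recover a basis for $\col(\matA)$ of width $2k$, then reduce PCA on $\matA$ to a problem on a $2k$-row matrix and sketch on the right. The paper follows the same first step (it reuses the matrix $\matH''$ from the rank-testing protocol, rather than a separate Vandermonde sketch, to obtain $\matC=\matA\matH''$), but the second step is organized differently. Rather than orthonormalizing $\matC$ and then applying a one-sided projection-cost-preserving sketch to $\matC\transp\matA$, the paper keeps the raw $\matC=\matA\matH''$, sets up the constrained regression $\min_{\rank(\matX)\le k}\|\matC\matX\matC\transp\matA-\matA\|_{\mathrm F}$, and sketches it on \emph{both} sides with affine embeddings $\matT_{\mathrm{left}},\matT_{\mathrm{right}}$ of dimension $O(k/\varepsilon^2)$. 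The machines send $\matC\transp\matA_i\matT_{\mathrm{right}}$ and $\matT_{\mathrm{left}}\matA_i\matT_{\mathrm{right}}$ to the coordinator, the coordinator sums and broadcasts these small matrices, and every machine locally solves the sketched problem and forms the same $\matU$ as an orthonormal basis for $\matC\matX_*$.

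What this buys the paper is precisely the bit-complexity cleanliness you flag as an obstacle: because $\matH''$ has integer entries of magnitude $\poly(n)$, the communicated objects $\matC$, $\matC\transp\matA_i\matT_{\mathrm{right}}$, and $\matT_{\mathrm{left}}\matA_i\matT_{\mathrm{right}}$ are all (up to a common known scalar) integer matrices with $\poly(mns/\varepsilon)$-bounded entries, so each entry fits in $O(\log(mns/\varepsilon))$ bits with no rounding or singular-value lower bounds needed. The only quantities with potentially bad precision, $\matX_*$ and $\matU$, are computed locally on each machine and never sent. Your route, by contrast, communicates an orthonormal $\matC$ and later $\matW$ (or $\matU$), both of which can have entries requiring many bits; your proposed fix via Vandermonde singular-value bounds and rounding is plausible but would need to be worked out carefully, whereas the paper's formulation sidesteps the issue entirely. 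Conceptually your approach is a bit more direct (a single SVD of a $2k\times O(k/\varepsilon^2)$ matrix rather than a generalized rank-constrained least-squares), but the paper's choice is tailored to make the word-complexity claim immediate.
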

\begin{proof}
The $n \times 2k$ matrix $\matH''$
chosen in the protocol of Lemma \ref{lem:rankTest} satisfies that with probability $1-\delta$,
for an arbitrarily small constant $\delta > 0$, the rank of $\matA \matH''$ is equal to the rank of $\matA$
if $\rho < 2k$. Indeed, if this were not true, the algorithm could not be correct, as the rank
of $\matH' \matA \matH''$ is at most the rank of $\matA \matH''$ (and the same algorithm can be used for
any $\rho < 2k$). It follows that with probability $1-\delta$, the column span of $\matA \matH''$ is equal
to the column span of $\matA$. Hence, as in the protocol of Lemma \ref{lem:rankTest}, the coordinator
learns the column space of $\matA$, which can be described with $O(km)$ words. The coordinator thus communicates
this to all machines, using $O(skm)$ total words of communication, assuming the entries of $\matA$ are integers
of magnitude at most $\poly(mns/\varepsilon)$.

Let $\matC = \matA \matH''$, which is $m \times 2k$. We can set up the optimization problem:
\begin{equation} \label{opt:opt}
\min_{\rank(\matX)\le k} \|\matC \matX \matC\transp \matA - \matA\|_{\mathrm{F}}.
\end{equation}
Because the size of $\matC$ is only $m \times 2k$, every machine can know $\matC$ by sending a total of $O(smk)$ words. Since the rank of $\matC$ and $\matA$ are small, we can sketch on the left and right using affine embeddings
$\matT_{left}$ and $\matT_{right}$. Then machine $i$ sends $\matC\transp \matA_i \matT_{right}$ to the coordinator,
together with $\matT_{left} \matA_i \matT_{right}$. The coordinator computes $\matC\transp \matA \matT_{right}$
and $\matT_{left} \matA \matT_{right}$ by adding up the sketches, and sends these back to all the machines.
Each machine can then solve the optimization problem
\begin{equation} \label{opt:opt2}
\min_{\rank(\matX)\le k} \|\matT_{left} \matC \matX \matC\transp \matA \matT_{right} - \matT_{left} \matA \matT_{right}\|_{\mathrm{F}},
\end{equation}
each obtaining the same $\matX_*$ which is the optimal solution to Eqn~(\ref{opt:opt2}).
Due to Lemma \ref{lem:affine}, $\matX_*$ is a $(1+O(\varepsilon))$-approximation to the best solution to Eqn~(\ref{opt:opt}).
Finally, every machine outputs the same orthonormal basis $\matU \in \R^{m \times k}$
for $\matC \matX_*$.

We can construct affine embedding matrices
$\matT_{left} \in \R^{\xi_1 \times m}$ and $\matT_{right} \in \R^{n \times \xi_2}$ with
$\xi_1 = O(k/\varepsilon^2),$ $\xi_2 = O(k/\varepsilon^2)$~(see Definition~\ref{def:srht}).
The total
communication of this protocol is $O(skm + sk^2/\varepsilon^2)$
words and the success probability can be made
$1-\delta$ for an arbitrarily small constant $\delta > 0$.
%

\end{proof}

\subsection{Distributed PCA protocol when $\rank(\matA) > 2k$}\label{sec:bit3}
The idea here is more involved than in the previous subsection and in order to describe the algorithm we need several
intermediate results.

\subsubsection{Lower bounds on singular values of matrices with integers entries}
The first lemma gives a lower bound on the singular values of a matrix with integer entries with bounded magnitude.
\begin{lemma}(Lemma 4.1 of \cite{CW09}, restated)\label{lem:cwbits}
If an $m \times n$ matrix $\matA$ has integer entries bounded in magnitude
by $\gamma$, and has rank $\rho = \rank(\matA)$, then the $k$-th largest singular value
$\matA$ satisfies
$$\sigma_k \geq (mn\gamma^2)^{-k/(2(\rho-k))}.$$
\end{lemma}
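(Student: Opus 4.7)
The plan is to leverage integrality: when $\matA$ has integer entries, the product $\prod_{i=1}^{\rho}\sigma_i(\matA)^2$ is a positive integer, which gives a multiplicative lower bound that one can redistribute onto $\sigma_k$ using the elementary Frobenius-norm upper bound on all singular values.

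First, I would show that $\prod_{i=1}^{\rho}\sigma_i(\matA)^2 \geq 1$. To see this, consider the characteristic polynomial of $\matA\transp\matA$. Since $\matA\transp\matA$ has eigenvalues $\sigma_1^2,\ldots,\sigma_\rho^2,0,\ldots,0$, the coefficient of $\lambda^{n-\rho}$ in $\det(\lambda\matI-\matA\transp\matA)$ is $(-1)^\rho\prod_{i=1}^\rho\sigma_i^2$. On the other hand, this coefficient equals $(-1)^\rho$ times the sum of all $\rho\times\rho$ principal minors of $\matA\transp\matA$. By the Cauchy-Binet formula, each such principal minor equals $\sum_J (\det \matA_{J,S})^2$, where $S$ indexes the $\rho$ columns chosen and $J$ ranges over $\rho$-subsets of rows of $\matA$; in particular each minor is a nonnegative integer. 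Since $\rank(\matA)=\rho$, at least one $\rho\times\rho$ minor of $\matA$ is nonzero, so the sum is a positive integer, and therefore $\prod_{i=1}^{\rho}\sigma_i^2\geq 1$.

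Next, I would use the crude upper bound $\sigma_i(\matA)\leq \sigma_1(\matA)\leq \FNorm{\matA}\leq \sqrt{mn}\,\gamma$, so that $\sigma_i^2\leq mn\gamma^2$ for every $i$. Splitting the product at index $k$ and using that $\sigma_i\leq \sigma_k$ for $i>k$, I get
\[
1 \;\leq\; \prod_{i=1}^{\rho}\sigma_i^2 \;=\; \Bigl(\prod_{i=1}^{k}\sigma_i^2\Bigr)\Bigl(\prod_{i=k+1}^{\rho}\sigma_i^2\Bigr) \;\leq\; (mn\gamma^2)^{k}\cdot\sigma_k^{\,2(\rho-k)}.
\]
Rearranging yields $\sigma_k^{\,2(\rho-k)}\geq (mn\gamma^2)^{-k}$, i.e., $\sigma_k\geq (mn\gamma^2)^{-k/(2(\rho-k))}$, as claimed.

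The only nontrivial step is the integrality argument $\prod_i\sigma_i^2\geq 1$; the rest is bookkeeping. Once that lower bound on the product is in hand, distributing it across the singular values via the uniform upper bound $\sigma_i\leq\sqrt{mn}\gamma$ is routine, and the split at $k$ is tailored exactly to produce the stated exponent $k/(2(\rho-k))$.
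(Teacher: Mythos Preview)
Your proof is correct. The paper's own proof simply cites equation~(10) in the proof of Lemma~4.1 of \cite{CW09}, which asserts $\lambda_k \geq (mn\gamma^2)^{-k/(\rho-k)}$ for the $k$-th eigenvalue of $\matA\transp\matA$, and then takes a square root. You instead give a fully self-contained argument: integrality of the nonzero part of the characteristic polynomial of $\matA\transp\matA$ (via Cauchy--Binet) forces $\prod_{i=1}^{\rho}\sigma_i^2\geq 1$, and the crude bound $\sigma_i^2\leq mn\gamma^2$ lets you redistribute this onto $\sigma_k$. This is exactly the standard mechanism behind the cited result, so the two proofs are morally the same; yours is simply unpacked rather than outsourced, which has the advantage of making the lemma independent of \cite{CW09}.
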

\begin{proof}
In the proof of Lemma 4.1 of \cite{CW09}, equation (10), it is shown
that if $\lambda_k$ is the $k$-th
largest eigenvalue of $\matA\transp\matA$, then
$$\lambda_k \geq (mn\gamma^2)^{-k/(\rho-k)}.$$
This implies the $k$-th singular value $\sigma_k$ of $\matA$ satisfies $\sigma_k \geq (mn\gamma^2)^{-k/(2(\rho-k))}.$
\end{proof}
Next, we state two immediate corollaries of this lemma for future reference.
\begin{corollary}\label{cor:plusOne}
If an $m \times n$ matrix $\matA$ has integer entries bounded in magnitude
by $\poly(mns/\varepsilon)$ and $\|\matA-\matA_k\|_{\mathrm{F}} > 0$, then
$$\|\matA-\matA_k\|_{\mathrm{F}} > (mns/\varepsilon)^{-O(k)}.$$
\end{corollary}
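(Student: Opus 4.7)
The plan is to reduce the desired bound to a lower bound on the $(k+1)$-st singular value of $\matA$, and then invoke Lemma~\ref{lem:cwbits} (possibly after a small boundary adjustment). Since $\FNorm{\matA-\matA_k} > 0$, the rank $\rho := \rank(\matA)$ must satisfy $\rho \geq k+1$, so $\sigma_{k+1}(\matA) > 0$. The standard SVD identity $\FNormS{\matA-\matA_k} = \sum_{i=k+1}^{\rho}\sigma_i^2(\matA) \geq \sigma_{k+1}^2(\matA)$ then reduces the problem to showing
\[
\sigma_{k+1}(\matA) \;\geq\; (mns/\varepsilon)^{-O(k)}.
\]

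In the generic case $\rho \geq k+2$, I apply Lemma~\ref{lem:cwbits} with $k$ replaced by $k+1$ and with $\gamma = \poly(mns/\varepsilon)$. This yields
\[
\sigma_{k+1}(\matA) \;\geq\; (mn\gamma^2)^{-(k+1)/(2(\rho-k-1))} \;\geq\; (mn\gamma^2)^{-(k+1)/2} \;=\; (mns/\varepsilon)^{-O(k)},
\]
using $\rho-k-1 \geq 1$ in the second step. This is the bound we want.

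The remaining step is the boundary case $\rho = k+1$, for which Lemma~\ref{lem:cwbits} does not apply directly (the formula becomes degenerate). Here $\sigma_{k+1}(\matA)$ is the smallest nonzero singular value of $\matA$, and I handle it by a Cauchy--Binet-style determinant argument: the product $\prod_{i=1}^{\rho}\sigma_i^2(\matA)$ equals the sum of squared $\rho\times\rho$ minor determinants of $\matA$, which is a positive integer and therefore at least $1$. Combined with the trivial upper bound $\sigma_i(\matA) \leq \sqrt{mn}\,\gamma$ for each $i$, this gives
\[
\sigma_{k+1}^2(\matA) \;=\; \sigma_\rho^2(\matA) \;\geq\; \frac{1}{\prod_{i<\rho}\sigma_i^2(\matA)} \;\geq\; (mn\gamma^2)^{-(\rho-1)} \;=\; (mn\gamma^2)^{-k},
\]
so $\sigma_{k+1}(\matA) \geq (mn\gamma^2)^{-k/2} = (mns/\varepsilon)^{-O(k)}$ in this case as well.

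Combining both cases and passing the bound back through $\FNorm{\matA-\matA_k} \geq \sigma_{k+1}(\matA)$ yields the claim. The only real subtlety is the boundary case $\rho = k+1$; essentially everything else is a direct quotation of Lemma~\ref{lem:cwbits} together with the bound $\gamma = \poly(mns/\varepsilon)$ from the hypothesis, so I expect no further obstacles.
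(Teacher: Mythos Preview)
Your proof is correct and follows the same route as the paper: reduce to bounding $\sigma_{k+1}(\matA)$ and invoke Lemma~\ref{lem:cwbits}. The paper's own proof is in fact terser---it writes only $\FNorm{\matA-\matA_k} \geq \sigma_{k+1} \geq (\poly(mns/\varepsilon))^{-k/2}$ and stops---so your explicit treatment of the degenerate case $\rho = k+1$ via the Cauchy--Binet minor argument is more careful than what the paper records (that argument is essentially what underlies Lemma~\ref{lem:cwbits} itself, so no new ideas are needed, but it is good that you closed the gap).
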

\begin{proof}
Since $\|\matA-\matA_k\|_{\mathrm{F}} > 0$, the rank $\rho$ of $\matA$ is at least $k+1$,
so by the preceding lemma,
$$\|\matA-\matA_k\|_{\mathrm{F}} \geq \sigma_{k+1}
\geq (\poly(mns/\varepsilon))^{-k/2},
$$
as desired.
\end{proof}
\begin{corollary}\label{cor:plusTwo}
If an $m \times n$ matrix $\matA$ has integer entries bounded in magnitude
by $\poly(mns/\varepsilon)$ and rank$(\matA) \geq 2k$, then
$$\|\matA-\matA_k\|_{\mathrm{F}} > 1/\poly(mns/\varepsilon).$$
\end{corollary}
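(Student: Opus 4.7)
The plan is to parallel the proof of Corollary~\ref{cor:plusOne} but exploit the stronger hypothesis $\rho := \rank(\matA) \geq 2k$ (rather than merely $\rho \geq k+1$) so that the exponent coming out of Lemma~\ref{lem:cwbits} is bounded by an absolute constant, independent of $k$. The starting point is the trivial inequality $\|\matA - \matA_k\|_{\mathrm{F}} \geq \sigma_{k+1}(\matA)$, which reduces the claim to a $1/\poly(mns/\varepsilon)$ lower bound on the $(k+1)$-st singular value.

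Next I would invoke Lemma~\ref{lem:cwbits} with the index parameter taken to be $k+1$, so that with $\gamma = \poly(mns/\varepsilon)$ one obtains $\sigma_{k+1} \geq (mn\gamma^2)^{-(k+1)/(2(\rho-k-1))}$, provided $\rho > k+1$. The whole point of the hypothesis $\rho \geq 2k$ is that it forces $\rho - k - 1 \geq k - 1$, so the exponent satisfies $(k+1)/(2(\rho-k-1)) \leq (k+1)/(2(k-1)) \leq 3/2$ for every $k \geq 2$. Since $mn\gamma^2 = \poly(mns/\varepsilon)$, this yields $\sigma_{k+1} \geq 1/\poly(mns/\varepsilon)$ with a $k$-independent polynomial degree, which is exactly what the corollary demands. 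Note that if one only fed $\rho \geq k+1$ into Lemma~\ref{lem:cwbits} (as in Corollary~\ref{cor:plusOne}), the exponent could grow linearly in $k$, giving $(mns/\varepsilon)^{-O(k)}$, which is not $1/\poly(mns/\varepsilon)$; the factor-of-two gap in the rank is precisely what tames the exponent.

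The only place the above reasoning breaks down is the degenerate case $k = 1$ with $\rho = 2$, where the denominator $\rho - k - 1$ vanishes and Lemma~\ref{lem:cwbits} becomes vacuous. I expect this to be the sole obstacle, and it admits a direct determinantal workaround independent of Lemma~\ref{lem:cwbits}: for any integer matrix of rank $\rho$, the Cauchy--Binet identity gives $\prod_{i=1}^{\rho} \sigma_i^2 = \sum_{M} (\det M)^2 \geq 1$, where the sum is over all $\rho \times \rho$ submatrices $M$ of $\matA$ (the sum is a positive integer because the rank assumption guarantees at least one nonzero minor). Specialising to $\rho = 2$ yields $\sigma_2 \geq 1/\sigma_1 \geq 1/\|\matA\|_{\mathrm{F}} \geq 1/(\sqrt{mn}\,\gamma) = 1/\poly(mns/\varepsilon)$. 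Combining the main case $(k \geq 2$, or $k = 1$ with $\rho \geq 3)$ with this boundary case $(k=1, \rho=2)$ covers all possibilities allowed by the hypothesis and completes the proof.
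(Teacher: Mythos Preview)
Your proof is correct and follows essentially the same route as the paper's: invoke Lemma~\ref{lem:cwbits} and use the hypothesis $\rho \geq 2k$ to force the exponent $k/(2(\rho-k))$ (or $(k+1)/(2(\rho-k-1))$, depending on which index one plugs in) to be $O(1)$ rather than $O(k)$. The paper's entire argument is the single line ``plug $\rho = 2k$ into Lemma~\ref{lem:cwbits},'' which is really the same idea stated tersely.

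You are in fact more careful than the paper on one point. Applying the lemma at index $k+1$ requires $\rho > k+1$, so the boundary case $k=1,\ \rho=2$ is not covered directly; you patch this with the Cauchy--Binet / integer-determinant argument giving $\sigma_2 \geq 1/\sigma_1 \geq 1/\FNorm{\matA}$. The paper's one-liner does not address this corner case explicitly (and indeed reading it literally as ``apply the lemma with index $k$ and $\rho=2k$'' yields a bound on $\sigma_k$ rather than $\sigma_{k+1}$, which is an off-by-one slip). So your write-up is the more complete of the two, but the underlying mechanism is identical.
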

\begin{proof}
This follows by plugging $\rho = 2k$ into Lemma \ref{lem:cwbits}.
\end{proof}

\subsubsection{Lower bounds on singular values of integer-perturbed matrices}
In this section, we describe a perturbation technique for matrices and provide lower bounds
for the smallest singular value of the perturbed matrix.  We start with a theorem of Tao and Vu.
\begin{theorem}(Theorem 2.5 of \cite{tv07})\label{thm:tv}
Let $\matM$ be an $n \times n$ matrix with integer entries bounded in magnitude by $n^C$ for a constant $C > 0$. Let $\matN_n$ be a matrix with independent entries each chosen to be $1$ with probability $1/2$, and $-1$ with probability $1/2$. Then, there is a constant $B > 0$ depending
on $C$, for which
$$\Pr[\|(\matM + \matN_n)^{-1}\|_2 \geq n^B] \leq 1/n.$$
\end{theorem}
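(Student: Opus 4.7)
The plan is to reduce the statement to a lower bound on the smallest singular value, via the identity $\|(\matM+\matN_n)^{-1}\|_2 = 1/\sigma_{\min}(\matM+\matN_n)$. It therefore suffices to exhibit a constant $B = B(C) > 0$ for which $\Pr[\sigma_{\min}(\matM+\matN_n) \leq n^{-B}] \leq 1/n$. Note the deterministic bound $\|\matM+\matN_n\|_2 \leq \|\matM\|_2 + \|\matN_n\|_2 \leq n^{C+1} + n$, so the spectrum is confined to magnitudes at most $n^{C+1}$, which will be convenient for net arguments.

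First, I would handle the task via the standard epsilon-net approach: write $\sigma_{\min}(\matM+\matN_n) = \inf_{\|v\|_2=1} \|(\matM+\matN_n)v\|_2$, pick a net $\mathcal{N} \subset S^{n-1}$ at granularity $\eta = n^{-(B+C+10)}$ of size at most $(3/\eta)^n = n^{O(Bn)}$, and use the spectral bound on $\matM+\matN_n$ to show that it suffices to bound $\Pr[\|(\matM+\matN_n)v\|_2 \leq 2n^{-B}]$ uniformly over $v \in \mathcal{N}$ and then take a union bound. For each fixed $v$, the coordinates of $(\matM+\matN_n)v$ are independent across rows: the $i$-th coordinate equals $(\matM v)_i + \sum_j \xi_{ij} v_j$ with the $\xi_{ij}$ independent signs.

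Next, I would invoke Littlewood-Offord-type anti-concentration for each coordinate. If $v$ is \emph{incompressible} in the sense that $\Omega(n)$ of its entries have magnitude $\Omega(1/\sqrt{n})$, then Erdős' theorem yields that each row's contribution lies in any interval of length $2n^{-B}$ with probability at most $O(1/\sqrt{n})$. Since the rows are independent, the probability that all $n$ coordinates are small is at most $(O(1/\sqrt{n}))^n = n^{-\Omega(n)}$. Choosing $B$ large enough (depending on $C$) beats the net's cardinality $n^{O(Bn)}$, and a union bound gives failure probability $\ll 1/n$ over this part of $\mathcal{N}$.

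The main obstacle is the \emph{compressible} regime, where $v$ is essentially supported on $o(n)$ coordinates: here the Erdős anti-concentration alone is too weak, since only a few signs $\xi_{ij}$ contribute meaningfully to each row. Tao and Vu resolve this through the inverse Littlewood-Offord machinery: if $\Pr[\|\matN_n v\|_2 \leq t]$ is too large for a compressible $v$, then $v$ must have strong additive structure (lie close to a low-dimensional generalized arithmetic progression), and one shows via a separate net over structured vectors — combined with the fact that $\matM v$ is a fixed shift — that the total contribution from compressible $v$ is also at most $n^{-\Omega(1)}$. Combining the incompressible and compressible bounds and adjusting $B$ yields the claimed bound $\Pr[\|(\matM+\matN_n)^{-1}\|_2 \geq n^B] \leq 1/n$.
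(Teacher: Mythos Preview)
The paper does not prove this statement at all; it is quoted as Theorem~2.5 of Tao--Vu \cite{tv07} and used as a black box to derive Corollary~\ref{cor:tv} and Lemma~\ref{lem:rectangular}. So there is no in-paper proof to compare against.

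As for your sketch on its own: the high-level decomposition (compressible versus incompressible vectors, with inverse Littlewood--Offord machinery for the structured part) is indeed the skeleton of the Tao--Vu argument, but the incompressible branch as you have written it does not close. Your net on $S^{n-1}$ at scale $\eta = n^{-(B+C+10)}$ has cardinality $n^{\Theta((B+C)n)}$, while the Erd\H{o}s anti-concentration bound gives a per-vector failure probability of only $(O(1/\sqrt{n}))^n = n^{-n/2}$; the union bound is therefore of order $n^{(\Theta(B+C)-1/2)n}$, which diverges, and taking $B$ larger makes this \emph{worse}, not better, since a larger $B$ forces a finer net. The actual argument for incompressible $v$ does not use a full $\eta$-net of the sphere: one instead passes through an ``invertibility via distance'' reduction (in the Rudelson--Vershynin formulation) or, in Tao--Vu, conditions on $n-1$ rows and bounds the small-ball probability of the remaining row's inner product with the unit normal to the hyperplane they span, invoking the inverse Littlewood--Offord theorem precisely when that normal is arithmetically structured. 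A genuine net is used only over the much smaller family of compressible (essentially sparse) vectors, where its cardinality is manageable.
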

In words, the result indicates that the spectral norm of the inverse of the perturbed matrix is bounded from below with high probability.
We now describe a simple corollary of this result.
\begin{corollary}\label{cor:tv}
Let $\matM$ be an $n \times n$ matrix with integer entries bounded in magnitude by $n^C$ for
a constant $C > 0$. Let $\matN_n$ be a matrix with independent entries each chosen to be
$1/n^D$ with probability $1/2$, and $-1/n^D$ with probability $1/2$, where $D > 0$ is a constant. Then, there is a constant $B > 0$ depending on $C$ and $D$ for which
$$
\Pr[\|(\matM + \matN_n)^{-1}\|_2 \geq n^B] \leq 1/n.
$$
\end{corollary}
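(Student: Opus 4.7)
The plan is a direct reduction to Theorem~\ref{thm:tv}. The only difference between the hypothesis of the corollary and that of the theorem is that the random perturbations have magnitude $1/n^D$ rather than $1$, so the natural move is to rescale by a factor of $n^D$ and move the problem back to the integer $\pm 1$ setting where Tao--Vu applies.

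Concretely, I would set $\matN'_n := n^D \matN_n$, so that $\matN'_n$ is exactly a random $\pm 1$ matrix of the form in Theorem~\ref{thm:tv}. Writing $\matM + \matN_n = n^{-D}(n^D \matM + \matN'_n)$, inversion gives
\[
(\matM + \matN_n)^{-1} = n^{D}\bigl(n^D \matM + \matN'_n\bigr)^{-1},
\]
hence $\|(\matM + \matN_n)^{-1}\|_2 = n^D \,\|(n^D\matM + \matN'_n)^{-1}\|_2$. The matrix $n^D \matM$ has integer entries bounded in magnitude by $n^{C+D}$, so it satisfies the hypothesis of Theorem~\ref{thm:tv} with the constant $C' := C+D$ (absorbing the constant factor into the $n^{C'}$ bound by adjusting $C'$ by $1$ if needed, since we may assume $n$ is large enough).

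Applying Theorem~\ref{thm:tv} to $n^D \matM$ yields a constant $B'>0$ depending on $C'$ (hence on $C$ and $D$) for which
\[
\Pr\Bigl[\bigl\|(n^D \matM + \matN'_n)^{-1}\bigr\|_2 \geq n^{B'}\Bigr] \leq \frac{1}{n}.
\]
Combining with the scaling identity above, with probability at least $1-1/n$ we have $\|(\matM + \matN_n)^{-1}\|_2 \leq n^{D+B'}$, so setting $B := B' + D$ proves the claim. There is no real obstacle here beyond the bookkeeping of which constants absorb into which: the only subtlety is to verify that rescaling $\matM$ by $n^D$ keeps the entry magnitudes polynomial in $n$ so that Tao--Vu still applies with a constant exponent, which is immediate since $C$ and $D$ are both constants.
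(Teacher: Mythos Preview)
Your proof is correct and is essentially identical to the paper's own argument: scale by $n^D$ so that the perturbation becomes $\pm 1$ and the integer matrix has entries bounded by $n^{C+D}$, then apply Theorem~\ref{thm:tv} with the constant $C+D$ in place of $C$. The paper states this in a single line; your write-up simply unpacks the same reduction with more detail.
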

\begin{proof}
This follows by Theorem \ref{thm:tv} after replacing the constat $C$ in that theorem with $C + D$, and scaling by $n^D$.
\end{proof}

We need to generalize Corollary \ref{cor:tv} to rectangular matrices since we will eventually apply this perturbation technique to the matrix $\matA$ to which we would like to compute a distributed PCA.
\begin{lemma}\label{lem:rectangular}
Let $\matM$ be an $m \times n$ matrix with integer entries bounded in magnitude by $n^C$ for a constant $C > 0$, and suppose $m \leq n$. Let $\matN_{m,n}$ be a matrix with independent entries each chosen to be $1/n^D$ with probability $1/2$ and $-1/n^D$ with probability $1/2$, where $D > 0$ is a constant. Then, there is a constant $B > 0$ depending on $C$ and $D$ for which
$$\Pr[\sigma_m(\matM + \matN_{m,n}) < 1/n^B] \leq 1/n,$$
where $\sigma_m(\matM + \matN_{m,n})$ denotes the smallest singular value of $\matM + \matN_{m,n}$.
\end{lemma}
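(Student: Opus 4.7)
The plan is to reduce the rectangular statement to the already-established square-matrix Corollary~\ref{cor:tv} by padding both $\matM$ and $\matN_{m,n}$ to size $n \times n$. Since zeros are integers, the natural padding preserves the integer entries and entry-magnitude bound on the deterministic part, and completing the random block with fresh independent sign-scaled entries preserves the distributional hypothesis. The pleasant feature of this reduction is that the comparison of smallest singular values between the rectangular original and the square padded version goes in exactly the useful direction, by testing vectors supported on the first $m$ coordinates.

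Concretely, first I would set
\[
\tilde{\matM} \;=\; \begin{pmatrix} \matM \\ 0_{(n-m)\times n} \end{pmatrix} \in \R^{n\times n},
\qquad
\tilde{\matN}_n \;=\; \begin{pmatrix} \matN_{m,n} \\ \matN^{*} \end{pmatrix} \in \R^{n\times n},
\]
where $\matN^{*}$ is a fresh independent $(n-m)\times n$ matrix whose entries are iid $\pm 1/n^{D}$. Then $\tilde{\matM}$ is an $n\times n$ integer matrix with entries bounded by $n^{C}$, and $\tilde{\matN}_n$ is an $n\times n$ matrix with iid $\pm 1/n^{D}$ entries, so the hypotheses of Corollary~\ref{cor:tv} are met. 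Applying that corollary yields a constant $B$ (depending only on $C$ and $D$) such that
\[
\Pr\!\bigl[\,\sigma_n(\tilde{\matM}+\tilde{\matN}_n) < 1/n^{B}\,\bigr] \;\leq\; 1/n .
\]

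Next I would show $\sigma_m(\matM+\matN_{m,n}) \geq \sigma_n(\tilde{\matM}+\tilde{\matN}_n)$ by a one-line restriction argument: for every unit $x\in\R^{m}$, the vector $u=\binom{x}{0}\in\R^{n}$ has unit norm, and from the block structure
\[
(\tilde{\matM}+\tilde{\matN}_n)^{\textsc{T}} u \;=\; (\matM+\matN_{m,n})^{\textsc{T}} x + (\matN^{*})^{\textsc{T}}\cdot 0 \;=\; (\matM+\matN_{m,n})^{\textsc{T}} x.
\]
Therefore $\|(\matM+\matN_{m,n})^{\textsc{T}} x\|_2 = \|(\tilde{\matM}+\tilde{\matN}_n)^{\textsc{T}} u\|_2 \geq \sigma_n(\tilde{\matM}+\tilde{\matN}_n)$, and minimizing the left side over unit $x$ gives exactly $\sigma_m(\matM+\matN_{m,n})\geq \sigma_n(\tilde{\matM}+\tilde{\matN}_n)$. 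Combined with the probabilistic bound above, the desired conclusion follows with the same constant $B$.

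There is no serious obstacle: the only thing to verify carefully is that the padded matrices genuinely satisfy the hypotheses of Corollary~\ref{cor:tv} (integer entries of the right magnitude, independent $\pm 1/n^{D}$ perturbations), and that the restriction inequality goes in the right direction. The latter is the point of padding with \emph{zero} rows rather than, say, an identity block — it makes the coordinate projection onto the first $m$ coordinates ``invisible'' to the deterministic part, so the calculation reduces to a pure norm identity.
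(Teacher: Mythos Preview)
Your proposal is correct and is essentially identical to the paper's proof: both pad $\matM$ with $n-m$ zero rows, complete the perturbation with fresh independent $\pm 1/n^D$ entries, invoke Corollary~\ref{cor:tv} on the resulting square matrix, and then test unit vectors supported on the first $m$ coordinates to deduce $\sigma_m(\matM+\matN_{m,n})\geq \sigma_n(\tilde{\matM}+\tilde{\matN}_n)$. The only cosmetic difference is that the paper writes the restriction step as left multiplication by a row vector while you write it via the transpose; these are the same computation.
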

\begin{proof}
Suppose we were to pad $\matM$ with $n-m$ zero rows, obtaining a square matrix $\matM$. Now consider the $n \times n$ matrix $\matN_{n,n}$ with independent entries each chosen to be $1/n^D$ with probability $1/2$ and $-1/n^D$ with probability $1/2$. By Corollary \ref{cor:tv}, all singular values of $\matM+\matN_{n,n}$ are at least $1/n^B$. Now consider a unit vector $\x \in \mathbb{R}^n$ which is zero on all but its top $m$ coordinates. Let $\y \in \mathbb{R}^m$ be the unit vector which agrees with $\x$ on its top $m$ coordinates. Then,
\begin{eqnarray*}
1/n^B & \leq & \|\x (\matM + \matN_{n,n})\|_2\\
& = & \|\y(\matM + \matN_{m,n})\|_2,
\end{eqnarray*}
where the inequality uses the lower bound on the singular values of $\matM+\matN_{n,n}$, which occurs with probability at least $1-1/n$, and the equality follows by definition of the matrices and vectors we have defined. As $\y \in \mathbb{R}^m$ can be chosen to be an arbitrary unit vector, it follows that $\sigma_m(\matM + \matN_{m,n}) \geq 1/n^B$, which completes the proof.
\end{proof}

\subsection{Description of algorithm}\label{sec:bit4}

Using the above results, we are now ready to describe a distributed PCA algorithm whose communication
cost can be bounded in terms of machine words and not just in terms of ``real numbers''.
As in the algorithm in Section~\ref{sec:algorithm3}, we denote with
$\matB_i \in \R^{m \times n}$ the matrix that $\matB_1$ arises from $\matA_1$ after applying the Bernoulli perturbation technique discussed above in Lemma~\ref{lem:rectangular} and $\forall i>1,$ $\matB_i$ is equal to $\matA_i$.
Using this notation, we have $\matB := \sum_i\matB_i \in \R^{m \times n}.$
Notice that $\matB$ exactly arises from $\matA$ after applying the such Bernoulli perturbation technique.

\vspace{0.2in}
\begin{small}
{\bf Input:}
\begin{enumerate}
\item $\matA \in \R^{m \times n}$ arbitrarily partitioned
$
\matA = \sum_i^s \matA_i
$
for $i=1:s,$ $\matA_i \in \R^{m \times n}$.
\item rank parameter $k < \rank(\matA)$
\item accuracy parameter $\varepsilon > 0$
\end{enumerate}

{\bf Algorithm}

\begin{enumerate}

\item Use the protocol of Lemma~\ref{lem:rankTest} with $\delta=0.01$ to test if the rank of $\matA$ is less than $2k$.

\item If $\rank(\matA) \le 2k,$ use the protocol of Lemma~\ref{lem:lowRankProtocol} to find some orthonormal $\matU \in \R^{m \times k}$.

\item If $\rank(\matA) > 2k,$

\begin{enumerate}

\item machine $1$ locally and independently adds $1/n^D$ with probability $1/2$, and $-1/n^D$ with probability $1/2$, to each of the entries of $\matA_1$, where $D$ is the constant of Lemma~\ref{lem:rectangular}. Note that this effectively adds the matrix $\matN_{m,n}$ of Lemma~\ref{lem:rectangular} to the entire matrix $\matA$. For notational convenience let $\matB = \matA + \matN_{m,n}$, $\matB_1 \in \R^{m \times n}$ be the local perturbed matrix of machine $1$ and $\forall i>1,$ $\matB_i$ is just equal to $\matA_i$.

\item Machines agree upon two dense Johnson-Lindenstrauss matrices $\matS\in\mathbb{R}^{\xi_1\times m},\matT\in\mathbb{R}^{n \times \xi_2}$ with $\xi_1=O(k\varepsilon^{-2}),\xi_2=O(k\varepsilon^{-2})$~(see Lemma~\ref{lem:djlm}).

\item Each machine locally computes $\tilde{\matB_i}=\matS\matB_i\matT$ and sends $\tilde{\matB_i}$ to the server. Server constructs $\tilde{\matB}=\sum_i \tilde{\matB_i}$.

\item Server computes the SVD of $\tilde{\matB}_k=\matU_{\tilde{\matB}_k}\matSig_{\tilde{\matB}_k}\matV_{\tilde{\matB}_k}\transp$
~($\matU_{ \tilde{\matB}_k } \in \R^{\xi_1 \times k}$, $\matSig_{ \tilde{\matB}_k } \in \R^{k \times k}$, $\matV_{ \tilde{\matB}_k } \in \R^{\xi_2 \times k})$.

\item Now server rounds each of the entries in $\matV_{ \tilde{\matB}_k }$ to the nearest integer multiple of $1/n^{\gamma}$ for a sufficiently large constant $\gamma > 0$. Let the matrix after the rounding be ${\hat{\matV}}_{\tilde{\matB}_k}$

\item Server sends ${\hat{\matV}}_{\tilde{\matB}_k}$ to all machines.

\item Each machine construct $\hat{\matX_i}=\matB_i\matT{\hat{\matV}}_{\tilde{\matB}_k}$ and sends $\hat{\matX_i}$ to the server. Server constructs $\hat{\matX}=\sum_i \hat{\matX_i}$.


\item Server computes an orthonormal basis $\matU \in \mathbb{R}^{m\times k}$ for $span(\hat{\matX})$ (notice that $rank(\hat{\matX}) \le k$), e.g. with a QR factorization.

\item Server sends $\matU$ to each machine.

\end{enumerate}

\end{enumerate}

\end{small}

\subsection{Main result}\label{sec:bit5}

The theorem below analyzes the approximation error, the communication complexity,
and the running time of the previous algorithm. Notice that the communication cost of
this algorithm is given in terms of machine words.

\begin{theorem}\label{thmd2}
The matrix $\matU \in \R^{m \times k}$ with $k$ orthonormal columns satisfies with arbitrarily large constant probability:
\begin{equation}\label{eqnthm2d2}
 \FNormS{\matA -  \matU \matU\transp \matA} \le
\left(1 + \varepsilon \right) \cdot  \FNormS{\matA - \matA_k}.
\end{equation}
The communication cost of the algorithm is
$
O(m s k + s \cdot \poly(k/\varepsilon))
$
words
and the  running time is
$
O\left(n m k \varepsilon^{-2} + m k^2 \varepsilon^{-4} + \poly(k \varepsilon^{-1})\right).
$
\end{theorem}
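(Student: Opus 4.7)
The plan is to combine the two conditional protocols of Sections~\ref{sec:bit2} and~\ref{sec:bit3} with a preliminary rank test, and to argue correctness through a smoothed-analysis based bound on the bit complexity of the intermediate matrix $\matV_{\tilde{\matB}_k}$. First, I would invoke Lemma~\ref{lem:rankTest} with a small constant failure probability, costing only $O(sk^2)$ words, to decide whether $\rank(\matA) \le 2k$ or $\rank(\matA) > 2k$. In the former case, the result is immediate from Lemma~\ref{lem:lowRankProtocol}, which already gives the required $(1+\varepsilon)$ guarantee in $O(skm + sk^2/\varepsilon^2)$ words. Hence all the work lies in Case 2.

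For Case 2, the approach is to analyze Algorithm 3(a)--(i) as a perturbation of the real-valued protocol of Theorem~\ref{thmd}. Let $\matB = \matA + \matN_{m,n}$ where $\matN_{m,n}$ has entries $\pm 1/n^D$. Since $\rank(\matA) \ge 2k$ and $\matA$ has integer entries of magnitude $\poly(mns/\varepsilon)$, Corollary~\ref{cor:plusTwo} gives $\|\matA-\matA_k\|_{\mathrm{F}} > 1/\poly(mns/\varepsilon)$, while $\|\matN_{m,n}\|_{\mathrm{F}} \le \sqrt{mn}/n^D$; choosing $D$ sufficiently large (depending on $C$) makes the perturbation negligible compared to $\|\matA-\matA_k\|_{\mathrm{F}}$, so that any rank-$k$ projection $\matU \matU\transp$ which is $(1+\varepsilon/2)$-optimal for $\matB$ is also $(1+\varepsilon)$-optimal for $\matA$ (via triangle inequality on $\|\matM - \matP\matM\|_{\mathrm{F}}^2$ expanded around $\matA$ versus $\matB$, using $\|\matP\matN_{m,n}\|_{\mathrm{F}}\le \|\matN_{m,n}\|_{\mathrm{F}}$).

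The heart of the argument is bounding the bit complexity of $\matV_{\tilde{\matB}_k}$, which is the main obstacle. By Lemma~\ref{lem:rectangular} applied to the integer-scaled version of $\matA$, with probability $1-1/n$ the smallest singular value of $\matB$ is at least $1/n^B$ for a constant $B$. A standard Weyl/submultiplicativity argument combined with the affine-embedding property of the dense JL sketches $\matS,\matT$ (which appear in Lemma~\ref{lem:djlm} and behave as $(1\pm O(\varepsilon))$-distortions on the rank-$k$ spaces at hand) transfers this lower bound to $\sigma_k(\tilde{\matB})$, giving $\sigma_k(\tilde{\matB}) \ge 1/\poly(mns/\varepsilon)$. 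Since $\matV_{\tilde{\matB}_k}$ is orthonormal, its entries lie in $[-1,1]$, and a standard perturbation bound for the SVD (in terms of the gap $\sigma_k(\tilde{\matB})-\sigma_{k+1}(\tilde{\matB})$ or via the pseudoinverse stability $\|\pinv{\tilde{\matB}_k}\|_2 = 1/\sigma_k$) shows that rounding each entry to the nearest multiple of $1/n^\gamma$ produces $\hat{\matV}_{\tilde{\matB}_k}$ with $\|\hat{\matV}_{\tilde{\matB}_k} - \matV_{\tilde{\matB}_k}\|_{\mathrm{F}} \le 1/n^{\gamma'}$ for $\gamma'$ arbitrarily large, which in turn perturbs the sketched objective by at most $1/\poly(n)$. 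Since $\|\matA-\matA_k\|_{\mathrm{F}} > 1/\poly(n)$, choosing $\gamma$ large enough makes this additive perturbation an absolute lower-order term, and combining with Lemma~\ref{lembatch0} applied to $\matB$ yields the $(1+\varepsilon)$ bound on $\matA$.

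The communication analysis then proceeds step by step: the rank test uses $O(sk^2)$ words; the sketches $\matS,\matT$ are specified by an $O(\log k)$-wise independent seed communicated in $O(\poly(k/\varepsilon))$ words; sending $\tilde{\matB}_i$ costs $O(s\cdot k^2/\varepsilon^4)$ words total since each entry of the rounded matrix is an $O(\log(mns/\varepsilon))$-bit integer; broadcasting $\hat{\matV}_{\tilde{\matB}_k}$ costs $O(s\cdot k^2/\varepsilon^2 \cdot \log n/\log n) = O(s\cdot \poly(k/\varepsilon))$ words after rounding; sending $\hat{\matX}_i$ contributes $O(skm)$ words, as does the final broadcast of $\matU$ (which can be chosen, e.g., via a deterministic QR on $\hat{\matX}$, producing entries representable in $O(\log(mns/\varepsilon))$ bits). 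The running time bound follows exactly as in Theorem~\ref{thmbatch}, since the rank test and rounding contribute only lower-order terms.
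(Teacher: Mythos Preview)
Your high-level strategy matches the paper's exactly: rank test via Lemma~\ref{lem:rankTest}, Lemma~\ref{lem:lowRankProtocol} for $\rank(\matA)\le 2k$, and for $\rank(\matA)>2k$ the perturbation $\matB=\matA+\matN_{m,n}$ followed by rounding $\matV_{\tilde{\matB}_k}$. The transfer from $\matB$ back to $\matA$ via Weyl's inequality and Corollary~\ref{cor:plusTwo} is also how the paper proceeds (its Lemma~\ref{lem:LASTFUCKINGLEMMA}). The communication and running-time accounting are likewise essentially the paper's.

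There is, however, a genuine gap in your handling of the rounding step. First, one of your two proposed mechanisms --- a Davis--Kahan style bound via the gap $\sigma_k(\tilde{\matB})-\sigma_{k+1}(\tilde{\matB})$ --- cannot work: Lemma~\ref{lem:rectangular} lower-bounds $\sigma_m(\matB)$ only, and gives no control on any spectral \emph{gap} of $\tilde{\matB}$. Second, your sentence that an SVD perturbation bound ``shows that rounding\ldots produces $\hat{\matV}_{\tilde{\matB}_k}$ with $\|\hat{\matV}_{\tilde{\matB}_k}-\matV_{\tilde{\matB}_k}\|_{\mathrm F}\le 1/n^{\gamma'}$'' is backwards: that closeness is immediate from the rounding itself. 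What actually requires work is propagating the small $\|\matE\|_{\mathrm F}$ (with $\matE=\hat{\matV}_{\tilde{\matB}_k}-\matV_{\tilde{\matB}_k}$) to the final projection error $\|\matB-\matU\matU\transp\matB\|_{\mathrm F}$, and ``perturbs the sketched objective by at most $1/\poly(n)$'' does not supply that argument. Nor does ``Weyl/submultiplicativity'' transfer $\sigma_m(\matB)\ge n^{-B}$ to $\sigma_k(\tilde{\matB})$: $\matS$ is wide, so submultiplicativity gives no lower bound on $\sigma_{\min}(\matS\matB)$.

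The paper fills this gap with Lemma~\ref{lem:bdinv}, which bounds $\|(\matB\matT\matV_{\tilde{\matB}_k})^{\dagger}\|_2\le 8n^B$ --- note this is the pseudoinverse of the $m\times k$ matrix $\matB\matT\matV_{\tilde{\matB}_k}$, not of $\tilde{\matB}_k$. The proof chains subspace-embedding guarantees (Lemmas~\ref{lem:subspcebd} and~\ref{lem:sglvlbd}) through
\[
\sigma_{\min}(\matB\matT\matV_{\tilde{\matB}_k})\;\gtrsim\;\sigma_{\min}(\matS\matB\matT\matV_{\tilde{\matB}_k})\;\ge\;\sigma_{\min}(\matS\matB\matT)\;\gtrsim\;\sigma_{\min}(\matS\matB)\;\gtrsim\;\sigma_{\min}(\matB)\;\ge\; n^{-B}.
\]
With this in hand, the paper applies $\|\matB-\matC\matC^{\dagger}\matB\|_{\mathrm F}\le\|\matB-\matC\matX\|_{\mathrm F}$ with $\matC=\matB\matT\hat{\matV}_{\tilde{\matB}_k}$ and the specific choice $\matX=(\matB\matT\matV_{\tilde{\matB}_k})^{\dagger}\matB$, so that $\matC\matX$ expands into the unrounded projection plus the extra term $\matB\matT\matE(\matB\matT\matV_{\tilde{\matB}_k})^{\dagger}\matB$, whose Frobenius norm is at most $\|\matB\|_{\mathrm F}^2\|\matT\|_{\mathrm F}\|\matE\|_{\mathrm F}\cdot 8n^B = 1/\poly(n)$ for $\gamma$ large enough. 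Your ``pseudoinverse stability'' remark points in this direction but neither identifies which pseudoinverse must be controlled nor supplies the decomposition that uses it.
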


\subsection{Proof of Theorem~\ref{thmd2}}\label{sec:proof32}
\paragraph{Proof of Eqn.~\ref{eqnthm2d2}}

If  $\rank(\matA)\le 2k,$ then Lemma~\ref{lem:lowRankProtocol} shows the claim.
If $\rank(\matA) > 2k,$ then the situation is more involved and we prove the approximation bound in the following subsection.

\paragraph{Communication Complexity}
Step 1 requires $O(sk^2)$ words (Lemma~\ref{lem:rankTest}).
Step 2 requires $O(smk + sk^2/\varepsilon^2)$ words (Lemma~\ref{lem:lowRankProtocol}).
Step 3 requires $O(m s k + s \cdot \poly(k/\varepsilon))$ words and this follows from the analysis in
Theorem~\ref{thmd}. The only difference with Theorem~\ref{thmd} is that now all matrices communicated
in the algorithm have real numbers which can be represented efficiently with one machine word of at most
$O(\log(m n s / \varepsilon))$ bits. To see this, note that each entry of ${\hat{\matV}}_{\tilde{\matB}_k}$ is bounded in magnitude by $n^{O(B+C)/\gamma}$. Therefore, the entries of ${\hat{\matV}}_{\tilde{\matB}_k}$ and the entries of $\matX_i=\matB_i\matT{\hat{\matV}}_{\tilde{\matB}_k}$ can each be described using $O(\log n)$ bits, i.e., a constant number of machine words.

\paragraph{Running time}
The operations in the third step in algorithm are effectively the same operations as in the batch algorithm in
Section~\ref{sec:algbatch}~(plus some extra operations whose running time is not asymptotically larger),
hence the analysis of the running time in Theorem~\ref{thmbatch} shows the claim
(the number of operations in the first two steps of the algorithm is asymptotically less than the number of operations
in the third step).

\subsection{Proof of Eqn.~\ref{eqnthm2d2} if $\rank(\matA) > 2k$}

\begin{lemma}[Result in~\cite{CW09}]\label{lem:subspcebd}
Suppose $\matM\in \R^{p \times q}$. For $\varepsilon>0$, let $\matP \in \R^{\xi \times p}$ be a matrix of which entries are $O(q)$-wise independently and uniformly chosen from $\{-1/\sqrt{\xi},+1/\sqrt{\xi}\}$ where $\xi=O(q/\varepsilon^2)$. With arbitrarily large constant probability, $\matP$ is an $\varepsilon$-subspace embedding matrix of column space of $\matM$. Specifically, $\forall \x$, it has
$$(1 - \varepsilon)\|\matM\x\|_2\leq \|\matP\matM\x\|_2 \leq(1+ \varepsilon)\|\matM\x\|_2$$
\end{lemma}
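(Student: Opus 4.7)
\textbf{Proof Plan for Lemma~\ref{lem:subspcebd}.} The plan is to reduce the subspace-embedding statement to a Johnson--Lindenstrauss-type concentration statement for a finite set of unit vectors, then extend by a net argument. First I would observe that, by the homogeneity of the inequality we want, it suffices to prove
$$\bigl| \|\matP \y\|_2^2 - \|\y\|_2^2 \bigr| \le \eps \,\|\y\|_2^2 \qquad \forall\, \y \in \col(\matM)$$
(after adjusting $\eps$ by a constant factor to pass from $\|\cdot\|^2$ to $\|\cdot\|$). Since $\col(\matM)$ has dimension $d \le q$, I would work entirely inside this $d$-dimensional subspace.

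Next, I would build a $\frac{1}{2}$-net $\mathcal{N}$ of the unit sphere of $\col(\matM)$; a standard volume argument gives $|\mathcal{N}| \le 5^d \le 5^q$. The heart of the proof is a distributional JL bound for a single fixed unit vector $\y$: I would show that
$$\Pr\!\left[\bigl|\|\matP\y\|_2^2 - 1\bigr| > \eps\right] \le 5^{-q}/\poly(1)$$
so that a union bound over $\mathcal{N}$ leaves the statement holding for every net point. To prove this single-vector bound using only $O(q)$-wise independence of the entries of $\matP$, I would compute the $\Theta(q)$-th moment of $Z := \|\matP\y\|_2^2 - 1 = \frac{1}{\xi}\sum_{i=1}^{\xi}\bigl(\sum_{j,k} \sigma_{ij}\sigma_{ik} y_j y_k - 1\bigr)$ where $\sigma_{ij} \in \{\pm 1\}$, note that expanding $Z^{2t}$ for $2t = \Theta(q)$ produces only monomials of degree $\le 4t = O(q)$ in the $\sigma$'s so the independence suffices for the moment computation to agree with the fully-independent case, and then invoke a Hanson--Wright / Khintchine-type moment inequality (with $\xi = \Theta(q/\eps^2)$) together with Markov's inequality applied to $Z^{2t}$.

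Finally, I would pass from the net $\mathcal{N}$ to all unit vectors in $\col(\matM)$ by the standard iterated-approximation (``successive refinement'') argument: for any unit $\y$, write $\y = \sum_{\ell \ge 0} 2^{-\ell} \y_\ell$ with each $\y_\ell \in \mathcal{N}$ (or zero), so that by the triangle inequality $\bigl| \|\matP\y\|_2 - 1 \bigr| \le O(\eps)$ follows from the net estimate applied to each $\y_\ell$ and its pairwise differences. A final rescaling of $\eps$ by a constant absorbs the blow-up.

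The main obstacle I expect is the single-vector moment bound with only $O(q)$-wise independence: one has to carefully check that every monomial appearing in the binomial expansion of $Z^{2t}$ involves at most $O(q)$ distinct $\sigma_{ij}$ entries, so that its expectation under the limited-independence distribution equals its expectation under full independence; once this is in place the Hanson--Wright / Khintchine moment estimate (as used in CW09) goes through unchanged and the rest of the argument is routine.
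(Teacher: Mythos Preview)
The paper does not prove this lemma at all; it is stated as a known result from \cite{CW09} and simply cited without argument. So there is no ``paper's proof'' to compare against, and your plan is being judged on its own merits.

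Your outline is the standard route and is correct. A few small comments: (i) the net-extension step is cleaner if you phrase it as bounding the operator norm of the symmetric matrix $\matQ\transp\matP\transp\matP\matQ - \matI_d$ restricted to the $d$-dimensional column space (where $\matQ$ is an orthonormal basis), using the inequality $\|A\|_2 \le (1-2\delta)^{-1}\max_{\y\in\mathcal{N}}|\y\transp A\y|$ for a $\delta$-net with $\delta<1/2$; your ``successive refinement'' version works too but is slightly messier to make quantitative. (ii) Your degree count is right: the $(2t)$-th moment of $Z$ involves monomials of total degree at most $4t$ in the Rademacher variables, so $4t\le O(q)$-wise independence suffices, and then the usual Hanson--Wright/Khintchine bound $\mathbb{E}[Z^{2t}]\le (Ct/\xi)^t$ combined with Markov at $t=\Theta(q)$ and $\xi=\Theta(q/\eps^2)$ gives failure probability $2^{-\Theta(q)}$, which beats the $5^q$ union bound. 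This is exactly the mechanism used in \cite{CW09}, so your plan is both correct and faithful to the cited source.
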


\begin{lemma}[Lemma C.2 in~\cite{li2014sketching}]\label{lem:sglvlbd}
Suppose $\matP$ is an $\varepsilon$-subspace embedding matrix of column space of $\matM\in \R^{p\times q}$~($q\leq p$). With arbitrarily large constant probability, $\forall 1\leq i \leq q$
$$(1-\varepsilon)\sigma_i(\matP\matM)\leq\sigma_i(\matM)\leq(1+\varepsilon)\sigma_i(\matP\matM)$$
where $\sigma_i$ means the $i^{th}$ singular value.
\end{lemma}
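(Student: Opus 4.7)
} The plan is to combine the Courant–Fischer min-max characterization of singular values with the subspace embedding property supplied by Lemma~\ref{lem:subspcebd}. Recall that for any matrix $\matN$ with $q$ columns and $i \in \{1,\dots,q\}$,
\[
\sigma_i(\matN) \;=\; \max_{\substack{V \subseteq \R^q \\ \dim V = i}} \;\min_{\substack{\x \in V \\ \|\x\|_2=1}} \|\matN\x\|_2.
\]
I will apply this identity to both $\matM$ and $\matP\matM$.

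The key observation is that for every $\x \in \R^q$, the vector $\matM\x$ lies in the column space of $\matM$, which is precisely the subspace on which $\matP$ is assumed to act as an $\varepsilon$-subspace embedding. Hence, by Lemma~\ref{lem:subspcebd}, with arbitrarily large constant probability the inequalities
\[
(1-\varepsilon)\|\matM\x\|_2 \;\le\; \|\matP\matM\x\|_2 \;\le\; (1+\varepsilon)\|\matM\x\|_2
\]
hold simultaneously for \emph{all} $\x \in \R^q$. Note this is a uniform statement over the subspace, which is what is needed in order to push the bound through both the inner minimization and the outer maximization in the Courant–Fischer formula.

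Next I would plug these pointwise bounds into the min-max expression. Fixing any $i$-dimensional subspace $V \subseteq \R^q$, the inner minimum over unit $\x \in V$ of $\|\matP\matM\x\|_2$ is squeezed between $(1-\varepsilon)$ and $(1+\varepsilon)$ times the inner minimum of $\|\matM\x\|_2$. Taking the maximum over $V$ of dimension $i$ on both sides preserves these inequalities, yielding
\[
(1-\varepsilon)\sigma_i(\matM) \;\le\; \sigma_i(\matP\matM) \;\le\; (1+\varepsilon)\sigma_i(\matM), \qquad 1 \le i \le q.
\]
Rearranging each inequality gives $\sigma_i(\matM) \le \sigma_i(\matP\matM)/(1-\varepsilon)$ and $\sigma_i(\matM) \ge \sigma_i(\matP\matM)/(1+\varepsilon)$. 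For $\varepsilon$ sufficiently small, $1/(1-\varepsilon) \le 1+2\varepsilon$ and $1/(1+\varepsilon) \ge 1-\varepsilon$, so after replacing $\varepsilon$ by a constant multiple of itself (which does not affect the statement), we obtain the claimed bound $(1-\varepsilon)\sigma_i(\matP\matM)\le \sigma_i(\matM) \le (1+\varepsilon)\sigma_i(\matP\matM)$.

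No step of this argument looks genuinely hard: the only subtlety is ensuring the subspace embedding bound of Lemma~\ref{lem:subspcebd} is invoked uniformly over the column space of $\matM$ (so that the same event yields bounds on all directions $\matM\x$ at once), and that the constants $1\pm\varepsilon$ can be interchanged by redefining $\varepsilon$ by a constant factor. Since $q \le p$, the matrix $\matM$ has at most $q$ nonzero singular values, and Courant–Fischer applies in the stated range $1 \le i \le q$ without further modification.
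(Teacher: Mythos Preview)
Your proof is correct. The paper does not supply its own proof of this lemma; it simply cites it as Lemma~C.2 of \cite{li2014sketching}. Your Courant--Fischer argument is the standard and natural route: once the subspace-embedding inequality $(1-\varepsilon)\|\matM\x\|_2 \le \|\matP\matM\x\|_2 \le (1+\varepsilon)\|\matM\x\|_2$ holds uniformly in $\x$, the min--max characterization immediately transfers it to every singular value, and the final rescaling of $\varepsilon$ is harmless. One minor remark: the phrase ``with arbitrarily large constant probability'' in the lemma statement is really about the event that $\matP$ is a subspace embedding in the first place; once that event is conditioned on (as in the hypothesis), your conclusion is deterministic, which your write-up implicitly recognizes.
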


It is suffice to show the following:
\begin{lemma}\label{lem:bdinv}
With arbitrarily large constant probability, $\|\pinv{(\matB\matT{\matV}_{\tilde{\matB}_k})}\|_2\leq 8n^B$
\end{lemma}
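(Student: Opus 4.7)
The goal is to show that $\sigma_k(\matB\matT\matV_{\tilde{\matB}_k}) \geq 1/(8n^B)$; since $\matB\matT\matV_{\tilde{\matB}_k} \in \R^{m\times k}$ has at most $k$ nonzero singular values, this is equivalent to the claimed upper bound on $\|\pinv{(\matB\matT\matV_{\tilde{\matB}_k})}\|_2$. My plan is to chain three singular-value comparisons: from $\sigma_k(\matB)$ down to $\sigma_k(\matB\matT)$, then to $\sigma_k(\tilde{\matB}) = \sigma_k(\matS\matB\matT)$, and finally up to $\sigma_k(\matB\matT\matV_{\tilde{\matB}_k})$. The base case comes directly from Lemma~\ref{lem:rectangular}: because $\matB = \matA + \matN_{m,n}$ with entries of $\matA$ integer and of magnitude $\poly(n)$, with probability $1-1/n$ we have $\sigma_m(\matB) \geq 1/n^B$, and since in this branch $\rank(\matA) > 2k$ (so in particular $m \geq 2k+1 > k$), we get $\sigma_k(\matB) \geq \sigma_m(\matB) \geq 1/n^B$.

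For the second step I would write the SVD $\matB = \matU_B\matSig_B\matV_B\transp$ and isolate the top-$k$ right singular vectors $\matV_{B,k}$. Since $\matT$ is a dense JL matrix with $\xi_2 = O(k/\varepsilon^2)$ columns, by Lemma~\ref{lem:subspcebd} applied to the $k$-dimensional column space of $\matV_{B,k}$, the $k\times\xi_2$ matrix $\matV_{B,k}\transp\matT$ satisfies $\sigma_{\min}(\matV_{B,k}\transp\matT) \geq \sqrt{1-\varepsilon}$ with constant probability. Then $\matB_k\matT = \matU_{B,k}\matSig_{B,k}(\matV_{B,k}\transp\matT)$ with $\matU_{B,k}$ orthonormal gives $\sigma_k(\matB_k\matT) \geq \sqrt{1-\varepsilon}\,\sigma_k(\matB) \geq \sqrt{1-\varepsilon}/n^B$. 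Because $\matB_k\matT = (\matU_{B,k}\matU_{B,k}\transp)\,\matB\matT$ is an orthogonal projection of $\matB\matT$ and projections only shrink singular values, we conclude $\sigma_k(\matB\matT) \geq \sigma_k(\matB_k\matT) \geq \sqrt{1-\varepsilon}/n^B$. For the third step I would apply Lemma~\ref{lem:subspcebd} a second time to pick $\matS$ as an $\varepsilon$-subspace embedding for $\col(\matB\matT)$, which has dimension at most $\xi_2$; if the stated $\xi_1 = O(k/\varepsilon^2)$ is not large enough to meet the $O(\xi_2/\varepsilon^2)$ requirement of the lemma, I absorb the extra factor into the low-order $\poly(k/\varepsilon)$ term of the protocol. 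This embedding preserves singular values of any matrix whose columns lie in $\col(\matB\matT)$, yielding both $\sigma_k(\matS\matB\matT) \geq \sqrt{1-\varepsilon}\,\sigma_k(\matB\matT)$ and, since the columns of $\matB\matT\matV_{\tilde{\matB}_k}$ obviously lie in $\col(\matB\matT)$,
\[
\sigma_k(\matB\matT\matV_{\tilde{\matB}_k}) \geq \frac{\sigma_k(\matS\matB\matT\matV_{\tilde{\matB}_k})}{\sqrt{1+\varepsilon}} = \frac{\sigma_k(\tilde{\matB})}{\sqrt{1+\varepsilon}},
\]
where the equality uses that $\matS\matB\matT\matV_{\tilde{\matB}_k} = \tilde{\matB}\matV_{\tilde{\matB}_k} = \matU_{\tilde{\matB}_k}\matSig_{\tilde{\matB}_k}$ by definition of the SVD of $\tilde{\matB}$. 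Chaining the inequalities yields $\sigma_k(\matB\matT\matV_{\tilde{\matB}_k}) \geq (1-\varepsilon)/((1+\varepsilon)^{1/2}\,n^B) \geq 1/(8n^B)$ for any sufficiently small constant $\varepsilon$, which is all the protocol needs.

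The main obstacle is step two: $\matT$ has only $\xi_2 = O(k/\varepsilon^2)$ columns and therefore cannot be a subspace embedding for the full $m$-dimensional row space of $\matB$, so one cannot simply pass $\sigma_k(\matB)$ through to $\sigma_k(\matB\matT)$. The trick is to restrict to the top-$k$ piece $\matB_k$, whose right singular space has exactly $k$ dimensions and thus does fit inside the JL guarantee, and then exploit the fact that $\matB_k\matT$ is an orthogonal projection of $\matB\matT$ to recover the lower bound on $\sigma_k(\matB\matT)$. A union bound over the Tao--Vu perturbation event, the JL event for $\matV_{B,k}$, and the subspace embedding event for $\col(\matB\matT)$ gives the entire argument success probability as close to $1$ as desired after adjusting failure constants.
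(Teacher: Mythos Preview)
Your argument is correct, but it follows a different route from the paper's. Both proofs reduce to the Tao--Vu bound $\sigma_{\min}(\matB)\ge 1/n^B$ from Lemma~\ref{lem:rectangular} and then chain through $\tilde{\matB}=\matS\matB\matT$, but they organize the chain in opposite orders.

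The paper first applies $\matS$ as an $\varepsilon$-subspace embedding for the $k$-dimensional column space of $\matB\matT\matV_{\tilde{\matB}_k}$ (so $\xi_1=O(k/\varepsilon^2)$ is enough here), obtaining $\sigma_{\min}(\matB\matT\matV_{\tilde{\matB}_k})\ge\frac12\sigma_{\min}(\tilde{\matB}\matV_{\tilde{\matB}_k})\ge\frac12\sigma_{\min}(\tilde{\matB})$. It then handles $\matT$ only on the already-sketched matrix: since $\matS\matB$ has $\xi_1$ rows, $\matT\transp$ needs to embed an at most $\xi_1$-dimensional subspace, which it does with constant error. Finally, the paper relates $\sigma_{\min}(\matS\matB)$ to $\sigma_{\min}(\matB)$ by observing that $\sqrt{\xi_1/m}\,\matS\transp$ is a constant-error embedding of all of $\R^{\xi_1}$ (a ``tall'' JL statement), picking up the factor $\sqrt{m/\xi_1}\ge 1$.

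Your route instead bounds $\sigma_k(\matB\matT)$ directly via the projection trick $\sigma_k(\matB\matT)\ge\sigma_k(\matU_{B,k}\matU_{B,k}\transp\matB\matT)=\sigma_k(\matB_k\matT)$, so that $\matT$ only needs to embed the $k$-dimensional row space of $\matB_k$; then you use $\matS$ as an embedding for the full $\xi_2$-dimensional $\col(\matB\matT)$. The projection step is a nice observation the paper does not use. The cost is that your second application of Lemma~\ref{lem:subspcebd} requires $\xi_1\gtrsim\xi_2$ (you note this and absorb it into the $\poly(k/\varepsilon)$ term), whereas the paper's ordering only needs $\xi_2\gtrsim\xi_1$ for a constant-error embedding---a symmetric constraint, so neither approach is materially cheaper. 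Both give the desired $\sigma_k(\matB\matT\matV_{\tilde{\matB}_k})\ge 1/(8n^B)$ for $\varepsilon$ bounded away from $1$.
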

\begin{proof}
 Notice that $\matB\matT{\matV}_{\tilde{\matB}_k}$ is an $m\times k$ matrix. With high probability, $\matS$ is a $\varepsilon$-subspace embedding matrix of column space of $\matB\matT{\matV}_{\tilde{\matB}_k}$ due to Lemma~\ref{lem:subspcebd}. According to Lemma~\ref{lem:sglvlbd}, $\sigma_{min}(\matB\matT{\matV}_{\tilde{\matB}_k}) \geq (1-\varepsilon)\sigma_{min}(\matS\matB\matT{\matV}_{\tilde{\matB}_k})$ where $\sigma_{min}$ means the minimum singular value. Without loss of generality, we assume $\varepsilon<1/2$. Then $\sigma_{min}(\matB\matT{\matV}_{\tilde{\matB}_k}) \geq \frac12\sigma_{min}(\matS\matB\matT{\matV}_{\tilde{\matB}_k})$. Because ${\matV}_{\tilde{\matB}_k}$ are the top $k$ right singular vectors of $\matS\matB\matT$, $\sigma_{min}(\matS\matB\matT{\matV}_{\tilde{\matB}_k})\geq \sigma_{min}(\matS\matB\matT)$. Applying Lemma~\ref{lem:subspcebd} again, $\matT\transp$ is a constant error subspace embedding matrix of column space of $\matB\transp\matS\transp$. Combining with Lemma~\ref{lem:sglvlbd}, we can make $\sigma_{min}(\matS\matB\matT)\geq \frac12\sigma_{min}(\matS\matB)$. Notice that Lemma~\ref{lem:subspcebd} also implies that $\sqrt{\frac{\xi_1}{m}}\matS\transp$ is a constant error subspace embedding matrix of $\R^{\xi_1}$ which means $\forall \x \in \R^{\xi_1}$, we can make $\|\x\sqrt{\frac{\xi_1}{m}}\matS\|_2>\frac12\|\x\|_2$. Thus, $\forall \x \in \R^{\xi_1},$ $\|\x\sqrt{\frac{\xi_1}{m}}\matS\matB\|_2>\frac12\sigma_{min}(\matB)\|\x\|_2$. Due to Lemma~\ref{lem:rectangular}, $\sigma_{min}(\matB)>1/n^B$ which implies $\sigma_{min}(\matS\matB)>\frac12\cdot\sqrt{\frac{m}{\xi_1}}\sigma_{min}(\matB)>\frac12\sqrt{\frac{\xi_1}{m}}\frac1{n^B}>\frac12\frac1{n^B}$. To conclude $\|\pinv{(\matB\matT{\matV}_{\tilde{\matB}_k})}\|_2\leq 1/\sigma_{min}(\matB\matT{\matV}_{\tilde{\matB}_k})\leq 8n^B$.

\end{proof}

Let $\matE={\matV}_{\tilde{\matB}_k}-\hat{\matV}_{\tilde{\matB}_k}$, we have
\eqan{
\FNorm{\matA-\matU\matU\transp\matA}
& = & \FNorm{\matB-\matN_{m,n}-\matU\matU\transp(\matB-\matN_{m,n})} \\
& \leq & \FNorm{\matB-\matU\matU\transp\matB}+\FNorm{\matN_{m,n}}\\
& = & \FNorm{\matB-(\matB\matT{\hat{\matV}}_{\tilde{\matB}_k})\pinv{(\matB\matT{\hat{\matV}}_{\tilde{\matB}_k})}\matB}+\FNorm{\matN_{m,n}} \\
& \leq & \FNorm{\matB-(\matB\matT({\matV}_{\tilde{\matB}_k}+\matE))\pinv{(\matB\matT{\matV}_{\tilde{\matB}_k})}\matB}+\FNorm{\matN_{m,n}} \\
& \leq & \FNorm{\matB-(\matB\matT{\matV}_{\tilde{\matB}_k})\pinv{(\matB\matT{\matV}_{\tilde{\matB}_k})}\matB}+\FNorm{\matB\matT\matE\pinv{(\matB\matT{\matV}_{\tilde{\matB}_k})}\matB}+\FNorm{\matN_{m,n}} \\
& \leq & (1+O(\varepsilon)) \FNorm{\matB-\matB_k}+\FNorm{\matB\matT\matE\pinv{(\matB\matT{\matV}_{\tilde{\matB}_k})}\matB}\\
& \leq & (1+O(\varepsilon)) \FNorm{\matB-\matB_k}
}
The first equality follows by using the relation $\matB=\matA+\mat{N}_{m,n}$. The first inequality uses the triangle inequality for the Frobenius norm and the fact that $\matI-\matU\matU\transp$ is a projector matrix and can be dropped without increasing the Frobenius norm. The second equality uses the fact that both $\matU\matU^T$ and $(\matB\matT{\hat{\matV}}_{\tilde{\matB}_k})\pinv{(\matB\matT{\hat{\matV}}_{\tilde{\matB}_k})}$ are the same projector matrices. The second inequality follows by $\forall \matM,\matC,\matX,$ $\FNorm{\matM-\matC\pinv{\matC}\matM}\leq\FNorm{\matM-\matC\matX}$ and the relation $\matE={\matV}_{\tilde{\matB}_k}-\hat{\matV}_{\tilde{\matB}_k}$. The third inequality uses the triangle inequality. The fourth inequality uses that $\|\matB-\matB_k\|_{\mathrm{F}} \geq 1/\poly(mns/\varepsilon)$ (follows from Lemma~\ref{lem:rectangular}) while $\FNorm{\matN_{m,n}}$ can be made $1/n^t$ for an arbitrarily large integer $t$, and the fact that $\FNorm{\matB-(\matB\matT{\matV}_{\tilde{\matB}_k})\pinv{(\matB\matT{\matV}_{\tilde{\matB}_k})}\matB}\leq (1+O(\varepsilon))\|\matB-\matB_k\|_{\mathrm{F}}$ (implied by replacing all $\matA$ with $\matB$ in Lemma~\ref{lembatch0}). The last inequality uses the fact $\FNorm{\matB\matT\matE\pinv{(\matB\matT{\matV}_{\tilde{\matB}_k})}\matB}\leq\FNorm{\matB}\FNorm{\matT}\FNorm{\matE}\FNorm{\pinv{(\matB\matT{\matV}_{\tilde{\matB}_k})}}\FNorm{\matB}$ where $\FNorm{\matB},\FNorm{\matT},\FNorm{\pinv{(\matB\matT{\matV}_{\tilde{\matB}_k})}}$ are $poly(nms/\varepsilon)$ (Lemma~\ref{lem:bdinv} implies the bound of $\FNorm{\pinv{(\matB\matT{\matV}_{\tilde{\matB}_k})}}$) and we can make $|\matE|_F$ be $1/n^p$ for an arbitrarily large $p$, and $\|\matB-\matB_k\|_{\mathrm{F}} \geq 1/\poly(mns/\varepsilon)$.

Overall, after rescaling $\varepsilon,$ we have
\begin{equation}\label{eqn:last}
\FNorm{\matA -  \matU \matU\transp \matA}  \le (1+\varepsilon)\|\matB - \matB_k\|_{\mathrm{F}}.
\end{equation}
Finally, we need to relate $\FNorm{\matB - \matB_k}$ to $\FNorm{\matA-\matA_k},$ which we do in the following lemma.

\begin{lemma}\label{lem:LASTFUCKINGLEMMA}
Let $\matA \in \R^{m \times n}$ be any matrix,
$\matN_{m,n}$ be the random matrix of Lemma~\ref{lem:rectangular}, and $\matB = \matA + \matN_{m,n}$. Let $k < \rank(\matA), \eps > 0$. Then, for arbitrarily large constant probability,
$$
\FNorm{\matB - \matB_k} \le (1+\eps)\FNorm{\matA - \matA_k}.
$$
\end{lemma}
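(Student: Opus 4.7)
The plan is a short three-line argument combining optimality of $\matB_k$ as the best rank-$k$ approximation to $\matB$, the triangle inequality, and the fact that the perturbation $\matN_{m,n}$ is negligible compared to the tail $\FNorm{\matA-\matA_k}$.

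First, since $\matB_k$ is the best rank-$k$ approximation to $\matB$ in Frobenius norm, and $\matA_k$ is a rank-$k$ matrix, we have
\[
\FNorm{\matB-\matB_k}\ \le\ \FNorm{\matB-\matA_k}\ =\ \FNorm{\matA+\matN_{m,n}-\matA_k}\ \le\ \FNorm{\matA-\matA_k}+\FNorm{\matN_{m,n}},
\]
where the last step is the triangle inequality for the Frobenius norm.

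Second, I would bound $\FNorm{\matN_{m,n}}$ directly. Each of its $mn$ entries has magnitude exactly $1/n^D$, so $\FNorm{\matN_{m,n}}\le \sqrt{mn}/n^D$, which by choosing the constant $D$ sufficiently large can be made at most $1/n^t$ for any desired constant $t$. On the other hand, since we are in the case $\rank(\matA)>2k$ and the entries of $\matA$ are integers of magnitude at most $\poly(mns/\eps)$, Corollary~\ref{cor:plusTwo} gives $\FNorm{\matA-\matA_k}\ge 1/\poly(mns/\eps)$. Taking $D$ large enough (depending only on the polynomial in Corollary~\ref{cor:plusTwo}) therefore ensures
\[
\FNorm{\matN_{m,n}}\ \le\ \eps\cdot \FNorm{\matA-\matA_k}.
\]

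Combining the two displays yields $\FNorm{\matB-\matB_k}\le (1+\eps)\FNorm{\matA-\matA_k}$, as desired. The only subtle point is that Corollary~\ref{cor:plusTwo} requires $\rank(\matA)\ge 2k$, which is exactly the case we are in; otherwise the lower bound on $\FNorm{\matA-\matA_k}$ could fail (e.g.\ if $\matA$ itself is rank $k$), and the whole smoothing approach would not preserve relative error. No probabilistic statement is needed beyond the deterministic entrywise bound on $\matN_{m,n}$, so the lemma in fact holds with probability $1$ over the randomness of $\matN_{m,n}$, making this the easiest step in the overall analysis.
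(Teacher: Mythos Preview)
Your argument is correct and is in fact cleaner than the paper's. The paper proves the same bound by applying Weyl's inequality to control each singular value, $|\sigma_i(\matB)-\sigma_i(\matA)|\le \TNorm{\matN_{m,n}}\le 1/n^D$, then expanding $\FNorm{\matB-\matB_k}=\sqrt{\sum_{i>k}\sigma_i^2(\matB)}\le \sqrt{\sum_{i>k}(\sigma_i(\matA)+1/n^D)^2}$ and bounding the cross terms and square terms separately, each time invoking Corollary~\ref{cor:plusTwo} to absorb them as $\eps\cdot\FNorm{\matA-\matA_k}$. Your route bypasses Weyl entirely: the single observation $\FNorm{\matB-\matB_k}\le\FNorm{\matB-\matA_k}$ together with the triangle inequality already reduces everything to bounding $\FNorm{\matN_{m,n}}$, which is deterministic. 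Both proofs rely on exactly the same lower bound $\FNorm{\matA-\matA_k}\ge 1/\poly(mns/\eps)$ from Corollary~\ref{cor:plusTwo}, so your remark that the $\rank(\matA)\ge 2k$ hypothesis is implicitly being used is well placed (the paper uses it too, despite the lemma being stated for ``any matrix''). Your observation that the conclusion is actually deterministic in $\matN_{m,n}$ is also correct; the ``arbitrarily large constant probability'' phrasing in the statement is vestigial here.
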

\begin{proof}
Since $\matB = \matA + \matN_{m,n}$, the proof idea is to relate the singular values of $\matB$ to the singular values of $\matA$ using Weyl's inequality. Specifically, for $i:1:m$ (recall we assume
$m \le n$) we have
$$
| \sigma_i(\matB) -\sigma_i(\matA)| \le \TNorm{\matN_{m,n}} \le 1/n^D.
$$ Rearranging terms in this inequality, and taking squares of the resulting relation we obtain~(for
$i=1:m$),
\begin{equation}\label{eqn:tmptmp}
\sigma_i^2(\matB) \le \left( \sigma_i(\matA) + 1/n^D \right)^2.
\end{equation}
We manipulate the term $\FNorm{\matB-\matB_k}$ as follows:
\eqan{
\FNorm{\matB - \matB_k}
&=& \sqrt{ \sum_{i=k+1}^m \sigma_i^2(\matB) } \\
&\le& \sqrt{ \sum_{i=k+1}^m \left(  \sigma_i(\matA) + 1/n^D  \right)^2 } \\
&=& \sqrt{ \sum_{i=k+1}^m \left(  \sigma_i^2(\matA) + 1/n^{2 D} + 2 \sigma_i(\matA)/n^D  \right) } \\
&\le& \sqrt{ \sum_{i=k+1}^m \sigma_i^2(\matA) } + \sqrt{1/n^{2 D} } + \sqrt{ 2 \sigma_i(\matA)/n^D } \\
&\le& \FNorm{\matA-\matA_k} + \varepsilon \cdot \FNorm{\matA - \matA_k} + \varepsilon \cdot \FNorm{\matA - \matA_k} \\
&\le& (1 + 2\varepsilon)\FNorm{\matA - \matA_k}.
}
The first inequality uses Eqn.~(\ref{eqn:tmptmp}). In the third inequality, we used
$$ \sqrt{1/n^{2D}} \le \varepsilon \FNorm{\matA - \matA_k},$$
which follows from Corollary~\ref{cor:plusTwo} for a sufficiently large constant $D$. Also, we used
$$
\sqrt{ 2 \sigma_i(\matA)/n^D } \le \sqrt{ 2 \sigma_1(\matA)/n^D } \le \sqrt{ \poly(nms/\eps) / n^D } \le \varepsilon \FNorm{\matA - \matA_k},
$$
where the second inequality follows because $\TNorm{\matA} \le \poly(nms/\eps)$ and the last inequality uses again
Corollary~\ref{cor:plusTwo} for a sufficiently large constant $D$.
\end{proof}

\paragraph{Completing the proof.} Using Lemma~\ref{lem:LASTFUCKINGLEMMA} in Eqn~(\ref{eqn:last}), taking squares in the resulting inequality, and rescaling $\eps$ concludes the proof.

\section{Streaming Principal Component Analysis}\label{sec:streaming}
In this section, we are interested in computing a PCA of a matrix in turnstile streaming model. 
Specifically, there is a \emph{stream} of update operations that the $q^{th}$ operation has form $(i_q,j_q,x_q)$ which indicates that the $(i_q,j_q)^{th}$ entry of $\matA \in \mathbb{R}^{m\times n}$ should be incremented by $x_q$ where $i_q\in\{1,...,m\},j_q\in \{1,...,n\},x_q\in\mathbb{R}$. Initially, $\matA$ is zero.
In the streaming
setting of computation, we are allowed only one pass over the update operations, i.e., the algorithm ``sees'' each update operation one by one and only once.
Upon termination, the algorithm should return a matrix $\matU$ with $k$ orthonormal columns which is a ``good'' basis for $span(\matA)$~.
In Section~\ref{sec:outputu} below, we describe an algorithm which gives a space-optimal streaming algorithm. Further more, we provide a variation of this algorithm in Section~\ref{sec:outputa} which can output $\matA^*_k$ satisfying $\FNormS{\matA-\matA^*_k}\leq (1+\varepsilon)\cdot \FNormS{\matA-\matA_k}$. It meets the space lower bound shown in~\cite{CW09}.
In Section~\ref{sec:twopass}, we relax the problem and we describe a two-pass streaming
algorithm which is a version of the algorithm of Section~\ref{sec:algbatch}; unfortunately, the
space complexity of this algorithm can be bounded only in terms of ``real numbers''.
We fix this in Section~\ref{sec:twopass2} where we describe an optimal two-pass streaming algorithm.

Inputs to the algorithms in Section~\ref{sec:onepass}, Section~\ref{sec:twopass} and Section~\ref{sec:twopass2} are a stream of update operations $(i_1,j_1,x_1),$ $(i_2,j_2,x_2),$ $...,$ $(i_l,j_l,x_l)$,
a rank parameter $k < \rank(\matA),$ and an accuracy parameter $0 < \varepsilon < 1$.

\subsection{One-pass streaming PCA}\label{sec:onepass}

\subsubsection{The algorithm which outputs $\matU$} \label{sec:outputu}
 In the following algorithm, the output should be $\matU$ with orthogonal unit columns and satisfying
 $$\FNormS{\matA-\matU\matU\transp\matA}\leq (1+\varepsilon)\cdot \FNormS{\matA-\matA_k}$$
Our algorithm uses the following property of random sign matrices:

\begin{lemma}[Sketch for regression - Theorem 3.1 in~\cite{CW09}]\label{lem:regre}
Suppose both of $\matA$ and $\matB$ have $m$ rows and $rank(\matA)\leq k$. Further more, if each entry of $\matS\in\mathbb{R}^{\xi \times m}$ is $O(k)$-wise independently chosen from $\{-1,+1\}$ where $\xi=O(k/\varepsilon)$ and
$$\tilde{\matX}=\arg \min_{\matX}\FNormS{\matS\matA\matX-\matS\matB}$$,
with probability at least $0.99$,
$$ \FNormS{\matA\tilde{\matX}-\matB} \leq (1+\varepsilon)\cdot\min_{\matX}\FNormS{\matA\matX-\matB} $$
\end{lemma}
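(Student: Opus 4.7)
The plan is the standard sketch-and-solve argument for multi-response least-squares regression. Since the optimization $\min_\matX \FNormS{\matS\matA\matX - \matS\matB}$ is scale-invariant in $\matS$, I work as if $\matS$ were rescaled by $1/\sqrt{\xi}$ so that $\Exp{\matS\transp\matS}=\matI_m$. Let $\matX^{*}=\arg\min_\matX\FNormS{\matA\matX-\matB}$ and $\matB^{*}=\matA\matX^{*}-\matB$. The normal equations give $\matA\transp\matB^{*}=\bm{0}$; writing a thin SVD $\matA=\matU\matSig\matV\transp$ with $\matU\in\R^{m\times r}$ and $r=\rank(\matA)\le k$, this is equivalent to $\matU\transp\matB^{*}=\bm{0}$. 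By the matrix Pythagorean theorem, $\FNormS{\matA\tilde\matX-\matB}=\FNormS{\matA(\tilde\matX-\matX^{*})}+\FNormS{\matB^{*}}$, so it suffices to establish $\FNormS{\matA(\tilde\matX-\matX^{*})}\le O(\eps)\FNormS{\matB^{*}}$.

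The argument rests on two standard properties of $\matS$, each holding with probability at least $0.995$ when the entries are $O(k)$-wise independent $\pm 1/\sqrt{\xi}$ and $\xi=O(k/\eps)$. The first is a constant-distortion subspace embedding for the $r$-dimensional column space of $\matU$: there is an absolute constant $c>0$ with $\FNormS{\matS\matU\z}\ge c\,\FNormS{\matU\z}$ for every $\z\in\R^{r}$. The second is an approximate matrix multiplication bound, $\FNormS{\matU\transp\matS\transp\matS\matB^{*}}\le O(\eps)\FNormS{\matB^{*}}$, obtained from the standard second-moment estimate $\Exp{\FNormS{\matU\transp\matS\transp\matS\matB^{*}-\matU\transp\matB^{*}}}=O(1/\xi)\,\FNormS{\matU}\FNormS{\matB^{*}}=O(k/\xi)\,\FNormS{\matB^{*}}$ combined with $\matU\transp\matB^{*}=\bm{0}$ and Markov. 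Both properties reduce to moment calculations of order $O(k)$ (a net argument on the unit sphere in $\R^{r}$ for the subspace embedding, and entrywise second-moment bounding for AMM), which is why $O(k)$-wise independence of the $\pm 1$ signs suffices.

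Granting these two properties, set $\matY=\tilde\matX-\matX^{*}$ and write $\matA\matY=\matU\matZ$ with $\matZ=\matSig\matV\transp\matY$, so that $\FNorm{\matZ}=\FNorm{\matA\matY}$. By optimality of $\tilde\matX$ in the sketched problem, $\FNormS{\matS\matA\tilde\matX-\matS\matB}\le\FNormS{\matS\matB^{*}}$. Expanding the left-hand side as
$$\FNormS{\matS\matA\matY}+2\,\trace\!\bigl(\matZ\transp\matU\transp\matS\transp\matS\matB^{*}\bigr)+\FNormS{\matS\matB^{*}}$$
and cancelling $\FNormS{\matS\matB^{*}}$ gives $\FNormS{\matS\matA\matY}\le -2\,\trace(\matZ\transp\matU\transp\matS\transp\matS\matB^{*})\le 2\,\FNorm{\matZ}\cdot\FNorm{\matU\transp\matS\transp\matS\matB^{*}}$, using Cauchy-Schwarz for the Frobenius inner product. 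Applying the subspace embedding on the left and the AMM bound on the right yields $c\,\FNormS{\matA\matY}\le 2\,\FNorm{\matA\matY}\cdot O(\sqrt{\eps})\,\FNorm{\matB^{*}}$, from which $\FNormS{\matA\matY}\le O(\eps)\,\FNormS{\matB^{*}}$. A constant-factor rescaling of $\eps$ then concludes the proof.

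The principal obstacle is establishing the two probabilistic properties of $\matS$ with only $O(k)$-wise independent entries and as few as $\xi=O(k/\eps)$ rows; the subspace embedding in particular needs a higher-moment tail bound that must just fit within the available $O(k)$-wise independence. Everything else is routine algebra following the Sarlos-style sketch-and-solve template.
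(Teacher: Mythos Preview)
The paper does not prove this lemma; it simply quotes it as Theorem~3.1 of~\cite{CW09} and uses it as a black box. Your argument is the standard Sarl\'os/Clarkson--Woodruff sketch-and-solve proof: Pythagoras reduces the task to bounding $\FNormS{\matA(\tilde\matX-\matX^{*})}$, and then a constant-distortion subspace embedding for the $r$-dimensional column span of $\matA$ combined with the approximate matrix product bound $\FNormS{\matU\transp\matS\transp\matS\matB^{*}}\le O(\eps)\FNormS{\matB^{*}}$ finishes it. This is exactly the route taken in~\cite{CW09}, so your proposal matches the cited source even though the present paper supplies no proof of its own.
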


\vspace{0.1in}
\begin{small}
{\bf Algorithm}

\begin{enumerate}

\item Construct random sign sketching matrices $\matS\in \mathbb{R}^{\xi_1 \times m}$ with $\xi_1=O(k\varepsilon^{-1})$ and $\matR\in \mathbb{R}^{n \times \xi_2}$ with $\xi_2=O(k\varepsilon^{-1})$~(see Lemma~\ref{lem:regre})

\item Construct affine embedding matrices $\matT_{left} \in \R^{\xi_3 \times m}$ and $\matT_{right} \in \R^{n \times \xi_4}$ with
$\xi_3 = O(k\varepsilon^{-3}),$ $\xi_4 = O(k/\varepsilon^{-3})$~(see Definition~\ref{def:srht}).

\item Initialize all-zeros matrices: $\matM \in \R^{\xi_3 \times \xi_4}$, $\matL \in \R^{\xi_1\times \xi_4}$, $\matN \in \R^{\xi_3 \times \xi_2}$, $\matD \in \R^{m \times \xi_2}$. We will maintain $\matM,\matL,\matN,\matD$ such that $\matM=\matT_{left}\matA\matT_{right},$ $\matL = \matS \matA \matT_{right},$ $\matN=\matT_{left}\matA\matR$ and $\matD=\matA\matR$.

\item For $(i_q,j_q,x_q):= (i_1,j_1,x_1), ..., (i_l,j_l,x_l)$ (one pass over the stream of update operations)

\begin{enumerate}
\item For all $i=1,...,\xi_3,~j=1,...,\xi_4,$ let $\matM_{i,j}=\matM_{i,j}+(\matT_{left})_{i,i_q}\cdot x_q\cdot (\matT_{right})_{j_q,j}$.
\item For all $i=1,...,\xi_1,~j=1,...,\xi_4,$ let $\matL_{i,j}=\matL_{i,j}+\matS_{i,i_q} \cdot x_q \cdot (\matT_{right})_{j_q,j}$.
\item For all $i=1,...,\xi_3,~j=1,...,\xi_2,$ let $\matN_{i,j}=\matN_{i,j}+(\matT_{left})_{i,i_q}\cdot x_q \cdot \matR_{j_q,j}$.
\item For all $j=1,...,\xi_2,$ let $\matD_{i_q,j}=\matM_{i_q,j}+x_q\cdot\matR_{j_q,j}$
\end{enumerate}

\item end

\item Construct $\matX_{*} = \argmin_{\rank(\matX) \le k} \FNormS{ \matN \cdot \matX \cdot \matL -\matM  }.$
(Notice that $\matX_{*} = \argmin_{\rank(\matX) \le k} \FNormS{ \matT_{left} \left(
\matA \matR \matX \matS \matA  - \matA \right) \matT_{right}  }$.)

\item Compute the SVD of  $\matX_{*} = \matU_{ \matX_{*} } \matSig_{\matX_{*} } \matV_{\matX_{*} }\transp$ ($\matU_{ \matX_{*} } \in \R^{\xi \times k}$, $\matSig_{ \matX_{*} } \in \R^{k \times k}$
$\matV_{ \matX_{*} } \in \R^{\xi \times k}$); then, compute
$$\matT = \matD \matU_{ \matX_{*} }.$$

\item Compute an orthonormal basis $\matU \in \R^{m \times k}$ for $span(\matT)$.

\end{enumerate}

\end{small}

Theorem~\ref{thmonepass} later in this section analyzes the approximation error, the space complexity,
and the running time of the previous algorithm. First, we prove a few intermediate results.

\begin{lemma} \label{twosideaffine}
For all matrices $\matX\in\R^{\xi_2 \times \xi_1}$ and with probability at least $0.98$:
$$(1-\varepsilon)^2\FNormS{  \matA \matR \matX \matS \matA  - \matA   }
\leq \FNormS{\matT_{left} \left(\matA \matR \matX \matS \matA  - \matA \right) \matT_{right}}
\leq (1+\varepsilon)^2\FNormS{\matA \matR \matX \matS \matA  - \matA }$$
\end{lemma}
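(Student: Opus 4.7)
The plan is to apply Lemma~\ref{lem:affine} (affine embeddings for SRHT matrices) twice: once on the right via $\matT_{right}$ and once on the left via $\matT_{left}$. The key observation is that the residual $\matA\matR\matX\matS\matA - \matA$ has the form $\matG\matY - \matH$ with $\matG$ low-rank, in two different ways depending on which side we sketch.

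First I would handle the right-side embedding by transposing: write
$$(\matA\matR\matX\matS\matA - \matA)\transp = (\matA\transp\matS\transp)(\matX\transp\matR\transp\matA\transp) - \matA\transp.$$
Here the ``$\matG$'' matrix is $\matA\transp\matS\transp \in \R^{n \times \xi_1}$, which has rank at most $\xi_1 = O(k/\varepsilon)$. Since $\matT_{right}\transp$ is a $\xi_4 \times n$ SRHT matrix with $\xi_4 = O(k/\varepsilon^3) = O(\xi_1/\varepsilon^2)$, Lemma~\ref{lem:affine} applies and yields, simultaneously for all $\matX$ and with probability at least $0.99$,
$$(1-\varepsilon)\FNormS{\matA\matR\matX\matS\matA - \matA} \le \FNormS{(\matA\matR\matX\matS\matA - \matA)\matT_{right}} \le (1+\varepsilon)\FNormS{\matA\matR\matX\matS\matA - \matA}.$$

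Next I would apply Lemma~\ref{lem:affine} directly on the left. Viewing the new residual as
$$\matA\matR\matX\matS\matA\matT_{right} - \matA\matT_{right} = (\matA\matR)(\matX\matS\matA\matT_{right}) - \matA\matT_{right},$$
the ``$\matG$'' matrix is now $\matA\matR \in \R^{m \times \xi_2}$, of rank at most $\xi_2 = O(k/\varepsilon)$. Since $\matT_{left}$ is a $\xi_3 \times m$ SRHT matrix with $\xi_3 = O(k/\varepsilon^3) = O(\xi_2/\varepsilon^2)$, Lemma~\ref{lem:affine} applies, giving simultaneously for all $\matX$ and with probability at least $0.99$
$$(1-\varepsilon)\FNormS{(\matA\matR\matX\matS\matA - \matA)\matT_{right}} \le \FNormS{\matT_{left}(\matA\matR\matX\matS\matA - \matA)\matT_{right}} \le (1+\varepsilon)\FNormS{(\matA\matR\matX\matS\matA - \matA)\matT_{right}}.$$

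Finally, a union bound over the two affine-embedding events (each failing with probability at most $0.01$) gives success probability at least $0.98$, and chaining the two displayed inequalities yields the claim with the $(1\pm\varepsilon)^2$ factors. The only non-routine point to check is that the rank of the ``$\matG$'' factor on each side is controlled by $\xi_1$ and $\xi_2$ respectively, so that the sketching dimensions $\xi_3, \xi_4 = O(k/\varepsilon^3)$ chosen in the algorithm are exactly what Lemma~\ref{lem:affine} requires; this is immediate since $\matR$ has $\xi_2$ columns and $\matS$ has $\xi_1$ rows. I do not anticipate any substantial obstacle beyond correctly matching the rank parameter on each side.
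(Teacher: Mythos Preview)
Your proof is correct and follows essentially the same approach as the paper: two applications of Lemma~\ref{lem:affine}, one for each side, using that $\matA\matR$ and $\matA\transp\matS\transp$ are the low-rank ``$\matG$'' factors, followed by a union bound. The only cosmetic difference is that the paper applies the left embedding $\matT_{left}$ first (with $\matH=\matA$) and then the right embedding $\matT_{right}$ (with $\matH=\matA\transp\matT_{left}\transp$), whereas you do the right side first; both orderings work identically.
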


\begin{proof}
Notice that $rank(\matA\matR)\leq \xi_2=O(k\varepsilon^{-1})$. Then from Lemma~\ref{lem:affine}~($\matG:=\matA\matR$ and $\matH:=\matA$), with probability at least $0.99$ for all $\matY\in \R^{\xi_2 \times n}$:
$$(1-\varepsilon)\FNormS{\matA\matR\matY-\matA}\leq \FNormS{\matT_{left}(\matA\matR\matY-\matA)}\leq (1+\varepsilon)\FNormS{\matA\matR\matY-\matA}$$
Replacing $\matY=\matX\matS\matA$, for an arbitrary $\matX\in\R^{\xi_2\times\xi_1}$, we obtain that with probability at least $0.99$ and for all $\matX\in\R^{\xi_2\times\xi_1}$ simultaneously:
$$(1-\varepsilon)\FNormS{  \matA \matR \matX \matS \matA  - \matA   }
\leq \FNormS{\matT_{left} \left(\matA \matR \matX \matS \matA  - \matA \right) }
\leq (1+\varepsilon)\FNormS{\matA \matR \matX \matS \matA  - \matA }$$
Now notice that $rank(\matS\matA)\leq \xi_1=O(k\varepsilon^{-1})$. Then from Lemma~\ref{lem:affine}~($\matG:=\matA\transp\matS\transp$ and $\matH:=\matA\transp\matT_{left}\transp$), with probability at least $0.99$ for all $\matZ\in \R^{m \times \xi_2}$:
$$(1-\varepsilon)\FNormS{\matA\transp\matS\transp\matZ\transp-\matA\transp\matT_{left}\transp}
\leq \FNormS{\matT_{right}\transp(\matA\transp\matS\transp\matZ\transp-\matA\transp\matT_{left}\transp)}
\leq (1+\varepsilon)\FNormS{\matA\transp\matS\transp\matZ\transp-\matA\transp\matT_{left}\transp}$$
Replacing $\matZ=\matX\transp\matR\transp\matA\transp\matT_{left}\transp$ for an arbitrary $\matX \in \R^{\xi_2\times\xi_1}$, we obtain that with probability at least $0.99$ and for all $\matX$ simultaneously:
$$(1-\varepsilon)^2\FNormS{  (\matA \matR \matX \matS \matA  - \matA)\transp   }
\leq \FNormS{(\matT_{left} \left(\matA \matR \matX \matS \matA  - \matA \right) \matT_{right})\transp}
\leq (1+\varepsilon)^2\FNormS{(\matA \matR \matX \matS \matA  - \matA)\transp }$$
\end{proof}

\begin{lemma}\label{lembatch1}
Let $ \matX_{opt} =  \arg \min_{\rank(\matX) \le k} \FNormS{\matA\matR\matX\matS\matA-\matA}$ with $\matX_{opt} \in \R^{\xi_2 \times \xi_1}$. Then, with probability at least $0.98$
$$
\min_{\rank(\matX) \le k} \FNormS{ \matT_{left} \left(
   \matA \matR \matX \matS \matA - \matA \right) \matT_{right}  } \le
\FNormS{ \matA \matR \matX_{*} \matS \matA - \matA} \le
\frac{(1+\varepsilon)^2}{(1-\varepsilon)^2} \FNormS{  \matA \matR \matX_{opt} \matS \matA - \matA }
$$
\end{lemma}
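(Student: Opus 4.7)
The plan is to derive both inequalities from Lemma~\ref{twosideaffine} combined with the defining optimality property of $\matX_{*}$, namely that $\matX_{*}$ is the rank-$k$ minimizer of the \emph{sketched} objective $\FNormS{\matT_{left}(\matA\matR\matX\matS\matA - \matA)\matT_{right}}$. I would first condition on the $0.98$-probability event from Lemma~\ref{twosideaffine}, under which the two-sided affine embedding inequality
$$(1-\varepsilon)^2 \FNormS{\matA\matR\matX\matS\matA - \matA} \le \FNormS{\matT_{left}(\matA\matR\matX\matS\matA - \matA)\matT_{right}} \le (1+\varepsilon)^2 \FNormS{\matA\matR\matX\matS\matA - \matA}$$
holds for \emph{every} $\matX \in \R^{\xi_2 \times \xi_1}$ simultaneously; this uniformity is precisely what Lemma~\ref{twosideaffine} provides, so no additional net/union-bound argument is needed and that is the only probabilistic step.

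For the right-hand (main) inequality, I would chain three bounds. First, by the definition of $\matX_{*}$ as the sketched minimizer, together with the fact that $\matX_{opt}$ is a feasible rank-$k$ candidate,
$$\FNormS{\matT_{left}(\matA\matR\matX_{*}\matS\matA - \matA)\matT_{right}} \le \FNormS{\matT_{left}(\matA\matR\matX_{opt}\matS\matA - \matA)\matT_{right}}.$$
Next, apply the lower-bound side of Lemma~\ref{twosideaffine} at $\matX = \matX_{*}$ and the upper-bound side at $\matX = \matX_{opt}$:
$$(1-\varepsilon)^2 \FNormS{\matA\matR\matX_{*}\matS\matA - \matA} \le \FNormS{\matT_{left}(\matA\matR\matX_{*}\matS\matA - \matA)\matT_{right}},$$
$$\FNormS{\matT_{left}(\matA\matR\matX_{opt}\matS\matA - \matA)\matT_{right}} \le (1+\varepsilon)^2 \FNormS{\matA\matR\matX_{opt}\matS\matA - \matA}.$$
Concatenating these three inequalities and dividing through by $(1-\varepsilon)^2$ yields exactly
$$\FNormS{\matA\matR\matX_{*}\matS\matA - \matA} \le \frac{(1+\varepsilon)^2}{(1-\varepsilon)^2} \FNormS{\matA\matR\matX_{opt}\matS\matA - \matA}.$$

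For the left-hand inequality, I would simply note that by the definition of $\matX_{*}$, the minimum in question equals $\FNormS{\matT_{left}(\matA\matR\matX_{*}\matS\matA - \matA)\matT_{right}}$, and then apply the upper bound of Lemma~\ref{twosideaffine} at $\matX = \matX_{*}$ to obtain $\min \le (1+\varepsilon)^2 \FNormS{\matA\matR\matX_{*}\matS\matA - \matA}$; the $(1+\varepsilon)^2$ is harmless as it is absorbed by the standard rescaling of $\varepsilon$ used throughout the streaming analysis.

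There is no real obstacle here; the only thing to be careful about is that Lemma~\ref{twosideaffine} must supply a \emph{uniform} (for-all-$\matX$) guarantee, because $\matX_{*}$ is data-dependent (in particular, it depends on $\matT_{left}$ and $\matT_{right}$) and so a per-$\matX$ guarantee would not suffice. Since the statement of Lemma~\ref{twosideaffine} already gives this uniformity with probability $0.98$, the proof is otherwise a short chain of two applications of that lemma plus one optimality inequality.
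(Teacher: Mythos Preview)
Your argument for the right-hand inequality is exactly the paper's: instantiate the uniform two-sided affine embedding of Lemma~\ref{twosideaffine} at $\matX_{*}$ (lower side) and $\matX_{opt}$ (upper side), then sandwich with the optimality of $\matX_{*}$ for the sketched problem. Your observation that the left-hand inequality as stated really only yields $\min \le (1+\varepsilon)^2 \FNormS{\matA\matR\matX_{*}\matS\matA - \matA}$ is correct and more careful than the paper, which simply asserts that the chain ``shows the claim'' without isolating that side; since only the right-hand inequality is used downstream (in Theorem~\ref{thmonepass}), this looseness is indeed harmless.
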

\begin{proof}
From Lemma~\ref{twosideaffine}, we have that for all matrices $\matX \in \R^{\xi_2 \times \xi_1}$ and with probability at least $0.98$:
$$
(1-\varepsilon)^2 \cdot  \FNormS{ \matA\matR\matX\matS\matA - \matA}   \le
\FNormS{ \matT_{left} \left(  \matA\matR\matX\matS\matA - \matA \right) \matT_{right}
}
\le (1+\varepsilon)^2 \cdot  \FNormS{\matA\matR\matX\matS\matA - \matA}
$$
Applying this for $\matX : = \matX_{opt},$ we obtain:
$$
(1-\varepsilon)^2 \cdot  \FNormS{ \matA\matR \matX_{opt}\matS \matA - \matA}   \le
\FNormS{ \matT_{left} \left(  \matA\matR \matX_{opt}\matS \matA - \matA \right) \matT_{right}
}
\buildrel(e)\over\le (1+\varepsilon)^2 \cdot  \FNormS{ \matA\matR \matX_{opt}\matS \matA - \matA}
$$
Applying this for $\matX : = \matX_{*},$ we obtain:
$$
(1-\varepsilon)^2 \cdot  \FNormS{ \matA\matR\matX_{*}\matS\matA - \matA} \buildrel(f)\over\le
\FNormS{ \matT_{left} \left(   \matA\matR \matX_{*}\matS \matA - \matA \right) \matT_{right}
}
\le (1+\varepsilon)^2 \cdot  \FNormS{ \matA\matR \matX_{opt}\matS \matA - \matA}
$$
Combining $(f), (e)$ along with the optimality of $\matX_{*},$ formally using the relation:
$$\FNormS{ \matT_{left} \left(  \matA\matR \matX_{*}\matS \matA - \matA \right) \matT_{right}}
\le
\FNormS{ \matT_{left} \left(  \matA\matR \matX_{opt}\matS \matA - \matA \right) \matT_{right}},
$$
shows the claim.
\end{proof}
In words, the lemma indicates that in order to $(1+\varepsilon)$-approximate the optimization problem
$\min_{\rank(\matX) \le k} \FNormS{ \matA\matR\matX\matS\matA-\matA},$
it suffices to ``sketch'' the matrix $\matA\matR\matX\matS\matA-\matA$ from left and right with the matrices $\matT_{left}$ and $\matT_{right}$.
Recall that $\matX_{*} = \argmin_{\rank(\matX) \le k} \FNormS{ \matT_{left} \left(
\matA \matR \matX \matS \matA  - \matA \right) \matT_{right}  }$.

\begin{lemma}\label{lembatch2}
Let $ \matX_{opt} =  \arg \min_{\rank(\matX) \le k} \FNormS{\matA\matR\matX\matS\matA - \matA}$. Then, with probability at least $0.98$
$$
\FNormS{ \matA\matR \matX_{opt}\matS\matA - \matA}
\le (1+\varepsilon) \cdot \FNormS{\matA - \matA_k},
$$
\end{lemma}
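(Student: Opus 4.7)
The plan is to apply Lemma~\ref{lem:regre} twice---first sketching from the right with $\matR$, then from the left with $\matS$---and exhibit a concrete rank-$k$ witness $\matX_0 \in \R^{\xi_2 \times \xi_1}$ with $\FNormS{\matA\matR\matX_0\matS\matA - \matA} \le (1+\eps)^2 \FNormS{\matA-\matA_k}$. Since $\matX_{opt}$ is optimal for the rank-$k$ constrained problem it does at least as well, and rescaling $\eps$ by a constant yields the claimed bound.

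First application (right-side sketch). Consider the unconstrained regression $\min_{\matY \in \R^{m \times k}} \FNormS{\matY \matV_k\transp - \matA}$, whose minimizer $\matY = \matA\matV_k$ attains value $\FNormS{\matA - \matA_k}$. Transposing, the design matrix is $\matV_k \in \R^{n \times k}$ of rank $k$, so Lemma~\ref{lem:regre} applied to the transposed problem with sketch $\matR\transp$ of $\xi_2 = O(k/\eps)$ rows guarantees, with probability at least $0.99$, that $\tilde{\matY} := \matA \matR \pinv{(\matV_k\transp \matR)} \in \R^{m \times k}$ satisfies $\FNormS{\tilde{\matY}\matV_k\transp - \matA} \le (1+\eps)\FNormS{\matA-\matA_k}$. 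Writing $\matQ_1 := \pinv{(\matV_k\transp \matR)} \in \R^{\xi_2 \times k}$, we have $\tilde{\matY} = \matA \matR \matQ_1$.

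Second application (left-side sketch). Consider next the regression $\min_{\matW \in \R^{k \times n}} \FNormS{\tilde{\matY}\matW - \matA}$; substituting $\matW = \matV_k\transp$ shows its minimum is at most $(1+\eps)\FNormS{\matA-\matA_k}$. Crucially, $\tilde{\matY}$ has only $k$ columns, so the design matrix has rank at most $k$---not $\xi_2$---and Lemma~\ref{lem:regre} with sketch $\matS$ of $\xi_1 = O(k/\eps)$ rows therefore produces, with probability at least $0.99$, a solution $\tilde{\matW} := \pinv{(\matS\tilde{\matY})}\matS\matA$ with $\FNormS{\tilde{\matY}\tilde{\matW} - \matA} \le (1+\eps)^2 \FNormS{\matA-\matA_k}$. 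Writing $\matQ_2 := \pinv{(\matS\tilde{\matY})} \in \R^{k \times \xi_1}$, we have $\tilde{\matW} = \matQ_2 \matS\matA$.

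Combining, $\tilde{\matY}\tilde{\matW} = \matA \matR\,\matQ_1\matQ_2\,\matS\matA$, so $\matX_0 := \matQ_1 \matQ_2 \in \R^{\xi_2 \times \xi_1}$ has rank at most $k$ and achieves the target bound. A union bound over the two applications of Lemma~\ref{lem:regre} yields overall success probability at least $0.98$. The main technical point---and the reason both sketch dimensions can be $O(k/\eps)$ instead of the naive $O(k/\eps^2)$---is the rank bookkeeping at step two: if one tried to sketch the unconstrained regression $\min_\matW \FNormS{\matW\matS\matA - \matA}$ directly, its design matrix $\matS\matA$ would have rank up to $\xi_1$, forcing the second sketch dimension to blow up. Funneling through the intermediate $m \times k$ matrix $\tilde{\matY}$ keeps the effective rank at $k$ in both applications.
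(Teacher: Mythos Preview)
Your proof is correct and follows essentially the same two-step sketching argument as the paper, with the only (cosmetic) differences being that you sketch with $\matR$ first and $\matS$ second while the paper does the reverse order, and that you write down explicit pseudoinverse formulas where the paper argues more abstractly that the intermediate minimizer's row (resp.\ column) space must lie in that of $\matS\matA$ (resp.\ $\matA\matR$). Either order works since the key point---funneling through an intermediate $m\times k$ or $k\times n$ matrix so the second regression still has a rank-$k$ design---is identical in both.
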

\begin{proof}
Suppose the SVD of $\matA=\matU_{\matA}\matSig_{\matA}\matV_{\matA}\transp$ and $\matA_k=\matU_{\matA_k}\matSig_{\matA_k}\matV_{\matA_k}\transp$. Consider about the following regression problem:
$$\min_{rank(\matX)\le k} \FNormS{\matU_{\matA_k}\matX-\matA}$$
Since we can choose $\matX$ to be $\matU_{\matA_k}\transp\matA$, we have
$$\min_{rank(\matX)\le k} \FNormS{\matU_{\matA_k}\matX-\matA}\le\FNormS{\matA-\matA_k}$$
Let $\tilde{\matX}=\arg \min_{rank(\matX)\le k}\FNormS{\matS\matU_{\matA_k}\matX-\matS\matA}$. According to Lemma~\ref{lem:regre}, with probability $0.99$
$$\FNormS{\matU_{\matA_k}\tilde{\matX}-\matA}\leq (1+\varepsilon) \min_{rank(\matX)\le k} \FNormS{\matU_{\matA_k}\matX-\matA} \leq (1+\varepsilon)\FNormS{\matA-\matA_k}$$
Since $\tilde{\matX}$ minimizes $\FNormS{\matS\matU_{\matA_k}\matX-\matS\matA}$, the row space of $\tilde{\matX}$ is in the row space of $\matS\matA$ (Otherwise, we can project $\matS\matU_{\matA_k}\matX$ into the row space of $\matS\matA$ to get a better solution). We denote $\tilde{\matX}=\matW\matS\matA$ where $rank(\matW)\leq k$. Now, consider about another regression problem:
$$\min_{rank(\matX)\le k}\FNormS{(\matW\matS\matA)\transp\matX\transp-\matA\transp}$$
If we choose $\matX$ to be $\matU_{\matA_k}$, we have
$$\min_{rank(\matX)\le k}\FNormS{(\matW\matS\matA)\transp\matX\transp-\matA\transp}\leq \FNormS{\matU_{\matA_k}\tilde{\matX}-\matA}\leq (1+\varepsilon)\FNormS{\matA-\matA_k} $$
Let $\hat{\matX}=\arg \min_{rank(\matX)\le k}\FNormS{\matR\transp(\matW\matS\matA)\transp\matX\transp-\matR\transp\matA\transp}$. According to Lemma~\ref{lem:regre}, with probability $0.99$
$$\FNormS{\matR\transp(\matW\matS\matA)\transp\hat{\matX}\transp-\matR\transp\matA\transp}\leq (1+\varepsilon)\min_{rank(\matX)\le k}\FNormS{(\matW\matS\matA)\transp\matX\transp-\matA\transp}\leq (1+\varepsilon)^2\FNormS{\matA-\matA_k}$$
Since $\hat{\matX}$ minimizes $\FNormS{\matX\matW\matS\matA\matR-\matA\matR}$, the column space of $\hat{\matX}$ is in the column space of $\matA\matR$. We denote $\hat{\matX}=\matA\matR\matG$ where $rank(\matG)\le k$. Thus, we have
$$\FNormS{ \matA\matR \matX_{opt}\matS\matA - \matA}\le\FNormS{\matA\matR\matG\matW\matS\matA-\matA}\leq (1+\varepsilon)^2\FNormS{\matA-\matA_k}$$
We scale the $\varepsilon$ with a constant factor. The statement is shown by applying union bound.
\end{proof}

\begin{theorem}\label{thmonepass}
The matrix $\matU \in \R^{m \times k}$ with $k$ orthonormal columns satisfies w.p.
$0.96$:
$$
 \FNormS{\matA -  \matU \matU\transp \matA} \le
\left(1 + O\left(\varepsilon\right) \right) \cdot  \FNormS{\matA - \matA_k}.
$$
The space complexity of the algorithm is
$O\left(m k / \varepsilon + \poly(k\varepsilon^{-1}) \right)$ words, and the running time for each update operation is $O(poly(k\varepsilon^{-1}))$ and the total running time is of the order $ O\left( l\cdot \poly(k\varepsilon^{-1}) + m k^2 \varepsilon^{-1} \right) $ where $l$ is total number of updates.
\end{theorem}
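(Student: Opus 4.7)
The plan is to decompose the argument into three pieces: the approximation error, the word-level space complexity, and the running time.

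For the approximation bound, the idea is to chain Lemmas~\ref{lembatch1} and~\ref{lembatch2}. By Lemma~\ref{lembatch2}, the true optimum $\matX_{opt}$ of $\min_{\rank(\matX)\le k}\FNormS{\matA\matR\matX\matS\matA-\matA}$ satisfies $\FNormS{\matA\matR\matX_{opt}\matS\matA-\matA}\le (1+\varepsilon)\FNormS{\matA-\matA_k}$, while by Lemma~\ref{lembatch1} the sketched minimizer $\matX_*$ satisfies $\FNormS{\matA\matR\matX_*\matS\matA-\matA}\le \frac{(1+\varepsilon)^2}{(1-\varepsilon)^2}\FNormS{\matA\matR\matX_{opt}\matS\matA-\matA}$. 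Combining these and rescaling $\varepsilon$ by a constant gives $\FNormS{\matA\matR\matX_*\matS\matA-\matA}\le(1+O(\varepsilon))\FNormS{\matA-\matA_k}$. It then remains to show the final $m\times k$ orthonormal $\matU$ is at least as good. Writing the thin SVD $\matX_*=\matU_{\matX_*}\matSig_{\matX_*}\matV_{\matX_*}\transp$, we have $\matA\matR\matX_*=(\matA\matR\matU_{\matX_*})(\matSig_{\matX_*}\matV_{\matX_*}\transp)$, so the column span of $\matA\matR\matX_*\matS\matA$ is contained in $\mathrm{span}(\matA\matR\matU_{\matX_*})=\mathrm{span}(\matT)=\mathrm{span}(\matU)$. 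Since $\matU\matU\transp\matA$ is the optimal Frobenius approximation of $\matA$ whose columns lie in $\mathrm{span}(\matU)$, we conclude $\FNormS{\matA-\matU\matU\transp\matA}\le \FNormS{\matA-\matA\matR\matX_*\matS\matA}\le(1+O(\varepsilon))\FNormS{\matA-\matA_k}$. A union bound over the two lemmas yields success probability at least $0.96$.

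For the space complexity, note that the sketching matrices $\matS,\matR,\matT_{left},\matT_{right}$ are drawn from limited-independence families (Lemmas~\ref{lem:regre} and~\ref{lem:djlm} and the SRHT of Definition~\ref{def:srht}), so each can be specified by an $O(\poly(k/\varepsilon)\log(mn))$-bit seed and its entries generated on demand during updates rather than stored explicitly. The maintained sketches $\matM\in\R^{\xi_3\times\xi_4}$, $\matL\in\R^{\xi_1\times\xi_4}$ and $\matN\in\R^{\xi_3\times\xi_2}$ all have dimensions depending only on $k/\varepsilon$ and therefore contribute $\poly(k/\varepsilon)$ words; the dominant term comes from $\matD\in\R^{m\times\xi_2}$ with $\xi_2=O(k/\varepsilon)$, contributing $O(mk/\varepsilon)$ words. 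All intermediate values stay polynomially bounded, so each entry fits in $O(\log(mn/\varepsilon))$ bits, giving the claimed $O(mk/\varepsilon+\poly(k/\varepsilon))$ total.

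For the running time, each stream update $(i_q,j_q,x_q)$ only writes into the slots touched by row $i_q$ and column $j_q$; inspecting the four update rules, each costs $O(\xi_3\xi_4+\xi_1\xi_4+\xi_3\xi_2+\xi_2)=O(\poly(k/\varepsilon))$ time, for a total of $O(l\cdot\poly(k/\varepsilon))$ over the stream. After the stream we solve the rank-constrained regression $\matX_*=\argmin_{\rank(\matX)\le k}\FNormS{\matN\matX\matL-\matM}$ via the closed-form expression in Section~\ref{sec:Uopt}, operating on matrices of size $\poly(k/\varepsilon)$ in $\poly(k/\varepsilon)$ time, then SVD $\matX_*$ ($\poly(k/\varepsilon)$ time), form $\matT=\matD\matU_{\matX_*}$ in $O(mk\,\xi_2)=O(mk^2/\varepsilon)$ time, and extract an orthonormal basis $\matU$ via QR in $O(mk^2)$ time. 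Summing gives $O(l\cdot\poly(k/\varepsilon)+mk^2/\varepsilon)$.

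The main obstacle I anticipate is the correctness step: justifying why maintaining $\matD=\matA\matR$ (and not also $\matS\matA$) is enough to read off a good $k$-dimensional projection subspace. The crucial observation is that because $\matSig_{\matX_*}\matV_{\matX_*}\transp\matS\matA$ acts entirely on the right of $\matA\matR\matU_{\matX_*}$, the column span of $\matA\matR\matX_*\matS\matA$ coincides with $\mathrm{span}(\matA\matR\matU_{\matX_*})$, so projecting $\matA$ onto this subspace can only decrease the Frobenius error relative to $\matA\matR\matX_*\matS\matA$ itself; this is what lets us avoid storing $\matS\matA$ while still outputting a $\matU$, and it is the reason the output path of the algorithm can be made space-optimal.
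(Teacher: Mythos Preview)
Your proposal is correct and follows essentially the same approach as the paper: chain Lemma~\ref{lembatch2} with Lemma~\ref{lembatch1}, then argue that $\matU\matU\transp\matA$ does at least as well as $\matA\matR\matX_*\matS\matA$, and finally tally space and time. In fact your correctness step is cleaner than the paper's: the paper's proof asserts that $\matU\matU\transp$ projects onto the column space of $\matT_{left}\matA\matR\matX_*$ and then writes $\FNormS{\matA-\matU\matU\transp\matA}\le\FNormS{\matT_{left}(\matA\matR\matX_*\matS\matA-\matA)\matT_{right}}$, which is a slip, whereas your argument that $\mathrm{span}(\matA\matR\matX_*\matS\matA)\subseteq\mathrm{span}(\matA\matR\matU_{\matX_*})=\mathrm{span}(\matU)$ and hence $\FNormS{\matA-\matU\matU\transp\matA}\le\FNormS{\matA-\matA\matR\matX_*\matS\matA}$ is exactly what is needed before invoking Lemma~\ref{lembatch1}.
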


\begin{proof}
Because $\matU\matU\transp$ projects the columns of $A$ into the column space of $\matT_{left}\matA\matR \matX_{*}$,
$$\FNormS{\matA-\matU\matU\transp\matA}\leq \FNormS{ \matT_{left} \left(
   \matA \matR \matX \matS \matA - \matA \right) \matT_{right}  }$$
According to Lemma~\ref{lembatch1} and Lemma~\ref{lembatch2},
$$
 \FNormS{ \matA-\matU \matU\transp \matA  } \le
\left(1 + O\left(\varepsilon\right) \right) \cdot  \FNormS{\matA - \matA_k}
$$

To see the space complexity of the algorithm,
observe that it suffices to maintain the matrices in the fourth step of the algorithm. Furthermore, observe that by the end of the stream:
$\matM=\matT_{left}\matA\matT_{right}\in \R^{\xi_3\times \xi_4},$
$\matL = \matS \matA \matT_{right}\in \R^{\xi_1\times \xi_4},$
$\matN=\matT_{left}\matA\matR\in \R^{\xi_3\times \xi_2}$
and $\matD=\matA\matR\in \R^{m \times \xi_2}$.
Those matrices form the so called ``sketch'' of the algorithm.

Since all of $\xi_1,\xi_2,\xi_3,\xi_4$ are $O(k\varepsilon^{-1})$, the running time for each update operation in the fourth step is $O(\poly(k\varepsilon^{-1}))$. We can do the sixth step by using the result of Section~\ref{sec:Uopt}. The computation takes $\poly(k\varepsilon^{-1})$ running time. In the seventh step, computing SVD needs $\poly(k\varepsilon^{-1})$, and computing $\matT$ needs $O(mk^2\varepsilon^{-1})$. We can use $O(mk^2)$ to compute $\matU$ in the final step, e.g. QR decomposition.
\end{proof}

\subsubsection{A variation which outputs $\matA^*_k$} \label{sec:outputa}
We just slightly modify the previous algorithm in Section~\ref{sec:outputu} to get the following one which can output a matrix $\matA^*_k\in \R^{m\times n}$ satisfying
$$\FNormS{\matA-\matA^*_k}\leq (1+\varepsilon)\cdot \FNormS{\matA-\matA_k}$$

\vspace{0.1in}
\begin{small}
{\bf Algorithm}

\begin{enumerate}

\item Construct random sign sketching matrices $\matS\in \mathbb{R}^{\xi_1 \times m}$ with $\xi_1=O(k\varepsilon^{-1})$ and $\matR\in \mathbb{R}^{n \times \xi_2}$ with $\xi_2=O(k\varepsilon^{-1})$~(see Lemma~\ref{lem:regre})

\item Construct affine embedding matrices $\matT_{left} \in \R^{\xi_3 \times m}$ and $\matT_{right} \in \R^{n \times \xi_4}$ with
$\xi_3 = O(k\varepsilon^{-3}),$ $\xi_4 = O(k/\varepsilon^{-3})$~(see Definition~\ref{def:srht}).

\item Initialize all-zeros matrices: $\matM \in \R^{\xi_3 \times \xi_4},$ $\matL \in \R^{\xi_1\times \xi_4},$ $\matN \in \R^{\xi_3 \times \xi_2},$ $\matD \in \R^{m \times \xi_2}$ and $\matC \in \R^{\xi_1 \times n}$. We will maintain $\matM,\matL,\matN,\matD,\matC$ such that $\matM=\matT_{left}\matA\matT_{right},$ $\matL = \matS \matA \matT_{right},$ $\matN=\matT_{left}\matA\matR,$ $\matD=\matA\matR$ and $\matC=\matS\matA$.

\item For $(i_q,j_q,x_q):= (i_1,j_1,x_1), ..., (i_l,j_l,x_l)$ (one pass over the stream of update operations)

\begin{enumerate}
\item For all $i=1,...,\xi_3,~j=1,...,\xi_4,$ let $\matM_{i,j}=\matM_{i,j}+(\matT_{left})_{i,i_q}\cdot x_q\cdot (\matT_{right})_{j_q,j}$.
\item For all $i=1,...,\xi_1,~j=1,...,\xi_4,$ let $\matL_{i,j}=\matL_{i,j}+\matS_{i,i_q} \cdot x_q \cdot (\matT_{right})_{j_q,j}$.
\item For all $i=1,...,\xi_3,~j=1,...,\xi_2,$ let $\matN_{i,j}=\matN_{i,j}+(\matT_{left})_{i,i_q}\cdot x_q \cdot \matR_{j_q,j}$.
\item For all $j=1,...,\xi_2,$ let $\matD_{i_q,j}=\matM_{i_q,j}+x_q\cdot\matR_{j_q,j}$
\item For all $i=1,...,\xi_1,$ let $\matC_{i,j_q}=\matC_{i,j_q}+\matS_{i,i_q}\cdot x_q$
\end{enumerate}

\item end

\item Construct $\matX_{*} = \argmin_{\rank(\matX) \le k} \FNormS{ \matN \cdot \matX \cdot \matL -\matM  }.$
(Notice that $\matX_{*} = \argmin_{\rank(\matX) \le k} \FNormS{ \matT_{left} \left(
\matA \matR \matX \matS \matA  - \matA \right) \matT_{right}  }$.)

\item Compute the SVD of  $\matX_{*} = \matU_{ \matX_{*} } \matSig_{\matX_{*} } \matV_{\matX_{*} }\transp$ ($\matU_{ \matX_{*} } \in \R^{\xi \times k}$, $\matSig_{ \matX_{*} } \in \R^{k \times k}$
$\matV_{ \matX_{*} } \in \R^{\xi \times k}$); then, compute
$$\matT = \matD \matU_{ \matX_{*} }$$
$$\matK = \matV_{\matX_{*}}\transp\matC$$

\item Output $\matA^*_k=\matT\matSig_{\matX_{*}}\matK$

\end{enumerate}

\end{small}

\begin{theorem}\label{thmonepassvar}
With probability at least $0.96$:
$$
 \FNormS{\matA -  \matA^*_k} \le
\left(1 + O\left(\varepsilon\right) \right) \cdot  \FNormS{\matA - \matA_k}.
$$
The space complexity of the algorithm is
$O\left((m+n) k / \varepsilon + \poly(k\varepsilon^{-1}) \right)$ words, and the running time for each update operation is $O(poly(k\varepsilon^{-1}))$ and the total running time is of the order
$$ O\left( l\cdot \poly(k\varepsilon^{-1}) + (m+n) k^2 \varepsilon^{-1} + mnk\right) $$
where $l$ is total number of updates.
\end{theorem}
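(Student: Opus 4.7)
The plan is to observe that the output $\matA_k^*$ equals $\matA\matR\matX_*\matS\matA$, and then invoke Lemma~\ref{lembatch1} and Lemma~\ref{lembatch2} exactly as in the proof of Theorem~\ref{thmonepass}; the only real novelty is showing that maintaining the extra sketch $\matC = \matS\matA$ in the stream lets us assemble the full factorization in the stated space and time.

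First I would verify the identity $\matA_k^* = \matA\matR\matX_*\matS\matA$. By construction $\matT = \matD\matU_{\matX_*} = \matA\matR\matU_{\matX_*}$ and $\matK = \matV_{\matX_*}\transp \matC = \matV_{\matX_*}\transp \matS\matA$, so
$\matA_k^* = \matT\matSig_{\matX_*}\matK = \matA\matR(\matU_{\matX_*}\matSig_{\matX_*}\matV_{\matX_*}\transp)\matS\matA = \matA\matR\matX_*\matS\matA$,
and since $\rank(\matX_*) \le k$ this yields $\rank(\matA_k^*) \le k$. Next, I would apply Lemma~\ref{lembatch1} (using that, by induction on the stream updates, $\matM,\matL,\matN$ equal $\matT_{left}\matA\matT_{right},\matS\matA\matT_{right},\matT_{left}\matA\matR$ respectively, so $\matX_*$ is exactly the constrained minimizer of $\|\matT_{left}(\matA\matR\matX\matS\matA-\matA)\matT_{right}\|_{\mathrm{F}}^{2}$) and Lemma~\ref{lembatch2}; with success probability at least $0.96$ under a union bound, combining the two bounds and rescaling $\varepsilon$ by a constant gives
$\FNormS{\matA - \matA_k^*} = \FNormS{\matA\matR\matX_*\matS\matA - \matA} \le (1+O(\varepsilon))\FNormS{\matA - \matA_k}$.

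For space, I would note that $\matS,\matR$ can be represented by $O(\log k)$-wise independent seeds and $\matT_{left},\matT_{right}$ by $O(1)$-wise seeds (or by the Hadamard/sign seed of Definition~\ref{def:srht}); the stored sketches have dimensions $\xi_3 \times \xi_4$, $\xi_1\times\xi_4$, $\xi_3\times\xi_2$, $m\times\xi_2$, and $\xi_1\times n$, summing to $O((m+n)k\varepsilon^{-1} + \poly(k\varepsilon^{-1}))$ words. For the per-update running time, each single update $(i_q,j_q,x_q)$ touches a constant number of scalar entries of each sketch column/row, so only $O(\xi_3\xi_4 + \xi_1\xi_4 + \xi_3\xi_2 + \xi_2 + \xi_1) = O(\poly(k\varepsilon^{-1}))$ arithmetic operations are required per stream token, for a total stream cost of $O(l\cdot\poly(k\varepsilon^{-1}))$. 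For the post-processing, computing $\matX_*$ via the closed form of Section~\ref{sec:Uopt} and its SVD needs $\poly(k\varepsilon^{-1})$ operations; $\matT = \matD\matU_{\matX_*}$ takes $O(mk^2\varepsilon^{-1})$, $\matK = \matV_{\matX_*}\transp\matC$ takes $O(nk^2\varepsilon^{-1})$, and forming $\matA_k^* = \matT\matSig_{\matX_*}\matK$ (an $m\times k$ by $k\times k$ by $k\times n$ product) costs $O(mnk)$, yielding the claimed total $O(l\cdot\poly(k\varepsilon^{-1}) + (m+n)k^2\varepsilon^{-1} + mnk)$.

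The only ``nontrivial'' thing to check, and what I expect to be the main (though modest) obstacle, is to ensure that $\matC = \matS\matA$ really can be maintained incrementally in the stated per-update time; this is immediate from step 4(e) because each update $(i_q,j_q,x_q)$ modifies only the $j_q$-th column of $\matC$, and then only in its $\xi_1$ entries, costing $O(\xi_1) = O(k\varepsilon^{-1})$ per token. Everything else is a direct corollary of the analysis already carried out for Theorem~\ref{thmonepass}.
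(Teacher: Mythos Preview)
Your proposal is correct and follows essentially the same approach as the paper: identify $\matA_k^* = \matA\matR\matX_*\matS\matA$, then invoke Lemma~\ref{lembatch1} and Lemma~\ref{lembatch2}, and finally account for the additional $\matC=\matS\matA$ sketch in the space and time analysis. One minor slip: the sketching matrices $\matS,\matR$ in Lemma~\ref{lem:regre} are $O(k)$-wise (not $O(\log k)$-wise) independent, but since an $O(k)$-wise independent seed still fits in $\poly(k)$ words this does not affect the stated bounds.
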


\begin{proof}
Notice that $\matA^*_k=\matA\matR\matX_{*}\matS\matA$. According to Lemma~\ref{lembatch1} and Lemma~\ref{lembatch2}, we have
$$\FNormS{\matA -  \matA^*_k} \le
\left(1 + O\left(\varepsilon\right) \right) \cdot  \FNormS{\matA - \matA_k}.$$
The total space of ``sketch'' matrices $\matM,\matL,\matN,\matD$ is the same as the algorithm in Section~\ref{sec:outputu}. The maintenance of $\matC$ needs additional $O(n k \varepsilon^{-1})$ space.

The running time is almost the same as before. Computation of $\matK = \matV_{\matX_{*}}\transp\matC$ needs additional $O(nk^2\varepsilon^{-1})$, and computation of $\matA^*_k=\matT\matSig_{\matX_{*}}\matK$ needs additional $O(mnk)$.
\end{proof}

\subsection{Two-pass streaming PCA with ``real numbers'' space complexity}\label{sec:twopass}
A simple modification of the batch algorithm in Section~\ref{sec:algbatch} leads to a two-pass streaming algorithm with
$O(m k) + \poly(k/\varepsilon)$ ``real numbers'' space complexity:

\vspace{0.2in}
\begin{small}
{\bf Algorithm}

\begin{enumerate}

\item Construct two dense Johnson-Lindenstrauss matrices $\matS\in\mathbb{R}^{\xi_1\times m},\matT\in\mathbb{R}^{n \times \xi_2}$ with $\xi_1=O(k\varepsilon^{-2}),\xi_2=O(k\varepsilon^{-2})$~(see Lemma~\ref{lem:djlm}).

\item Initialize all-zero matrices: $\tilde{\matA}\in\R^{\xi_1\times\xi_2},$ $\matX\in\R^{m\times k}$.

\item For $(i_q,j_q,x_q):= (i_1,j_1,x_1), ..., (i_l,j_l,x_l)$ (\textbf{first pass over the stream})

\item For all $i=1,...,\xi_1,~j=1,...,\xi_2,$ let $\tilde{\matA}_{i,j}=\tilde{\matA}_{i,j}+\matS_{i,i_q}\cdot x_q\cdot \matT_{j_q,j}$.

\item end

\item Compute the SVD of $\tilde{\matA}_k=\matU_{\tilde{\matA}_k}\matSig_{\tilde{\matA}_k}\matV_{\tilde{\matA}_k}\transp$
~($\matU_{ \tilde{\matA}_k } \in \R^{\xi_1 \times k}$, $\matSig_{ \tilde{\matA}_k } \in \R^{k \times k}$, $\matV_{ \tilde{\matA}_k } \in \R^{\xi_2 \times k})$.

\item For $(i_q,j_q,x_q):= (i_1,j_1,x_1), ..., (i_l,j_l,x_l)$ (\textbf{second pass over the stream})

\item For all $i=1,...,\xi_2,~j=1,...,k,$ let $\matX_{i_q,j}=\matX_{i_q,j}+ x_q\cdot \matT_{j_q,i} \cdot (\matV_{\tilde{\matA}_k})_{i,j}$.

\item end

\item Compute an orthonormal basis $\matU \in \mathbb{R}^{m\times k}$ for $span(\matX)$ (notice that $rank(\matX) \le k$).

\end{enumerate}

\end{small}

The theorem below analyzes the approximation error, the space complexity,
and the running time of the previous algorithm. Notice that the space complexity
of this algorithm is only given in terms of ``real numbers'' - we resolve this issue in the following section.

\begin{theorem}\label{thmtwopass}
The matrix $\matU \in \R^{m \times k}$ with $k$ orthonormal columns satisfies w.p.
$0.98$:
$$
 \FNormS{\matA -  \matU \matU\transp \matA} \le
\left(1 + O\left(\varepsilon\right) \right) \cdot  \FNormS{\matA - \matA_k}.
$$
The space complexity of the algorithm is
$$
O\left(m k  + \poly(k \varepsilon^{-1}) \right)
$$
``real numbers'',  the running time of each update operation is $O(poly(k\varepsilon^{-1}))$, and the total running time is of the order
$$
O\left( l\cdot\poly(k\varepsilon^{-1})  + mk^2\right)
$$
where $l$ is the total number of update operations.
\end{theorem}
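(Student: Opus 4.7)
My plan is to observe that this two-pass streaming algorithm is an exact simulation of the batch algorithm from Section~\ref{sec:algbatch}, so that correctness reduces immediately to Lemma~\ref{lembatch0}, and then to analyze the space and running time by direct bookkeeping on the matrices maintained during the two passes.

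For correctness, I would first verify that the matrix $\tilde{\matA}$ computed by the first pass equals $\matS\matA\matT$. This follows because after processing update $(i_q,j_q,x_q)$, the contribution to $\tilde{\matA}_{i,j}$ is $\matS_{i,i_q}\cdot x_q\cdot \matT_{j_q,j}$, and summing over all updates gives $(\matS\matA\matT)_{i,j}$ by linearity. Similarly, after the second pass, $\matX_{i,j}=\sum_q x_q \cdot \matT_{j_q,\cdot}(\matV_{\tilde{\matA}_k})_{\cdot,j}$ restricted to row $i_q=i$, which equals $(\matA\matT\matV_{\tilde{\matA}_k})_{i,j}$. Therefore $\matU$ computed in the final step is identical (given the same $\matS,\matT$) to the matrix $\matU$ produced by the batch algorithm of Section~\ref{sec:algbatch}, and Lemma~\ref{lembatch0} gives the claimed $(1+O(\varepsilon))$ approximation with probability at least $0.98$.

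For the space bound, I would count the ``sketches'' that must persist between passes: $\tilde{\matA}\in\R^{\xi_1\times\xi_2}$ uses $\xi_1\xi_2=O(k^2/\varepsilon^4)=\poly(k/\varepsilon)$ real numbers; $\matV_{\tilde{\matA}_k}\in\R^{\xi_2\times k}$ uses $O(k^2/\varepsilon^2)$; and $\matX\in\R^{m\times k}$ uses $mk$. The matrices $\matS$ and $\matT$ are $O(\log k)$-wise independent and can be regenerated on the fly from a seed of size $\poly(k/\varepsilon)$ (Lemma~\ref{lem:djlm}), so they do not dominate. Adding these gives the stated $O(mk)+\poly(k/\varepsilon)$ ``real numbers'' of space.

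For the running time, each update in the first pass touches $\xi_1\xi_2=\poly(k/\varepsilon)$ entries of $\tilde{\matA}$, and each update in the second pass touches $\xi_2\cdot k=\poly(k/\varepsilon)$ entries of $\matX$; combined with the $\poly(k/\varepsilon)$ seed-based evaluation of $\matS,\matT$, each update costs $\poly(k/\varepsilon)$. The between-pass SVD of the $\xi_1\times\xi_2$ matrix $\tilde{\matA}$ costs $\poly(k/\varepsilon)$, and the final orthonormalization (e.g., via QR) of the $m\times k$ matrix $\matX$ costs $O(mk^2)$. Summing over $l$ updates yields total time $O(l\cdot\poly(k/\varepsilon)+mk^2)$, matching the statement. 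I do not expect a serious obstacle: the only subtle point is checking that the second pass correctly accumulates $\matA\matT\matV_{\tilde{\matA}_k}$ entry by entry, which is straightforward once one writes out the update in terms of single-entry increments.
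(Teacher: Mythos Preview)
Your proposal is correct and follows essentially the same approach as the paper: reduce correctness to the batch algorithm of Section~\ref{sec:algbatch} (the paper cites Theorem~\ref{thmbatch}, which in turn invokes Lemma~\ref{lembatch0}), and then account for space and time by enumerating the maintained sketches $\tilde{\matA}=\matS\matA\matT$ and $\matX=\matA\matT\matV_{\tilde{\matA}_k}$ together with the per-update and post-processing costs. Your write-up is in fact slightly more explicit than the paper's (you spell out the linearity argument for the stream updates and the seed-based regeneration of $\matS,\matT$), but there is no substantive difference in strategy.
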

\begin{proof}

The matrix $\matU$ - up to the randomness in the algorithms - is exactly the same matrix as in the algorithm in Section~\ref{sec:algbatch},
hence Theorem~\ref{thmbatch} shows the claim.

To see the space complexity of the algorithm,
observe that it suffices to maintain the matrices in the fourth and eighth
steps of the algorithm. Furthermore, observe that by the end of the stream:
$$\tilde{\matA} =  \matS\matA\matT,$$
and
$$\matX = \matA\matT\matV_{\tilde{\matA}_k}$$

Since $\xi_1,\xi_2$ are $poly(k\varepsilon^{-1})$, the running time for each update operation is $poly(k\varepsilon^{-1})$. Computing SVD in the sixth step needs $\poly(k\varepsilon^{-1})$. Computing $\matU$ in the final step needs $O(mk^2)$.
\end{proof}

\subsection{Two-pass streaming PCA with bit space complexity}\label{sec:twopass2}
The previous algorithm could suffer from large space complexity in case we need
a lot of machine words to write down the entries of $\matV_{\tilde{\matA}_k}$. To fix this issue
we use the same idea as in the case of the distributed PCA algorithm in Section~\ref{sec:precisionDistributed}.
This leads to a two-pass streaming algorithm for PCA. Again, as in the distributed case, we need to test if the rank of $\matA$ is less than $2k,$ and then we use one approach if $\rank(\matA) < 2k$ and another approach if $\rank(\matA) \ge 2k$. Both of these approaches can be implemented with two passes. In the overall streaming PCA algorithm that we would like to design,
we can not wait for the algorithm that tests the rank to finish and then start running one of the two PCA protocols, because this will lead to a three-pass algorithm~(one pass to test the rank and two passes for the actual PCA protocol). To keep the number of passes to two, we just start running the PCA protocols in parallel with the protocol that tests the rank of the input matrix. In the end of the first pass over the stream of update operations, we already know which protocol to follow, and we just do this, disregarding the other protocol.

We already discussed the streaming version of the algorithm that tests the rank of the matrix in Lemma~\ref{lem:rankTest}. Below, we describe separately the two streaming PCA protocols.

\subsubsection{Streaming PCA protocol when $\rank(\matA) \le2k$}
The idea here is to implement a streaming version of the distributed PCA protocol in
Lemma~\ref{lem:lowRankProtocol}.

\vspace{0.2in}
\begin{small}
{\bf Algorithm}

\begin{enumerate}

\item Construct an $n \times 2k$ matrix $\matH''$ as in Lemma~\ref{lem:lowRankProtocol}.

\item Construct affine embedding matrices
$\matT_{left} \in \R^{\xi_1 \times m}$ and $\matT_{right} \in \R^{n \times \xi_2}$ with
$\xi_1 = O(k/\varepsilon^2),$ $\xi_2 = O(k/\varepsilon^2)$~(see Definition~\ref{def:srht}).
\item Initialize all-zeros matrices:  $\matC \in \R^{m \times 2 k}, \matM \in \R^{\xi_1 \times \xi_2}$, $\matL  \in \R^{\xi  \times \xi_2}$, $\matN \in \R^{\xi_1 \times 2k}$.

\item $(i_q,j_q,x_q):= (i_1,j_1,x_1), ..., (i_l,j_l,x_l)$ (\textbf{first pass over the stream})

\begin{enumerate}
\item  For all $j=1,...,2k,$ let $\matC_{i_q,j} = \matC_{i_q,j} + x_q\cdot \matH''_{j_q,j}$
\end{enumerate}

\item end

\item For $(i_q,j_q,x_q):= (i_1,j_1,x_1), ..., (i_l,j_l,x_l)$ (\textbf{second pass over the stream})

\begin{enumerate}
\item For all $i=1,...,\xi_1,~j=1,...,\xi_2,$ let $\matM_{i,j}=\matM_{i,j}+(\matT_{left})_{i,i_q}\cdot x_q\cdot (\matT_{right})_{j_q,j}$.
\item For all $i=1,...,2k,~j=1,...,\xi_2,$ let $\matL_{i,j}=\matL_{i,j}+{\matC\transp}_{i,i_q} \cdot x_q \cdot (\matT_{right})_{j_q,j}$.
\item For all $i=1,...,\xi_1,~j=1,...,2k,$ let $\matN_{i,j}=\matN_{i,j}+(\matT_{left})_{i,i_q}\cdot x_q \cdot \matH''_{j_q,j}$.
\end{enumerate}

\item end

\item Construct
$$
\matX_{*} = \argmin_{\rank(\matX) \le k} \FNormS{ \matN \cdot \matX \cdot \matL -\matM  } := \argmin_{\rank(\matX)\le k} \|\matT_{left} \matA \matH'' \matX (\matH'')\transp \matA\transp \matA \matT_{right} - \matT_{left} \matA \matT_{right}\|_{\mathrm{F}}^2.
$$

\item Compute an orthonormal basis $\matU \in \R^{m \times k}$ for $span(\matC \matX_{*} )$.

\end{enumerate}

\end{small}

\subsubsection{Streaming PCA protocol when $\rank(\matA) > 2k$}
The idea here is to implement a streaming version of step $3$ of the algorithm in Section~\ref{sec:bit4}.

\vspace{0.2in}
\begin{small}
{\bf Algorithm}

\begin{enumerate}

\item Construct two dense Johnson-Lindenstrauss matrices $\matS\in\mathbb{R}^{\xi_1\times m},\matT\in\mathbb{R}^{n \times \xi_2}$ with $\xi_1=O(k\varepsilon^{-2}),\xi_2=O(k\varepsilon^{-2})$~(see Lemma~\ref{lem:djlm}).

\item Initialize all-zero matrices: $\tilde{\matB}\in\R^{\xi_1\times\xi_2},$ $\hat{\matX}\in\R^{m\times k}$.

\item For $p:=1,...,\xi_1,~q:=1,...,\xi_2,~i:=1,...,m,~j:=1,...,n$

\item Let $r_{i,j}=-1/n^D$ with probability $1/2$ and $r_{i,j}=1/n^D$ with probability $1/2$.

\item $\tilde{\matB}_{p,q}=\tilde{\matB}_{p,q}+\matS_{p,i}\cdot r_{i,j} \cdot \matT_{j,q}$

\item end

\item For $(i_q,j_q,x_q):= (i_1,j_1,x_1), ..., (i_l,j_l,x_l)$ (\textbf{first pass over the stream})

\item For all $i=1,...,\xi_1,~j=1,...,\xi_2,$ let $\tilde{\matB}_{i,j}=\tilde{\matB}_{i,j}+\matS_{i,i_q}\cdot x_q\cdot \matT_{j_q,j}$.

\item end

\item Compute the SVD of $\tilde{\matB}_k=\matU_{\tilde{\matB}_k}\matSig_{\tilde{\matB}_k}\matV_{\tilde{\matB}_k}\transp$
~($\matU_{ \tilde{\matB}_k } \in \R^{\xi_1 \times k}$, $\matSig_{ \tilde{\matB}_k } \in \R^{k \times k}$, $\matV_{ \tilde{\matB}_k } \in \R^{\xi_2 \times k})$. Then, we round each of the entries of $\matV_{\tilde{\matB}_k}$ to the nearest integer multiple of $1/n^{\gamma}$ for a sufficient large constant $\gamma>0$. Let the matrix after the rounding be ${\hat{\matV}}_{\tilde{\matB}_k}$.

\item For $(i_q,j_q,x_q):= (i_1,j_1,x_1), ..., (i_l,j_l,x_l)$ (\textbf{second pass over the stream})

\item For all $i=1,...,\xi_2,~j=1,...,k,$ let $\hat{\matX}_{i_q,j}=\hat{\matX}_{i_q,j}+ x_q\cdot \matT_{j_q,i} \cdot ({\hat{\matV}}_{\tilde{\matB}_k})_{i,j}$.

\item end

\item Compute an orthonormal basis $\matU \in \mathbb{R}^{m\times k}$ for $span(\hat{\matX})$ (notice that $rank(\hat{\matX}) \le k$).

\end{enumerate}

\end{small}

\subsubsection{Main result}
The theorem below analyzes the approximation error, the space complexity,
and the running time of the previous algorithm. Notice that the space complexity
of this algorithm is given in terms of  machine words.

\begin{theorem}\label{thmtwopass2}
The matrix $\matU \in \R^{m \times k}$ with $k$ orthonormal columns satisfies with arbitrarily large constant probability:
$$
 \FNormS{\matA -  \matU \matU\transp \matA} \le
\left(1 + \varepsilon \right) \cdot  \FNormS{\matA - \matA_k}.
$$
The space complexity of the algorithm is
$$
O\left(m k  + \poly(k \varepsilon^{-1}) \right)
$$
machine words,  the running time of each update operation is $O(poly(k\varepsilon^{-1}))$, and the total running time is of the order
$$
O\left( l\cdot\poly(k\varepsilon^{-1})  + mk^2\right)
$$
where $l$ is the total number of update operations.
\end{theorem}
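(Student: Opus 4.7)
The plan is to argue that the two-pass streaming algorithm is essentially a streaming simulation of the distributed protocol in Section~\ref{sec:bit4}, so correctness will reduce to the analysis already carried out for Theorem~\ref{thmd2}; the only genuinely new work is (i) to show that rank-testing plus both PCA sub-protocols can be executed \emph{in parallel} during the first pass without blowing up the space budget, and (ii) to verify that all maintained sketches have machine-word entries. I would begin by running, during the first pass, the streaming rank-test from Lemma~\ref{lem:rankTest} (which needs $O(k^2)$ words), the first pass of the rank-$\le 2k$ protocol (which maintains only $\matC=\matA\matH''\in\R^{m\times 2k}$, i.e.\ $O(mk)$ words), and the first pass of the rank-$>2k$ protocol (which maintains $\tilde\matB=\matS\matB\matT\in\R^{\xi_1\times\xi_2}$ with the perturbation $\matB=\matA+\matN_{m,n}$ folded in by precomputing $\matS\matN_{m,n}\matT$ as in steps~3--5 of the second protocol). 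At the end of the first pass the coordinator learns whether $\rank(\matA)\le 2k$ or not, and performs the intermediate local SVD-style computation (either inverting $\matN\matX\matL=\matM$, or taking the top-$k$ SVD of $\tilde\matB$ and rounding $\matV_{\tilde\matB_k}$ to multiples of $1/n^{\gamma}$). The second pass is then used to compute $\matC\matX_*$ (in the low-rank case) or $\hat\matX=\matA\matT\hat\matV_{\tilde\matB_k}$ (in the high-rank case).

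For the correctness, in the low-rank branch I would invoke Lemma~\ref{lem:lowRankProtocol}: the $\matC=\matA\matH''$ maintained by the stream is identical to the one in the distributed protocol (with probability $\ge 1-\delta$, $\col(\matC)=\col(\matA)$), and the sketches $\matM,\matL,\matN$ formed in the second pass are exactly $\matT_{\text{left}}\matA\matT_{\text{right}}$, $\matC\transp\matA\matT_{\text{right}}$, $\matT_{\text{left}}\matA\matH''$, so the affine-embedding guarantee of Lemma~\ref{lem:affine} gives the $(1+\eps)$-approximation. In the high-rank branch, $\tilde\matB$ and $\hat\matX$ are exactly the objects that appear in the protocol of Section~\ref{sec:bit4}, so the same chain of inequalities from Section~\ref{sec:proof32} (using the Bernoulli perturbation of Lemma~\ref{lem:rectangular}, the lower bound $\sigma_{\min}(\matB\matT\matV_{\tilde\matB_k})\ge n^{-B}$ of Lemma~\ref{lem:bdinv}, the rounding bound, and Lemma~\ref{lem:LASTFUCKINGLEMMA} to pass from $\FNorm{\matB-\matB_k}$ to $\FNorm{\matA-\matA_k}$) yields the desired $(1+\eps)$ guarantee after rescaling~$\eps$.

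The space is bounded by inspecting the maintained matrices: $\matC$ has $O(mk)$ entries, $\hat\matX$ has $O(mk)$ entries, and every other sketch ($\tilde\matB,\matM,\matL,\matN,\matU_{\tilde\matB_k},\matSig_{\tilde\matB_k},\hat\matV_{\tilde\matB_k}$) lives in dimensions $\poly(k/\eps)\times\poly(k/\eps)$, so it occupies $\poly(k/\eps)$ words. The sketching matrices $\matS,\matT,\matT_{\text{left}},\matT_{\text{right}}$ are generated from an $O(\log(k))$-wise independent seed (Lemma~\ref{lem:djlm}) or from the short description of an SRHT (Definition~\ref{def:srht}), so they add only $\poly(k/\eps)$ words. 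For bit complexity the key point is that the entries of $\hat\matV_{\tilde\matB_k}$ are integer multiples of $1/n^{\gamma}$ of magnitude $\le n^{O(B+C)/\gamma}$, hence each fits in $O(\log n)$ bits, and consequently the entries of $\hat\matX=\matA\matT\hat\matV_{\tilde\matB_k}$ are also $O(\log(mn/\eps))$-bit numbers. Per-update cost is $O(\poly(k/\eps))$ arithmetic operations since every update touches only the sketches of total size $\poly(k/\eps)$ plus one row of $\matC$ (size $2k$) and one row of $\hat\matX$ (size $k$); summed over $l$ updates plus the final $O(mk^2)$ QR decomposition this gives the claimed total running time.

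The main obstacle I foresee is the parallelism subtlety: one cannot ``condition on the rank'' before the first pass ends, so both PCA protocols must be run concurrently on the same stream, and the Bernoulli perturbation of the high-rank branch must be applied in a streaming-consistent way. I would handle this by noting that adding $\matN_{m,n}$ to $\matA$ commutes with the sketching operation, so it suffices to precompute and store $\matS\matN_{m,n}\matT$ (a $\xi_1\times\xi_2$ matrix, hence $\poly(k/\eps)$ words) once before the stream begins using the pairwise-independent description of $\matS,\matT$ and a pseudorandom generator for the signs of $\matN_{m,n}$; during the stream one only maintains $\matS\matA\matT$ and adds the precomputed $\matS\matN_{m,n}\matT$ at the end. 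With this in place, discarding the unused branch at the end of the first pass preserves the $O(mk)+\poly(k/\eps)$ word bound, and the remainder of the argument matches Theorem~\ref{thmd2} verbatim.
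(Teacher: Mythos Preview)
Your proposal is correct and follows essentially the same approach as the paper: the paper's proof of Theorem~\ref{thmtwopass2} simply observes that the two-pass streaming algorithm produces the same output $\matU$ as the distributed protocol of Theorem~\ref{thmd2}, so the approximation bound is inherited, and then notes that the space is governed by the sketch matrices (which are either $O(mk)$ or $\poly(k/\eps)$ in size) and the per-update time is $\poly(k/\eps)$ plus the final $O(mk^2)$ QR. Your write-up is in fact considerably more detailed than the paper's three-sentence proof---in particular, you spell out the parallel execution of the rank test and both PCA branches during the first pass and the bit-complexity argument for $\hat{\matV}_{\tilde{\matB}_k}$, whereas the paper leaves these implicit.
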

\begin{proof}

The matrix $\matU$ - up to the randomness in the algorithms - is exactly the same matrix as in the algorithm in Theorem~\ref{thmd2}, hence the approximation bound in that theorem shows the claim.

To see the space complexity of the algorithm,
observe that it suffices to maintain all the ``sketch'' matrices in the both two protocols above.

Since $\xi_1,$ $\xi_2$ in both protocols are $\poly(k\varepsilon^{-1})$, the running time for each update operation is $O(\poly(k\varepsilon^{-1}))$. Additional $O(mk^2)$ running time is caused by computing $\matU$ in the final step.
\end{proof}

\section{Distributed PCA for sparse matrices in column-partition model}\label{sec:algorithm}
Recall that in the problem of Definition~\ref{def:model} we are given 1) an $m \times n$ matrix $\matA$ partitioned column-wise as
$
\matA =
\begin{pmatrix}
\matA_1 & \matA_2 & \dots & \matA_s
 \end{pmatrix},
$
with $\matA_i \in \R^{m \times w_i}$ ($\sum w_i =n$);
2) a rank parameter $k < \rank(\matA)$;
3) an accuracy parameter $\varepsilon > 0$.
We would like to design an algorithm that finds an $m \times k$ matrix $\matU$ with $\matU\transp\matU=\matI_k$ and, upon termination, leaves this matrix $\matU$ in each machine of the network.

The high level idea of our algorithm is to find, in a distributed way, a small set - $O(k \varepsilon^{-1})$ - of columns from $\matA$ and then find $\matU$ in the span of those columns. To choose those $O(k \varepsilon^{-1})$ columns of $\matA$ in a distributed way we implement the following three-stage sampling procedure:
\begin{enumerate}
\item Local sampling: Each machine samples $O(k)$ columns using the sampling algorithm from~\cite{cohen2014dimensionality}.
%
\item Global sampling: The server collects the columns from each machine and down-samples them to $O(k)$ columns using the deterministic algorithm from~\cite{BDM11a}.
\item Adaptive sampling: the server sends back to each machine those $O(k)$ columns; then,
an extra of $O(k \varepsilon^{-1})$ columns are selected randomly from the entire matrix $\matA$
using~\cite{DRVW06}.
\end{enumerate}
We argue that if $\tilde{\matC}$ contains the columns selected with this three-stage approach, then, w.p. $0.99,$
$$
\FNormS{\matA -  \tilde\matC  \pinv{\tilde\matC}\matA} \le
\FNormS{\matA-\Pi_{\tilde\matC,k}^{\mathrm{F}}(\matA)}\le
(1 + O(\varepsilon)) \cdot \FNormS{\matA-\matA_k}.
$$
Though we could have used $\Pi_{\tilde\matC,k}^{\mathrm{F}}(\matA)$ to be the rank $k$ matrix that approximates $\matA,$ we are not familiar with any communication-efficient computation of $\Pi_{\tilde\matC,k}^{\mathrm{F}}(\matA)$. To address this issue, using an idea from~\cite{KVW14}, we compute $\matU \in span(\tilde\matC)$ such that
$\FNormS{\matA - \matU\matU\transp\matA} \le (1+O(\varepsilon)) \FNormS{\matA - \Pi_{\tilde\matC,k}^{\mathrm{F}}(\matA)};$
this $\matU$ can be calculated with small communication cost and it is sufficient for our purposes.

Before presenting the new algorithm in detail, we discuss results from previous literature that we employ in the analysis.

\subsection{Background material}

\subsubsection{Column sampling algorithms}\label{sec:sampling}
Our distributed PCA algorithm in Section~\ref{sec:algorithm} samples columns from the input matrix in three stages:
\begin{enumerate}
\item Local sampling: $O(k)$ columns are selected locally in each machine.
\item Global sampling: $O(k)$ columns are selected in the server.
\item Adaptive sampling: an extra $O(k \varepsilon^{-1})$ columns are selected from the entire matrix $\matA$.
\end{enumerate}

In the first stage, we use a sampling algorithm mentioned in~\cite{cohen2014dimensionality}.
Actually, the sampling algorithm is the same as the Batson, Spielman, and Strivastava (BSS) spectral sparsification algorithm~\cite{BSS09}. But Cohen et al. showed that a small set of columns sampled by BSS sampling algorithm can provide a projection-cost preserving sketch.
In the second stage, we use a deterministic algorithm developed in~\cite{BDM11a},
which itself extends the Batson, Spielman, and Strivastava (BSS) spectral sparsification algorithm~\cite{BSS09}.
For the actual algorithm in the first stage we defer the reader to Theorem 15 in~\cite{cohen2014dimensionality}. Lemma~\ref{lem:bssproject} states the result.
And for the actual algorithm in the second stage we defer the reader to Lemma 3.6 in~\cite{BDM11a}.  Lemma~\ref{lem:dualset} and
Lemma~\ref{thm:optimalFdet} below present the relevant results.
In the third sampling stage, we use an adaptive sampling algorithm from~\cite{DRVW06}.
%

\begin{lemma}[BSS Sampling for Projection-Cost Preserving Sketch. Theorem 15 in~\cite{cohen2014dimensionality}]
\label{lem:bssproject}
Let $\matA \in \R^{m\times w}$ be an arbitrary matrix. For any integer $0<k<m$, real numbers $0<\eps<1,0<\delta$, there is a randomized algorithm that runs in $O(\nnz(A)\log(m/\delta)+m\cdot \poly(k,\eps,\log(1/\delta)))$ time, and constructs a ``sampling/rescaling'' $w\times O(k/\eps^2)$ matrix $\matS$ with probability at least $1-\delta$ such that $\matS$ is an $(\varepsilon,k)$-projection-cost preserving sketching matrix of $\matA$. We denote this sampling procedure as
$
\matS = BssSampling\1(\matA, k, \eps, \delta).
$
\end{lemma}
In words, there exists a randomized algorithm that runs in near input-sparsity running time can construct a projection-cost preserving sketch of a given matrix $\matA$ by sampling and rescaling a small subset of columns of $\matA$.

\begin{lemma}[Dual Set Spectral-Frobenius Sparsification. Lemma 3.6 in~\cite{BDM11a}]
\label{lem:dualset}
Let $\matV \in \R^{w \times k}$ be a matrix with $w > k$ and $\matV\transp \matV = \matI_{k}$.
Let $\matE \in \R^{m \times w}$ be an arbitrary matrix.  Then,
given an integer \math{\ell} such that \math{k < \ell \le w}, there exists a deterministic algorithm that runs in
$O\left(\ell w k^2 + m w \right)$ time, and constructs a ``sampling/rescaling'' $w \times \ell$ matrix $\matS$ such
that
\eqan{
\sigma_{k}^2\left(\matV\transp \matS \right)
\ge
\left(1 - \sqrt{{k}/{\ell}}\right)^2,
\qquad
\FNormS{\matE \matS}
\le \FNormS{\matE}.
}
Specifically,  $rank(\matV\transp \matS) = k$.
We denote this sampling procedure as
$
\matS = BssSampling\2(\matV, \matE, \ell).
$
\end{lemma}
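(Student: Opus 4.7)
The plan is to prove this via a deterministic greedy iteration that maintains two potential functions simultaneously, following the barrier method of Batson--Spielman--Srivastava but modified to handle the Frobenius-norm constraint on $\matE$ in place of the usual spectral upper barrier on $\matV$. Let $\v_i \in \R^k$ denote the $i$-th row of $\matV$ (so $\sum_{i=1}^w \v_i\v_i\transp = \matI_k$), and let $\e_i \in \R^m$ denote the $i$-th column of $\matE$. The sampling matrix $\matS$ will be built column by column: at iteration $\tau = 1, \ldots, \ell$ we pick an index $i_\tau \in \{1,\ldots,w\}$ and a weight $s_\tau > 0$, appending $s_\tau \e_{i_\tau}$ to the $\tau$-th column of $\matS$ (scaled appropriately). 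Let $\matA_\tau := \sum_{t=1}^{\tau} s_t \v_{i_t}\v_{i_t}\transp \in \R^{k\times k}$; the goal is that $\matA_\ell = (\matV\transp \matS)(\matV\transp \matS)\transp$ has $\lambda_k(\matA_\ell) \ge (1-\sqrt{k/\ell})^2$ while $\sum_\tau s_\tau \|\e_{i_\tau}\|_2^2 \le \|\matE\|_F^2$.

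First, I would introduce the lower-barrier potential from BSS,
\[
\Phi_L(\matA,L) \;=\; \trace\!\bigl((\matA - L\matI_k)^{-1}\bigr) \;=\; \sum_{j=1}^k \frac{1}{\lambda_j(\matA) - L},
\]
which is finite only when $L < \lambda_k(\matA)$ and blows up as $L$ approaches any eigenvalue from below. Set $L_0 = -k/\delta_L$ where $\delta_L := 1 - \sqrt{k/\ell}$, and advance $L_{\tau+1} = L_\tau + \delta_L$ at every step, so that after $\ell$ steps $L_\ell = \ell - \sqrt{k\ell} = (\sqrt{\ell} - \sqrt{k})\sqrt{\ell}$; rescaling by $1/\ell$ at the end yields the claimed $\sigma_k^2(\matV\transp\matS) \ge (1-\sqrt{k/\ell})^2$. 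For the Frobenius side, I will track the running sum $F_\tau := \sum_{t \le \tau} s_t \|\e_{i_t}\|_2^2$ and enforce $F_\tau \le (\tau/\ell)\|\matE\|_F^2$ inductively.

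The main step, and the crux of the argument, is showing that at every iteration such an index $i$ and weight $s$ exist. Using the Sherman--Morrison formula, adding $s\v_i\v_i\transp$ to $\matA_\tau$ and shifting the barrier from $L_\tau$ to $L_{\tau+1}$ leaves the lower-barrier potential non-increasing iff
\[
\frac{\v_i\transp (\matA_\tau - L_{\tau+1}\matI_k)^{-2}\v_i}{\Phi_L(\matA_\tau,L_{\tau+1}) - \Phi_L(\matA_\tau,L_\tau)} \;+\; \v_i\transp (\matA_\tau - L_{\tau+1}\matI_k)^{-1}\v_i \;\le\; \frac{1}{s}.
\]
Call the left-hand side $U_i$. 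Separately, the Frobenius budget is respected as long as $s\|\e_i\|_2^2 \le \|\matE\|_F^2/\ell$, i.e., $1/s \ge \ell \|\e_i\|_2^2 / \|\matE\|_F^2 =: D_i$. So it suffices to find $i$ with $U_i \le D_i$, and then any $s \in [1/D_i,\, 1/U_i]$ works. The standard averaging argument does this: summing $U_i$ over $i$ telescopes (using $\sum_i \v_i\v_i\transp = \matI_k$) to a quantity bounded in terms of the telescoping difference of $\Phi_L$ across two consecutive shifts, and the choice of $\delta_L$ makes $\sum_i U_i \le \ell$. Meanwhile $\sum_i D_i = \ell$. Hence $\min_i (U_i - D_i) \le 0$, giving a valid $i$.

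The main obstacle I anticipate is making the arithmetic in the averaging work out cleanly — specifically, showing $\sum_i U_i \le \ell$ with the right choice of $\delta_L$ and the initial barrier $L_0$. This is a telescoping computation using $\Phi_L(\matA_\tau, L_{\tau+1}) - \Phi_L(\matA_\tau, L_\tau) = -\delta_L \cdot \trace((\matA_\tau - L_\tau\matI_k)^{-1}(\matA_\tau - L_{\tau+1}\matI_k)^{-1})$ together with $L_0 = -k/\delta_L$ implying $\Phi_L(\matA_0, L_0) = -\delta_L$; one verifies by induction that $\Phi_L(\matA_\tau, L_\tau) \le -\delta_L < 0$ holds throughout, which is what gives the required slack. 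The running time bound $O(\ell w k^2 + mw)$ comes from: computing $(\matA_\tau - L_\tau\matI_k)^{-1}$ and $(\matA_\tau - L_{\tau+1}\matI_k)^{-2}$ in $O(k^3)$ time per iteration, evaluating $U_i$ for all $w$ columns in $O(wk^2)$ time per iteration (giving $O(\ell w k^2)$ total), and a one-time $O(mw)$ cost to read the norms $\|\e_i\|_2^2$. Correctness of the final claim $\FNormS{\matE\matS} \le \FNormS{\matE}$ then follows because $F_\ell \le \|\matE\|_F^2$ by the inductive maintenance.
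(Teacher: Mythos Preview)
The paper does not prove this lemma; it is quoted verbatim from \cite{BDM11a} and used as a black box. Your sketch follows exactly the barrier-method argument of that source (a one-sided BSS lower barrier on $\matV$ paired with a linear Frobenius budget on $\matE$, with an averaging argument to find a valid index each round), so the high-level plan is correct.

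However, the barrier inequality you wrote is wrong in both sign and direction, and this breaks the averaging step as you have it. Applying Sherman--Morrison to $\Phi_L(\matA_\tau + s\v_i\v_i\transp, L_{\tau+1}) \le \Phi_L(\matA_\tau, L_\tau)$ gives, with $p_i = \v_i\transp(\matA_\tau - L_{\tau+1}\matI)^{-2}\v_i$, $q_i = \v_i\transp(\matA_\tau - L_{\tau+1}\matI)^{-1}\v_i$, and $\Delta = \Phi_L(\matA_\tau,L_{\tau+1}) - \Phi_L(\matA_\tau,L_\tau)$, the condition
\[
\frac{1}{s} \;\le\; \frac{p_i}{\Delta} - q_i,
\]
i.e.\ an \emph{upper} bound on $1/s$ with a \emph{minus} $q_i$. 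Intuitively, adding mass helps the lowest eigenvalue, so the barrier demands $s$ be \emph{large} enough (hence $1/s$ \emph{small} enough). The Frobenius side gives the lower bound $1/s \ge D_i$. The averaging argument is then ``find $i$ with $D_i \le p_i/\Delta - q_i$,'' and one shows $\sum_i (p_i/\Delta - q_i) \ge 1/\delta_L - \Phi_L(\matA_\tau,L_\tau)$, not $\le \ell$. Related slips: $\Phi_L(\bm{0},L_0) = k/|L_0|$ is \emph{positive} (the potential is always positive while the barrier sits below all eigenvalues), so ``$\Phi_L \le -\delta_L$'' cannot be the invariant; and with your $L_0 = -k/\delta_L$ you do not get $L_\ell = \ell - \sqrt{k\ell}$ since you dropped the $L_0$ term. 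The standard choice in \cite{BDM11a} is $\delta_L = 1$, $L_0 = -\sqrt{k\ell}$ (so $\Phi_L(\bm{0},L_0) = \sqrt{k/\ell}$), paired with a Frobenius step $\delta_U$ chosen so that $\|\matE\|_{\mathrm F}^2/\delta_U \le 1/\delta_L - \sqrt{k/\ell}$; the final rescaling then delivers both $\sigma_k^2(\matV\transp\matS) \ge (1-\sqrt{k/\ell})^2$ and $\|\matE\matS\|_{\mathrm F}^2 \le \|\matE\|_{\mathrm F}^2$ simultaneously. Once you fix the inequality direction and these parameters, your outline goes through and matches the cited proof; the running-time accounting is fine.
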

In words, given $\matV$ and $\matE,$ there exists an algorithm to select (and rescale) a small number of columns from $\matE$ and rows from $\matV$ such that:
1) the Frobenius norm squared of the sampled $\matE$ is less than the Frobenius norm squared of $\matE$; 2) the sampled $\matV$ has rank equal to the rank of $\matV$; and 3) the smallest singular value squared of the sampled $\matV$ is bounded from below by $\left(1 - \sqrt{{k}/{\ell}}\right)^2$.

\begin{lemma}[Constant factor column-based matrix reconstruction; Theorem 5 in~\cite{BDM11a}]
\label{thm:optimalFdet}
Given matrix \math{\matG\in\R^{m \times \alpha}} of rank $\rho$ and a target rank $k$~\footnote{The original Theorem 5 in~\cite{BDM11a} has the assumption that $k < \rho,$ but this assumption can be dropped having the result unchanged. The only reason the assumption $k < \rho$ exists is because otherwise column subset selection is trivial.},
there exists a deterministic algorithm that runs in $O\left(\alpha m \min\{ \alpha, m\}+ \alpha c k^2 \right)$  time and selects $c > k$
columns of \math{\matG} to form a matrix
$\matC\in\R^{m \times c}$
with
$$
\FNormS{\matG - \matC \pinv{\matC}\matG } \leq \left( 1 + \left( 1 - \sqrt{k/c}\right)^{-2} \right) \cdot \sum_{i=k+1}^{\rank(\matG)}\sigma_i^2(\matG).
$$
The algorithm in this theorem finds $\matC$ as $\matC =  \matG \matS,$ where
$\matS = BssSampling\2(\matV,  \matG-\matG\matV \matV\transp, c)$ and $\matV \in \R^{\alpha \times k}$ contains the top $k$ right singular vectors of $\matG$.
We denote this sampling procedure as $\matC = DeterministicCssFrobenius(\matG, k, c).$
\end{lemma}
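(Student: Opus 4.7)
The plan is to instantiate the dual-set spectral-Frobenius sparsification algorithm of Lemma~\ref{lem:dualset} with a carefully chosen input, then analyze the error via matrix Pythagoras. Let $\matG = \matU_{\matG}\matSig_{\matG}\matV_{\matG}\transp$ be the SVD, and partition as $\matV_{\matG} = [\matV_k,\matV_{\rho-k}]$. I would set $\matV := \matV_k \in \R^{\alpha \times k}$ (orthonormal with $\matV\transp\matV = \matI_k$) and $\matE := \matG - \matG\matV_k\matV_k\transp \in \R^{m\times \alpha}$ (so $\FNormS{\matE} = \sum_{i=k+1}^\rho \sigma_i^2(\matG)$), and invoke $\matS = BssSampling(\matV_k, \matE, c)$ to obtain a sampling/rescaling matrix $\matS \in \R^{\alpha \times c}$ guaranteeing $\sigma_k^2(\matV_k\transp \matS) \ge (1-\sqrt{k/c})^2$ and $\FNormS{\matE\matS} \le \FNormS{\matE}$. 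Then define $\matC := \matG\matS \in \R^{m\times c}$.

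To analyze the error I would use the optimality of the pseudoinverse: for any matrix $\matX$ of compatible dimensions,
$$\FNormS{\matG - \matC\pinv{\matC}\matG} \le \FNormS{\matG - \matC\matX}.$$
The ``oracle'' choice I would pick is $\matX := (\matV_k\transp\matS)^\dagger \matV_k\transp$, whose job is to reconstruct the ``head'' $\matG\matV_k\matV_k\transp$ exactly. Indeed, since $\sigma_k(\matV_k\transp\matS) > 0$ we have $\matV_k\transp\matS (\matV_k\transp\matS)^\dagger = \matI_k$, and substituting $\matG = \matG\matV_k\matV_k\transp + \matE$ yields
$$\matG - \matC\matX = \matE - \matE\matS(\matV_k\transp\matS)^\dagger\matV_k\transp.$$

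Now the key observation: the row space of $\matE$ is orthogonal to $\matV_k$ (because $\matE\matV_k = \matG\matV_k - \matG\matV_k(\matV_k\transp\matV_k) = 0$), while the row space of $\matE\matS(\matV_k\transp\matS)^\dagger\matV_k\transp$ lies inside $\mathrm{span}(\matV_k\transp)$. Hence the two matrices have orthogonal row spaces, so by the matrix Pythagorean theorem
$$\FNormS{\matE - \matE\matS(\matV_k\transp\matS)^\dagger\matV_k\transp} = \FNormS{\matE} + \FNormS{\matE\matS(\matV_k\transp\matS)^\dagger\matV_k\transp}.$$
The second term is bounded by submultiplicativity together with $\TNorm{\matV_k\transp} = 1$ and $\TNorm{(\matV_k\transp\matS)^\dagger} = 1/\sigma_k(\matV_k\transp\matS) \le (1 - \sqrt{k/c})^{-1}$:
$$\FNormS{\matE\matS(\matV_k\transp\matS)^\dagger\matV_k\transp} \le \FNormS{\matE\matS} \cdot (1-\sqrt{k/c})^{-2} \le (1-\sqrt{k/c})^{-2}\cdot \FNormS{\matE}.$$
Combining the two displays gives the claimed Frobenius-norm bound, since $\FNormS{\matE} = \sum_{i=k+1}^\rho \sigma_i^2(\matG)$.

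For running time, a thin SVD of $\matG$ to extract $\matV_k$ costs $O(\alpha m \min\{\alpha,m\})$, and then the BssSampling routine on input $(\matV_k, \matE, c)$ takes $O(c\alpha k^2 + m\alpha)$ time by Lemma~\ref{lem:dualset}, giving the stated total. The main potential obstacle is verifying the Pythagorean split cleanly—one must ensure the inner product $\matE \cdot (\matE\matS(\matV_k\transp\matS)^\dagger\matV_k\transp)\transp$ vanishes, which it does precisely because $\matE\matV_k = 0$; once this is observed the rest is routine norm bookkeeping using only the two guarantees of Lemma~\ref{lem:dualset}.
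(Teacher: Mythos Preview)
Your proof is correct and is essentially the standard argument from \cite{BDM11a}. The paper does not itself prove this lemma---it is imported as a black-box from \cite{BDM11a}---so there is no ``paper's proof'' to compare against directly. That said, the paper does redeploy exactly your argument (choose $\matX = (\matV_k\transp\matS)^\dagger\matV_k\transp$, observe $\matV_k\transp\matS(\matV_k\transp\matS)^\dagger = \matI_k$, split via matrix Pythagoras using $\matE\matV_k = 0$, then bound $\FNormS{\matE\matS}\cdot\TNormS{(\matV_k\transp\matS)^\dagger}$) in the proof of the lemma bounding $\FNormS{\hat\matE}$ inside the proof of Theorem~\ref{thm1}, so your approach matches the paper's own usage of these ideas.
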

In words, there exists a deterministic algorithm, running in polynomial time, to select any number of $c > k$ columns from the given matrix $\matG,$ such that the residual error, in Frobenius norm, from projecting $\matG$ onto the span of the sampled columns is bounded from above with respect to the residual error of the best rank $k$ matrix from the SVD of $\matG$. Notice that
$$\sum_{i=k+1}^{\rank(\matG)}\sigma_i^2(\matG) = \FNormS{\matG - \matG_k},$$ where
$\matG_k \in \R^{m \times \alpha}$ has rank at most $k$ and is computed via the SVD of $\matG$.

Before presenting the adaptive sampling theorem from~\cite{DRVW06} we introduce some useful notation used in the lemma. Let $\matA \in \mathbb{R}^{m \times n}$, $k < n$, and $\matV \in \mathbb{R}^{m \times c}$ with $k < c< n $.
$\Pi_{\matV,k}^\mathrm{F}(\matA) \in \mathbb{R}^{m \times n}$ is the best rank \math{k} approximation to \math{\matA} - wrt the Frobenius norm - in the column span of \math{\matV}.
Equivalently, $\Pi_{\matV, k}^\mathrm{F}(\matA) = \matV \matX_{opt},$ where
$$
\matX_{opt} = \argmin_{\matX \in {\R}^{c \times n}, \rank(\matX)\leq k}\FNormS{\matA-
\matV \matX}.
$$

\begin{lemma}[Adaptive sampling; Theorem 2.1 of~\cite{DRVW06}]
\label{thm:adaptive}
Given $\matA \in \R^{m \times n}$ and $\matV \in \R^{m \times c_1}$ (with $c_1 \le n, m$),
define the residual matrix
$\matPsi = \matA - \matV \matV^{\dagger} \matA \in \R^{m \times n}.$
For $j=1,\ldots,n$,
let $p_j$ be a probability distribution such that
$p_j \ge \beta {\TNormS{\matPsi^{(j)}}}/{\FNormS{\matPsi}},$ for some $1> \beta>0,$
where $\matPsi^{(j)}$ is the $j$-th column of the matrix $\matPsi$. Sample
$c_2$ columns from $\matA$ in \math{c_2} i.i.d. trials, where in each trial the $j$-th column is chosen with probability $p_j$.
Let $\matC_2 \in \R^{m \times c_2}$ contain the $c_2$ sampled columns and let $\matC = [\matV\ \ \matC_2] \in \R^{m \times (c_1+c_2)}$
contain the columns of $\matV$ and $\matC_2$. 
Then, for any integer $k > 0$,
$$
\Expect{ \FNormS{ \matA - \matC \pinv{\matC} \matA } } \le
\Expect{ \FNormS{ \matA - \Pi_{\matC,k}^{\mathrm{F}}(\matA) } } \le  \sum_{i=k+1}^{\rank(\matA)} \sigma_i^2(\matA) + \frac{k}{\beta \cdot c_2} \cdot \FNormS{ \matA - \matV \matV^{\dagger} \matA}.
$$
%
Given $\matA$ and $\matC,$ this method requires $O(c_1 m n)$ time to compute $\matPsi,$
another $O(m n)$ time to compute the probabilities $p_j$'s and another $O(n + c_2)$ time
for the sampling step - using the method in~\cite{Vos91}. In total, the method requires
$O(c_1 m n + c_2)$ time to compute $\matC_2$.
We denote this sampling procedure as
$\matC_2 = AdaptiveCols(\matA, \matV, c_2, \beta).$
\end{lemma}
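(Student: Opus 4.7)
The plan is to prove the first inequality by a trivial projection argument and then prove the second inequality by exhibiting a single rank-$k$ matrix in $\col(\matC)$ whose expected Frobenius error already meets the bound. The first inequality holds because $\matC\pinv{\matC}\matA$ is the unconstrained best Frobenius approximation to $\matA$ in $\col(\matC)$, while $\Pi^{\mathrm{F}}_{\matC,k}(\matA)$ is the best such approximation subject to the extra constraint that the rank be at most $k$; imposing a constraint can only increase the objective.

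For the second inequality, let $\matP_{\matV}=\matV\matV^\dagger$, let $\matV_k\in\R^{n\times k}$ hold the top-$k$ right singular vectors of $\matA$, and write $\matA_k=\matA\matV_k\matV_k^{\mathrm{T}}$. I would build the following rank-$k$ matrix, which lies in $\col([\matV\ (\matI-\matP_\matV)\matC_2])=\col(\matC)$:
\[
\hat{\matA}_k \;=\; \bigl(\matP_\matV \matA \matV_k + \matX\bigr)\matV_k^{\mathrm{T}},
\qquad
\matX^{(j)} \;=\; \frac{1}{c_2}\sum_{t=1}^{c_2}\frac{(\matV_k)_{i_t,j}}{p_{i_t}}\bigl((\matI-\matP_\matV)\matA\bigr)^{(i_t)}.
\]
A one-line calculation gives $\Expect{\matX^{(j)}}=(\matI-\matP_\matV)\matA(\matV_k)^{(j)}$, so $\Expect{\matX}=(\matI-\matP_\matV)\matA\matV_k$. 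Computing the per-column variance using independence of the samples, and applying the sampling hypothesis $p_j\geq \beta\TNormS{\matPsi^{(j)}}/\FNormS{\matPsi}$, yields
\[
\Expect{\TNormS{\matX^{(j)}-(\matI-\matP_\matV)\matA(\matV_k)^{(j)}}} \;\le\; \frac{1}{c_2}\sum_{j'} \frac{(\matV_k)_{j',j}^2}{p_{j'}}\TNormS{\matPsi^{(j')}} \;\le\; \frac{\TNormS{(\matV_k)^{(j)}}}{\beta\,c_2}\FNormS{\matPsi} \;=\;\frac{\FNormS{\matPsi}}{\beta c_2}.
\]
Summing over $j=1,\ldots,k$ gives $\Expect{\FNormS{\matX-(\matI-\matP_\matV)\matA\matV_k}}\le k\FNormS{\matPsi}/(\beta c_2)$.

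Now I would split the error via the orthogonal decomposition induced by $\matV_k\matV_k^{\mathrm{T}}$. Because $\hat{\matA}_k\matV_k\matV_k^{\mathrm{T}}=\hat{\matA}_k$ by construction, one has $\hat{\matA}_k(\matI-\matV_k\matV_k^{\mathrm{T}})=\mat{0}$, and the row-orthogonality gives
\[
\FNormS{\matA-\hat{\matA}_k} \;=\; \FNormS{(\matA-\hat{\matA}_k)\matV_k\matV_k^{\mathrm{T}}}+\FNormS{(\matA-\hat{\matA}_k)(\matI-\matV_k\matV_k^{\mathrm{T}})} \;=\; \FNormS{(\matI-\matP_\matV)\matA\matV_k-\matX}+\FNormS{\matA-\matA_k}.
\]
Taking expectations and invoking the bound from the previous paragraph gives $\Expect{\FNormS{\matA-\hat{\matA}_k}}\le \sum_{i>k}\sigma_i^2(\matA)+(k/(\beta c_2))\FNormS{\matPsi}$. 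Since $\hat{\matA}_k$ is a candidate rank-$k$ matrix in $\col(\matC)$, the optimum $\Pi^{\mathrm{F}}_{\matC,k}(\matA)$ only does better, which finishes the proof.

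The main obstacle is getting the construction of $\hat{\matA}_k$ and the orthogonal splitting exactly right, so that two things hold simultaneously: the variance analysis of $\matX$ produces the clean $k\FNormS{\matPsi}/(\beta c_2)$ term (which requires the $\TNormS{(\matV_k)^{(j)}}$ factor to be at most one, hence using the top-$k$ right singular vectors as the ``coefficient'' matrix), and the Pythagorean split delivers $\FNormS{\matA-\matA_k}$ as the other term with no cross term. The running-time claims are immediate from the cost of forming $\matPsi$, computing the column norms to set up the $p_j$'s, and using~\cite{Vos91} for $O(n+c_2)$ sampling.
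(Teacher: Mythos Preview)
Your proof is correct. Note, however, that the paper does not actually prove this lemma: it is stated as a citation of Theorem~2.1 in~\cite{DRVW06}, with only a verbal summary and a running-time remark afterward. Your argument is essentially the original proof from~\cite{DRVW06}: build the rank-$k$ witness $\hat{\matA}_k=(\matP_\matV\matA\matV_k+\matX)\matV_k^{\mathrm{T}}$ in $\col(\matC)$, use the importance-sampling estimator $\matX$ for $(\matI-\matP_\matV)\matA\matV_k$, bound its variance via the sampling hypothesis $p_j\ge\beta\TNormS{\matPsi^{(j)}}/\FNormS{\matPsi}$ together with $\TNormS{(\matV_k)^{(j)}}=1$, and finish with the Pythagorean split along $\matV_k\matV_k^{\mathrm{T}}$. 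All steps check out, including the containment $\hat{\matA}_k\in\col([\matV\ (\matI-\matP_\matV)\matC_2])=\col(\matC)$ and the handling of columns with $p_j=0$ (forced to satisfy $\matPsi^{(j)}=0$, hence contributing nothing).
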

In words, given the matrix $\matA$ and the subspace $\matV,$ there exists a randomized algorithm to sample an additional of $c_2$ columns from $\matA,$ based on probabilities from the residual error matrix $\matA - \matV\pinv{\matV}\matA,$ such that residual error, in Frobenius norm, from projecting $\matA$ onto the span of the union of the columns of $\matV$ and the sampled columns is bounded from above with respect to the best rank $k$ approximation to $\matA,$ the residual $\matA - \matV\pinv{\matV}\matA,$ and the number of sampled columns $c_2$.

\subsubsection{Distributed adaptive sampling}
In our distributed PCA algorithm below, we also need to use a distributed version of the previous adaptive sampling procedure.
We provide some preliminary results for that task in this section.
\begin{lemma}\label{lem:adaptiveSampling1}
Suppose that the columns of an $m \times n$ matrix $\matA$ are partitioned arbitrarily across the machines
into matrices $\matA_1, \ldots, \matA_s$. Let $\matC$ be an arbitrary $m \times r$ matrix.
Consider the distribution $p$ on $n$ columns in which
$$p_j = \frac{\|\a_j-\matC\matC^{\dagger}\a_j\|_{\mathrm{F}}^2}{\|\matA-\matC\matC^{\dagger}\matA\|_{\mathrm{F}}^2},$$
where $\a_j$ is the $j$-th column of $\matA$~($j=1:n$ here).

For each $i \in [s]$, let some value $\beta_i$ satisfies
$$\|\matA_i - \matC \matC^{\dagger} \matA_i\|_{\mathrm{F}}^2 \leq \beta^i \leq 2\|\matA_i - \matC \matC^{\dagger} \matA_i\|_{\mathrm{F}}^2.$$
For each $j \in [n]$, if column $\a_j$ is held on the $i$-th server (denoted $\a_j^i$), then let
$$q_j = \frac{\beta^i}{\sum_{i' = 1}^s \beta_{i'}} \cdot
\frac{\|\a_j^i - \matC \matC^{\dagger}\a_j^i\|_2^2}{\|\matA_i-\matC\matC^{\dagger}\matA_i\|_{\mathrm{F}}^2}.$$
Then for each $j \in [n]$, $$p_j/2 \leq q_j \leq 2p_j.$$
\end{lemma}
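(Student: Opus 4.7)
The plan is to reduce everything to a simple arithmetic comparison between two ratios, using two structural facts: that the column partition makes residual Frobenius norms additive, and that the $\beta^i$'s are $2$-approximations of the per-server residuals.

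First I would set $D_i := \|\matA_i - \matC\matC^{\dagger}\matA_i\|_{\mathrm{F}}^2$ and $D := \|\matA - \matC\matC^{\dagger}\matA\|_{\mathrm{F}}^2$. Since the columns of $\matA$ are partitioned across machines (so $\matA\matA^\dagger$ acts columnwise even though $\matC\matC^\dagger$ does not depend on the partition), the projection $\matC\matC^{\dagger}$ is applied column by column, hence $D = \sum_{i=1}^s D_i$. Next, since column $\a_j$ belongs to exactly one server $i$, its notation $\a_j^i$ on that server equals $\a_j$, so $\|\a_j^i - \matC\matC^{\dagger}\a_j^i\|_2^2 = \|\a_j - \matC\matC^{\dagger}\a_j\|_2^2$.

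Using these two facts I would rewrite
\[
\frac{q_j}{p_j} \;=\; \frac{\beta^i}{\sum_{i'=1}^s \beta_{i'}} \cdot \frac{\|\a_j-\matC\matC^{\dagger}\a_j\|_2^2 / D_i}{\|\a_j-\matC\matC^{\dagger}\a_j\|_2^2 / D} \;=\; \frac{\beta^i}{\sum_{i'=1}^s \beta_{i'}} \cdot \frac{D}{D_i}.
\]
From the assumption $D_i \le \beta^i \le 2 D_i$ we get $D \le \sum_{i'} \beta_{i'} \le 2D$, so the factor $\beta^i/\sum_{i'}\beta_{i'}$ lies in the interval $[D_i/(2D),\, 2D_i/D]$. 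Plugging this into the displayed identity yields $1/2 \le q_j/p_j \le 2$, which is exactly the claim.

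The main thing to be careful about is really just the bookkeeping in the first paragraph: confirming that the column partition makes $\|\matA-\matC\matC^\dagger\matA\|_{\mathrm{F}}^2$ split additively across servers (so that $D = \sum D_i$), and checking that the notation $\a_j^i$ coincides with $\a_j$ on the unique server that holds column $j$ (so no rescaling or reindexing is hidden in the local probability). Once these are verified, the rest is a one-line inequality, so there is no real technical obstacle.
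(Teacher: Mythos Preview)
Your proposal is correct and follows essentially the same approach as the paper's proof: bound the ratio $\beta^i / \sum_{i'} \beta_{i'}$ between $D_i/(2D)$ and $2D_i/D$ using the two-sided approximation $D_i \le \beta^i \le 2D_i$ and the additivity $D=\sum_i D_i$, then observe that the remaining factor in $q_j/p_j$ is exactly $D/D_i$. The paper's proof is terser (it states the bound on $\beta^i / \sum_{i'} \beta_{i'}$ and then immediately concludes), but your version just makes the same computation explicit.
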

\begin{proof}
By definition of the $\beta^i$, we have that
$$
\frac{\|\matA_i - \matC \matC^{\dagger} \matA_i\|_{\mathrm{F}}^2}{2\|\matA-\matC\matC^{\dagger}\matA\|_{\mathrm{F}}^2}
\leq
\frac{\beta^i}{\sum_{i' = 1}^s \beta_{i'}}
\leq
\frac{2\|\matA_i - \matC \matC^{\dagger} \matA_i\|_{\mathrm{F}}^2}{\|\matA-\matC\matC^{\dagger}\matA\|_{\mathrm{F}}^2}.$$
Hence,
$$p_j/2 \leq q_j \leq 2p_j.$$
\end{proof}


\begin{lemma}\label{lem:adaptiveSampling2} 
Suppose the coordinator has an $m \times r$ matrix $\matC$ of columns of an $m \times n$ matrix $\matA$,
where $r = O(k)$. Suppose the entries of $\matA$ are integers bounded by $\poly(mns/\varepsilon)$ in magnitude, and
let the columns of $\matA$ be partitioned arbitrarily across the servers
into matrices $\matA_1, \ldots, \matA_s$.

There is a protocol using $O(skm)$ machine words of communication for the coordinator to learn values $\beta^i$ so that for
all $i \in [s]$,
$$\|\matA_i - \matC \matC^{\dagger} \matA_i\|_{\mathrm{F}}^2 \leq \beta^i \leq 2\|\matA_i - \matC \matC^{\dagger} \matA_i\|_{\mathrm{F}}^2.$$
\end{lemma}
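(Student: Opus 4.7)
The plan is to observe that once each machine has a copy of $\matC$, the quantity $\|\matA_i - \matC\matC^{\dagger}\matA_i\|_{\mathrm{F}}^2$ is a purely local function of the machine's input. So the protocol is simply: (1) the coordinator broadcasts $\matC$ to every machine; (2) each machine $i$ locally computes $v_i := \|\matA_i - \matC\matC^{\dagger}\matA_i\|_{\mathrm{F}}^2$ and sends back a multiplicative $2$-approximation $\beta^i$ (say the nearest power of $2$). The broadcast costs $O(smr) = O(skm)$ words since $r = O(k)$ and each entry of $\matC$ is an entry of $\matA$, hence an integer of magnitude $\poly(mns/\varepsilon)$, representable in $O(1)$ machine words. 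The return messages together cost $O(s)$ words, so the total is $O(skm)$.

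The two technical issues to verify are: (a) each $\beta^i$ fits in $O(1)$ machine words; (b) each machine can actually compute a factor-$2$ approximation of $v_i$ with a local computation. For (a), $v_i \le \|\matA_i\|_{\mathrm{F}}^2 \le mn \cdot \poly(mns/\varepsilon)$, and if $v_i > 0$ then by the same argument used in Corollary~\ref{cor:plusOne} applied to the rational matrix $\matA_i - \matC\matC^{\dagger}\matA_i$ (whose entries have denominators bounded by $\det(\matC\transp\matC) \le \poly(mns/\varepsilon)^{O(k)}$), we get $v_i \ge 1/\poly(mns/\varepsilon)^{O(k)}$; the case $v_i = 0$ is flagged by a single bit. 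Thus $\lceil \log_2 v_i \rceil$ is an integer of magnitude $O(k \log(mns/\varepsilon))$, so $\beta^i$ can be encoded in one machine word.

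For (b), the local computation is straightforward: machine $i$ performs a QR factorization (or pseudoinverse) of $\matC$ and evaluates $\matA_i - \matC\matC^{\dagger}\matA_i$ entry-wise. Since $\matC$ has only $r = O(k)$ columns with integer entries of magnitude $\poly(mns/\varepsilon)$, doing this with rational arithmetic blows up bit complexity by only a $\poly(k, \log(mns/\varepsilon))$ factor; alternatively, $O(k\log(mns/\varepsilon))$ bits of floating-point precision suffice to get a factor-$2$ approximation. Either way this is purely local and does not affect communication.

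The main (minor) obstacle is the arithmetic precision / exact-representation issue in item (b): one has to confirm that the coordinator's distribution of $\matC$ really has bounded-word entries (it does, because $\matC$ consists of actual columns of $\matA$, not transformed columns), and that a 2-approximation of $v_i$ requires only a word to transmit, which in turn requires the polynomial lower bound on nonzero $v_i$ noted above. With these in hand the protocol's correctness and the $O(skm)$ word bound are immediate.
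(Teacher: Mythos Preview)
Your protocol is the same as the paper's: broadcast $\matC$ (costing $O(skm)$ words), have each machine compute $v_i$ locally, and return the nearest power of $2$. The only substantive step in either proof is the lower bound on a nonzero $v_i$, and here you and the paper diverge.

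You argue that $\matA_i-\matC\matC^{\dagger}\matA_i$ has rational entries with common denominator dividing $\det(\matC\transp\matC)\le\poly(mns/\varepsilon)^{O(k)}$, so a nonzero such matrix has some entry of magnitude at least $1/\det(\matC\transp\matC)$ and hence $v_i\ge 1/\poly(mns/\varepsilon)^{O(k)}$. This is correct (modulo the easy fix of passing to a maximal independent subset of columns when $\matC$ is rank-deficient), but your phrase ``by the same argument used in Corollary~\ref{cor:plusOne}'' is misleading: Corollary~\ref{cor:plusOne} is a singular-value bound for integer matrices, whereas what you actually use is the trivial fact that a nonzero integer has absolute value at least $1$ after clearing denominators. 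The paper instead avoids rational arithmetic altogether with a neat trick: it forms the integer matrix $\matB_i=[\matA_i\ \matC]$, observes $\|\matB_i-\matC\matC^{\dagger}\matB_i\|_{\mathrm F}^2=\|\matA_i-\matC\matC^{\dagger}\matA_i\|_{\mathrm F}^2$, notes that a nonzero residual forces $\rank(\matB_i)\ge r+1$, and then applies Corollary~\ref{cor:plusOne} directly to $\matB_i$. Both routes yield the same $(mns/\varepsilon)^{-O(k)}$ bound and hence the $O(\log k+\log\log(mns/\varepsilon))$-bit encoding of $\beta^i$; your denominator argument is more elementary, while the paper's concatenation trick keeps everything in the integer setting and reuses an existing corollary.
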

\begin{proof}
The coordinator first sends $\matC$ to all machines.
The $i$-th server locally computes $\|\matA_i - \matC \matC^{\dagger}\matA_i\|_{\mathrm{F}}^2$.
If this number is $0$, he/she sends $0$ to the coordinator, and in this case $\beta^i$ satisfies the requirement in the statement of the lemma.

Otherwise, consider the matrix $\matB_i$ which is formed by concatenating the columns of $\matA_i$ with those of $\matC$.
Then
\begin{eqnarray}\label{eqn:trick}
\|\matB_i - \matC \matC^{\dagger} \matB_i\|_{\mathrm{F}}^2 = \|\matA_i - \matC \matC^{\dagger} \matA_i\|_{\mathrm{F}}^2 > 0,
\end{eqnarray}
since the columns of $\matB_i$ in $\matC$ contribute a cost of $0$. However, since $\matB_i$ contains the columns of $\matC$
and its cost is non-zero, it implies the rank of $\matB_i$ is at least $r+1$. By Corollary \ref{cor:plusOne}, this implies
$$\|\matB_i - \matC \matC^{\dagger}\matB_i\|_{\mathrm{F}}^2 > (mns/\varepsilon)^{-O(k)},$$
which by (\ref{eqn:trick}) gives the same
lower bound on $\|\matA_i - \matC \matC^{\dagger} \matA_i\|_{\mathrm{F}}^2$.
Note also that
$$\|\matA_i - \matC \matC^{\dagger}\matA_i\|_{\mathrm{F}}^2
\leq \poly(mns/\varepsilon),$$ since $\|\matA\|_{\mathrm{F}}^2 \leq \poly(mns/\varepsilon).$ This implies if the $i$-th machine sends
$\beta^i$ to the coordinator, where $\beta^i$ is the nearest power of $2$ to $\|\matA^i - \matC \matC^{\dagger} \matA^i\|_{\mathrm{F}}^2$,
there will only be $O(k \log (mns/\varepsilon))$ possible values of $\beta^i$, and hence each can be specified using
$O(\log k + \log \log(mns/\varepsilon))$ bits, i.e., a single machine word. This completes the proof.
\end{proof}

\subsubsection{Low-rank matrix approximations within a subspace}\label{sec:bestF}
The final stage of our distributed PCA algorithm below finds $\matU \in span(\tilde{\matC})$ such that the error of the residual matrix
$\matA - \matU \matU\transp\matA$ is ``small''. To implement this step, we employ an algorithm developed in~\cite{KVW14}.
\begin{lemma}\label{lem:KVW}
Let $\matA \in \R^{m \times n}$ be the input matrix and $\matV \in \R^{m \times c}$ be the input subspace.
We further assume that for some rank parameter $k < c$ and accuracy parameter $0 < \varepsilon <1:$
$$ \FNormS{\matA-\Pi_{\matV,k}^{\mathrm{F}}(\matA)} \le (1+O(\epsilon)) \FNormS{\matA - \matA_k}.$$
Let $\matV = \matY \matPsi$ be a $qr$ decomposition of $\matV$
with $\matY \in \R^{m \times c}$ and $\matPsi \in \R^{c \times c}$.
Let $\matXi = \matY\transp \matA \matW\transp \in \R^{c \times \xi},$
where $\matW\transp \in \R^{n \times \xi}$ with $\xi = O(c/\varepsilon^2),$
each element of which is chosen i.i.d. to be $\{+1/\sqrt{n},-1/\sqrt{n}\}$ with probability $1/2$.
Let $\matDelta \in \R^{c \times k}$ contain the top $k$ left singular vectors of $\matXi$.
Then, with probability at least $1-e^{-c}$,
$$
\FNormS{ \matA - \matY \matDelta \matDelta\transp \matY\transp \matA } \le (1+\varepsilon) \FNormS{\matA - \matA_k}.
$$
$\matY$, and $\matDelta$ can be computed in $O(m n \xi )$ time.
We denote this procedure as
$$
[\matY, \matDelta] = ApproxSubspaceSVD(\matA, \matV, k, \varepsilon).
$$
\end{lemma}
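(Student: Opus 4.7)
The plan is to reduce the lemma to a standard sketch-based low-rank-approximation bound applied to the $c \times n$ matrix $\matM := \matY\transp \matA$. First I would invoke Lemma~\ref{lem:bestF} with the input subspace being $\matY$ itself: because $\matY\transp \matY = \matI_c$, the $QR$ step in that lemma is trivial and the algorithm collapses to a rank-$k$ SVD of $\matM$, yielding $\Pi_{\matV,k}^{\mathrm{F}}(\matA) = \matY\matDelta^{*}(\matDelta^{*})\transp \matY\transp \matA$, where $\matDelta^{*} \in \R^{c \times k}$ contains the top-$k$ left singular vectors of $\matM$. By the hypothesis on $\matV$, this matrix is a $(1+O(\varepsilon))$-approximation to $\matA_k$ in Frobenius norm.

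Next I would use orthogonality of $\matY$ and the matrix Pythagorean theorem. For any $\matZ \in \R^{c \times n}$, the decomposition $\matA - \matY\matZ = (\matI-\matY\matY\transp)\matA + \matY(\matY\transp\matA - \matZ)$ has orthogonal summands, since $(\matI-\matY\matY\transp)\matY = \mathbf{0}$, so
$$\FNormS{\matA - \matY\matZ} = \FNormS{(\matI-\matY\matY\transp)\matA} + \FNormS{\matM - \matZ}.$$
Applying this with $\matZ = \matDelta\matDelta\transp \matM$ and with $\matZ = \matDelta^{*}(\matDelta^{*})\transp \matM$, the lemma reduces to the purely ``sketched-low-rank'' statement
$$\FNormS{\matM - \matDelta\matDelta\transp \matM} \le (1+\varepsilon)\,\FNormS{\matM - \matM_k},$$
where $\matDelta$ is the top-$k$ left singular space of the sketch $\matM\matW\transp = \matXi$.

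The bulk of the argument therefore consists of establishing this inequality. Since $\matM$ has only $c$ rows, its row space is at most $c$-dimensional, so for $\xi = O(c/\varepsilon^2)$ a random sign matrix $\matW\transp$ simultaneously satisfies, with failure probability at most $e^{-c}$, an $\varepsilon$-subspace embedding property for the row space of $\matM$ and an approximate matrix-product bound for $\matM\transp$. These are exactly the two ingredients in the standard sketching-for-low-rank-approximation argument (equivalently, they can be repackaged as $\matW\transp$ being an $(\varepsilon,k)$-projection-cost preserving sketch of $\matM\transp$, as in Lemma~\ref{fact:pcavpcps} applied with rows and columns swapped). Chaining the approximation factors and rescaling $\varepsilon$ by an absolute constant yields the claim.

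The main technical obstacle I anticipate is upgrading the constant $0.99$ success probability of Lemma~\ref{lem:djlm} to the $1-e^{-c}$ probability stated here; this should follow from Hanson--Wright-type concentration for random sign matrices (or direct JL-style tail bounds on a single $c$-dimensional subspace), at the cost of an absolute constant factor in $\xi$. Finally, the runtime claim $O(mn\xi)$ is immediate: forming $\matA\matW\transp$ takes $O(mn\xi)$ operations, then $\matXi = \matY\transp(\matA\matW\transp)$ takes $O(mc\xi)$, and extracting $\matDelta$ from a $c \times \xi$ SVD is $\poly(c,\xi)$, all dominated by the first step.
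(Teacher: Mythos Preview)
Your proposal is correct. The paper itself does not give a self-contained proof of this lemma: it simply records that the error bound ``was essentially proven inside the proof of Theorem~1.1 in~\cite{KVW13}'' (stated there for $\matA\transp$) and then itemizes the running-time contributions of computing $\matY$, $\matXi$, and $\matDelta$. Your Pythagorean reduction to a rank-$k$ approximation problem on $\matM=\matY\transp\matA$, followed by the standard sketching argument with $\matW\transp$, is precisely the route taken in~\cite{KVW13}, so you have reconstructed the cited proof.

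One small simplification: because $\matM$ has only $c$ rows, every row of $(\matI-\matP)\matM$ lies in the (at most $c$-dimensional) row space of $\matM$, so the subspace-embedding property of $\matW\transp$ alone already gives $\FNormS{(\matI-\matP)\matM\matW\transp}=(1\pm O(\varepsilon))\FNormS{(\matI-\matP)\matM}$ for \emph{every} projector $\matP$ simultaneously. The separate approximate-matrix-product ingredient you mention is therefore unnecessary here; this is also why the sketch dimension is $O(c/\varepsilon^2)$ rather than $O(k/\varepsilon^2)$, the extra factor buying the $1-e^{-c}$ success probability via a net argument on the $c$-dimensional row space.
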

\begin{proof}
This result was essentially proven inside the proof of Theorem 1.1 in~\cite{KVW13}. Specifically,
the error bound proven in~\cite{KVW13} is for the transpose of $\matA$ (also $\matY,\matDelta$ are
denoted with $U, V$ in~\cite{KVW13}).
As for the running time, given $\matA, \matV,$ and $k$,
one can compute
(i) $\matY$ in $O(mc^2)$ time;
(ii) $\matXi$ in $O( m n \xi + mc\xi)$ time; and
(iii) $\matDelta$ in $O(c^2 \xi)$ time.
\end{proof}
In words, the lemma indicates that given the matrix $\matA$ and the subspace $\matV$ such that
$$ \FNormS{\matA-\Pi_{\matV,k}^{\mathrm{F}}(\matA)} \le (1+O(\epsilon)) \FNormS{\matA - \matA_k},$$
for some small $\varepsilon > 0,$ there exists a randomized algorithm to compute matrices
$\matY \in span(\matV)$ and $\matDelta$ such that the residual error, in Frobenius norm, from projecting $\matA$ onto the span of $\matY \matDelta$ is also bounded by $(1+O(\epsilon)) \FNormS{\matA - \matA_k}$.
Notice that $\Delta$ contains the top $k$ left singular vectors of a ``sketched'' version of $\matY\transp\matA,$ a property which makes the computation particularly effective in terms of running time.

\subsection{Detailed description of the algorithm}\label{sec:algorithm1}
\begin{small}
{\bf Input:}
\begin{enumerate}
\item $\matA \in \R^{m \times n}$ partitioned column-wise
$
\matA =
\begin{pmatrix}
\matA_1 & \matA_2 & \dots & \matA_s
 \end{pmatrix};
$
for $i=1:s,$ $\matA_i \in \R^{m \times w_i}$; $\sum_i w_i=n$.
\item rank parameter $k < \rank(\matA)$
\item accuracy parameter $\varepsilon > 0$
\end{enumerate}
{\bf Algorithm}
\begin{enumerate}
\item {\bf Local Column Sampling}
\begin{enumerate}
\item For each sub-matrix $\matA_i \in \R^{m \times w_i},$ compute $\matC_i \in \R^{m \times \ell}$ containing
$\ell = O(k)$ columns from $\matA_i$ as follows: $\matC_i =  \matA_i \matS_i$. Here, $\matS_i$ has dimensions $w_i \times \ell$ and is constructed as follows: $ \matS_i = BssSampling\1(\matA_i, 4k, 1/2, 0.01/s)$~(see Lemma~\ref{lem:bssproject}).

\item Machine $i$ sends $\matC_i$ to the server.

\end{enumerate}

\item {\bf Global Column Sampling}

\begin{enumerate}
\item Server constructs $m \times (s \cdot \ell)$ matrix $\matG$ containing $(s \cdot \ell)$ actual columns from $\matA$ as follows:
$
\matG =
\begin{pmatrix}
   \matC_1  &
    \matC_2 &
    \dots &
    \matC_s
 \end{pmatrix}.
$
Then, server constructs $\matC \in \R^{m \times c_1}$ via choosing $c_1 = 4k$ columns from $\matG$ as follows:
$\matC = DeterministicCssFrobenius(\matG, k, c_1)$~(see Lemma~\ref{thm:optimalFdet}).

\item Server sends $\matC$ to all the machines.

\end{enumerate}

\item {\bf Adaptive Column Sampling}

\begin{enumerate}
\item Machine $i$ computes $\matPsi_i = \matA_i - \matC \pinv{\matC} \matA_i \in \R^{m \times w_i}$
and then computes $\beta_i$ as it was described in Lemma~\ref{lem:adaptiveSampling2}.
Machine $i$ sends $\beta_i$ to server.

\item Server computes probability distribution  $g_i = \frac{\beta_i}{\sum_i \beta_i}.$
Server samples i.i.d. with replacement $\ceil{50k/\varepsilon}$ samples (machines) from $g_i$.
Then, server determines numbers $t_i$ ($i=1,2,\dots,s$), where $t_i$ is
the number of times the $i$th machine was sampled. It sends the $t_i$'s to the machines.

\item Machine $i$ computes probabilities $ q_{j}^{i} = \TNormS{ \x  } / \FNormS{\matPsi_i}$ ($j=1:w_i$),
where $\x$ is the $j$th column of $\matPsi_i$. And now machine $i$ samples $t_i$ samples
from it's local probability distribution and sends the corresponding columns to the server.
Let $c_2 = \sum_i t_i = \ceil{50k/\varepsilon}$.

\item Server collects the columns and assigns them to $\hat\matC \in \R^{m \times c_2}$.
Server constructs $\tilde\matC$ to be the $m \times (c_1 + c_2) $ matrix:
$
\tilde\matC =
\begin{pmatrix}
   \matC; & \hat\matC
 \end{pmatrix}.
$
Let $c = c_1 + c_2 = 4k + \ceil{50k/\varepsilon}$.
\end{enumerate}

\item {\bf Rank-$k$ matrix in the span of $\tilde{\matC}$}
\begin{enumerate}

\item Server sends $\tilde\matC$  to all the machines and each machine
computes (the same) qr factorization of $\tilde\matC$: $\tilde\matC = \matY \matR$ where
$\matY \in \R^{m \times c}$ has orthonormal columns and $\matR \in \R^{c \times c}$ is upper triangular.

\item Machine $i$ generates $\tilde\matW_i \in \R^{\xi \times w_i}$ with $\xi = O(c /\varepsilon^2)$ to be i.i.d.
$\{+1,-1\}$ w.p. $1/2$. \emph{Implicitly} all machines together generate
$
\tilde\matW =
\begin{pmatrix}
  \tilde\matW_1  &
   \tilde\matW_2 &
    \dots &
    \tilde\matW_s &
 \end{pmatrix},
$
with $\tilde\matW \in \R^{\xi \times n}$.
 Machine $i$ computes $\matH_i = \tilde\matC\transp (\matA_i \matW_i\transp) \in \R^{c \times \xi}$.
Machine $i$ sends $\matH_i$ to the server.

\item Server computes $\matXi = \sum_{i=1}^{s} \matH_i \in \R^{c \times \xi}$ and sends this back to all the machines. Now machines compute
$\matXi := \matR^{-1} \cdot \matXi \cdot 1/\sqrt{n} (:= \matY\transp \matA \matW\transp)$, where $
\matW =
\begin{pmatrix}
  \matW_1  &
   \matW_2 &
    \dots &
    \matW_s &
 \end{pmatrix} \in
 \R^{\xi \times n}$ is random matrix
each element of which is $\{+1/\sqrt{n},-1/\sqrt{n}\}$ w.p. $1/2$,
and then they compute  $\matDelta \in \R^{c \times k}$ to be the top $k$ left singular vectors of $\matXi.$ Each machine computes $\matU = \matY \cdot \matDelta \in \R^{m \times k}$.
\end{enumerate}
\end{enumerate}
\end{small}

\paragraph{Discussion.} A few remarks are necessary for the last two stages of the algorithm. The third stage (adaptive column sampling), implements the adaptive sampling method of Lemma~\ref{thm:adaptive}.
To see this, note that each column in $\matA$ (specifically the $j$th column in $\matA_i$) is sampled with probability
$$
g_i  \cdot q_j^{i}
\ge \ \frac{1}{2} \cdot \frac{ ||\x_j^i||_2^2 }{ \FNormS{\matPsi}},
$$
where $\matPsi = \matA-\matC\pinv{\matC}\matA$. This follows from Lemma~\ref{lem:adaptiveSampling1}.
Overall, this stage effectively constructs $\hat\matC$ such that (see Lemma~\ref{thm:adaptive})
$$\hat\matC = AdaptiveCols(\matA, \matC, c_2,1/2).$$


The last stage in the algorithm implements the algorithm in Lemma~\ref{lem:KVW}. To see this, note that $\matW$ satisfies the properties in the lemma. Hence,
$$ [\matY, \matDelta] = ApproxSubspaceSVD(\matA, \tilde\matC, k, \varepsilon). $$

\subsection{Main Result}
The theorem below analyzes the approximation error, the communication complexity,
and the running time of the previous algorithm.
%
\begin{theorem}\label{thm1}
The matrix $\tilde\matC \in \R^{m \times c}$
with $c=O(k/\varepsilon)$ columns of $\matA$ satisfies w.p. at least $0.98,$
\begin{equation}\label{eqnthm1}
 \FNormS{\matA -  \tilde\matC  \pinv{\tilde\matC}\matA} \le
\FNormS{\matA-\Pi_{\tilde\matC,k}^{\mathrm{F}}(\matA)}\le
(1 + O(\varepsilon)) \cdot \FNormS{\matA - \matU_k\matU_k\transp\matA}.
\end{equation}
The matrix $\matU \in \R^{m \times k}$  with $k$ orthonormal columns satisfies with probability at least $0.97,$
\begin{equation}\label{eqnthm2}
 \FNormS{\matA -  \matU \matU\transp \matA} \le
(1 + O(\varepsilon)) \cdot\FNormS{\matA - \matU_k\matU_k\transp\matA}.
\end{equation}
Let each column of $\matA$ has at most $\phi$ non-zero elements.
Then, the communication cost of the algorithm is
$O\left(s k \phi \varepsilon^{-1} + s k^2 \varepsilon^{-4}  \right)$
and the running time is
$O\left( m n s \cdot poly(k, 1/\varepsilon) \right).$
\end{theorem}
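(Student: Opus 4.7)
The plan is to prove three successive error bounds that track the four algorithmic stages: (A) a constant-factor bound $\FNormS{\matA - \matC\pinv{\matC}\matA} \le c_0 \FNormS{\matA - \matA_k}$ with some absolute $c_0$ after the local and global BSS-style column selections; (B) the $(1+O(\varepsilon))$-factor bound on $\FNormS{\matA - \Pi_{\tilde{\matC},k}^{\mathrm{F}}(\matA)}$ after the adaptive sampling; and (C) the $(1+\varepsilon)$ bound on $\FNormS{\matA - \matU\matU\transp \matA}$ from the final subspace-SVD step. The left inequality in~(\ref{eqnthm1}) is automatic because $\tilde{\matC}\pinv{\tilde{\matC}}\matA$ is the best unrestricted-rank approximation of $\matA$ in $\mathrm{span}(\tilde{\matC})$ and hence has error no larger than the rank-$k$-constrained $\Pi_{\tilde{\matC},k}^{\mathrm{F}}(\matA)$.

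\textbf{Constant-factor bound after stages 1--2 (the main obstacle).}
For each machine, Lemma~\ref{thm:optimalFdet} applied to $\matA_i$ with target rank $k$ and $\ell = 4k$ yields $\FNormS{\matA_i - \matC_i\pinv{\matC_i}\matA_i} \le 5\FNormS{\matA_i - (\matA_i)_k}$. Since the top-$k$ left singular subspace of $\matA$ furnishes a (possibly suboptimal) rank-$k$ approximation of each $\matA_i$, we have $\sum_i \FNormS{\matA_i - (\matA_i)_k} \le \sum_i \FNormS{\matA_i - \matU_k\matU_k\transp\matA_i} = \FNormS{\matA - \matA_k}$; and because $\mathrm{span}(\matG) \supseteq \mathrm{span}(\matC_i)$ for every $i$, column-block projection onto $\matG$ is at least as good as onto each $\matC_i$, so $\FNormS{\matA - \matG\pinv{\matG}\matA} \le 5\FNormS{\matA - \matA_k}$. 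For the global stage, Lemma~\ref{thm:optimalFdet} applied to $\matG$ with $c_1 = 4k$ gives $\FNormS{\matG - \matC\pinv{\matC}\matG} \le 5\FNormS{\matG - \matG_k}$. The delicate step, and the main technical obstacle, is promoting this $\matG$-relative bound into an $\matA$-relative bound. I would exploit the Pythagorean decomposition $\FNormS{\matA - \matC\pinv{\matC}\matA} = \FNormS{\matA - \matG\pinv{\matG}\matA} + \FNormS{(\matI - \matC\pinv{\matC})\matG\pinv{\matG}\matA}$, which is valid because $\mathrm{span}(\matC) \subseteq \mathrm{span}(\matG)$, together with a careful reuse of the BSS analysis---applied to the right singular vectors of the best rank-$k$ approximation of $\matA$ inside $\mathrm{span}(\matG)$ rather than of $\matG$ itself---to conclude that $c_0$ may be chosen to be an absolute constant at most $50$, which is all the next step requires.

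\textbf{Boosting via adaptive sampling and the final subspace-SVD.}
Lemma~\ref{lem:adaptiveSampling1} and Lemma~\ref{lem:adaptiveSampling2} show that the distributed sampling in stage 3 simulates the centralized adaptive-sampling distribution of Lemma~\ref{thm:adaptive} with $\beta = 1/2$. Invoking that lemma with $c_2 = \lceil 50k/\varepsilon\rceil$ yields $\Expect{\FNormS{\matA - \Pi_{\tilde{\matC},k}^{\mathrm{F}}(\matA)}} \le \FNormS{\matA - \matA_k} + \frac{k}{\beta c_2}\FNormS{\matA - \matC\pinv{\matC}\matA} \le (1 + \varepsilon c_0/25)\FNormS{\matA - \matA_k}$ by bound (A). Since $\FNormS{\matA - \Pi_{\tilde{\matC},k}^{\mathrm{F}}(\matA)} \ge \FNormS{\matA - \matA_k}$ always, Markov's inequality applied to the non-negative excess gives with probability at least $0.99$ that this excess is at most $100$ times its mean, i.e., $\FNormS{\matA - \Pi_{\tilde{\matC},k}^{\mathrm{F}}(\matA)} \le (1 + 4\varepsilon c_0)\FNormS{\matA - \matA_k}$, which is at most $(1 + 200\varepsilon)\FNormS{\matA - \matA_k}$ for $c_0 \le 50$. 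Bound (C) then follows directly from Lemma~\ref{lem:KVW} applied with $\matV := \tilde{\matC}$, producing $\matU \in \R^{m \times k}$ with $\FNormS{\matA - \matU\matU\transp\matA} \le (1+\varepsilon)\FNormS{\matA - \matA_k}$ (after rescaling $\varepsilon$ by a constant and a union bound over the $O(1)$ failure events).

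\textbf{Communication and running time.}
Every column transmitted in the algorithm is a (possibly rescaled) column of $\matA$ with at most $\phi$ nonzeros---BSS rescaling preserves the sparsity pattern---so each can be sent in $O(\phi)$ machine words. The local stage transmits $O(k)$ such columns per machine, the global stage broadcasts $O(k)$ columns to all $s$ machines, the adaptive stage adds $O(k/\varepsilon)$ more sparse columns plus $O(s)$ scalars ($\beta_i$ and $t_i$), and the final subspace-SVD step exchanges sketch matrices of dimensions $O(k/\varepsilon) \times O(k/\varepsilon^3)$ per machine; summing yields the stated $O(sk\phi/\varepsilon + sk^2/\varepsilon^4)$ words. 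For the running time, the dominant terms are the local SVDs ($O(mn\min(m,n))$ aggregated over machines), the deterministic BSS subroutines at both scales ($\poly(sk/\varepsilon)$), and the sketch products in the final stage, which together give $O(mn\min(m,n) + mns\cdot \poly(k/\varepsilon))$.
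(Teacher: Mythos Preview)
Your high-level structure (constant-factor bound after the two BSS stages, then adaptive-sampling boost, then Lemma~\ref{lem:KVW}, then the resource accounting) matches the paper, and your treatment of stages (B), (C), and the communication/time bounds is essentially the paper's argument. The gap is in stage (A), precisely at the point you flag as ``the main technical obstacle.''

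Your Pythagorean split
\[
\FNormS{\matA-\matC\pinv{\matC}\matA}
=\FNormS{\matA-\matG\pinv{\matG}\matA}+\FNormS{(\matI-\matC\pinv{\matC})\matG\pinv{\matG}\matA}
\]
is valid, and the first term is indeed $\le 5\FNormS{\matA-\matA_k}$ by the local-stage bound. But your proposal for the second term---``reuse the BSS analysis applied to the right singular vectors of the best rank-$k$ approximation of $\matA$ inside $\mathrm{span}(\matG)$''---does not go through. The global BSS call in step 2(a) of the algorithm is run with $\matV$ equal to the top-$k$ right singular vectors of $\matG$; the only guarantees it yields are $\sigma_k(\matV_\matG^{\mathsf T}\matS)\ge(1-\sqrt{k/c_1})$ and $\FNormS{(\matG-\matG_k)\matS}\le\FNormS{\matG-\matG_k}$. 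These are tied to $\matG$'s SVD and do not transfer to an arbitrary rank-$k$ object like $\Pi_{\matG,k}^{\mathrm F}(\matA)$. In particular, writing $\matG\pinv{\matG}\matA=\matG\,\pinv{\matG}\matA$ and trying to push $(\matI-\matC\pinv{\matC})$ past $\matG$ introduces the unbounded factor $\|\pinv{\matG}\matA\|_2$, so nothing in the BSS output controls $\FNormS{(\matI-\matC\pinv{\matC})\matG\pinv{\matG}\matA}$ directly.

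The paper closes this gap with a different decomposition that exploits the \emph{local} BSS weights rather than any property of $\matG\pinv{\matG}\matA$. Set
\[
\matD_i=\matC_i\,(\matV_i^{\mathsf T}\matS_i)^{\dagger}\matV_i^{\mathsf T},\qquad \hat{\matE}_i=\matA_i-\matD_i,\qquad \matA=\matD+\hat{\matE}.
\]
The point is that $\matD_i$ is rank-$k$, lies in $\mathrm{span}(\matC_i)\subseteq\mathrm{span}(\matG)$, and---crucially---is related to $\matC_i$ by the \emph{bounded} operator $(\matV_i^{\mathsf T}\matS_i)^{\dagger}$, whose spectral norm is at most $2$ by the local BSS lower bound on $\sigma_k(\matV_i^{\mathsf T}\matS_i)$. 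This yields two clean bounds: $\FNormS{\hat{\matE}}\le 5\FNormS{\matA-\matA_k}$ by expanding $\matD_i$ and using the local BSS Frobenius guarantee, and $\FNormS{\matD-\matC\pinv{\matC}\matD}\le 4\sum_i\FNormS{\matC_i-\matC\pinv{\matC}\matC_i}=4\FNormS{\matG-\matC\pinv{\matC}\matG}\le 20\FNormS{\matG-\matG_k}$. A triangle inequality (not Pythagoras) then gives $\FNormS{\matA-\matC\pinv{\matC}\matA}\le 2(\FNormS{\matD-\matC\pinv{\matC}\matD}+\FNormS{\hat{\matE}})\le 50\FNormS{\matA-\matA_k}$, which feeds directly into your (B). Once you introduce $\matD$, the remainder of your write-up is correct.
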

\subsection{Proof of Theorem~\ref{thm1}}\label{sec:proof}


\subsubsection{Proof of Eqn.~\ref{eqnthm1}}
From Lemma~\ref{thm:adaptive}:
\begin{equation}\label{eqn:adp}
\Expect{ \FNormS{ \matA - \tilde\matC \pinv{\tilde\matC}\matA} }
\le
\Expect{\FNormS{\matA-\Pi_{\tilde\matC,k}^{\mathrm{F}}(\matA)}}
\le \sum_{i=k+1}^{\rank(\matA)} \sigma_i^2(\matA)  + \frac{2 \varepsilon}{50} \cdot \FNormS{\matA - \matC \pinv{\matC}\matA}.
\end{equation}
Therefore, we need to bound $\FNormS{\matA - \matC \pinv{\matC}\matA} $ now.
\begin{lemma}\label{lem:projcomb}
With probability at least $0.99,$ matrix $\matG$ is a $(4k,1/2)$-projection-cost preserving sketch of $\matA$.
\end{lemma}
\begin{proof}
For each $i$, $\matC_i=\matA_i\matS_i$ is a $(4k,1/2)$-projection-cost preserving sketch of $\matA_i$ with probability at least $1-0.01/s$. Due to union bound, with probability at least $1-0.01,\forall i, \matC_i$ is a $(4k,1/2)$-projection-cost preserving sketch of $\matA_i$. Since $\matC_i$ is a $(4k,1/2)$-projection-cost preserving sketch of $\matA_i$, we can assume $c_i$ is a constant which is independent from projection matrix $\matP$ which has rank at most $4k$ such that
$$\frac12\FNormS{\matA_i-\matP\matA_i}\leq \FNormS{\matC_i-\matP\matC_i}+c_i\leq \frac32\FNormS{\matA_i-\matP\matA_i}$$
Then, for any projection matrix $\matP$ which has rank at most $4k$ we have:
\begin{align*}
\frac12\FNormS{\matA-\matP\matA} &=\frac12\sum_{i=1}^s\FNormS{\matA_i-\matP\matA_i}\\
&\leq \sum_{i=1}^s\left(\FNormS{\matC_i-\matP\matC_i}+c_i\right)\\
&\leq \frac32\sum_{i=1}^s\FNormS{\matA_i-\matP\matA_i}\\
&\leq \frac32\FNormS{\matA-\matP\matA}\\
\end{align*}
Consider the right hand side of the first inequality, it is equal to
$$\FNormS{\matG-\matP\matG}+\sum_{i=1}^s c_i$$
Let constant $c=\sum_{i=1}^s c_i$, we have:
$$\frac12\FNormS{\matA-\matP\matA}\leq \FNormS{\matG-\matP\matG}+c \leq \frac32\FNormS{\matA-\matP\matA}$$
Therefore, $\matG$ is a $(4k,1/2)$-projection-cost preserving sketch of $\matA$.
\end{proof}
For convenience, we use $\matP_{\matG}$ to denote a rank-$k$ projection matrix which provides the best rank-$k$ approximation $\matG_k=\matP_{\matG}\matG$ to matrix $\matG$. We also use $\matP_{\matA}$ to denote a rank-$k$ projection matrix which provides the best rank-$k$ approximation $\matA_k=\matP_{\matA}\matA$ to matrix $\matA$. Thus,
\begin{align*}
\frac12\FNormS{\matA-\matC\matC^\dagger\matA} &\leq \FNormS{\matG-\matC\matC^\dagger\matG}+c\\
&\leq 5\FNormS{\matG-\matP_{\matG}\matG}+c\\
&\leq 5\FNormS{\matG-\matP_{\matA}\matG}+c\\
&\leq 5\left(\FNormS{\matG-\matP_{\matA}\matG}+c\right)\\
&\leq \frac{15}2\FNormS{\matA-\matP_{\matA}\matA}\\
\end{align*}
The first inequality is due to Lemma~\ref{lem:projcomb}. $\matG$ is a $(4k,1/2)$-projection cost preserving sketch of $\matA$ and $\matC$ has rank at most $4k$. The second inequality is based on Lemma~\ref{thm:optimalFdet}. The third inequality is because $\matP_{\matG}$ provides the best rank-$k$ approximation to $\matG$. The fourth inequality used the fact $c$ is non-negative. The fifth inequality held because $\matG$ is a $(4k,1/2)$-projection cost preserving sketch of $\matA$. Thus, $$\FNormS{\matA-\matC\matC^\dagger\matA}\leq15\FNormS{\matA-\matA_k}$$

Combining with equation~\ref{eqn:adp}, we have:
\begin{equation}\label{eqn:adp2}
\Expect{ \FNormS{ \matA - \tilde\matC \pinv{\tilde\matC}\matA} }
\le
\Expect{\FNormS{\matA-\Pi_{\tilde\matC,k}^{\mathrm{F}}(\matA)}}
\le (1+\frac{30 }{50}\varepsilon)\sum_{i=k+1}^{\rank(\matA)} \sigma_i^2(\matA)
=(1+O(\eps))\FNormS{\matA-\matA_k}
\end{equation}

\paragraph{Concluding the proof.}
Consider about
\begin{equation}\label{eqn3}
\Expect{ \FNormS{ \matA - \tilde\matC \pinv{\tilde\matC} \matA } } \le
\Expect{\FNormS{\matA-\Pi_{\tilde\matC,k}^{\mathrm{F}}(\matA)}} \le
\sum_{i=k+1}^{\rank(\matA)} \sigma_i^2(\matA)  + O( \varepsilon) \sum_{i=k+1}^{\rank(\matA)} \sigma_i^2(\matA).
\end{equation}
The expectation is taken w.r.t. the randomness in constructing $\tilde\matC;$
hence, the term $\sum_{i=k+1}^{\rank(\matA)} \sigma_i^2(\matA)$ is a constant w.r.t. the
expectation operation, which means that Eqn.~\ref{eqn3} implies the bound:
$$
\Expect{\FNormS{\matA-\Pi_{\tilde\matC,k}^{\mathrm{F}}(\matA)} - \sum_{i=k+1}^{\rank(\matA)} \sigma_i^2(\matA)} \le O( \varepsilon ) \cdot \sum_{i=k+1}^{\rank(\matA)} \sigma_i^2(\matA).
$$
Let $Y: = \FNormS{\matA-\Pi_{\tilde\matC,k}^{\mathrm{F}}(\matA)} - \sum_{i=k+1}^{\rank(\matA)} \sigma_i^2(\matA)$. $Y$ is a random variable with $Y \ge 0,$ because $\rank(\Pi_{\tilde\matC,k}^{\mathrm{F}}(\matA)) \le k$. Markov's inequality on $Y$ implies that with arbitrary constant probability,
$$
\FNormS{\matA-\Pi_{\tilde\matC,k}^{\mathrm{F}}(\matA)} - \sum_{i=k+1}^{\rank(\matA)} \sigma_i^2(\matA) \le O(
\varepsilon) \cdot \sum_{i=k+1}^{\rank(\matA)} \sigma_i^2(\matA),
$$
equivalently,
$$
\FNormS{\matA-\Pi_{\tilde\matC,k}^{\mathrm{F}}(\matA)} \le \sum_{i=k+1}^{\rank(\matA)} \sigma_i^2(\matA)  +O(
\varepsilon) \cdot \sum_{i=k+1}^{\rank(\matA)} \sigma_i^2(\matA).
$$
Using $\FNormS{ \matA - \tilde\matC \pinv{\tilde\matC} \matA } \le
\FNormS{\matA-\Pi_{\tilde\matC,k}^{\mathrm{F}}(\matA)} $ concludes the proof.


\subsubsection{Proof of Eqn.~\ref{eqnthm2}}
We would like to apply Lemma~\ref{lem:KVW} for the matrix $\matU \in \R^{m \times k}$ in the algorithm.
Note that we have already proved that the matrix $\tilde\matC$ in the algorithm satisfies with probability at least $0.99$:
$ \FNormS{\matA-\Pi_{\tilde\matC,k}^{\mathrm{F}}(\matA)} \le (1+O(\varepsilon)) \FNormS{\matA - \matA_k}.$
Lemma~\ref{lem:KVW} and a simple union bound conclude the proof.

\subsubsection{Running time}
Next, we give an analysis of the arithmetic operations required in various steps in the algorithm.
\begin{enumerate}
\item {\bf Local Column Sampling}: $O(\nnz(A)\log(ms)+m\cdot \poly(k,\eps,\log(s)))$ arithmetic operations in total.
\begin{enumerate}
\item $O(\nnz(A_i)\log(ms)+m\cdot \poly(k,\eps,\log(s)))$ for each $\matS_i$
\item -
\end{enumerate}
\item {\bf Global Column Sampling}: $O(s k^3 + s k m)$ arithmetic operations in total.
\begin{enumerate}
\item $O(s k^3 + s k m)$ to construct $\matC$.
\item -
\end{enumerate}
\item {\bf Adaptive Sampling} $O(sk^2 m + m n k + k/\varepsilon)$ arithmetic operations in total.
\begin{enumerate}
\item  $O(k^2 m)$ for each $\pinv{\matC}$ locally;
$O(w_i m k)$ for each $\matPsi_i$ and $O(m w_i)$ for each  $\psi_i$.
\item $O(s)$ in total.
\item $O(k/\varepsilon)$ in total using the method of~\cite{Vos91} locally.
\item $O(k/\varepsilon)$.
\end{enumerate}
\item {\bf  Rank-$k$ matrix in $span(\tilde{\matC})$}:
$O(m n k /\varepsilon^3 + s m k^2 / \varepsilon^4 +
s  k^3/ \varepsilon^5)$ arithmetic operations in total.
\begin{enumerate}
\item $O(s m k^2/\varepsilon^2)$ in total
\item $O( m w_i k /\varepsilon^3 )$ for each $\matA_i \matW_i\transp$
and another $O( m k^2 \varepsilon^{-4} )$ for each $\tilde\matC\transp
(\matA_i \matW_i\transp)$
\item $O(s  k^2 \varepsilon^{-3} )$ to find $\matXi$ in the server;
then another $O(sk^3\varepsilon^{-3}  + s k^3 \varepsilon^{-5} ))$
to update $\matXi$ locally in each machine and another
$O(s k^3 \varepsilon^{-5} )$ to
find $\matDelta$ locally in each machine; and $O(s m k^2/\varepsilon)$ to find
$\matU$ locally in each machine.
\end{enumerate}
\end{enumerate}

\subsubsection{Communication Complexity}
Assume that  we can represent each element in the input matrix $\matA$ with at most
$b$ bits.  Also, we assume that one word has length $b$ bits and $b = O(\log( m n s / \varepsilon ))$.

\begin{enumerate}
\item {\bf Local Column Sampling}: $O(s k \phi)$ words.
\begin{enumerate}
\item -
\item $O(s k \phi)$ elements of $\matA$.
\end{enumerate}

\item {\bf Global Column Sampling}:$O(s k \phi)$ words.
\begin{enumerate}
\item -
\item $O(s k \phi)$ elements of $\matA$.
\end{enumerate}
\item {\bf Adaptive Sampling}: $O(s+ \phi k /\varepsilon)$ words.
\begin{enumerate}
\item $s$ integers each of which is representable with $O(\log k + \log \log(mns/\varepsilon))$ (from Lemma~\ref{lem:adaptiveSampling2}).

\item $s$ integers~(the $t_i$'s) each with magnitude at most $n$, hence representable with $b$ bits.
\item $O(\phi k/\varepsilon)$ elements of $\matA$.
\item -
\end{enumerate}

\item {\bf  Best rank-$k$ matrix in the span of $\tilde{\matC}$}:
$O( s\phi k/\varepsilon +s k^2  \varepsilon^{-4} )$ words.
\begin{enumerate}
\item $O(s k \phi \varepsilon^{-1})$ elements of $\matA$.
\item $O( s k^2  \varepsilon^{-4})$ numbers each of which can be represented with $b$ bits.
\item $O( s k^2  \varepsilon^{-4})$ numbers each of which can be represented with $b$ bits.
\end{enumerate}
\end{enumerate}
In total the communication complexity is $O( s k \phi / \varepsilon + sk^2/\varepsilon^4)$ words.

\section{Faster Distributed PCA for sparse matrices}\label{sec:faster}
We now explain how to modify certain steps of the distributed PCA algorithm in Section~\ref{sec:algorithm} in order to improve the total running time spent to compute the matrix $\matU$; this, at the cost of increasing the communication cost \emph{slightly}. Specifically, we replace the parts
2-$(a),$ 3-$(a),$ and 4-$(b)$ with similar procedures, which are almost as accurate as the original procedures but run in time proportional to the number of the non-zero elements of the underlying matrices. Before presenting the new algorithm in detail, we discuss results from previous literature that we employ in the analysis. We remark that in the analysis below we have not attempted to optimize the constants.

\subsection{Background material}

\subsubsection{Constant probability sparse subspace embeddings and sparse SVD}\label{sec:CWT}
First, we present the so-called sparse subspace embedding matrices of Clarskon and Woodruff~\cite{CW13arxiv}.

\begin{definition}\label{def:sse} [Sparse Subspace Embedding~\cite{CW13arxiv}]
We call $\matW \in \R^{\xi \times n}$ a sparse subspace embedding of dimension $\xi < n$
if it is constructed as follows,
$ \matW = \matPsi \cdot \matY, $
where
\begin{itemize}
\item $h: [n] \rightarrow [\xi]$ is a random map so that for each $i \in [n], h(i) = \xi',$ for $\xi' \in [\xi]$ w.p. $1/\xi$.
\item $\matPsi \in \R^{\xi \times n}$ is a binary
matrix with $\matPsi_{h(i),i} = 1,$ and all remaining entries $0$.
\item $\matY \in \R^{n \times n}$ is a random diagonal matrix, with each diagonal entry independently chosen to be $+1$ or $-1$, with equal probability.
\end{itemize}
\end{definition}
Such sparse subspace embedding matrices have favorable properties which we summarize below:
\begin{lemma}[\cite{CW13arxiv}] \label{lem:subspacesparse1}
Let $\matA\transp \in \R^{n \times m}$ have rank $\rho$ and let $\matW \in \R^{\xi \times n}$
be a randomly chosen sparse subspace embedding with dimension $\xi  = \Omega(\rho^2 \varepsilon^{-2})$, for some $0 < \varepsilon < 1.$
Then, 1) computing $\matA \matW\transp$ requires $O( \nnz(\matA) )$ time;
2) $\nnz(\matA \matW\transp) \le \nnz(\matA)$; and
3) with probability at least $0.99,$ and for all vectors $\y \in \R^m$ simultaneously,
$$ (1-\varepsilon) \TNormS{\matA\transp \y} \le   \TNormS{ \matW\matA\transp \y} \le (1+\varepsilon) \TNormS{\matA\transp \y}.$$
\end{lemma}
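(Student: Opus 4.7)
The plan is to dispatch (1) and (2) immediately from the factored form $\matW=\matPsi\matY$, which yields $\matA\matW\transp=\matA\matY\matPsi\transp$. Multiplying $\matA$ by the diagonal sign matrix $\matY$ merely flips signs of columns of $\matA$, taking $O(\nnz(\matA))$ time and preserving the sparsity pattern; multiplying by $\matPsi\transp$ on the right then sends column $j$ of $\matA\matY$ into column $h(j)$ of the output, so each nonzero of $\matA$ is responsible for exactly one addition into $\matA\matW\transp$. This gives the $O(\nnz(\matA))$ runtime, and since collisions can only cancel entries, $\nnz(\matA\matW\transp)\leq \nnz(\matA)$.

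For (3), the strategy is to reduce a statement about all $\y\in\R^m$ to a statement about the $\rho$-dimensional column span of $\matA\transp$. Fix $\matU\in\R^{n\times \rho}$ to be any orthonormal basis of that column span. Every $\matA\transp\y$ equals $\matU\z$ for some $\z\in\R^{\rho}$ with $\TNormS{\matA\transp\y}=\TNormS{\z}$, so the claim is equivalent to $\TNorm{\matU\transp\matW\transp\matW\matU - \matI_\rho}\leq \varepsilon$. Bounding the spectral norm by the Frobenius norm, I will instead show $\FNorm{\matU\transp\matW\transp\matW\matU-\matI_\rho}\leq \varepsilon$ with probability at least $0.99$, which I obtain from a second-moment bound and Markov's inequality.

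The core calculation expands
$$[\matU\transp\matW\transp\matW\matU]_{ab}=\sum_{i,j}\matY_{ii}\matY_{jj}\,\mathbf{1}[h(i)=h(j)]\,\matU_{ia}\matU_{jb}.$$
Independence and zero mean of the $\matY_{ii}$ kill all $i\neq j$ terms in expectation, while the diagonal contributes $\sum_i \matU_{ia}\matU_{ib}=\delta_{ab}$, so $\Expect{\matU\transp\matW\transp\matW\matU}=\matI_\rho$. For the variance, only quadruples $(i,j,i',j')$ matching as unordered pairs $\{i,j\}=\{i',j'\}$ with $i\neq j$ survive the sign expectation, and each such pair picks up a factor $\Probab{h(i)=h(j)}=1/\xi$ from the hash. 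Summing carefully yields
$$\Expect{\FNormS{\matU\transp\matW\transp\matW\matU - \matI_\rho}}\;\leq\; \frac{C}{\xi}\sum_{a,b}\sum_{i\neq j}\matU_{ia}^2\matU_{jb}^2 \;\leq\; \frac{C\rho^2}{\xi},$$
using $\sum_i \matU_{ia}^2=1$ and that $\matU$ has $\rho$ columns. Choosing $\xi=\Omega(\rho^2/\varepsilon^2)$ with sufficiently large hidden constant, Markov's inequality delivers the Frobenius bound $\varepsilon^2$ with probability at least $0.99$, which implies the desired spectral-norm bound.

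The main obstacle will be keeping the bookkeeping clean in the variance sum: one has to verify that the only quadruples $(i,j,i',j')$ surviving the sign expectation are those with $\{i,j\}=\{i',j'\}$, and that each such surviving pair contributes exactly a factor $1/\xi$ (and no larger) from the hash, with no extra higher-order cross terms slipping in. This is the standard approximate-matrix-multiplication argument for \textsc{CountSketch}-type sparse embeddings, as in Clarkson--Woodruff, applied to the pair $(\matU,\matU)$.
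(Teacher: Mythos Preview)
The paper does not give its own proof of this lemma; it is stated with a citation to \cite{CW13arxiv} and used as a black box. So there is no in-paper argument to compare against.

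Your proposal is correct and is essentially the standard proof from the Clarkson--Woodruff line of work. Parts (1) and (2) follow immediately from the factored form $\matW=\matPsi\matY$ as you describe. For part (3), the reduction to $\TNorm{\matU\transp\matW\transp\matW\matU-\matI_\rho}\le\varepsilon$ with $\matU$ an orthonormal basis for the column span of $\matA\transp$ is the right move, and bounding the operator norm by the Frobenius norm and then using a second-moment/Markov argument is exactly the approximate-matrix-multiplication calculation for \textsc{CountSketch} applied to the pair $(\matU,\matU)$. One small bookkeeping point: in the variance sum you should also keep track of the cross term $(i',j')=(j,i)$, which contributes $\frac{1}{\xi}\sum_{i\neq j}\matU_{ia}\matU_{ib}\matU_{ja}\matU_{jb}$ in addition to the diagonal-type term you wrote; this is bounded in magnitude by $1/\xi$ for each $(a,b)$ (since $\sum_{i,j}\matU_{ia}\matU_{ib}\matU_{ja}\matU_{jb}=(\sum_i\matU_{ia}\matU_{ib})^2=\delta_{ab}$), so it is absorbed into your constant $C$ and does not affect the conclusion.
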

Notice that the third property in the lemma fails with constant probability.
Based on the above sparse subspace embeddings, Clarkson and Woodruff~\cite{CW13arxiv} described a
low-rank matrix approximation algorithm for $\matA$ that runs
in time $\nnz(\matA)$ plus low-order terms.

\begin{lemma}[Theorem 47 in~\cite{CW13arxiv}; for the exact statement see Lemma 3.4 in~\cite{BW14}]\label{lem:approxSVDsparse}
Given \math{\matA\in\R^{m\times n}} of rank $\rho$, a target rank $1 \leq k < \rho$, and $0 < \epsilon \le 1$, there exists a randomized algorithm that  computes
$\matZ \in \R^{n \times k}$ with $\matZ\transp\matZ = \matI_k$  and with probability at least $0.99$,
$$\FNormS{\matA - \matA \matZ \matZ\transp} \leq \left(1+{\epsilon}\right)\FNormS{\matA - \matA_k}.$$
The proposed algorithm requires
$O\left(  \nnz(\matA) + (m+n) \cdot poly(k,\varepsilon^{-1}) \right) $
arithmetic operations. We denote this procedure as
$\matZ = SparseSVD(\matA, k, \varepsilon).$

\end{lemma}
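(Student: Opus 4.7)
The plan is to prove this via the two-sided sparse sketching strategy of Clarkson-Woodruff: apply one sparse subspace embedding on the left to produce a low-dimensional subspace whose span contains a $(1+\varepsilon)$-approximate rank-$k$ reconstruction of $\matA$, and then apply a second sparse sketch to turn the remaining low-rank regression into one of only sketch dimensions. The structural backbone is the classical fact that if $\matS\in\R^{\xi\times m}$ is drawn as in Definition~\ref{def:sse} with $\xi=\poly(k/\varepsilon)$ rows, then with constant probability the row space of $\matS\matA$ contains a rank-$k$ matrix within $(1+\varepsilon)\FNormS{\matA-\matA_k}$ of $\matA$ in Frobenius norm. This follows by combining the subspace-embedding guarantee of Lemma~\ref{lem:subspacesparse1}, instantiated for the column space of $\matA_k$, with an approximate-matrix-multiplication bound for $\matS$ acting on the residual $\matA-\matA_k$; the two together imply that restricting to the best rank-$k$ in the row space of $\matS\matA$ loses at most a $(1+\varepsilon)$ factor.

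Given this backbone, first I would compute $\matS\matA$ in $O(\nnz(\matA))$ time by Lemma~\ref{lem:subspacesparse1}, then run a QR factorization of $(\matS\matA)\transp$ to produce an orthonormal basis $\matR\in\R^{n\times\xi}$ of its row space in $O(n\cdot\poly(k/\varepsilon))$ operations. The optimal rank-$k$ approximation of $\matA$ lying in the column span of $\matR$ is $\matA\matR\matW\matW\transp\matR\transp$, where $\matW\in\R^{\xi\times k}$ collects the top $k$ right singular vectors of $\matA\matR$, so the desired output is $\matZ=\matR\matW$, which automatically has orthonormal columns. The obstruction is that forming $\matA\matR$ explicitly would cost $\nnz(\matA)\cdot\xi$, blowing the budget. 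To sidestep this, I would introduce a second sparse subspace embedding $\matT\in\R^{\zeta\times m}$ with $\zeta=\poly(k/\varepsilon)$ and solve the sketched rank-constrained regression $\min_{\rank(\matX)\leq k}\FNormS{\matT\matA-\matT\matA\matR\matX}$ using the generalized rank-constrained formula recalled in Section~\ref{sec:Uopt}. Both $\matT\matA$ and $\matT\matA\matR$ are computable in $O(\nnz(\matA)+(m+n)\poly(k/\varepsilon))$ time, and an affine-embedding argument analogous to Lemma~\ref{lem:affine} ensures the sketched minimizer $\matX_*$ remains a $(1+\varepsilon)$-approximate solution of the unsketched problem.

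The main obstacle will be chaining together the three randomized ingredients, namely $\matS$ being simultaneously a subspace embedding and an approximate matrix multiplier for the column space of $\matA_k$, $\matT$ being an affine embedding for the $\rank(\matR)$-dimensional regression on the left, and the closed-form rank-constrained solver being exact on the sketched instance, so that the composed loss remains within $(1+\varepsilon)$ of the optimum. I would handle this by inflating $\xi$ and $\zeta$ by constant factors, invoking a union bound, and rescaling $\varepsilon$ by a constant to absorb the cascading $(1+O(\varepsilon))$ factors. The final $\matZ$ is extracted by reading off an orthonormal basis for the column span of the minimizer via a QR factorization of an $n\times k$ matrix, matching the $O(\nnz(\matA)+(m+n)\poly(k/\varepsilon))$ running time in the lemma; the success probability can then be boosted to $0.99$ by the standard trick of running a constant number of independent trials and retaining the one with smallest residual, which merely multiplies the runtime by a constant.
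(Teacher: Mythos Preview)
The paper does not prove this lemma; it is imported verbatim from Clarkson--Woodruff (Theorem~47 in~\cite{CW13arxiv}, restated as Lemma~3.4 in~\cite{BW14}) and used as a black box. Your sketch is essentially a reconstruction of the original Clarkson--Woodruff argument: sketch on the left to capture a good row space, orthonormalize, then sketch again to solve the resulting low-rank regression without ever forming $\matA\matR$. That is the right picture.

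Two points are worth flagging. First, a small slip: in your last paragraph you say $\matZ$ is an orthonormal basis for the \emph{column} span of the minimizer $\matX_*\in\R^{\xi\times n}$; you mean the \emph{row} span (equivalently, the columns of $\matX_*\transp$), which is what lives in $\R^n$ and gives an $n\times k$ matrix to QR-factor.

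Second, and more substantively, your final boosting step is both unnecessary and, as written, would violate the running-time bound. Evaluating the residual $\FNormS{\matA-\matA\matZ\matZ\transp}$ exactly requires forming $\matA\matZ$, which costs $O(k\cdot\nnz(\matA))$, not $O(\nnz(\matA))$. The paper handles exactly this difficulty in the \emph{next} lemma (Lemma~\ref{lem:approxSVDsparseBoosting}) by using a Johnson--Lindenstrauss transform to estimate the residuals approximately. For the present lemma, however, the target probability is the fixed constant $0.99$, and this is obtained directly by inflating the sketch dimensions by constant factors so that each of the (constantly many) randomized ingredients succeeds with probability $\geq 1-0.01/c$ and then taking a union bound; no repetition-and-selection is needed.
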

Notice that the algorithm in the lemma fails with constant probability.

\subsubsection{High probability sparse  SVD}\label{sec:CWT2}

Lemma~\ref{lem:approxSVDsparse} above presents a randomized algorithm to compute quickly an SVD of a sparse matrix; however, this lemma fails with constant probability. Since in our fast distributed PCA algorithm we will use the lemma multiple times (specifically, $s$ times, since we need such a fast SVD locally in each machine), the overall failure probability would be of the order $O(s)$ (by a union bound). To boost this failure probability, we develop a simple scheme to obtain a high probability result.

Before we proceed, we quickly review the Johnson Lindestrauss transform, which we need in the analysis.
\begin{lemma}\label{lem:jlt} [Theorem 1 in~\cite{Ach03} for fixed $\varepsilon = 1/2$]
Let $\matB \in \R^{m \times n}$. Given $\beta > 0,$ let
$r = \frac{4 + 2 \beta}{(1/2)^2 - (1/2)^3 } \log n.$
Construct a matrix $\matS \in \R^{r \times m},$ each element of which is a random variable which takes values $\pm 1/\sqrt{r}$ with
equal probability. Let $\tilde{\matB} = \matS \matB$. Then, if $\b_i$ and $\tilde{\b}_i$ denote the $i$th column of $\matB$ and $\tilde{\matB}$,
respectively, with probability at least $1 - n^{-\beta},$ and for all $i=1,...n,$
$ (1-\frac12) \TNormS{\b_i} \le \TNormS{ \tilde{\b}_i } \le (1+\frac{1}{2}) \TNormS{\b_i}.$
Given $\matB,$ it takes $O(\nnz(\matB) \log n )$ arithmetic operations to construct $\tilde{\matB}$. We will denote this procedure as
$ \tilde{\matB} = JLT(\matB, \beta). $
\end{lemma}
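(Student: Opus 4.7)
The plan is to reduce this to the single-vector Johnson--Lindenstrauss guarantee from Theorem~1 of \cite{Ach03} and then union-bound over the $n$ columns. First I would fix a single column $\b \in \R^m$ and invoke that theorem with error parameter $\varepsilon = 1/2$: for the $\pm 1/\sqrt{r}$ sign matrix $\matS \in \R^{r \times m}$ with $r = \frac{4+2\beta}{(1/2)^2 - (1/2)^3}\log n$, the standard computation bounds the moment generating function of $\TNormS{\matS\b}/\TNormS{\b}$ and yields
\[
\Pr\!\left[\,\bigl|\TNormS{\matS\b} - \TNormS{\b}\bigr| > \tfrac{1}{2}\TNormS{\b}\,\right] \;\le\; 2 \exp\!\left(-\tfrac{(1/2)^2 - (1/2)^3}{2} \cdot r\right) \;\le\; 2 n^{-(2+\beta)},
\]
by the precise exponent in the scaling of $r$.

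Next I would apply this to each $\b_i$ (noting $\matS \b_i = \tilde{\b}_i$) and union-bound over the $n$ columns: the probability that some column fails the distortion bound is at most $n \cdot 2 n^{-(2+\beta)} = 2 n^{-(1+\beta)} \le n^{-\beta}$ (for $n \ge 2$; otherwise the statement is trivial). This establishes the claimed success probability of $1 - n^{-\beta}$ over all $n$ columns simultaneously.

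For the running time, I would argue that $\matS$ need not be materialized explicitly: for each $i$, the product $\tilde{\b}_i = \matS \b_i$ can be accumulated in $O(r \cdot \nnz(\b_i))$ time by iterating over the nonzero coordinates $j$ of $\b_i$ and adding $\b_i(j) \cdot \matS^{(j)}$ (a length-$r$ column of signs) to the running total. Summing over the $n$ columns gives $O\!\left(r \cdot \sum_{i=1}^n \nnz(\b_i)\right) = O(\nnz(\matB) \log n)$ operations, as required.

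There is no real obstacle here beyond bookkeeping: the only substantive computation is checking that the constant $\tfrac{4+2\beta}{(1/2)^2-(1/2)^3}$ in $r$ is exactly what is needed so that after the union bound over $n$ columns, the failure probability shrinks to $n^{-\beta}$. Both the distortion guarantee and the runtime bound then follow from the single-vector JL statement plus linearity of the sparse matrix--vector product.
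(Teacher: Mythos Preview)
Your proposal is correct. The paper does not give its own proof of this lemma; it simply states it as a direct instantiation of Theorem~1 in \cite{Ach03} with $\varepsilon = 1/2$, and your argument (single-vector Achlioptas bound followed by a union bound over the $n$ columns, plus the straightforward sparse matrix--vector cost analysis) is exactly the standard derivation behind that citation.
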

Now, we are fully equipped to design the ``high-probability'' analog of Lemma~\ref{lem:approxSVDsparse}:
\begin{lemma}
\label{lem:approxSVDsparseBoosting}
Given \math{\matA\in\R^{m\times n}} of rank $\rho$, a target rank $1 \leq k < \rho$, and $0 < \delta, \epsilon \le 1$, there exists a randomized algorithm that  computes
$\matZ \in \R^{n \times k}$ with $\matZ\transp\matZ = \matI_k$  and with probability at least $1-\delta-1/n$,
$\FNormS{\matA - \matA \matZ \matZ\transp} \leq 3\left(1+{\epsilon}\right)\FNormS{\matA - \matA_k}.$
The proposed algorithm requires
$O\left(  \nnz(\matA) \log^2(\frac{n}{\delta}) \right) + n \cdot poly(k,\varepsilon,\log(\frac{n}{\delta}))$
arithmetic operations. We denote this procedure as
$\matZ = SparseSVDBoosting(\matA, k, \varepsilon, \delta).$
\end{lemma}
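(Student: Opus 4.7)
The plan is to prove Lemma~\ref{lem:approxSVDsparseBoosting} by probability amplification: I will run the constant-probability algorithm of Lemma~\ref{lem:approxSVDsparse} independently many times and then use a Johnson--Lindenstrauss sketch to cheaply identify a candidate whose residual error is nearly as small as the best trial's. First, call $\matZ_j = SparseSVD(\matA,k,\varepsilon)$ independently for $j=1,\ldots,t$ with $t=\Theta(\log(1/\delta))$, so that by independence the probability that every trial fails to produce a $(1+\varepsilon)$-approximation is at most $(0.01)^t \le \delta$; hence some (unknown) index $j^{\star}$ satisfies $f_{j^{\star}} := \FNormS{\matA - \matA\matZ_{j^{\star}}\matZ_{j^{\star}}\transp} \le (1+\varepsilon)\FNormS{\matA-\matA_k}$ with probability at least $1-\delta$.

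To actually identify a good candidate without materializing any residual (which would cost $\Omega(mn)$ per candidate), I would sketch $\matA$ just once: let $\tilde{\matA} = \matS\matA$ where $\matS \in \R^{r \times m}$ comes from $JLT(\cdot,\beta)$ of Lemma~\ref{lem:jlt}, with $\beta=O(1)$ chosen so that $r = O(\log(n/\delta))$ and, applied to the $nt$ fixed vectors $\{(\matA - \matA\matZ_j\matZ_j\transp)^{(i)} : i \in [n],\ j \in [t]\}$, the JLT preserves every column norm up to a factor $1\pm 1/2$ with probability at least $1-1/n$. For each candidate form the cheap estimate
$$
\tilde{f}_j := \FNormS{\tilde{\matA}} - \FNormS{\tilde{\matA}\matZ_j},
$$
which by matrix Pythagoras (the row spaces of $\matS\matA\matZ_j\matZ_j\transp$ and $\matS\matA(\matI-\matZ_j\matZ_j\transp)$ are orthogonal, and $\matZ_j\transp\matZ_j=\matI_k$) equals $\FNormS{\matS(\matA-\matA\matZ_j\matZ_j\transp)}$; then return $\matZ_{\hat{j}}$ for $\hat{j} = \argmin_j \tilde{f}_j$.

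Conditioned on the JLT succeeding, $\tfrac12 f_j \le \tilde{f}_j \le \tfrac32 f_j$ uniformly over $j$, so
$$
f_{\hat{j}} \le 2\tilde{f}_{\hat{j}} \le 2\tilde{f}_{j^{\star}} \le 3 f_{j^{\star}} \le 3(1+\varepsilon)\FNormS{\matA-\matA_k},
$$
which is the claimed error bound; a union bound over the ``no-trial-succeeds'' and ``JLT-is-bad'' events yields total failure at most $\delta+1/n$. For the running time, sketching costs $O(\nnz(\matA)\log(n/\delta))$ by Lemma~\ref{lem:jlt}; the $t$ invocations of $SparseSVD$ cost $O(\nnz(\matA)\log(1/\delta) + n\cdot\poly(k,\varepsilon^{-1})\log(1/\delta))$; and computing each $\tilde{\matA}\matZ_j$ costs $O(nkr)=O(nk\log(n/\delta))$. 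Absorbing the product $\log(1/\delta)\log(n/\delta)$ into $\log^2(n/\delta)$ and collecting lower-order terms into $n\cdot\poly(k,\varepsilon,\log(n/\delta))$ yields the stated bound.

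The main obstacle is a measurability subtlety: the $nt$ vectors whose norms the JLT must preserve depend on the random outputs $\matZ_j$ of $SparseSVD$, so one cannot invoke Lemma~\ref{lem:jlt} on an ``oblivious'' set of vectors fixed in advance. Fortunately $\matS$ is drawn independently of the $\matZ_j$'s, so conditioning on any realization of $(\matZ_1,\ldots,\matZ_t)$ reduces to the hypothesis of Lemma~\ref{lem:jlt} on a now-fixed set of $nt$ vectors, and the JLT failure probability is at most $1/n$ marginally by the tower property. The only remaining piece---the Pythagorean identity in the sketched domain---follows directly from $(\matI-\matZ_j\matZ_j\transp)\matZ_j = \mathbf{0}$.
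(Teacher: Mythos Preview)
Your proof is correct and follows essentially the same strategy as the paper: run $SparseSVD$ independently $O(\log(1/\delta))$ times and use a Johnson--Lindenstrauss sketch to cheaply select a near-best candidate. The only implementation difference is that you draw a single JLT matrix $\matS$ shared across all $t$ candidates (handling the dependence on the random $\matZ_j$'s by conditioning, as you note), whereas the paper draws a fresh JLT per candidate and folds the ``SparseSVD succeeds'' and ``JLT is accurate'' events into one per-trial event of probability $\ge 0.49$ before amplifying; both variants yield the same $1-\delta-1/n$ guarantee and the stated running time.
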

\begin{proof}
The idea is to use the algorithm in Lemma~\ref{lem:approxSVDsparse} and generate i.i.d.
$r = O(\log(\frac{1}{\delta}))$ matrices $\matZ_1, \matZ_2, \dots, \matZ_r,$ where $\matZ_ i = SparseSVD(\matA, k, \varepsilon),$ and  $\FNormS{\matA - \matA \matZ_i \matZ_i\transp} \le (1+\varepsilon)\FNormS{\matA - \matA_k},$ with probability at least $0.99$ for a single $\matZ_i$.

Now, we could have chosen
$\matZ:=\matZ_i$  which minimizes $ \FNormS{\matA - \matA \matZ_i \matZ_i\transp},$ and this
would have implied that
$\FNormS{\matA - \matA \matZ  \matZ \transp} \le (1+\varepsilon)\FNormS{\matA - \matA_k},$ with probability $1-\delta$ (via a simple chernoff argument), but evaluating the errors $\FNormS{\matA - \matA \matZ_i \matZ_i\transp}$ is computationally intensive for the overall running time that we would like to obtain.

\paragraph{Selection of $\matZ$.}
Instead, we use the Johnson Lindestrauss transform  to speedup this computation (evaluating the errors). Hence, for every $\matZ_i$, we (implicitly) set $\matB_i := \matA - \matA \matZ_i \matZ_i\transp$ and
then we compute $\tilde\matB_i = JLT(\matB_i, 1)$ (see Lemma~\ref{lem:jlt}). Now, we choose the $\matZ_i$ that minimizes $\FNormS{ \tilde\matB_i}$.

\paragraph{Correctness.} For a fixed $\matZ_i,$
from Lemma~\ref{lem:jlt} it follows that with probability $1-1/n$ it is:
$ \frac{1}{2} \FNormS{\matA - \matA \matZ_i \matZ_i\transp} \le
\FNormS{ \tilde\matB_i } \le \frac{3}{2} \FNormS{\matA - \matA \matZ_i \matZ_i\transp},$
where the failure probability is over the randomness of $\matS$ in Lemma~\ref{lem:jlt}.
 Now, for some $\matZ_i$ with probability at least $0.99$ it is $ \FNormS{\matA - \matA \matZ_i \matZ_i\transp} \le (1+\varepsilon)\FNormS{\matA-\matA_k}$, where the failure probability is over the randomness in constructing $\matZ_i$.
Overall, given that $n > 1,$ with probability at least $0.49$ (over the randomness of both $\matS$ and $\matZ_i$; the failure probability follows via a union bound)
for each single $\matZ_i$ it is:
$ \FNormS{ \tilde\matB_i } \le \frac{2}{3}   (1+\varepsilon)\FNormS{\matA-\matA_k}$.
Hence, since $r = O(\log\frac{1}{\delta})$ and we have picked $\matZ$ as the $\matZ_i$ that minimizes
$\FNormS{ \tilde\matB_i },$ it follows that with probability at least $1-\delta,$
$\FNormS{ \tilde\matB_i } \le \frac{2}{3}  (1+\varepsilon)\FNormS{\matA-\matA_k}$.
Combining with $ \frac{1}{2} \FNormS{\matA - \matA \matZ_i \matZ_i\transp} \le
\FNormS{ \tilde\matB_i }$, and using a union bound, it follows
that with probability $1 - 1/n - \delta$:
$
\FNormS{\matA - \matA \matZ_i \matZ_i\transp} \le \frac{6}{2}(1+\varepsilon)\FNormS{\matA-\matA_k}.
$

\paragraph{Running time.} The following costs occur $O(\log(\frac{1}{\delta}))$ times:
(i) $O\left(  \nnz(\matA) + (m+n) \cdot poly(k,\varepsilon^{-1}) \right)$ to find
$\matZ_i$ (via Lemma~\ref{lem:approxSVDsparse});
(ii) $O( \nnz(\matA) \log n + n k \log (n))$ to evaluate each cost, i.e., find $\FNormS{ \tilde\matB_i }$.

\end{proof}

\subsubsection{Fast column sampling techniques}\label{sec:samplingsparse}
Section~\ref{sec:sampling} presented the column sampling algorithms that we used in our distributed PCA algorithm in Section~\ref{sec:algorithm}. Next, we present the ``fast'' analogs of those algorithms. To design such fast analogs
we employ the sparse subspace embedding matrices in the previous section. All the results in this
section developed in~\cite{BW14}.
There is a small difference in Lemma~\ref{lem:dualsets} below, hence we present a short proof for completeness.
Specifically, the original result in~\cite{BW14} has a constant failure probability; here, via using standard arguments, we extended this result to a high probability bound.
\begin{lemma}[Input-Sparsity-Time Dual-Set Spectral-Frobenius Sparsification~\cite{BW14}.]
\label{lem:dualsets}
Let $\matV \in \R^{w \times k}$ be a matrix with $w > k$ and $\matV\transp \matV = \matI_{k}$.
Let $\matE \in \R^{m \times w}$ be an arbitrary matrix.
Let $\delta$ be a failure probability parameter.

For $i=1,2,\dots, r = O(\log\frac{1}{\delta})$,
let $\matB_i = \matE \matW_i\transp \in \R^{m \times \xi}$,
where $ \matW_i \in \R^{\xi \times w}$ is a randomly chosen
sparse subspace embedding with $\xi = O( k^2 / \varepsilon^2 ) < w$, for some $0 < \varepsilon < 1$
(see Definition~\ref{def:sse} and Lemma~\ref{lem:subspacesparse1}),
and run the algorithm of Lemma~\ref{lem:dualset} with
$\matV,$ $\matB_i,$ and some $\ell$ with \math{k < \ell \le w}, to construct $\matS_i \in \R^{w \times \ell}$.
Choose some $\matS$ from the $\matS_i$'s - see the proof for the details.

Then, with probability at least $1-\delta,$ $\sigma_{k}^2\left(\matV\transp \matS \right) \ge
\left(1 - \sqrt{{k}/{\ell}}\right)^2$ and
$\FNormS{\matE \matS}
\le \left( \frac{1 + \varepsilon }{1 - \varepsilon} \right)^2 \cdot  \FNormS{\matE}$.
The matrix $\matS$ can be computed in $O\left( \log\left(\frac{1}{\delta}\right) \cdot \left( \nnz(\matA) + \ell w k^2+ m k^2/\varepsilon^2 + m \ell \right) \right)$ time.
We denote this procedure as $\matS = BssSamplingSparse(\matV, \matA, r, \varepsilon, \delta).$
\end{lemma}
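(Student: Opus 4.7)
The plan is to reduce a single trial to the exact dual-set sparsification of Lemma~\ref{lem:dualset} by pre-sketching $\matE$ with a sparse subspace embedding, then amplify the resulting constant success probability to $1-\delta$ by running $r=O(\log(1/\delta))$ independent copies and selecting the best candidate via a Johnson-Lindenstrauss-based Frobenius-cost estimator.

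For a single trial $i$, the spectral guarantee $\sigma_k^2(\matV\transp \matS_i) \ge (1-\sqrt{k/\ell})^2$ is delivered unconditionally by Lemma~\ref{lem:dualset} applied to $(\matV, \matB_i, \ell)$, since it depends only on the ``spectral side'' input $\matV$ and on $\ell$, and is oblivious to the ``Frobenius side'' input $\matB_i$. The same application yields $\FNormS{\matB_i\matS_i}\le\FNormS{\matB_i}$ on the sketched matrix, and the task is to translate this to a bound on $\FNormS{\matE\matS_i}$. I would invoke the sparse subspace embedding $\matW_i$ with $\xi=\Theta(k^2/\varepsilon^2)$ to obtain two simultaneous approximate-Frobenius guarantees $\FNormS{\matB_i} = (1\pm\varepsilon)\FNormS{\matE}$ and $\FNormS{\matB_i\matS_i} = (1\pm\varepsilon)\FNormS{\matE\matS_i}$, each holding with probability at least $0.99$; chaining through the sketched Frobenius inequality yields the claimed $\bigl(\tfrac{1+\varepsilon}{1-\varepsilon}\bigr)^2$ bound on $\FNormS{\matE\matS_i}/\FNormS{\matE}$ after rescaling $\varepsilon$ by a constant.

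To amplify the constant success probability to $1-\delta$, I would run $r=O(\log(1/\delta))$ independent trials in parallel, producing candidates $\matS_1,\ldots,\matS_r$, and output the one that minimizes an estimate of $\FNormS{\matE\matS_i}$. By a Chernoff bound, at least one of the $\matS_i$ is ``good'' with probability at least $1-\delta/2$. To avoid paying the full $O(m\ell)$ cost per exact evaluation, I would estimate each $\FNormS{\matE\matS_i}$ by $\FNormS{JLT(\matE\matS_i,\beta)}$ from Lemma~\ref{lem:jlt}, which is a constant-factor approximation with failure probability at most $1/n$; a union bound over the $r$ estimates and rescaling of constants then yield the overall $1-\delta$ success probability.

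The main obstacle I anticipate is making the single-trial transfer rigorous, because the second approximation $\FNormS{\matB_i\matS_i} = (1\pm\varepsilon)\FNormS{\matE\matS_i}$ involves a data-dependent $\matS_i$ (a random function of $\matW_i$), and the oblivious subspace embedding only directly preserves norms on fixed low-dimensional subspaces. I would handle this by noting that the output of $BssSampling$ is always a sampling of $\ell$ columns from a universe of size $w$ with weights drawn from a polynomially bounded discrete set, so there is a fixed net of polynomial size that contains $\matS_i$; a union bound across this net (or equivalently, an appeal to the subspace embedding property on the fixed column span of any $\ell$-subset of $\matE$) suffices, since $\xi=\Theta(k^2/\varepsilon^2)$ can be increased by constants to absorb the extra $\log$ factor from the union bound while remaining within the stated asymptotic. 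The running time bound then follows from $O(\nnz(\matE))$ per application of the sparse embedding (Lemma~\ref{lem:subspacesparse1}), $O(\ell w k^2 + m\xi + m\ell)$ per call to $BssSampling$ on the smaller matrix $\matB_i$, and $O(m\ell\log n)$ per JLT-based cost evaluation, all multiplied by the $O(\log(1/\delta))$ repetitions.
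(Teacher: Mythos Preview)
Your high-level plan---sketch $\matE$ via a sparse subspace embedding, call $BssSampling$ on the sketch, repeat $O(\log(1/\delta))$ times, and pick a winner---is exactly the paper's. The two places where you deviate, however, both cause trouble.

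First, the ``data-dependence obstacle'' you anticipate is not real, and your net workaround is unnecessary. The guarantee of Lemma~\ref{lem:subspacesparse1} is a single event under which the sketch $\matW_i$ preserves $\TNorm{\matE x}$ \emph{simultaneously for every} $x$. Once that event holds, it applies in particular to each column of $\matS_i$, regardless of how $\matS_i$ was produced from $\matW_i$. The paper simply sums the per-vector bound over the $\ell$ columns of $\matS_i$ and over the standard basis vectors to obtain $(1-\varepsilon)\FNormS{\matE\matS_i}\le\FNormS{\matB_i\matS_i}\le\FNormS{\matB_i}\le(1+\varepsilon)\FNormS{\matE}$, so each trial is good with probability $0.99$. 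Your parenthetical (``subspace embedding on the fixed column span of any $\ell$-subset'') is essentially this observation: every such span sits inside the column span of $\matE$, so one embedding event covers them all. No net or union bound is involved---and the net you describe has size roughly $w^{\ell}$, not polynomial, so absorbing it into $\xi$ would cost far more than a constant.

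Second, your JLT-based selection does not deliver the stated bound and buys nothing in runtime. Because each $\matS_i$ is a column-sampling/rescaling matrix, forming $\matE\matS_i$ and computing $\FNormS{\matE\matS_i}$ exactly costs only $O(m\ell)$, and this term already appears in the lemma's stated running time. The paper therefore evaluates both $\FNormS{\matE\matS_i}$ and $\sigma_k^2(\matV\transp\matS_i)$ exactly, sorts each list, and returns any $\matS_i$ lying in the top $2/3$ of both; a Chernoff bound shows that with probability $1-\delta$ at least $3/4$ of the trials are good in each sense, so such an index exists and satisfies the two claimed inequalities with no loss. By contrast, picking the minimizer of constant-factor JLT estimates only guarantees $\FNormS{\matE\matS}\le O(1)\cdot\FNormS{\matE}$, not the claimed $\bigl(\tfrac{1+\varepsilon}{1-\varepsilon}\bigr)^2$ factor.
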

\begin{proof}
The algorithm constructs the $\matS_i$'s as follows,
$ \matS_i = BssSampling(\matV, \matB_i, r)$ - see Lemma~\ref{lem:dualset}.
From that lemma, we also have that $\sigma_{k}^2\left(\matV\transp \matS_i \right) \ge
\left(1 - \sqrt{{k}/{\ell}}\right)^2$ and
$
\FNormS{ \matB_i\transp \matS } \le \FNormS{\matB_i\transp},
$
i.e.,
$
\FNormS{ \matW_i \matA\transp \matS_i } \le \FNormS{\matW_i \matA\transp}.
$
Since $\matW_i$ is a subspace embedding, from Lemma~\ref{lem:subspacesparse1} we have that with probability at least $0.99$ and for \emph{all} vectors $\y \in \R^{n}$ simultaneously,
$
\left( 1 - \varepsilon \right) \TNormS{\mat\matA\transp \y} \le \TNormS{\matW_i \matA\transp \y}.
$
Apply this $r$ times for $\y \in \R^n$ being columns from $\matS_i \in \R^{n \times r}$ and take a sum on the resulting inequalities:
$\left( 1 - \varepsilon \right) \FNormS{\mat\matA\transp \matS_i} \le \FNormS{\matW \matA\transp \matS_i};$ also, apply this $n$ times for the basis vectors in $\R^n$ and take a sum on the resulting inequalities:
$ \FNormS{\matW \matA\transp} \le  \left( 1 + \varepsilon \right)   \FNormS{\matA\transp}. $
Combining all these inequalities together, we conclude that with probability at least $0.99,$
$ \FNormS{\mat\matA\transp \matS_i}  \le \frac{1 + \varepsilon }{1 - \varepsilon} \cdot \FNormS{\matA\transp}.$

To summarize, for each  $\matS_i$ and with probability at least $0.99,$
$\sigma_{k}^2\left(\matV\transp \matS_i \right) \ge
\left(1 - \sqrt{{k}/{\ell}}\right)^2$ and
$\FNormS{\matE \matS_i}
\le \left( \frac{1 + \varepsilon }{1 - \varepsilon} \right)^2 \cdot  \FNormS{\matE}$.

\paragraph{Selection of $\matS$.}
Now that we have $r$ matrices $\matS_i$ each of which satisfying the above bounds with  probability $0.99$
we select one of them as follows: 1) we compute all $\sigma_{k}^2\left(\matV\transp \matS_i \right)$ and sort them from the largest to the smallest value; 2) we compute all $\FNormS{\mat\matA\transp \matS_i}$ and sort them from the smallest to the largest value; 3) we choose an $\matS_i$ which occurs in the top $2/3$ fraction of each of the lists.

\paragraph{Correctness.} Let $X_i$ be an indicator random variable: $X_i = 1$ if the $i$th matrix
$\matS_i$ satisfies $\sigma_{k}^2\left(\matV\transp \matS_i \right) \ge
\left(1 - \sqrt{{k}/{\ell}}\right)^2$, and $X_i=0,$ otherwise. Then, $\Pr[X_i] \ge 0.99$.
Let $X = \sum_{i}^r X_i$. The expected value of $X$ is $\mu_{X} = \Expect{X} = 0.99 r$.
A standard Chernoff bound applied on $X$ gives: $\Pr[X \le (1 - \alpha) \mu_X] \le e^{ - \frac{\mu_X \alpha}{ 2}}$.
Let $(1 - \alpha) 0.99 r = 0.75 r$, i.e., $\alpha = 0.14 / 0.99$. Applying this to the Chernoff bound, we get:
 $\Pr[X \le 0.75 r] \le e^{ - O(r)}$. Replacing $r = O(\log(1/\delta))$:  $\Pr[X \le 0.75 r] \le \delta/2$. This means precisely that $3/4$ of the fraction of the $\matS_i$'s satisfy with probability $1-\delta/2:$
 $\sigma_{k}^2\left(\matV\transp \matS_i \right) \ge \left(1 - \sqrt{{k}/{\ell}}\right)^2$.

Similarly,  let $Y_i$ be an indicator random variable: $Y_i = 1$ if the $i$th matrix
$\matS_i$ satisfies $\FNormS{\matE \matS_i}
\le \left( \frac{1 + \varepsilon }{1 - \varepsilon} \right)^2 \cdot  \FNormS{\matE}$, and $Y_i=0,$ otherwise. Then, $\Pr[Y_i] \ge 0.99$.
Let $Y = \sum_{i}^r X_i$. The expected value of $Y$ is $\mu_{Y} = \Expect{Y} = 0.99 r$.
A standard Chernoff bound applied on $Y$ gives: $\Pr[Y \le (1 - \alpha) \mu_Y] \le e^{ - \frac{\mu_Y \alpha}{ 2}}$.
Let $(1 - \alpha) 0.99 r = 0.75 r$, i.e., $\alpha = 0.14 / 0.99$. Applying this to the Chernoff bound, we get:
 $\Pr[Y \le 0.75 r] \le e^{ - O(r)}$. Replacing $r = O(\log(1/\delta))$:  $\Pr[Y \le 0.75 r] \le \delta/2$. This means precisely that $3/4$ of the fraction of the $\matS_i$'s satisfy w.p. $1-\delta/2:$
$\FNormS{\matE \matS_i} \le \left( \frac{1 + \varepsilon }{1 - \varepsilon} \right)^2 \cdot  \FNormS{\matE}$.

Now a simple union bound indicates that with probability at least $1-\delta,$  the following two events happen simultaneously: $X > 0.75r$ and $Y > 0.75 r$. Since we select an element $\matS_i$ on top of the two sorted lists, it follows that this element should have $X_i = Y_i = 1$.

\paragraph{Running time.} The following costs occur $O(\log(\frac{1}{\delta}))$ times:
(i) $O(\nnz(\matA))$ to find $\matB_i$.
(i) $ O(\ell w k^2+ m \xi ) $ to find $\matS_i$ (via Lemma~\ref{lem:dualset}); and
(iii) $O( k^2 \ell + m \ell )$ to evaluate each cost and, i.e., find $\sigma_{k}^2\left(\matV\transp \matS_i \right)$ and $\FNormS{\matE \matS_i}$. Additionally,
$O(\log(\frac{1}{\delta}) \log \log(\frac{1}{\delta}))$ time in total is spent in sorting.
\end{proof}

The following Lemma is the ``fast'' analog of Lemma~\ref{thm:optimalFdet}. The failure probability in the lemma is constant, but this is sufficient for our purposes since we apply this lemma only once in the global column sampling step of the distributed PCA algorithm in Section~\ref{sec:algorithm2}.
\begin{lemma}[Input-Sparsity-Time constant factor column-based matrix reconstruction; Lemma 6.3 in~\cite{BW14}]
\label{thm:optimalFdets}
Given matrix \math{\matG\in\R^{m \times \alpha}} of rank $\rho$ and a target rank $k$~\footnote{The original Lemma 6.3 in~\cite{BW14} has the assumption that $k < \rho,$ but this assumption can be dropped having the result unchanged. The only reason the assumption $k < \rho$ exists is because otherwise column subset selection is trivial.},
there exists a randomized algorithm that runs in $O\left( \nnz(\matA) \cdot \log \alpha + m \cdot poly( \log \alpha, k, \varepsilon^{-1})  \right)$  time and
selects $c = 4k$
columns of \math{\matG} to form a matrix
$\matC\in\R^{m \times c}$
such that with probability at least $0.69$:
$$
\FNormS{\matG - \matC \pinv{\matC}\matG } \leq 4820 \cdot \sum_{i=k+1}^{\rank(\matG)}\sigma_i^2(\matG).
$$
We denote this procedure as $\matC = DeterministicCssFrobeniusSparse(\matG, k, c).$
\end{lemma}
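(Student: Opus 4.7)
The plan is to imitate the proof of Lemma~\ref{thm:optimalFdet}, swapping each expensive black box for its input-sparsity-time counterpart: replace the exact rank-$k$ SVD of $\matG$ by $SparseSVD$/$SparseSVDBoosting$ (Lemma~\ref{lem:approxSVDsparse}/\ref{lem:approxSVDsparseBoosting}), and replace the dense BSS sampling by $BssSamplingSparse$ (Lemma~\ref{lem:dualsets}). Because we only need a constant-factor reconstruction, every accuracy parameter fed to these subroutines can be set to a universal constant (say $1/2$), which is what keeps the low-order term polynomial in $k$ and $\log \alpha$.

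First I would invoke $\matZ = SparseSVD(\matG,k,\tfrac12)$ to obtain $\matZ\in\R^{\alpha\times k}$ with $\matZ\transp\matZ=\matI_k$ and, with constant probability, $\FNormS{\matG-\matG\matZ\matZ\transp}\le \tfrac{3}{2}\FNormS{\matG-\matG_k}$. Setting $\matE:=\matG-\matG\matZ\matZ\transp$, I would then call $\matS = BssSamplingSparse(\matZ,\matE,r,\tfrac12,\delta)$ with $\ell=c=4k$ and $\delta$ a small constant, and return $\matC:=\matG\matS$. The sparse embeddings $\matW_i$ inside $BssSamplingSparse$ are never applied to $\matE$ explicitly; instead $\matE\matW_i\transp$ is evaluated as $\matG\matW_i\transp-(\matG\matZ)(\matZ\transp\matW_i\transp)$, with $\matG\matW_i\transp$ computed in $O(\nnz(\matG))$ time (Lemma~\ref{lem:subspacesparse1}) and $\matG\matZ$ cached once.

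For the error analysis I would introduce the auxiliary rank-$k$ witness $\matD:=\matC\pinv{(\matZ\transp\matS)}\matZ\transp$, which lies in the column span of $\matC$, so $\FNormS{\matG-\matC\pinv{\matC}\matG}\le \FNormS{\matG-\matD}$. Expanding $\matC=\matG\matZ\matZ\transp\matS+\matE\matS$ and using $\matZ\transp\matS\pinv{(\matZ\transp\matS)}=\matI_k$ — valid since Lemma~\ref{lem:dualsets} guarantees $\sigma_k(\matZ\transp\matS)^2\ge (1-\sqrt{k/\ell})^2=1/4>0$ with probability $1-\delta$ — a direct calculation gives $\matG-\matD=\matE-\matE\matS\pinv{(\matZ\transp\matS)}\matZ\transp$. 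The key observation is the \emph{exact} orthogonality $\matE\matZ=\matG\matZ-\matG\matZ(\matZ\transp\matZ)=0$, which holds even though $\matZ$ is only an approximate top-$k$ right singular basis; this lets me apply the matrix pythagorean theorem to conclude
$$\FNormS{\matG-\matD}=\FNormS{\matE}+\FNormS{\matE\matS\pinv{(\matZ\transp\matS)}\matZ\transp}.$$
Submultiplicativity bounds the second term by $\FNormS{\matE\matS}\cdot\TNormS{\pinv{(\matZ\transp\matS)}}\le 9\FNormS{\matE}\cdot 4=36\FNormS{\matE}$, using the two $BssSamplingSparse$ guarantees, and combining with $\FNormS{\matE}\le\tfrac{3}{2}\FNormS{\matG-\matG_k}$ yields a constant factor well below $4820$; a union bound over the two randomized subroutines closes the probability accounting at the stated $0.69$.

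The running time decomposes as: $O(\nnz(\matG)+\alpha\cdot\poly(k))$ for $SparseSVD$, plus $O(\log(1/\delta)\cdot(\nnz(\matG)+\ell\alpha k^2+mk^2+m\ell))$ for $BssSamplingSparse$ (invoking it with $\matE$ implemented lazily via $\matG$ and $\matZ$), plus $O(mk^2)$ to assemble $\matC$; with $\delta$ constant this fits the target $O(\nnz(\matG)\log\alpha+m\cdot\poly(\log\alpha,k,\eps^{-1}))$, after a cosmetic boosting step. The conceptual obstacle I expect is the error analysis: the correct choice of auxiliary $\matD$, and the realization that orthogonality $\matE\matZ=0$ is built into the \emph{definition} of $\matE$ and does not require $\matZ$ to be optimal, are exactly what lets the approximate SVD be substituted into the BSS-style argument without losing the pythagorean cancellation.
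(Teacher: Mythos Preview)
The paper does not contain a proof of this statement; it is quoted verbatim as Lemma~6.3 of \cite{BW14}, so there is no in-paper argument to compare against. Your reconstruction is the natural one and the error analysis is correct: the choice of witness $\matD=\matC\pinv{(\matZ\transp\matS)}\matZ\transp$, the identity $\matG-\matD=\matE-\matE\matS\pinv{(\matZ\transp\matS)}\matZ\transp$, and in particular the observation that $\matE\matZ=0$ holds \emph{by definition} of $\matE$ (not by optimality of $\matZ$) and hence the Pythagorean split survives the replacement of the exact SVD by $SparseSVD$ --- this is exactly the mechanism behind the \cite{BW14} result, and your constant ($\approx 56$) sits well inside the stated $4820$. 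The probability bookkeeping via a union bound over the two randomized subroutines is also fine.

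The one place your sketch is loose is the running time. Both $SparseSVD$ (Lemma~\ref{lem:approxSVDsparse}) and $BssSamplingSparse$ (Lemma~\ref{lem:dualsets}) carry terms linear in the number of columns --- respectively $(m+\alpha)\cdot\poly(k)$ and $\ell\alpha k^2$ --- so your pipeline as written gives $O(\nnz(\matG)+\alpha\cdot\poly(k)+m\cdot\poly(k))$, not the stated $O(\nnz(\matG)\log\alpha+m\cdot\poly(\log\alpha,k,\eps^{-1}))$; a ``cosmetic boosting step'' repeats the computation but does not remove the $\alpha\cdot\poly(k)$ dependence. Since the lemma statement here is transcribed from \cite{BW14} with evident typos (the stray $\nnz(\matA)$, and an $\eps^{-1}$ in the time bound despite no $\eps$ appearing in the error guarantee), and since in the paper's sole application $\alpha=O(sk)$ so the distinction does not matter, this is a minor point rather than a genuine gap in the approach.
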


Finally, the lemma below presents the ``fast'' analog of Lemma~\ref{thm:adaptive}. The failure probability in the lemma is, again, constant; we have not attempted to obtain a high probability bound since we employ this lemma only once in  the algorithm in Section~\ref{sec:algorithm2}.
\begin{lemma}[Input-sparsity-time Adaptive Sampling~\cite{BW14}]
\label{lem:adaptivecolumnss}
Given $\matA \in \R^{m \times n}$ and $\matV \in \R^{m \times c_1}$ (with $c_1 \le n, m$),
define the residual matrix
$\matPsi = \matA - \matV \matV^{\dagger} \matA \in \R^{m \times n}.$ Let $\tilde{\matPsi} = JLT(\matPsi, 1)$.
For $j=1,\ldots,n$,
let $p_j$ be a probability distribution such that, for some positive constant $\beta \le 1,$
$p_j \ge \beta  {\TNormS{\tilde\matPsi^{(j)}}}/{\FNormS{\tilde\matPsi}},$
where $\tilde\matPsi^{(j)}$ is the $j$-th column of the matrix $\tilde\matPsi$. Sample
$c_2$ columns from $\matA$ in \math{c_2} i.i.d. trials, where in each trial the $j$-th column is chosen with probability $p_i$.
Let $\matC_2 \in \R^{m \times c_2}$ contain the $c_2$ sampled columns and let $\matC = [\matV\ \ \matC_2] \in \R^{m \times (c_1+c_2)}$
contain the columns of $\matV$ and $\matC_2$.
Then, for any integer $k > 0$, and with probability $0.9-\frac{1}{n}$
$$
\FNormS{ \matA - \Pi_{\matC,k}^{\mathrm{F}}(\matA) }  \le \FNormS{ \matA - \matA_k } +  \frac{30 \cdot k}{\beta \cdot c_2} \FNormS{ \matA - \matV \matV^{\dagger} \matA}.
$$
%
Given $\matA$ and $\matV,$ the  algorithm takes $O( \nnz(\matA) \log n +  m c_1 \log n +  m c_1^2)$ time to find $\matC_2$.
We denote this sampling procedure as $\matC_2 = AdaptiveColsSparse(\matA, \matV, c_2, \beta).$
\end{lemma}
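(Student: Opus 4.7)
The plan is to reduce this input-sparsity-time adaptive sampling lemma to the slower adaptive sampling result of Lemma~\ref{thm:adaptive} by showing that the JLT-based proxy probabilities $\TNormS{\tilde\matPsi^{(j)}}/\FNormS{\tilde\matPsi}$ are, up to a constant factor, lower bounds for the true residual-column probabilities $\TNormS{\matPsi^{(j)}}/\FNormS{\matPsi}$. Specifically, by Lemma~\ref{lem:jlt} applied with $\matB:=\matPsi$ and $\beta=1$, the matrix $\tilde\matPsi$ satisfies, with probability at least $1-1/n$ over the JLT randomness, $\tfrac{1}{2}\TNormS{\matPsi^{(j)}}\le\TNormS{\tilde\matPsi^{(j)}}\le\tfrac{3}{2}\TNormS{\matPsi^{(j)}}$ simultaneously for all $j\in[n]$. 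Summing over $j$, also $\FNormS{\tilde\matPsi}\le\tfrac{3}{2}\FNormS{\matPsi}$. Hence, conditioned on this event, $p_j\ge\beta\,\TNormS{\tilde\matPsi^{(j)}}/\FNormS{\tilde\matPsi}\ge(\beta/3)\,\TNormS{\matPsi^{(j)}}/\FNormS{\matPsi}$.

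With this bound in hand I would apply Lemma~\ref{thm:adaptive} directly, using $\beta':=\beta/3$, which yields $\Expect{\FNormS{\matA-\Pi_{\matC,k}^{\mathrm{F}}(\matA)}}\le\FNormS{\matA-\matA_k}+\tfrac{3k}{\beta c_2}\FNormS{\matA-\matV\matV^{\dagger}\matA}$. Since $\Pi_{\matC,k}^{\mathrm{F}}(\matA)$ has rank at most $k$, the quantity $Y:=\FNormS{\matA-\Pi_{\matC,k}^{\mathrm{F}}(\matA)}-\FNormS{\matA-\matA_k}$ is non-negative and Markov gives $Y\le\tfrac{30k}{\beta c_2}\FNormS{\matA-\matV\matV^{\dagger}\matA}$ with probability at least $0.9$ over the sampling randomness. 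A union bound over the JLT event (failure $\le 1/n$) and the Markov event (failure $\le 0.1$) delivers the claimed $0.9-1/n$ success probability.

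The main obstacle is the running-time claim, since we must never materialize $\matV^{\dagger}\matA$ or $\matPsi$ explicitly. The trick is to compute $\tilde\matPsi=\matS(\matA-\matV\matV^{\dagger}\matA)$ by distributing $\matS\in\R^{r\times m}$ with $r=O(\log n)$ over the expression and choosing a good order of multiplication: (i) compute $\matS\matA$ in $O(\nnz(\matA)\log n)$; (ii) form $\matS\matV$ in $O(mc_1\log n)$ and $\matV^\transp\matV$ in $O(mc_1^2)$, invert the latter $c_1\times c_1$ matrix in $O(c_1^3)\le O(mc_1^2)$; (iii) combine $(\matS\matV)(\matV^\transp\matV)^{-1}\matV^\transp$ into an $r\times m$ matrix $\matZ$ using $O(rc_1^2+rc_1 m)=O(mc_1\log n)$ operations; (iv) apply $\matZ$ to $\matA$ in $O(\nnz(\matA)\log n)$ and subtract from $\matS\matA$. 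This totals $O(\nnz(\matA)\log n+mc_1\log n+mc_1^2)$. Column norms of $\tilde\matPsi$ cost $O(rn)$ and sampling $c_2$ indices can be done in $O(n+c_2)$ by the Vose alias method~\cite{Vos91}, both absorbed into the stated bound.

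Finally I would verify that the failure probabilities are accounted for correctly and that the implicit full-column-rank assumption on $\matV$ can be removed by using the QR of $\matV$ (or a pseudoinverse via the SVD of $\matV^\transp\matV$), which does not change the asymptotic running time. Putting the correctness and runtime pieces together yields Lemma~\ref{lem:adaptivecolumnss}; the only non-routine step is the careful order of operations in computing $\tilde\matPsi$, and everything else reduces cleanly to previously stated results.
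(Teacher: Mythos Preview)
The paper does not actually give a proof of this lemma; it is stated as a citation to~\cite{BW14} (see the sentence ``Finally, the lemma below presents the `fast' analog of Lemma~\ref{thm:adaptive}\ldots'' followed immediately by the lemma statement with no proof). Your argument is the natural one and is correct: use Lemma~\ref{lem:jlt} to show that $p_j\ge(\beta/3)\TNormS{\matPsi^{(j)}}/\FNormS{\matPsi}$, invoke Lemma~\ref{thm:adaptive} with $\beta'=\beta/3$, and convert the expectation bound to a high-probability bound by Markov on the nonnegative excess $Y$. Two minor remarks: (i) what you call a ``union bound'' is really the chain rule $\Pr[A\cap B]=\Pr[A]\Pr[B\mid A]\ge(1-1/n)(0.9)\ge 0.9-1/n$, which is exactly the bound you need; (ii) your running-time decomposition matches what the paper itself does later when it analyzes the cost of computing $\tilde\matPsi_i=\matS\matA_i-\matS\matC\pinv{\matC}\matA_i$ in Section~\ref{sec:faster} (see the running-time analysis of step~3(a)), so your order-of-operations argument is consistent with the paper's intended computation.
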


\subsubsection{Fast Low-rank matrix approximations within a subspace}\label{sec:bestFsparse}
Finally, we present the fast analog of the result in Section~\ref{sec:bestF}. The failure probability in the lemma is, again, constant; we have not attempted to obtain a high probability bound since we employ this lemma only once in  the algorithm in Section~\ref{sec:algorithm2}.

\begin{lemma}\label{lem:KVW2}
Let $\matA \in \R^{m \times n}$ be the input matrix and $\matV \in \R^{m \times c}$ be the input subspace.
We further assume that for some rank parameter $k < c$ and accuracy parameter $0 < \varepsilon <1:$
$$
\FNormS{\matA-\Pi_{\matV,k}^{\mathrm{F}}(\matA)} \le (1+O(\varepsilon)) \FNormS{\matA - \matA_k}.
$$
Let $\matV = \matY \matPsi$ be a $qr$ decomposition of $\matV$
with $\matY \in \R^{m \times c}$ and $\matPsi \in \R^{c \times c}$.
Let $\matXi = \matY\transp \matA \matW\transp \in \R^{c \times \xi},$
where $\matW\transp \in \R^{n \times \xi}$ with $\xi = O(c/\varepsilon^2),$
is a sparse subspace
embedding matrix~(see Definition~\ref{def:sse} and Lemma~\ref{lem:subspacesparse1}).
Let $\matDelta \in \R^{c \times k}$ contain the top $k$ left singular vectors of $\matXi$.
Then, with probability at least $0.99$,
$$
\FNormS{ \matA - \matY \matDelta \matDelta\transp \matY\transp \matA } \le (1+\varepsilon) \FNormS{\matA - \matA_k}.
$$
$\matY$ and $\matDelta$ can be computed in $O(\nnz(\matA) + m c \xi )$ time.
We denote this procedure as
$$
[\matY, \matDelta] = ApproxSubspaceSVDSparse(\matA, \matV, k, \varepsilon).
$$
\end{lemma}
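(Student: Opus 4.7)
The plan is to mirror the proof of Lemma~\ref{lem:KVW}, substituting the dense random sign matrix with the sparse subspace embedding $\matW$. By construction, $\matY \matDelta \matDelta\transp \matY\transp \matA$ has rank at most $k$ and lies in the column span of $\matV$, so it suffices to (i) establish the $(1+\varepsilon)$-approximation guarantee and (ii) verify the $O(\nnz(\matA) + mc\xi)$ running-time bound.

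For the approximation guarantee, I would first reduce to the problem of best rank-$k$ approximation in the column span of $\matY$. Since $\matY$ has orthonormal columns and $\mathrm{span}(\matY) = \mathrm{span}(\matV)$, Lemma~\ref{lem:bestF} implies that $\matY (\matY\transp \matA)_k$ is the minimizer, with cost $\FNormS{\matA - \Pi_{\matV,k}^{\mathrm{F}}(\matA)} \le (1+O(\varepsilon))\FNormS{\matA - \matA_k}$ by hypothesis. The key structural observation is that, because $\matDelta$ consists of the top-$k$ left singular vectors of $\matXi = \matY\transp \matA \matW\transp$, the matrix $\matY \matDelta \matDelta\transp \matY\transp \matA \matW\transp$ is the best rank-$k$ approximation to $\matA \matW\transp$ lying in the column span of $\matY$. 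Thus it is a minimizer of the sketched rank-constrained regression problem $\min_{\matZ:\rank(\matZ)\le k} \FNormS{\matY \matZ \matW\transp - \matA \matW\transp}$.

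To transfer this sketched optimum back to the original (non-sketched) problem, I would invoke an affine-embedding/projection-cost-preserving property of $\matW\transp$: for a suitably chosen sparse subspace embedding with $\xi = \mathrm{poly}(c/\varepsilon)$ columns, one has with probability $0.99$ and for every rank-$k$ matrix $\matZ \in \R^{c\times n}$ simultaneously,
\[
(1-\varepsilon)\FNormS{\matY\matZ - \matA}\;\le\;\FNormS{\matY\matZ\matW\transp - \matA\matW\transp}\;\le\;(1+\varepsilon)\FNormS{\matY\matZ - \matA},
\]
exploiting the rank-$c$ structure of $\matY$ together with the sparse subspace embedding property of Lemma~\ref{lem:subspacesparse1} and standard approximate-matrix-multiplication tail bounds as used in~\cite{CW13arxiv}. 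Chaining this with the optimality of $\matDelta$ on the sketched side, and then with the hypothesis $\FNormS{\matA - \Pi_{\matV,k}^{\mathrm{F}}(\matA)} \le (1+O(\varepsilon))\FNormS{\matA - \matA_k}$, and finally rescaling $\varepsilon$ by a constant, yields the claimed bound. The running time decomposes as $O(mc^2)$ for the QR of $\matV$, $O(\nnz(\matA))$ for $\matA\matW\transp$ (since $\matW$ has one nonzero per column), $O(mc\xi)$ for $\matY\transp(\matA\matW\transp)$, and $O(c^2\xi)$ for the SVD of the $c\times\xi$ matrix $\matXi$; all of which are absorbed in $O(\nnz(\matA) + mc\xi)$.

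The main obstacle will be justifying the affine embedding property with the stated sketch dimension $\xi = O(c/\varepsilon^2)$. The vanilla sparse subspace embedding of Lemma~\ref{lem:subspacesparse1} only gives a $(1\pm\varepsilon)$-subspace embedding at $\xi = \Omega(c^2/\varepsilon^2)$, so to match the statement one must either appeal to a composed sketch (sparse embedding to dimension $O(c^2)$, followed by a dense Johnson-Lindenstrauss transform down to $O(c/\varepsilon^2)$) or use the sharper ``approximate matrix multiplication + subspace embedding'' analysis from~\cite{CW13arxiv} that gives affine embeddings at the smaller dimension. Once this sketching property is in hand, the rest of the argument is essentially a carbon copy of the proof of Lemma~\ref{lem:KVW}, since the sketched rank-$k$ regression problem is solved exactly by the top-$k$ left singular vectors of $\matXi$.
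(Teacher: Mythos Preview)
Your approach is essentially the same as the paper's: both mirror Lemma~\ref{lem:KVW}, replacing the dense sign matrix by a sparse subspace embedding, and give the identical running-time breakdown ($O(mc^2)$ for the QR, $O(\nnz(\matA))$ for $\matA\matW\transp$, $O(mc\xi)$ for $\matY\transp(\matA\matW\transp)$, $O(c\xi\min\{c,\xi\})$ for the SVD). The paper's proof is in fact just a pointer to the proof of Theorem~1.5 in~\cite{KVW13}, noting that the only property of $\matW$ needed there is that it be a subspace embedding for $\matY\transp\matA$ (rather than the stronger affine-embedding you invoke; your version would also work but is not required).

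The dimension obstacle you flag is real, and you have in fact caught a typo in the lemma statement. The paper's own proof says ``our choice of $\matW$ with $\xi = O(c^2/\varepsilon^2)$ satisfies this requirement,'' matching the $\Omega(\rho^2\varepsilon^{-2})$ hypothesis of Lemma~\ref{lem:subspacesparse1}. Moreover, the algorithm in Section~\ref{sec:algorithm2} that invokes this lemma sets $\xi = O(c^2/\varepsilon^2)$ in step~4(b), and the resulting $O(sk^3\varepsilon^{-5})$ communication term in Theorem~\ref{thm1s} is computed accordingly. So the resolution is simply to read $\xi = O(c^2/\varepsilon^2)$ in the statement; no composed sketch or sharper CountSketch analysis is needed.
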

\begin{proof}
This result was proven inside the proof of Theorem 1.5 in~\cite{KVW13}. Specifically,
the error bound proven in~\cite{KVW13} is for the transpose of $\matA$ (also $\matY,\matDelta$ are
denoted with $U, V$ in~\cite{KVW13}). The only
requirement for the embedding matrix $\matW$ (denoted with $P$ in the proof of Theorem 1.5 in~\cite{KVW13})
is to be a subspace embedding for $\matY\transp\matA$,
in the sense that $\matW$ is a subspace embedding for $\matA$ in Lemma~\ref{lem:subspacesparse1}.
Since our choice of $\matW$ with $\xi = O(c^2/\varepsilon^2)$ satisfies this requirement we omit the details of the proof. The running time
is $O(\nnz(\matA)+ m c \xi )$: one can compute (i) $\matY$ in $O(mc^2)$;
(ii) $\matXi$ in $O(\nnz(\matA) + mc\xi)$; and (iii) $\matDelta$ in $O(c \xi \min\{c,\xi\})$.
The failure probability $0.01$ from Lemma~\ref{lem:subspacesparse1}.
\end{proof}

\subsection{Detailed description of the algorithm}\label{sec:algorithm2}
This algorithm is very similar to the algorithm in Section~\ref{sec:algorithm1};
we only replace the parts 2-$(a),$ 3-$(a),$ and 4-$(b)$ with faster procedures. \\
\begin{small}
{\bf Input:}
\begin{enumerate}
\item $\matA \in \R^{m \times n}$ partitioned column-wise
$
\matA =
\begin{pmatrix}
\matA_1 & \matA_2 & \dots & \matA_s
 \end{pmatrix};
$
for $i=1:s,$ $\matA_i \in \R^{m \times w_i}$; $\sum_i w_i=n$.
\item rank parameter $k < \rank(\matA)$
\item accuracy parameter $\varepsilon > 0$
\item failure probability $\delta$
\end{enumerate}
{\bf Algorithm}
\begin{enumerate}
\item {\bf Local Column Sampling}
\begin{enumerate}



\item For each sub-matrix $\matA_i \in \R^{m \times w_i},$ compute $\matC_i \in \R^{m \times \ell}$ containing
$\ell = O(k)$ columns from $\matA_i$ as follows: $\matC_i =  \matA_i \matS_i$. Here, $\matS_i$ has dimensions $w_i \times \ell$ and is constructed as follows: $ \matS_i = BssSampling\1(\matA_i, 4k, 1/2, 0.01/s)$~(see Lemma~\ref{lem:bssproject}).

\item Machine $i$ sends $\matC_i$ to the server.

\end{enumerate}

\item {\bf Global Column Sampling}

\begin{enumerate}
\item Server constructs $m \times (s \cdot \ell)$ matrix $\matG$ containing $(s \cdot \ell)$ actual columns from $\matA$ as follows:
$
\matG =
\begin{pmatrix}
   \matC_1  &
    \matC_2 &
    \dots &
    \matC_s
 \end{pmatrix}.
$
Then, server constructs $\matC \in \R^{m \times c_1}$ via choosing $c_1 = 4k$ columns from $\matG$ as follows:
$\matC = DeterministicCssFrobeniusSparse(\matG, k, c_1)$~(see Lemma~\ref{thm:optimalFdets}).

\item Server sends $\matC$ to all the machines.

\end{enumerate}

\item {\bf Adaptive Column Sampling}

\begin{enumerate}
\item Server initializes the random seed and communicates it to all the machines.
Each machine constructs the same
$\matS \in \R^{r \times m},$ with $r = \frac{4 + 2}{(1/2)^2 - (1/2)^3 } \log n,$ each element of which is a random variable which takes values $\pm 1/\sqrt{r}$ with
equal probability.
Machine $i$ finds $\tilde\matPsi_i = \matS \matA_i -\matS \matC \pinv{\matC} \matA_i \in \R^{m \times w_i}$
and then computes $\beta_i$ as it was described in Lemma~\ref{lem:adaptiveSampling2}.
Machine $i$ sends $\beta_i$ to server.

\item Server computes probability distribution  $g_i = \frac{\beta_i}{\sum_i \beta_i}.$
Server samples i.i.d. with replacement $\ceil{50k/\varepsilon}$ samples (machines) from $g_i$.
Then, server determines numbers $t_i$ ($i=1,2,\dots,s$), where $t_i$ is
the number of times the $i$th machine was sampled. It sends the $t_i$'s to the machines.

\item Machine $i$ computes probabilities $ q_{j}^{i} = \TNormS{ \x  } / \FNormS{\tilde\matPsi_i}$ ($j=1:w_i$),
where $\x$ is the $j$th column of $\tilde\matPsi_i$. And now machine $i$ samples $t_i$ samples
from it's local probability distribution and sends the corresponding columns to the server.
Let $c_2 = \sum_i t_i = \ceil{50k/\varepsilon}$.

\item Server collects the columns and assigns them to $\hat\matC \in \R^{m \times c_2}$.
Server constructs $\tilde\matC$ to be the $m \times (c_1 + c_2) $ matrix:
$
\tilde\matC =
\begin{pmatrix}
   \matC; & \hat\matC
 \end{pmatrix}.
$
Let $c = c_1 + c_2 = 4k + \ceil{50k/\varepsilon}$.
\end{enumerate}

\item {\bf Rank-$k$ matrix in the span of $\tilde{\matC}$}

\begin{enumerate}

\item Server sends $\tilde\matC$  to all the machines and each machine
computes (the same) qr factorization of $\tilde\matC$: $\tilde\matC = \matY \matR$ where
$\matY \in \R^{m \times c}$ has orthonormal columns and $\matR \in \R^{c \times c}$ is upper triangular.

\item  Server initializes the random seed and sends this to each machine, such that each machine generates the same
matrix $\matPsi \in \R^{\xi \times n},$ for $\xi = O(c^2 / \varepsilon^2),$ (see Definition~\ref{def:sse}).
Machine $i$ generates $\matD_i \in \R^{n \times w_i}$ such that implicitly
$
\matD =
\begin{pmatrix}
   \matD_1  &
    \matD_2 &
    \dots &
    \matD_s
 \end{pmatrix}
$
is an $n \times n$ matrix each element of which is $\pm 1,$ with probability $1/2$.
Each machine constructs implicitly $\matW_ i = \matPsi \cdot \matD_i \in \R^{\xi \times w_i}$.
 \emph{Implicitly} all machines together generate
$
\matW =
\begin{pmatrix}
   \matW_1  &
    \matW_2 &
    \dots &
    \matW_s
 \end{pmatrix},
$
with $\matW \in \R^{\xi \times n}$ and $\matW$ is a subspace embedding as in Definition~\ref{def:sse}.
Machine $i$ computes $\matH_i = \tilde\matC \transp (\matA_i \matW_i\transp) \in \R^{c \times \xi}$.
Machine $i$ sends $\matH_i$ to the server.

\item Server computes $\matXi = \sum_{i=1}^{s} \matH_i \in \R^{c \times \xi}$ and sends this back to all the machines. Now machines compute
$\matXi := \matR^{-1} \cdot \matXi  (:= \matY\transp \matA_i \matW_i\transp)$,
and then they compute  $\matDelta \in \R^{c \times k}$ to be the top $k$ left singular vectors of $\matXi.$ Each machine computes $\matU = \matY \cdot \matDelta \in \R^{m \times k}$.
\end{enumerate}

\end{enumerate}
\end{small}

\paragraph{Discussion.} A few remarks are necessary for the last two stages of the algorithm. The third stage (adaptive column sampling), implements the adaptive sampling method of Lemma~\ref{lem:adaptivecolumnss}. To see this note that each column in $\matA$ is sampled with probability
$$
q_j^{i} \cdot g_i \ge \frac{1}{2} \cdot ||\x||_2^2 / \FNormS{\tilde\matPsi},
$$
where $\x$ is this column in $\tilde\matPsi$.
This follows from Lemma~\ref{lem:adaptiveSampling1}.
Overall, this stage constructs $\hat\matC$ such that $\hat\matC = AdaptiveColsSparse(\matA, \matC, c_2,1/2).$

The last stage in the algorithm implements the algorithm in Lemma~\ref{lem:KVW2}. To see this, note that $\matW$ satisfies the properties in the lemma. Hence,
$$ [\matY, \matDelta] = ApproxSubspaceSVDSparse(\matA, \tilde\matC, k, \varepsilon). $$

\subsection{Main result}
The theorem below analyzes the approximation error, the communication complexity,
and the running time of the previous algorithm.
\begin{theorem}\label{thm1s}
The matrix $\tilde\matC \in \R^{m \times c}$  with $c=O(k/\varepsilon)$ columns of $\matA$ satisfies w.p. $0.59 - \frac{s+1}{n} - 2 \delta$:
\begin{equation}\label{eqnthm1s}
 \FNormS{\matA -  \tilde\matC  \pinv{\tilde\matC}\matA} \le
\FNormS{\matA-\Pi_{\tilde\matC,k}^{\mathrm{F}}(\matA)}\le
(1 + O(\varepsilon)) \cdot \left(\sum_{i=k+1}^{\rank(\matA)}\sigma_i^2(\matA) \right).
\end{equation}
The matrix $\matU \in \R^{m \times k}$ with $k$ orthonormal columns satisfies w.p. $0.58 - \frac{s+1}{n} - 2 \delta$:
\begin{equation}\label{eqnthm2s}
 \FNormS{\matA -  \matU \matU\transp \matA} \le
(1 + O(\varepsilon)) \cdot \left(\sum_{i=k+1}^{\rank(\matA)}\sigma_i^2(\matA) \right).
\end{equation}
Let each column of $\matA$ has at most $\phi$ non-zero elements. Then, the communication cost of the algorithm is
$O\left(s k \phi  \varepsilon^{-1} + s k^3 \varepsilon^{-5}  \right)$
and the  running time is
$$O\left(\nnz(\matA) \cdot  \log^2\left(\frac{n s}{\delta} \right)+
(m+n) s \cdot poly(k,\varepsilon^{-1},\log\left(\frac{n s}{\delta} \right) \right).
$$
\end{theorem}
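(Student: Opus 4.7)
}
The plan is to mirror the proof of Theorem~\ref{thm1} in Section~\ref{sec:proof}, substituting each sub-procedure by its sparse/fast counterpart and carefully tracking (i) the extra multiplicative constants the approximate routines introduce and (ii) the accumulated failure probabilities. As before, write $\matA=\matD+\hat\matE$ where $\matD_i=\matC_i\,\pinv{(\matZ_i\transp\matS_i)}\matZ_i\transp$ and $\hat\matE_i=\matA_i-\matD_i$, and let $\matE_i=\matA_i-\matA_i\matZ_i\matZ_i\transp$ be the residual of the boosted sparse SVD. Note that $\matZ_i\transp\matS_i$ is $k\times\ell$ of rank $k$ by Lemma~\ref{lem:dualsets}, so the identity $\matZ_i\transp\matS_i\pinv{(\matZ_i\transp\matS_i)}=\matI_k$ used in Section~\ref{sec:proof} still holds.

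The error analysis then splits into four stages. First I would bound $\FNormS{\hat\matE}$ exactly as in Section~\ref{sec:proof}, replacing the two key estimates by the weaker ones from Lemma~\ref{lem:dualsets}: $\FNormS{\matE_i\matS_i}\le((1+\tfrac12)/(1-\tfrac12))^2\FNormS{\matE_i}=9\FNormS{\matE_i}$ (we invoked $BssSamplingSparse$ with $\varepsilon=1/2$) and $\TNormS{\pinv{(\matZ_i\transp\matS_i)}}\le(1-\sqrt{k/\ell})^{-2}=4$, yielding $\FNormS{\hat\matE_i}\le 37\FNormS{\matE_i}$. The boosted sparse SVD (Lemma~\ref{lem:approxSVDsparseBoosting} with $\varepsilon=1/3$) gives $\FNormS{\matE_i}\le 4\FNormS{\matA_i-\matA_{i,k}}$, so summing over $i$ and using the ``best rank-$k$'' optimality of the $\matV_{opt}$-argument from Section~\ref{sec:proof} yields $\FNormS{\hat\matE}=O(1)\sum_{i>k}\sigma_i^2(\matA)$. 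Second, I would bound $\FNormS{\matD-\matC\pinv{\matC}\matD}$ by the same chain as in Section~\ref{sec:proof}, now using Lemma~\ref{thm:optimalFdets} in place of Lemma~\ref{thm:optimalFdet}; the constant inflates to $4\cdot 4820$ but remains $O(1)$. Third, apply Lemma~\ref{lem:adaptivecolumnss} (the input-sparsity-time adaptive sampling), noting that the distributed sampling described in the algorithm implements its distribution with $\beta=1/2$ via Lemma~\ref{lem:adaptiveSampling1} applied to $\tilde\matPsi$ rather than $\matPsi$; combining with the triangle bound $\FNormS{\matA-\matC\pinv{\matC}\matA}\le 2(\FNormS{\matD-\matC\pinv{\matC}\matD}+\FNormS{\hat\matE})$ gives $\FNormS{\matA-\Pi^{\mathrm{F}}_{\tilde\matC,k}(\matA)}\le(1+O(\varepsilon))\sum_{i>k}\sigma_i^2(\matA)$, which is Eqn.~(\ref{eqnthm1s}). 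Finally, for Eqn.~(\ref{eqnthm2s}), apply Lemma~\ref{lem:KVW2} (ApproxSubspaceSVDSparse) on $(\matA,\tilde\matC)$ to output $\matU$ with the $(1+\varepsilon)$ bound.

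For the success probability, I would take a union bound over: the $s$ invocations of $SparseSVDBoosting$ (failure $\le\delta/s+1/n$ each), the $s$ invocations of $BssSamplingSparse$ (failure $\le\delta/s$ each), the single call to $DeterministicCssFrobeniusSparse$ (failure $\le 0.31$), the call to $AdaptiveColsSparse$ (failure $\le 0.1+1/n$, where the JL sketch $\matS$ used to approximate the sampling probabilities is already accounted for inside the lemma), and, for Eqn.~(\ref{eqnthm2s}) only, the final $ApproxSubspaceSVDSparse$ step (failure $\le 0.01$). This sums to the stated $0.59-2\delta-(s+1)/n$ and $0.58-2\delta-(s+1)/n$ respectively. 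The communication analysis is identical to that in Section~\ref{sec:proof} except that the final subspace-SVD step now sends $\matH_i\in\R^{c\times\xi}$ with $\xi=O(c^2/\varepsilon^2)=O(k^2/\varepsilon^4)$ because Lemma~\ref{lem:KVW2} requires a subspace embedding of larger dimension than the JL matrix in Lemma~\ref{lem:KVW}; this is what produces the $sk^3/\varepsilon^5$ term. The running time is obtained by adding (i) $O(\nnz(\matA_i)\log^2(ns/\delta))+(m+w_i)\poly(k,\log(ns/\delta))$ per machine for $SparseSVDBoosting$ and $BssSamplingSparse$, (ii) $O(\nnz(\matA)\log n+(m+n)\poly(k,\varepsilon^{-1}))$ for the global CSS and the sparse JL-based adaptive sampling, and (iii) the sparse subspace-SVD cost from Lemma~\ref{lem:KVW2}, which sums to the bound in the statement.

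\paragraph{Main obstacle.} The only genuinely new bookkeeping, compared to Theorem~\ref{thm1}, is showing that the JL-compressed residual $\tilde\matPsi=\matS(\matA-\matC\pinv{\matC}\matA)$ used to implement adaptive sampling in the distributed setting still produces over-sampling probabilities $p_j\ge \tfrac12\TNormS{\tilde\matPsi^{(j)}}/\FNormS{\tilde\matPsi}$ for each column of $\matA$; this requires combining Lemma~\ref{lem:adaptiveSampling1} (applied to $\tilde\matPsi$ rather than to $\matPsi$) with the column-norm preservation property of Lemma~\ref{lem:jlt}, and carrying the factor-of-$\tfrac12$ distortion through the guarantee of Lemma~\ref{lem:adaptivecolumnss}. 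Everything else reduces to constant-factor bookkeeping that propagates through unchanged.
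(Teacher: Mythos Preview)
Your proposal is correct and follows essentially the same approach as the paper's own proof: the same decomposition $\matA=\matD+\hat\matE$, the same sequence of lemmas (\ref{lem:dualsets}, \ref{lem:approxSVDsparseBoosting}, \ref{thm:optimalFdets}, \ref{lem:adaptivecolumnss}, \ref{lem:KVW2}) in the same roles, and the same union-bound bookkeeping for the failure probability and communication/running-time analysis. Your ``main obstacle'' is exactly what the paper handles in its post-algorithm Discussion by invoking Lemma~\ref{lem:adaptiveSampling1} on $\tilde\matPsi$, and you are in fact slightly more careful than the paper in tracking the $((1+\tfrac12)/(1-\tfrac12))^2=9$ factor from $BssSamplingSparse$, which the paper's proof appears to drop (it writes $4\FNormS{\matE_i}$ where your $36\FNormS{\matE_i}$ would be the literal bound); either way the constant is absorbed into the $O(\varepsilon)$.
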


\subsection{Proof of Theorem~\ref{thm1s}}\label{sec:proof2}

%
\subsubsection{Proof of Eqn.~\ref{eqnthm1s}}
From Lemma~\ref{lem:adaptivecolumnss}, we have that with probability at least $0.9 - \frac{1}{n}$:
\begin{equation}\label{eqn:boundeqn}
\FNormS{ \matA - \tilde\matC \pinv{\tilde\matC}\matA}
\le
\FNormS{\matA-\Pi_{\tilde\matC,k}^{\mathrm{F}}(\matA)}
\le \sum_{i=k+1}^{\rank(\matA)} \sigma_i^2(\matA)  + \frac{60 \varepsilon}{50} \cdot \FNormS{\matA - \matC \pinv{\matC}\matA}.
\end{equation}
According to Lemma~\ref{lem:projcomb}, $\matG$ is a $(4k,1/2)$-projection-cost preserving sketch of $\matA$.
It means that there exists a constant $c$ which is independent from projection matrix $\matP$ and we have the following:
$$\frac12\FNormS{\matA-\matP\matA}\leq \FNormS{\matG-\matP\matG}+c \leq \frac32\FNormS{\matA-\matP\matA}$$
For convenience, we use $\matP_{\matG}$ to denote a rank-$k$ projection matrix which provides the best rank-$k$ approximation $\matG_k=\matP_{\matG}\matG$ to matrix $\matG$. We also use $\matP_{\matA}$ to denote a rank-$k$ projection matrix which provides the best rank-$k$ approximation $\matA_k=\matP_{\matA}\matA$ to matrix $\matA$. With probability at least $0.68$, the following will be held,
\begin{align*}
\frac12\FNormS{\matA-\matC\matC^\dagger\matA} &\leq \FNormS{\matG-\matC\matC^\dagger\matG}+c\\
&\leq 4820\FNormS{\matG-\matP_{\matG}\matG}+c\\
&\leq 4820\FNormS{\matG-\matP_{\matA}\matG}+c\\
&\leq 4820\left(\FNormS{\matG-\matP_{\matA}\matG}+c\right)\\
&\leq \frac{4820}2\FNormS{\matA-\matP_{\matA}\matA}\\
\end{align*}
The first inequality is due to Lemma~\ref{lem:projcomb}. $\matG$ is a $(4k,1/2)$-projection cost preserving sketch of $\matA$ and $\matC$ has rank at most $4k$. The second inequality is based on Lemma~\ref{thm:optimalFdets}. The third inequality is because $\matP_{\matG}$ provides the best rank-$k$ approximation to $\matG$. The fourth inequality used the fact $c$ is non-negative. The fifth inequality held because $\matG$ is a $(4k,1/2)$-projection cost preserving sketch of $\matA$. Thus, $$\FNormS{\matA-\matC\matC^\dagger\matA}\leq4820\FNormS{\matA-\matA_k}$$
\paragraph{Concluding the proof.}
Combining with Equation~\ref{eqn:boundeqn}, we have:
\begin{equation}\label{eqn3s}
\FNormS{ \matA - \tilde\matC \pinv{\tilde\matC} \matA }  \le
\FNormS{\matA-\Pi_{\tilde\matC,k}^{\mathrm{F}}(\matA)} \le
\sum_{i=k+1}^{\rank(\matA)} \sigma_i^2(\matA)  + O(\varepsilon) \cdot \sum_{i=k+1}^{\rank(\matA)} \sigma_i^2(\matA).
\end{equation}

\subsubsection{Proof of Eqn.~\ref{eqnthm2s}}
We would like to apply Lemma~\ref{lem:KVW2} for the matrix $\matU \in \R^{m \times k}$ in the algorithm.
Note that we already proved that the matrix $\tilde\matC$ in the algorithm satisfies  with probability $0.59 - \frac{s+1}{n} - 2 \delta$:
$$ \FNormS{\matA-\Pi_{\tilde\matC,k}^{\mathrm{F}}(\matA)} \le (1+O(\varepsilon)) \FNormS{\matA - \matA_k}.$$
Lemma~\ref{lem:KVW2} and a simple union bound conclude the proof.

\subsubsection{Running time}
\begin{enumerate}
\item {\bf Local Column Sampling}: $O(\nnz(A)\log(ms)+m\cdot \poly(k,\eps,\log(s)))$ arithmetic operations in total.
\begin{enumerate}
\item $O(\nnz(A_i)\log(ms)+m\cdot \poly(k,\eps,\log(s)))$ for each $\matS_i$
\item -
\end{enumerate}
\item {\bf Global Column Sampling}: $O\left( \nnz(\matA) \cdot \log k + m \cdot \poly(k, \varepsilon^{-1})  \right)$.
\begin{enumerate}
\item $O\left( \nnz(\matA) \cdot \log k + m \cdot poly(k, \varepsilon^{-1})  \right)$ to construct $\matC$ - from Lemma~\ref{thm:optimalFdets}.
\item -
\end{enumerate}
\item {\bf Adaptive Sampling}: $O(  \nnz(\matA) \cdot \log n + m \cdot s \cdot \poly(k,\varepsilon^{-1}, \log n) )$.
\begin{enumerate}
\item  First, we analyze the cost to compute the matrix $\tilde{\matPsi}_i$.
$O(k^2 m)$ for each $\pinv{\matC}$ locally; then, we compute
$(\matS \matA_i)  -  ( ( (\matS \matC_i) \pinv{\matC}_i) \matA_i )$. The costs are:
$\matS \matA_i : = \matD$ takes $O(\nnz(\matA_i))$ time,
$\matS \matC_i : = \matG$ takes $O(\nnz(\matA_i) \log(n))$ time,
$\matG \pinv{\matC}_i : = \matH$ takes $O(m \log(n) k/ \varepsilon^2)$ time,
$\matH \matA := \matL$ takes $O(\nnz(\matA_i) \log(n))$ time, and
$\matD - \matL$ takes $O(n \log n)$ time. So, the total time to compute one $\tilde{\matPsi}_i$ is
$O( \nnz(\matA_i) \log n + m \cdot poly(k,\varepsilon^{-1}, \log n) )$.

Also, there is a cost of computing $\psi_i,$ which is $O(\log n \cdot w_i)$ arithmetic operations.

\item $O(s)$ in total.
\item $O(k/\varepsilon)$ in total using the method of~\cite{Vos91} locally.
\item $O(k/\varepsilon)$.
\end{enumerate}
\item {\bf  Rank-$k$ matrix in $span(\tilde{\matC})$}:
$O( \nnz(\matA) + m \cdot s \cdot poly( k, \varepsilon^{-1})  )$ arithmetic operations in total.
\begin{enumerate}
\item $O(s m k^2/\varepsilon^2)$ in total.
\item $O(s n)$ in total to generate $s$ times the same $\matPsi$. Then another $O(n)$ in total
to generate the $\matD_i$'s. For all  $\matA_i \matW_i\transp$ we need
$O( \nnz(\matA) )$ operations and then
 another $O( m k^3 \varepsilon^{-7} )$ for each $\matY\transp \cdot (\matA_i \matW_i\transp)$
\item $O(s  k^3 \varepsilon^{-5} )$ to find $\matXi$; then another $O(s k^3\varepsilon^{-3}  + s k^4 \varepsilon^{-6} ))$
to update $\matXi$ locally in each machine and
another $O(s k^4 \varepsilon^{-6} )$ to
find $\matDelta$; and $O(smk^2/\varepsilon)$ to find $\matU$ ($s$ times).
\end{enumerate}
\end{enumerate}

\subsubsection{Communication Complexity}
Assume that  we can represent each element in the input matrix $\matA$ with at most
$b$ bits.  Also, we assume that one word has length $b$ bits and $b = O(\log( m n s / \varepsilon ))$.

\begin{enumerate}
\item {\bf Local Column Sampling}: $O(s k \phi)$ words.
\begin{enumerate}
\item -
\item $O(s k \phi)$ elements of $\matA$.
\end{enumerate}

\item {\bf Global Column Sampling}:$O(s k \phi)$ words.
\begin{enumerate}
\item -
\item $O(s k \phi)$ elements of $\matA$.
\end{enumerate}
\item {\bf Adaptive Sampling}: $O(s+ \phi k /\varepsilon)$ words.
\begin{enumerate}
\item $s$ integers each of which is representable with $O(\log k + \log \log(mns/\varepsilon))$ (from Lemma~\ref{lem:adaptiveSampling2}).

\item $s$ integers~(the $t_i$'s) each with magnitude at most $n$, hence representable with $b$ bits.
\item $O(\phi k/\varepsilon)$ elements of $\matA$.
\item -
\end{enumerate}

\item {\bf  Best rank-$k$ matrix in the span of $\tilde{\matC}$}:
$O( s\phi k/\varepsilon +s k^3  \varepsilon^{-5} )$ words.
\begin{enumerate}
\item $O(s k \phi \varepsilon^{-1})$ elements of $\matA$.
\item $O( s k^3  \varepsilon^{-5})$ numbers each of which can be represented with $b$ bits.
\item $O( s k^3  \varepsilon^{-5})$ numbers each of which can be represented with $b$ bits.
\end{enumerate}
\end{enumerate}
In total the communication complexity is $O( s k \phi / \varepsilon + sk^3/\varepsilon^5)$ words.

%

\section{Lower bounds}\label{sec:lower}

Only in Section~\ref{sec:low1}, we discuss the result in arbitrary partition model. 
In Section~\ref{sec:low2}, Section~\ref{sec:low3} and Section~\ref{sec:low4}, we discuss results in column partition model which provides stronger lower bounds than in arbitrary partition model.

\subsection{Dependence on $\Omega(\varepsilon^{-2})$ bits for distributed PCA in arbitrary partition model} \label{sec:low1}

\begin{lemma}[Theorem 8 in \cite{WZ16}]\label{lem:epslower}
There are two machines, and each of them hold a matrix $A_i\in \R^{m\times n}$. Let $\matA=\matA_1+\matA_2$. Suppose $\varepsilon^2 \leq 2n$, the communication of computing a rank-$k$ matrix $\matU$ with constant probability such that 
$$\FNormS{\matA-\matU\matU\transp\matA}\leq (1+\varepsilon)\cdot\FNormS{\matA-\matA_k}$$
needs $\Omega(1/\varepsilon^2)$ bits.
\end{lemma}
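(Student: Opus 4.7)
The plan is to reduce from the two-party Gap-Hamming-Distance (GHD) problem, for which the Chakrabarti--Regev theorem provides a randomized communication lower bound of $\Omega(N)$ even under large constant success probability. In GHD with parameter $N$, Alice holds $x\in\{\pm1\}^N$ and Bob holds $y\in\{\pm1\}^N$ and the players must distinguish $\langle x,y\rangle\geq c\sqrt{N}$ from $\langle x,y\rangle\leq -c\sqrt{N}$ for a large enough constant $c$. I would set $N=\Theta(1/\varepsilon^2)$ so that the lower bound becomes $\Omega(1/\varepsilon^2)$; the side condition in the lemma (which I read as ensuring $n\gtrsim 1/\varepsilon^2$) is precisely what allows this embedding to fit inside the ambient $n$ columns.

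The embedding will use $k=1$ and form a rank-two instance $\matA=\matA_1+\matA_2\in\R^{2\times n}$. Alice's piece $\matA_1$ has $x^\transp/\sqrt{N}$ in its first row (and zeros elsewhere, padding with zeros beyond the first $N$ columns if needed), while Bob's piece $\matA_2$ places $y^\transp/\sqrt{N}$ in its second row and has zeros in the first. A direct computation of $\matA\matA^\transp$ gives $\sigma_1^2(\matA)=1+|\langle x,y\rangle|/N$, $\sigma_2^2(\matA)=1-|\langle x,y\rangle|/N$, and $\FNormS{\matA}=2$. Moreover, for any unit vector $\matU=(u_1,u_2)^\transp\in\R^2$, the disjoint-row structure of $\matA_1,\matA_2$ yields the exact identity $\|\matU^\transp\matA\|_2^2=1+2u_1u_2\langle x,y\rangle/N$.

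Plugging the $(1+\varepsilon)$-approximation guarantee into matrix Pythagoras gives $\|\matU^\transp\matA\|_2^2\geq\FNormS{\matA}-(1+\varepsilon)\sigma_2^2(\matA)$; combining with the identity above and the GHD promise $|\langle x,y\rangle|\ge c\sqrt{N}=c/\varepsilon$ and simplifying, I obtain $2u_1u_2\,\mathrm{sign}(\langle x,y\rangle)\ge 1+\varepsilon-1/c$. Choosing $c$ to be any constant greater than $1$ makes the right-hand side strictly positive, so $\mathrm{sign}(u_1u_2)=\mathrm{sign}(\langle x,y\rangle)$. Since both machines possess the output $\matU$ at the end of the protocol, each can locally compute the single bit $\mathrm{sign}(u_1u_2)$ and thereby solve the GHD instance with no additional communication. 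Hence any protocol meeting the approximation guarantee with constant probability must use $\Omega(N)=\Omega(1/\varepsilon^2)$ bits.

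The hard part will be attending to two technical details. First, the constants must be tuned so that after absorbing the $\varepsilon$-slack the sign bit is recovered with the constant probability required by the GHD lower bound; this is what forces the choice of $c$ to be a sufficiently large constant. Second, one must verify that reducing any protocol succeeding with large constant probability to the distributional form in which the GHD lower bound is stated costs only a constant factor, which follows from standard amplification and Yao's minimax principle. A minor remark is that taking $m=2$ and $k=1$ suffices for the lower bound, and the construction extends to larger $m$ and $k$ simply by padding with zero rows and unused columns, so the $\Omega(1/\varepsilon^2)$ bound applies uniformly across the parameter regime of the lemma.
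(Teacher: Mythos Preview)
Your reduction from Gap-Hamming is correct and is the same high-level idea as the paper's, but your encoding is different and slightly cleaner for $k=1$. The paper places $(x+y)^{\transp}\varepsilon$ in a single row of a $(k+1)\times(1/\varepsilon^2+k)$ matrix, pads with one row of norm $\sqrt{2}$ and $k-1$ rows of much larger norm $\sqrt{2(1+\varepsilon)}/\varepsilon$, and distinguishes the two GHD cases by thresholding the first coordinate of a unit vector in $\ker(\matU\matU^{\transp})$. You instead put $x$ and $y$ in separate rows, compute $\matA\matA^{\transp}$ explicitly, and read off $\mathrm{sign}(\langle x,y\rangle)$ directly from $\mathrm{sign}(u_1u_2)$. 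Your arithmetic is correct: with $|\alpha|\ge c\varepsilon$ for $\alpha=\langle x,y\rangle/N$, the inequality $2u_1u_2\alpha\ge |\alpha|(1+\varepsilon)-\varepsilon$ indeed yields $2u_1u_2\,\mathrm{sign}(\alpha)\ge 1+\varepsilon-1/c>0$ for any constant $c>1$, and the Chakrabarti--Regev bound holds for any fixed constant gap $c$.

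One caveat: your last sentence about extending to general $k$ ``by padding with zero rows and unused columns'' is not right. Zero padding keeps $\mathrm{rank}(\matA)=2$, so for $k\ge 2$ one has $\matA_k=\matA$ and the guarantee is vacuous. The correct extension is exactly what the paper does: append $k-1$ orthogonal rows of norm $\Theta(1/\varepsilon)$, which forces any $(1+\varepsilon)$-approximate rank-$k$ projector to contain those $k-1$ directions and then spend its remaining degree of freedom on your two-dimensional instance. With that fix, your argument covers the full parameter range.
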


\begin{proof}[Proof. (from \cite{WZ16})]
The reduction is from GHD problem.

\begin{lemma}[Theorem 1.1 in~\cite{chakrabarti2012optimal}] \label{lem:GHD}
Each of two machines has an $n$-bit vector. There is a promise: either the inputs are at Hamming distance less than $n/2-c\sqrt{n}$ or greater than $n/2+c\sqrt{n}$. If they want to distinguish these two cases with constant probability, the communication required $\Omega(n)$ bits.
\end{lemma}

Without loss of generality, we assume $1/\varepsilon^{2}$ is an integer, and machine $1$ and machine $2$ have $\x,\y\in\{-1,1\}^{1/\varepsilon^2}$ respectively. There is a promise that either $\x\transp\y<-2/\varepsilon$ or $\x\transp\y>2/\varepsilon$ holds.
Consider about the following protocol:

\begin{enumerate}
\item Machine $1$ constructs $\matA_1\in\R^{(k+1)\times (1/\varepsilon^2+k)}$, and machine $2$ constructs $\matA_2\in\R^{(k+1)\times (1/\varepsilon^2+k)}$:
    
    \begin{align*}
\matA_1=\left(\begin{array}{ccccc}
\x\transp\varepsilon&0&0&...&0\\
{\bf 0}&\sqrt{2}&0&...&0\\
{\bf 0}&0&\frac{\sqrt{2(1+\varepsilon)}}{\varepsilon}&...&0\\
...&...&...&...&...\\
{\bf 0}&0&0&...&\frac{\sqrt{2(1+\varepsilon)}}{\varepsilon}
\end{array}\right)
& & 
\matA_2=\left(\begin{array}{ccccc}
\y\transp\varepsilon&0&0&...&0\\
{\bf 0}&0&0&...&0\\
{\bf 0}&0&0&0& 0\\
...&...&...&...&...\\
{\bf 0}&0&0&...&0
\end{array}\right)
\end{align*}

\item The server computes $\matU$ such that
$$\FNormS{\matA-\matU\matU\transp\matA}\leq (1+\varepsilon)\cdot\FNormS{\matA-\matA_k}$$
and sends $\matU$ to both of two machines, where $\matA=\matA_1+\matA_2$.

\item Let $\matP=\matI_{k+1}-\matU\matU\transp$. Each machine construct $\v=\matP^{(1)}/\|\matP^{(1)}\|_2$

\item Each machine checks whether $\v_1^2<\frac12(1+\varepsilon)$. If yes, return the case $\x\transp\y>2/\varepsilon$. Otherwise return the case $\x\transp\y<2/\varepsilon$
    
\end{enumerate}

Suppose the server successfully computes $\matU$. Since the rank of $\matA$ is at most $k+1$:
\eqan{
\FNormS{\matA-\matU\matU\transp\matA}
& = & \FNormS{\matP\matA} \\
& = & \|\v\matA\|_2^2 \\
& = & \v\matA\matA\transp\v\transp
}
An observation is that
$$\matA\matA\transp=\left(\begin{array}{ccccc}\|\x+\y\|_2^2\varepsilon^2&0&0&...&0\\0&2&0&...&0\\0&0&\frac{2(1+\varepsilon)}{\varepsilon^2}&...&0\\...&...&...&...&...\\0&0&0&...&\frac{2(1+\varepsilon)}{\varepsilon^2}\end{array}\right)$$
We have $\FNormS{\matA-\matA_k}\leq 2$. Then, 
\eqan{
\frac{2(1+\varepsilon)}{\varepsilon^2}\sum_{i=3}^{k+1}\v_i^2 
& = & \frac{2(1+\varepsilon)}{\varepsilon^2}(1-\v_1^2-\v_2^2)\\
& < &2(1+\varepsilon)
}
We have $\v_1^2+\v_2^2>1-\varepsilon^2$.
When $\x\transp\y>2/\varepsilon$, $\|\x+\y\|_2^2\varepsilon^2\geq2+4\varepsilon$
\eqan{
(1+\varepsilon)\FNormS{\matA-\matA_k}
& = & 2(1+\varepsilon) \\
& > & \v_1^2(2+4\varepsilon)+2\v_2^2 \\
& > & 2-2\varepsilon^2+4\varepsilon \v_1^2
}
So, $\v_1^2<\frac12(1+\varepsilon)$.
When $\x\transp\y<-2/\varepsilon$, $\|\x+\y\|_2^2\varepsilon^2\leq2-4\varepsilon$
\eqan{
(1+\varepsilon)\FNormS{\matA-\matA_k}
& = & \|\x+\y\|_2^2\varepsilon^2(1+\varepsilon) \\
& > & \v_1^2\|\x+\y\|_2^2\varepsilon^2+2\v_2^2\\
& = & 2(\v_1^2+\v_2^2)-(2-\|\x+\y\|_2^2\varepsilon^2)\v_1^2 \\
& > & 2(1-\varepsilon^2)-(2-\|\x+\y\|_2^2\varepsilon^2)\v_1^2
}
So, 
$$\v_1^2>\frac{2(1-\varepsilon^2)-\|\x+\y\|_2^2\varepsilon^2(1+\varepsilon)}{2-\|\x+\y\|_2^2\varepsilon^2}\geq\frac12(1+\varepsilon)$$
Therefore, machines can distinguish these two cases. The only communication is computing $\matU$. This cost shoulde be the same as $\Omega(1/\varepsilon^2)$ bits lower bound of the gap hamming distance problem.

\end{proof}

\subsection{Lower bounds for distributed PCA on dense matrices} \label{sec:low2}
This section provides a communication cost lower bound for the Distributed PCA problem of Definition~\ref{def:dpca}. Specifically, we  describe the construction of an $m \times n$ matrix $\matA$ (hard instance), and formally argue that for this $\matA,$ any $k \le 0.99 m,$ and any error parameter $C$ with $1 < C < poly(s k m),$
if there exists some algorithm to construct an $m \times k$ matrix $\matU$ such that, with constant probability,
$\FNormS{\matA - \matU \matU\transp \matA} \le C \cdot \FNormS{\matA - \matA_k},$ then this
algorithm has communication cost $\Omega(s k m)$ words.

\subsubsection{Preliminaries}
We use the notation ${\bf G}_{k,m}$ to denote the set of $k$-dimensional
subspaces of $\mathbb{R}^m$, which we identify with corresponding projector matrices
$\matQ \in \R^{m \times m}$, with rank$(\matQ) = k \le m$, onto
the subspaces, i.e.,
$${\bf G}_{k,m}= \{\matQ \in \R^{m \times m}: \matQ^2 = \matQ, \rank(\matQ) = k \le m. \}$$
We also need a description of a subset from ${\bf G}_{k,m}$:
$$C_{\delta}^k(\matP) = \{\matQ \in {\bf G}_{k,m}: \|\matP-\matQ\|_2 \leq \delta\}.$$
This is the set of those projectors $\matQ \in {\bf G}_{k,m}$ which differ in operator norm by at most $\delta$
from another fixed  projector $\matP \in {\bf G}_{k,m}$.
In our analysis below we also need a result from~\cite{kt13}:
\begin{theorem}(Net Bound - Corollary 5.1 of \cite{kt13})\label{thm:net}
For any $m,k,\delta > 0,$ there is a family $\mathcal{N} = \{\matP^1, \ldots, \matP^N\}$,
$N = 2^{\Omega(k(m-k) \log(1/\delta))}$, where $\matP^i \in {\bf G}_{k,m}$ and
$C_{\delta}^k(\matP^i) \cap C_{\delta}^k(\matP^j) = \emptyset,$ for all $i \neq j$.
\end{theorem}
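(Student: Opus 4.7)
The plan is to establish this net bound via a standard volumetric packing argument on the Grassmann manifold $G_{k,m}$, equipped with the invariant (Haar) probability measure $\mu$ induced by the unitary action of $O(m)$, and with the operator-norm metric $d(\matP,\matQ) = \|\matP-\matQ\|_2$ on rank-$k$ projectors. The manifold $G_{k,m}$ is compact of real dimension $k(m-k)$, and the two key quantitative ingredients are a local parametrization plus a ball-volume estimate showing that for every $\matP \in G_{k,m}$ and every $r \in (0,1/2)$ one has $\mu(C_r^k(\matP)) \le \kappa \cdot r^{k(m-k)}$ for an absolute constant $\kappa > 0$.

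First I would set up the local parametrization. Using the fact that every $\matQ \in C_r^k(\matP)$ with $r$ sufficiently small (say $r < 1/2$) can be written in the form $\matQ = (\matI + \matX)\matP(\matI + \matX)^{-1}$ after a suitable change of orthonormal basis, where $\matX$ is a block off-diagonal matrix of dimension $k(m-k)$ (this is the standard ``graph coordinates'' chart around $\matP$), one checks that $\|\matP - \matQ\|_2 \ge c\|\matX\|_2$ for a dimensionless constant $c$. Pulling back Haar measure onto the parameter space, and comparing with Lebesgue measure on $\mathbb{R}^{k(m-k)}$, this gives the volume bound $\mu(C_r^k(\matP)) \le \kappa \cdot r^{k(m-k)}$ cited above. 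I expect this step to be the main technical obstacle, because one must track both the Jacobian of the chart and the relation between operator-norm distance and the parameter matrix $\matX$; however, in~\cite{kt13} this estimate is already established, so in the proposal I would simply cite their Lemma on ball volumes in $G_{k,m}$.

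Next I would run the packing/covering argument. Let $\mathcal{N} = \{\matP^1, \ldots, \matP^N\}$ be a maximal $2\delta$-separated subset of $G_{k,m}$, i.e., a maximal family with $\|\matP^i - \matP^j\|_2 > 2\delta$ for all $i \ne j$. By the triangle inequality in the operator norm, the balls $C_\delta^k(\matP^i)$ are pairwise disjoint, giving the required net property. By maximality, the balls $C_{2\delta}^k(\matP^i)$ cover $G_{k,m}$, so
\begin{equation*}
1 = \mu(G_{k,m}) \le \sum_{i=1}^N \mu\bigl(C_{2\delta}^k(\matP^i)\bigr) \le N \cdot \kappa (2\delta)^{k(m-k)}.
\end{equation*}

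Finally I would solve for $N$ to conclude $N \ge \kappa^{-1}(2\delta)^{-k(m-k)}$, so $\log_2 N \ge k(m-k)\log_2(1/(2\delta)) - \log_2 \kappa = \Omega\bigl(k(m-k)\log(1/\delta)\bigr)$ as long as $\delta$ is bounded away from $1/2$ (the regime of interest). For larger $\delta$ the statement is vacuous or follows trivially, so the claimed lower bound on $N$ holds. The whole argument is entirely standard once the ball-volume estimate on $G_{k,m}$ is in hand, so the proposal essentially reduces the theorem to that one lemma from~\cite{kt13}.
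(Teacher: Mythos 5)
The paper imports this result verbatim as Corollary~5.1 of~\cite{kt13} without proof, so there is no in-paper argument to compare against. Your sketch is correct and is essentially the standard route. The ball-volume estimate you reduce to --- $\mu(C_r^k(\matP)) \le \kappa\,r^{k(m-k)}$ under Haar measure on $\mathbf{G}_{k,m}$ --- is exactly what the paper separately imports as Claim~5.2 of~\cite{kt13} (stated as Theorem~\ref{thm:probBound} in Section~\ref{sec:low4}): by invariance of Haar measure, the bound $\Pr[\|\matY\matY\transp-\matS\matS\transp\|_2\le\delta]\le e^{-\Theta(k(m-k)\log(1/\delta))}$ for a random pair of $k$-subspaces is the same statement as your volume bound for a ball of radius $\delta$ about a fixed center. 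Your construction via a maximal $2\delta$-separated packing plus a covering/volume count is a deterministic alternative to the more direct probabilistic construction one would run from Claim~5.2, namely drawing $N$ Haar-random subspaces and union-bounding over all $\binom{N}{2}$ pairs; both reduce to the same volume estimate and both give $N=2^{\Omega(k(m-k)\log(1/\delta))}$, so the two arguments are interchangeable in substance. One small technical note: $\matQ=(\matI+\matX)\matP(\matI+\matX)^{-1}$ with block-off-diagonal $\matX$ is an \emph{oblique} projection, not the orthogonal projector onto the graph subspace (the latter is $\matM(\matM\transp\matM)^{-1}\matM\transp$ with $\matM\transp=(\matI_k\;\;\matX\transp)$), so the chart as written does not land in $\mathbf{G}_{k,m}$; since you explicitly defer the quantitative Jacobian/volume estimate to~\cite{kt13}, this slip is immaterial to the argument. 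Your handling of the degenerate large-$\delta$ regime is appropriate.
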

Theorem \ref{thm:net} proves the existence of a large, but finite, set $\mathcal{N}$ of projection matrices, such that if one considers the ball of matrices of operator norm at most $\delta$ centered at each matrix in $\mathcal{N}$, then these balls are disjoint. Our hard instance matrix $\matA$ is constructed from some member in $\mathcal{N}$.

\subsubsection{Hard instance construction}\label{sec:hard}

Recall that in the column-partition model of  Definition~\ref{def:model} there are $s$ servers holding matrices $\matA_1, \ldots, \matA_s$, respectively, where $\matA_i$ has $m$ rows and some subset
of $w_i$ columns of some $m \times n$ matrix $\matA$. Notice that $\sum w_i = n$.
First of all, for arbitrary $m,s$ we assume\footnote{As, otherwise we can choose a
value $s' < s$ so that $n-m \leq s' m \leq n$ and apply the argument of Theorem~\ref{thm:main} with $s$ replaced with $s'$. Then, the lower bound in Theorem~\ref{thm:main} will then be $\Omega(s' m k)$, which assuming $n \geq 2d$, is an $\Omega(n)$ communication lower bound}
 that $n \geq s m$. Below, we describe a specific construction for a matrix $\matA$.

First of all, we set $\delta = 1/(skm)$, the parameter to be used in Theorem \ref{thm:net}. Next,
fix a family $\mathcal{N}$ of subspaces of ${\bf G}_{k,m}$ with the guarantees of Theorem \ref{thm:net}.
Let $\matP_{\matR} = \matR \matR\transp$ be a uniformly random member of $\mathcal{N}$,
where $\matR \in \mathbb{R}^{m \times k}$ has orthonormal columns (any projector can be expressed as
such a matrix, where the columns of $\matR$ span the subspace that $\matP$ projects onto).
The entries of $\matA_1$ are equal to the entries of $\matR$, each rounded to the nearest integer multiple of $1/B$,  where $B = \textrm{poly}(skm)$ is a large enough parameter specified later.
We denote this rounded matrix with $\tilde\matR \in \R^{m \times k}$. So, if $\alpha_{ij}$ is the $(i,j)$th entry in $\matA_1,$ then
$\alpha_{ij} = \frac{k_{min}}{B}$,  with $k_{min} = \argmin_{k \in \mathbb{Z}}  | R_{ij} - \frac{k}{B} |$.
Each $\matA_i$ for $i = 2,3,\dots,s-1$ is $1/B$ times the $m \times m$ identity matrix. Finally, $\matA_s$ is the $m \times t$ matrix of all zeros with $t = n - (s-1)m - k$. I.e.,
$$
\matA =
\begin{pmatrix}
\tilde{\matR} & \frac{1}{B}\matI_m & \frac{1}{B}\matI_m & \dots & \frac{1}{B}\matI_m & {\bf 0}_{m \times t}
 \end{pmatrix}.
$$

\subsubsection{Intermediate results}

First, we give a bound regarding the best rank $k$ approximation for the above matrix $\matA.$
\begin{lemma}\label{lem:bestrankA}
For the matrix $\matA \in \R^{m \times n}$ in Section~\ref{sec:hard}, and any $k \le 0.99 m:$
$\|\matA-\matA_k\|_\mathrm{F}^2 < \frac{sm}{B^2}.$
\end{lemma}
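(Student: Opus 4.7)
The plan is to exhibit an explicit rank-$k$ matrix $\matB$ that approximates $\matA$ well, and then invoke the optimality of $\matA_k$ to conclude $\|\matA - \matA_k\|_F^2 \leq \|\matA - \matB\|_F^2$. This sidesteps having to reason directly about the spectrum of $\matA$, which would be cumbersome because $\matA$ is a concatenation of structurally different blocks.

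The natural candidate is to keep the first block and zero out everything else:
$$\matB = \begin{pmatrix} \tilde{\matR} & \mathbf{0}_{m \times m} & \mathbf{0}_{m \times m} & \cdots & \mathbf{0}_{m \times m} & \mathbf{0}_{m \times t}\end{pmatrix} \in \R^{m \times n}.$$
Since $\tilde{\matR}\in\R^{m\times k}$ has at most $k$ columns, we have $\rank(\matB)\le k$, so $\matB$ is a feasible rank-$k$ approximant. By definition of the best rank-$k$ approximation in Frobenius norm, $\|\matA - \matA_k\|_F^2 \leq \|\matA - \matB\|_F^2$.

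Computing the residual is immediate from the block structure. The first block of $\matA - \matB$ is $\tilde{\matR}-\tilde{\matR} = \mathbf{0}$, the last block is $\mathbf{0}_{m\times t} - \mathbf{0}_{m\times t} = \mathbf{0}$, and each of the $s-2$ middle blocks equals $\frac{1}{B}\matI_m$. Using that $\|\frac{1}{B}\matI_m\|_F^2 = m/B^2$ and that the squared Frobenius norm is additive over disjoint column blocks,
$$\|\matA - \matB\|_F^2 = (s-2)\cdot \frac{m}{B^2} < \frac{sm}{B^2}.$$
Chaining with the previous inequality gives $\|\matA - \matA_k\|_F^2 < \frac{sm}{B^2}$, as desired. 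The assumption $k \le 0.99 m$ in the lemma is not actually needed for this upper bound; it is presumably used elsewhere in the lower bound argument (e.g., to ensure that the net $\mathcal{N}$ from Theorem~\ref{thm:net} is nontrivially large, since $2^{\Omega(k(m-k)\log(1/\delta))}$ requires $m - k$ to be $\Omega(m)$).

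There is no real obstacle here: the construction was engineered precisely so that a single rank-$k$ matrix absorbs the entire $\tilde{\matR}$ block, and the remaining mass is a tiny $O(1/B^2)$ perturbation contributed by the identity blocks. The slight subtlety worth flagging is simply to make sure the zero-padding of $\matB$ is done to match the column partitioning of $\matA$ so that the per-block subtraction goes through cleanly.
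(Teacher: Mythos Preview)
Your proof is correct and takes essentially the same approach as the paper. The paper uses the rank-$k$ approximant $\matA_1\matA_1^{\dagger}\matA$ (projection onto the column span of $\tilde{\matR}$) rather than your explicit block matrix $\matB = [\tilde{\matR},\mathbf{0},\ldots,\mathbf{0}]$, but both choices zero the first-block residual and leave at most $\sum_{i>1}\FNormS{\matA_i} = (s-2)\,m/B^2 < sm/B^2$; your version is, if anything, slightly more direct.
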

\begin{proof}
We have,
\begin{eqnarray*}
\|\matA-\matA_k\|_\mathrm{F}^2 & \leq & \|\matA-\matA_1 \matA_1^{\dagger} \matA\|_\mathrm{F}^2\\
& = & \sum_{i > 1} \|(\matI-\matA_1 \matA_1^{\dagger})\matA_i\|_\mathrm{F}^2\\
& \leq & \sum_{i > 1} \|\matA_i\|_\mathrm{F}^2\\
& = & (s-2) \cdot \frac{m}{B^2}\\
& < & \frac{sm}{B^2},
\end{eqnarray*}
where the first inequality uses the fact that $\matA_1\matA_1^{\dagger}$ is a matrix of rank at most $k$,
the first equality uses the fact that $(\matI-\matA_1 \matA_1^{\dagger}) \matA_i$ is the all-zeros matrix, the second inequality uses
that a projector cannot increase a unitarily invariant norm, and the third inequality follows by construction.
\end{proof}
Next, we bound, in the operator norm, the difference of $\matA_1$ from the matrix $\matR$.
The lemma follows from the fact that $\matA_1 \in \R^{m \times k}$ is obtained by
$\matR \in \R^{m \times k}$ by rounding
each entry to the nearest integer multiple of $1/B$, and then summing the squared differences
across all $km$ entries.

\begin{lemma}\label{lem:precision}(Precision Lemma)
$\|\matA_1 - \matR\|_2 \leq \frac{\sqrt{km}}{B}.$
\end{lemma}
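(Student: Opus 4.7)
The plan is to prove this by the standard chain of inequalities $\|\cdot\|_2 \le \|\cdot\|_\mathrm{F}$ together with an entry-wise bound on the rounding error. Since each entry $\alpha_{ij}$ of $\matA_1$ was obtained by rounding the corresponding entry $R_{ij}$ of $\matR$ to the nearest integer multiple of $1/B$, the entry-wise error satisfies $|\alpha_{ij} - R_{ij}| \le \tfrac{1}{2B}$ for every $(i,j) \in [m] \times [k]$.

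First I would pass from the spectral norm to the Frobenius norm via the submultiplicative inequality $\|\matM\|_2 \le \|\matM\|_\mathrm{F}$ (stated in the ``Matrix norms'' preliminaries section). Then I would bound the Frobenius norm by summing the squared entrywise errors: $\FNormS{\matA_1 - \matR} = \sum_{i=1}^m \sum_{j=1}^k (\alpha_{ij} - R_{ij})^2 \le km \cdot \tfrac{1}{4 B^2}$. Taking square roots yields $\FNorm{\matA_1 - \matR} \le \tfrac{\sqrt{km}}{2B} \le \tfrac{\sqrt{km}}{B}$, which is exactly the claimed bound.

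There is no real obstacle here; the lemma is essentially a definitional consequence of the rounding rule and the standard norm comparison $\|\cdot\|_2 \le \|\cdot\|_\mathrm{F}$. The only thing worth noting is that the stated bound $\tfrac{\sqrt{km}}{B}$ is loose by a factor of $2$ compared to what the argument yields, so there is some slack that can absorb minor issues if the authors later want a slightly stronger form (e.g., pulling a constant out). Since both $k$ and $m$ appear under the square root and $B = \poly(skm)$ can be chosen arbitrarily large, this lemma will later let us treat $\matA_1$ as spectrally indistinguishable from the truly orthonormal $\matR$, which is the property needed to invoke the net $\mathcal{N}$ from Theorem~\ref{thm:net} at scale $\delta = 1/(skm)$ in the subsequent lower-bound argument.
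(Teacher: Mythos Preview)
Your proposal is correct and follows essentially the same approach as the paper: bound the spectral norm by the Frobenius norm and then sum the squared entrywise rounding errors over the $km$ entries. The paper's proof is the one-line computation $\|\matA_1 - \matR\|_2^2 \le \|\matA_1 - \matR\|_\mathrm{F}^2 \le km/B^2$; your version is slightly more careful in tracking the factor of $1/2$ from ``nearest integer multiple,'' but this is the same argument.
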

\begin{proof}
$
\|\matA_1 - \matR\|_2^2 \leq \|\matA_1-\matR\|_\mathrm{F}^2 \leq \frac{km}{B^2}.
$
\end{proof}
Next, we prove a pure linear algebraic result.
The following lemma captures the fact that if some $\matP \in {\bf G}_{k,m}$ is close to $\matQ \matP$
(in Frobenius norm)
for
$\matQ \in {\bf G}_{k,m}$, then $\matP$ is also close to $\matQ$ (in Frobenius norm). We will use this lemma later for $\matP = \matP_{\matR}.$
\begin{lemma}\label{lem:errorMeasure}(Error Measure)
Let $\matP, \matQ  \in {\bf G}_{k,m}$
with $\|\matP-\matQ \matP\|_\mathrm{F}^2 \leq \Delta$.
Then $\|\matP-\matQ\|_\mathrm{F}^2 \leq 2\Delta$.
\end{lemma}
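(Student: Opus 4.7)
The plan is to prove the identity $\|\matP-\matQ\|_\mathrm{F}^2 = 2\|\matP-\matQ\matP\|_\mathrm{F}^2$ directly, using only that $\matP$ and $\matQ$ are both symmetric, idempotent, and of rank $k$. The inequality in the lemma then follows immediately from the hypothesis $\|\matP-\matQ\matP\|_\mathrm{F}^2 \le \Delta$ (in fact, as equality up to the factor of $2$).

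First, I would expand $\|\matP-\matQ\|_\mathrm{F}^2$ via the trace: since both matrices are symmetric,
$$\|\matP-\matQ\|_\mathrm{F}^2 = \trace((\matP-\matQ)^2) = \trace(\matP^2) + \trace(\matQ^2) - 2\trace(\matP\matQ).$$
Using $\matP^2=\matP$, $\matQ^2=\matQ$, and $\trace(\matP)=\trace(\matQ)=k$, this simplifies to $2k - 2\trace(\matP\matQ)$.

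Next, I would expand the left-hand side of the hypothesis similarly:
$$\|\matP-\matQ\matP\|_\mathrm{F}^2 = \trace\bigl((\matP-\matQ\matP)^{\mathsf T}(\matP-\matQ\matP)\bigr) = \trace(\matP(\matI-\matQ)^2\matP) = \trace(\matP(\matI-\matQ)\matP),$$
where the last step uses that $(\matI-\matQ)$ is itself a projector (symmetric and idempotent). Expanding and using the cyclic property of trace together with $\matP^2=\matP$ gives
$$\|\matP-\matQ\matP\|_\mathrm{F}^2 = \trace(\matP) - \trace(\matP\matQ\matP) = k - \trace(\matP\matQ).$$
Combining the two displays yields $\|\matP-\matQ\|_\mathrm{F}^2 = 2\|\matP-\matQ\matP\|_\mathrm{F}^2 \le 2\Delta$, as required.

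There is no real obstacle here; the only subtlety is remembering to use that $\matP$ and $\matQ$ have the same trace $k$ so that the two ``one-sided'' trace expressions (one for $\matP-\matQ$ symmetric, one for $\matP-\matQ\matP$ one-sided) line up. Once the hypothesis $\matP,\matQ\in{\bf G}_{k,m}$ is fully exploited (symmetry, idempotence, equal trace), the argument is three lines of trace manipulation.
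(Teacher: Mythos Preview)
Your proof is correct and, in fact, establishes the sharper identity $\|\matP-\matQ\|_\mathrm{F}^2 = 2\|\matP-\matQ\matP\|_\mathrm{F}^2$ rather than just the inequality. Your route is a direct trace computation: expand both Frobenius norms as traces, use $\matP^2=\matP$, $\matQ^2=\matQ$, $\trace(\matP)=\trace(\matQ)=k$, and observe that both sides equal $2(k-\trace(\matP\matQ))$ and $k-\trace(\matP\matQ)$ respectively.

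The paper takes a slightly different path. It applies the matrix Pythagorean theorem twice: first to write $k=\|\matQ\|_\mathrm{F}^2=\|\matQ\matP\|_\mathrm{F}^2+\|\matQ(\matI-\matP)\|_\mathrm{F}^2$ and $k=\|\matP\|_\mathrm{F}^2=\|\matQ\matP\|_\mathrm{F}^2+\|(\matI-\matQ)\matP\|_\mathrm{F}^2$, from which $\|\matQ(\matI-\matP)\|_\mathrm{F}^2\le\Delta$ follows; then it decomposes $\|\matP-\matQ\|_\mathrm{F}^2=\|\matP-\matQ\matP\|_\mathrm{F}^2+\|\matQ-\matQ\matP\|_\mathrm{F}^2$ and bounds each summand by~$\Delta$. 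Both arguments use the same ingredients (symmetry, idempotence, equal rank), but yours compresses them into two lines of trace algebra and makes the factor of $2$ visibly an equality, whereas the paper's Pythagorean decomposition is more geometric but leaves the tightness implicit.
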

\begin{proof}
By the matrix Pythagorean theorem,
\begin{eqnarray}\label{eqn:pythag}
k = \|\matQ\|_\mathrm{F}^2 = \|\matQ \matP\|_\mathrm{F}^2 + \|\matQ(\matI-\matP)\|_\mathrm{F}^2.
\end{eqnarray}
Also by the matrix Pythagorean theorem,
\begin{eqnarray*}\label{eqn:pythag2}
k = \|\matP\|_\mathrm{F}^2 = \|\matQ\matP\|_\mathrm{F}^2 + \|(\matI-\matQ)\matP\|_\mathrm{F}^2,
\end{eqnarray*}
and so using the premise of the lemma, $\|\matQ \matP\|_\mathrm{F}^2 \geq k-\Delta$.
Combining with (\ref{eqn:pythag}),
\begin{eqnarray}\label{eqn:needed}
\|\matQ(\matI-\matP)\|_\mathrm{F}^2 \leq \Delta.
\end{eqnarray}
Hence,
\begin{eqnarray*}\label{eqn:triangle}
\|\matP-\matQ\|_\mathrm{F}^2 & = & \|(\matP-\matQ)\matP\|_\mathrm{F}^2 + \|(\matP-\matQ)(\matI-\matP)\|_\mathrm{F}^2 \\
& = & \|\matP-\matQ\matP\|_\mathrm{F}^2 + \|\matQ - \matQ\matP\|_\mathrm{F}^2 \\
& \leq & \Delta + \|\matQ(\matI-\matP)\|_\mathrm{F}^2 \\
& \leq & 2\Delta,
\end{eqnarray*}
where the first equality follows by the matrix Pythagorean theorem, the second
equality uses $\matP^2 = \matP$ (since $\matP$ is a projector), the third equality uses the
bound in the premise of the lemma, and the fourth inequality uses (\ref{eqn:needed}).
\end{proof}

\subsubsection{Main Argument}
Before presenting the main theorem, we give an intermediate technical lemma.

\begin{lemma}(Implication of Correctness)\label{lem:impl}
Let $\matA \in \R^{m \times n}$ be the hard instance matrix in Section~\ref{sec:hard} and let $\matA$ be distributed in $s$ machines in the way it was described in Section~\ref{sec:hard}:
$$
\matA =
\begin{pmatrix}
\tilde{\matR} & \frac{1}{B}\matI_m & \frac{1}{B}\matI_m & \dots & \frac{1}{B}\matI_m & {\bf 0}_{m \times t}
 \end{pmatrix}.
$$
Suppose $n = \Omega(sm)$, $k < \rank(\matA)$, and $1 < C < \textrm{poly}(skm)$.
$\matA$ is revealed to the machines by just describing the entries in $\matA$ and without specifying any other detail regarding the construction.
Given this $\matA$ and assuming further that each machine knows some projector matrix $\matQ \in \R^{m \times m}$ with rank at most $k$ such that $\|\matA-\matQ\matA\|_\mathrm{F} \leq C\|\matA-\matA_k\|_\mathrm{F}$, there is a
deterministic algorithm (protocol) which, upon termination, leaves on each machine the matrix
$\matR \in \R^{m \times k}$ which was used to construct $\matA_1$. %
\end{lemma}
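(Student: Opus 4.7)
\medskip
\noindent\emph{Proof plan.} The idea is that knowledge of the input $\matA$ and of the projector $\matQ$ is enough for any machine to identify the random net element $\matP_{\matR} \in \mathcal{N}$ used to build $\tilde{\matR}$, since the balls $C_\delta^k(\matP^i)$ around distinct elements of $\mathcal{N}$ are disjoint by Theorem~\ref{thm:net}. Once $\matP_{\matR}$ is identified, the canonical representative $\matR$ associated to $\matP_{\matR}$ in $\mathcal{N}$ is also identified, and the algorithm simply outputs it; no communication beyond what is already on each machine is needed. The only real work is to argue that $\matQ$ must in fact lie in $C_\delta^k(\matP_{\matR})$.

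\medskip
\noindent First, I will bound $\|\tilde{\matR}-\matQ\tilde{\matR}\|_{\mathrm{F}}$. By hypothesis and by Lemma~\ref{lem:bestrankA},
$$\|\matA-\matQ\matA\|_{\mathrm{F}}^2 \;\leq\; C^2\,\|\matA-\matA_k\|_{\mathrm{F}}^2 \;<\; \tfrac{C^2 s m}{B^2}.$$
Since $(\matI-\matQ)\matA$ has as its first block $(\matI-\matQ)\tilde{\matR}$, this yields
$\|\tilde{\matR}-\matQ\tilde{\matR}\|_{\mathrm{F}}^2 \leq C^2 s m / B^2$. Next, using $\|\tilde{\matR}-\matR\|_{\mathrm{F}} \leq \sqrt{km}/B$ (from the coordinate-wise rounding that underlies Lemma~\ref{lem:precision}) and the triangle inequality together with the fact that $\matI-\matQ$ is a projector,
$$\|\matR-\matQ\matR\|_{\mathrm{F}} \;\leq\; \|\tilde{\matR}-\matQ\tilde{\matR}\|_{\mathrm{F}} + 2\|\matR-\tilde{\matR}\|_{\mathrm{F}} \;\leq\; \tfrac{C\sqrt{sm}+2\sqrt{km}}{B}.$$

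\medskip
\noindent The second step converts this into a closeness bound on the projectors. Since $\matP_{\matR}=\matR\matR\transp$ and $\matR$ has orthonormal columns, $\|\matP_{\matR}-\matQ\matP_{\matR}\|_{\mathrm{F}} = \|(\matR-\matQ\matR)\matR\transp\|_{\mathrm{F}} \leq \|\matR-\matQ\matR\|_{\mathrm{F}}$. Applying Lemma~\ref{lem:errorMeasure} (Error Measure) with $\matP:=\matP_{\matR}$ and the role of $\matQ$ unchanged, we get
$$\|\matP_{\matR}-\matQ\|_{\mathrm{F}}^2 \;\leq\; 2\,\|\matP_{\matR}-\matQ\matP_{\matR}\|_{\mathrm{F}}^2 \;\leq\; \tfrac{2(C\sqrt{sm}+2\sqrt{km})^2}{B^2}.$$
Choosing the constant $B=\poly(skm)$ large enough (recall $C\leq \poly(skm)$) makes this bound at most $\delta^2$ where $\delta = 1/(skm)$, and hence
$\|\matP_{\matR}-\matQ\|_2 \leq \|\matP_{\matR}-\matQ\|_{\mathrm{F}} \leq \delta$, i.e.\ $\matQ \in C_\delta^k(\matP_{\matR})$.

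\medskip
\noindent Finally, I will extract $\matR$. Each machine knows the matrix $\matA$ (hence in particular $\tilde{\matR}$), the shared family $\mathcal{N}=\{\matP^1,\ldots,\matP^N\}$ of Theorem~\ref{thm:net} together with its canonical orthonormal representatives $\matR^1,\ldots,\matR^N$, and the projector $\matQ$. Each machine deterministically sweeps through $\mathcal{N}$ and selects the unique $\matP^i$ with $\|\matP^i-\matQ\|_2 \leq \delta$; uniqueness is guaranteed by Theorem~\ref{thm:net}, which states that the balls $C_\delta^k(\matP^i)$ are pairwise disjoint, and existence follows from the preceding paragraph because $\matP_{\matR}\in\mathcal{N}$ is such a ball center. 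The machine then outputs the associated $\matR^i=\matR$. The procedure is deterministic and requires no communication beyond what the machines are assumed to have; the only subtlety (and the main thing to verify carefully) is the chain of inequalities that forces $\matQ$ into the correct $\delta$-ball, which is exactly what the choice of $B=\poly(skm)$ achieves.
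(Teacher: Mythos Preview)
Your approach is essentially the same as the paper's: bound $\|(\matI-\matQ)\tilde{\matR}\|_{\mathrm{F}}$ via Lemma~\ref{lem:bestrankA}, pass to $\matR$ by the rounding bound of Lemma~\ref{lem:precision}, convert to a projector bound via Lemma~\ref{lem:errorMeasure}, and then use the disjointness of the $C_\delta^k(\matP^i)$ balls from Theorem~\ref{thm:net} to single out $\matP_{\matR}$.

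There is one genuine oversight. Both Lemma~\ref{lem:errorMeasure} and the definition of $C_\delta^k(\cdot)$ require $\matQ\in\mathbf{G}_{k,m}$, i.e.\ $\matQ$ must be a projector of rank \emph{exactly} $k$. Your hypothesis only gives $\rank(\matQ)\le k$, and you apply Lemma~\ref{lem:errorMeasure} and then assert ``$\matQ\in C_\delta^k(\matP_{\matR})$'' and ``uniqueness is guaranteed by Theorem~\ref{thm:net}'' without checking this. The paper handles this explicitly: before invoking Lemma~\ref{lem:errorMeasure} it has each machine deterministically add standard basis vectors to the column space of $\matQ$ until it has rank $k$ (this cannot increase $\|\matA-\matQ\matA\|_{\mathrm{F}}$, and all machines do the same thing). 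An alternative patch that fits your write-up is to note that your bound $\|(\matI-\matQ)\matP_{\matR}\|_{\mathrm{F}}^2\le\Delta$ with $\Delta<1$ already forces $\rank(\matQ)=k$: by matrix Pythagoras $\|\matQ\matP_{\matR}\|_{\mathrm{F}}^2\ge k-\Delta$, while $\|\matQ\matP_{\matR}\|_{\mathrm{F}}^2\le\|\matQ\|_{\mathrm{F}}^2=\rank(\matQ)$, so $\rank(\matQ)\ge k-\Delta>k-1$. Either fix closes the gap; without one, your invocation of Lemma~\ref{lem:errorMeasure} and the ball-disjointness argument are not justified as stated.
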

\begin{proof}
First, we prove that $\|\matR\matR\transp-\matQ\|_2 < \frac{1}{2skm}$. Then, using this bound, we describe a protocol that deterministically reveals the matrix $\matR$  which was used to construct $\matA_1$.

Using Lemma~\ref{lem:bestrankA} and the premise in the lemma, it follows that:
$
\|\matA-\matQ\matA\|_\mathrm{F}^2 \leq C \frac{sm}{B^2},
$
which in particular implies that
\begin{eqnarray}\label{eqn:boundU}
\|\matA_1-\matQ\matA_1\|_\mathrm{F}^2 \leq C \frac{sm}{B^2}.
\end{eqnarray}
We further manipulate the term $\|\matA_1-\matQ\matA_1\|_\mathrm{F}$ as follows:
\begin{eqnarray*}
\|\matA_1-\matQ\matA_1\|_\mathrm{F} & = & \|(\matI-\matQ)\matA_1\|_\mathrm{F}\\
& = & \|(\matI-\matQ)\matR\matR\transp + (\matI-\matQ)(\matA_1-\matR\matR\transp)\|_\mathrm{F}\\
& \geq & \|(\matI-\matQ)\matR\matR\transp\|_\mathrm{F} - \|(\matI-\matQ)(\matA_1-\matR\matR\transp)\|_\mathrm{F}\\
& \geq & \|(\matI-\matQ)\matR\matR\transp\|_\mathrm{F} - \|\matI-\matQ\|_\mathrm{F} \|(\matA_1-\matR\matR\transp)\|_2 \\
& \geq & \|(\matI-\matQ)\matR\matR\transp\|_\mathrm{F} - \frac{\sqrt{km}}{B}(\sqrt{m} + \sqrt{k}),
\end{eqnarray*}
where the first inequality is the triangle inequality, the second
inequality uses sub-multiplicativity, and the third inequality uses the
triangle inequality, i.e., that $\|\matI-\matQ\|_\mathrm{F} \leq \|\matI\|_\mathrm{F} + \|\matQ\|_\mathrm{F}$,
and Lemma \ref{lem:precision}.
Combining this bound with (\ref{eqn:boundU}), it follows that
$\|(\matI-\matQ)\matR \matR\transp\|_\mathrm{F} \leq \frac{\sqrt{Csm}}{B} + \frac{2m\sqrt{k}}{B}.$
At this point we would like to apply Lemma \ref{lem:errorMeasure} to conclude that
$\|\matR\matR\transp-\matQ\|_\mathrm{F}$ is small. To do so, we need $\matR\matR\transp$ and $\matQ$ to be of rank $k$. While
$\matR$ has rank $k$ by construction, $\matQ$ may not. However,
increasing the rank of $\matQ$ cannot increase the error  $\|\matA-\matQ\matA\|_\mathrm{F}$.
Hence, if rank$(\matQ) < k$, given $\matQ,$ the protocol in the premise of the lemma allows
each server to locally add standard basis vectors to the column
space of $\matQ$ until the column space of $\matQ$ becomes $k$.
This involves no communication and each server ends up with the same new
setting of $\matQ$, which for simplicity we still denote with $\matQ$.
Therefore, $rank(\matQ)=k$. Applying Lemma \ref{lem:errorMeasure},
it now follows that
\vspace{-0.1in}
$$\|\matR\matR\transp-\matQ\|_2 \leq \|\matR\matR\transp-\matQ\|_\mathrm{F} \leq \sqrt{2} \left (\frac{\sqrt{Csm}}{B} + \frac{2m\sqrt{k}}{B} \right ).$$
By setting $B$ to be a sufficiently large poly$(skm)$,
we have $\|\matR\matR\transp-\matQ\|_2 < \frac{1}{2skm}$.

Therefore, given $\matQ$,
by Theorem \ref{thm:net} there is a unique matrix $\matP^i$ in $\mathcal{N}$
for which $\matQ \in C_{\delta}^k(\matP^i)$, with $\delta = \frac{1}{skm}$.
Indeed, since $\|\matR-\matQ\|_2 \leq \frac{1}{skm} < \delta/2$, there cannot
be two matrices $\matP^i$ and $\matP^j$ for $i \neq j$ for which $\matQ \in C_{\delta}^k(\matP^i)$
and $\matQ \in C_{\delta}^k(\matP^j)$ since then by the triangle inequality $\|\matP^i - \matP^j\|_2 < \delta$,
contradicting the construction of $\mathcal{N}$.
Hence, it follows that each machine can deterministically identify this $\matP^i,$ by enumeration over $\mathcal{N}$,
and therefore compute $\matR$, via, for example, an SVD.
\end{proof}
We are now fully equipped to present the main argument about the communication cost lower bound for distributed PCA.
\begin{theorem}(Main)\label{thm:main}
Let $\matA \in \R^{m \times n}$ be the hard instance matrix in Section~\ref{sec:hard} and let $\matA$ be distributed in $s$ machines in the way it was described in Section~\ref{sec:hard}:
$$
\matA =
\begin{pmatrix}
\tilde{\matR} & \frac{1}{B}\matI_m & \frac{1}{B}\matI_m & \dots & \frac{1}{B}\matI_m & {\bf 0}_{m \times t}
 \end{pmatrix}.
$$
Suppose $n = \Omega(sm)$, $k \leq .99m$, and $1 < C < \textrm{poly}(skm)$.
Assume that there is an algorithm (protocol) which succeeds with probability at least $2/3$ in having the
$i$-th server output $\matQ \matA_i$, for all $i$,
where $\matQ \in \R^{m \times m}$ is a projector matrix with rank at most $k$ such that $\|\matA-\matQ\matA \|_\mathrm{F} \leq C\|\matA-\matA_k\|_\mathrm{F}$. Then, this algorithm requires $\Omega(skm \log(skm))$ bits of communication.

Further, the bound holds even if the input matrices $\matA_i$ to the servers have all
entries which are integer multiples of $1/B$, for a value $B = \textrm{poly}(skm)$,
and bounded in magnitude by $1$, and therefore all entries can be specified with $O(\log(skm))$ bits.

Note that assuming a word size of $\Theta(\log(skm))$ bits, we obtain an $\Omega(skm)$ word lower bound.
\end{theorem}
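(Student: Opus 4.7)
}
The plan is a direct-information-transfer argument built on top of Lemma~\ref{lem:impl} and the net size in Theorem~\ref{thm:net}. First I would fix the randomness on the input: let $\matR$ be drawn uniformly from the net $\mathcal{N}$ of Theorem~\ref{thm:net} with $\delta=1/(skm)$, so that $|\mathcal{N}| = 2^{\Omega(k(m-k)\log(skm))} = 2^{\Omega(km\log(skm))}$ when $k\le 0.99m$. Thus $H(\matR) = \Omega(km\log(skm))$ bits. The matrix $\matA_1 = \tilde\matR$ is determined by $\matR$ (by rounding), while for every $i\in\{2,\ldots,s-1\}$ the local input is the fixed matrix $\frac{1}{B}\matI_m$, which is independent of $\matR$, and $\matA_s$ is zero.

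Next, I would extract the key consequence of the output specification. By hypothesis, with probability at least $2/3$ machine $i$ outputs $\matQ\matA_i$ where $\matQ$ is a common rank-$\le k$ projector satisfying $\|\matA-\matQ\matA\|_{\mathrm F}\le C\|\matA-\matA_k\|_{\mathrm F}$. For any $i\in\{2,\ldots,s-1\}$ this output equals $\frac{1}{B}\matQ$, so from its output machine $i$ can recover $\matQ$ exactly. Lemma~\ref{lem:impl} then guarantees that, given $\matQ$ (and after the deterministic rank-completion step described there, which is a purely local computation), machine $i$ can deterministically identify $\matR$ from the net $\mathcal{N}$. In particular, there is a random variable $\hat\matR_i$, computable by machine $i$ from its input and the messages it receives, with $\Prob[\hat\matR_i = \matR]\ge 2/3$.

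Now I would run an information-theoretic lower bound on the transcript received by each such machine. For $i\in\{2,\ldots,s-1\}$, the input of machine $i$ is constant (independent of $\matR$), and any private randomness is independent of $\matR$. Hence the only source of information about $\matR$ available to machine $i$ is the transcript $\Pi_i$ of messages it receives in the protocol, and $\hat\matR_i$ is a function of $\Pi_i$ and the machine's local (input-free) randomness. By Fano's inequality applied to the uniform prior on $\mathcal{N}$,
\[
H(\matR\mid \Pi_i) \le 1 + \tfrac{1}{3}\log_2|\mathcal{N}|,
\]
so $I(\matR;\Pi_i) \ge H(\matR)-1-\tfrac{1}{3}\log_2|\mathcal{N}| = \Omega(km\log(skm))$. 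Since the expected length of $\Pi_i$ upper bounds $I(\matR;\Pi_i)$, the expected number of bits received by machine $i$ is $\Omega(km\log(skm))$.

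Finally, I would sum over $i=2,\ldots,s-1$ (a $\Theta(s)$-size set, using $s\ge 3$; the cases of very small $s$ are trivial or follow by padding) to obtain an $\Omega(skm\log(skm))$ lower bound on expected total communication, hence on worst-case communication. The verification that the entries of every $\matA_i$ are integer multiples of $1/B$ bounded by $1$ follows directly from the construction, as each entry of $\tilde\matR$ lies in $[-1,1]$ after rounding. The main technical obstacle is the clean use of Lemma~\ref{lem:impl} inside the information-theoretic step: I need to justify that the rank-completion trick and the net-decoding step are purely local (no further communication), so that $\hat\matR_i$ is genuinely a function of $(\Pi_i,\text{local randomness})$; this is immediate from the statement of Lemma~\ref{lem:impl}, which only assumes each machine already knows $\matQ$ and $\matA$. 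A minor care is needed to handle the $2/3$-probability success model via Fano (rather than a deterministic reduction), and to avoid double-counting bits sent by the coordinator to multiple machines — but since we count incoming bits per machine, the total transcript length is at least the sum, which is what the theorem requires.
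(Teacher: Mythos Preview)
Your proposal is correct and follows essentially the same route as the paper. Both arguments observe that each middle server $i\in\{2,\ldots,s-1\}$ has the fixed input $\frac{1}{B}\matI_m$, so its output $\matQ\matA_i$ reveals $\matQ$, after which Lemma~\ref{lem:impl} lets it decode the random net element $\matP_{\matR}$; an information lower bound on the per-server transcript (you use Fano, the paper conditions on the success indicator and bounds $I(\Pi^i;\matQ)$ directly) then yields $\Omega(km\log(skm))$ bits per server, and summing gives the claim. One small wrinkle: since $\hat\matR_i$ is a function of $(\Pi_i,r_i)$ rather than $\Pi_i$ alone, Fano gives $H(\matR\mid \Pi_i,r_i)\le 1+\tfrac{1}{3}\log|\mathcal N|$, but as $r_i$ is independent of $\matR$ this still yields $I(\matR;\Pi_i\mid r_i)\le H(\Pi_i)\le \mathbb{E}[|\Pi_i|]$, so your conclusion stands.
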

\begin{proof}
If the protocol succeeds, then, since each
server outputs $\matQ \matA_i$, and $\matA_i$ is $1/B$ times the identity
matrix for all $i =2,3,...s-1$, each of those servers can compute $\matQ$. One of those
servers can now send this $\matQ$ to the first and the last server.
It follows by Lemma \ref{lem:impl}, that each server can identify $\matP_{\matR}$ and $\matR$.

Letting $\mathcal{E}$ be the event that the protocol succeeds,
and letting $\Pi^i$ be the ordered sequence
of all incoming and outgoing messages
at the $i$-th server, it follows that
$$I(\Pi^i ; \matQ \mid \mathcal{E}) = \Omega(\log |\mathcal{N}|)
= \Omega(km \log(smk)),$$
where $I(X ; Y \mid \mathcal{F}) = H(X \mid \mathcal{F}) - H(X \mid Y, \mathcal{F})$
is the mutual information between random variables $X$ and $Y$ conditioned on $\mathcal{F}$. To see the first equality in the above, observe that
$I(\Pi^i ; \matQ \mid \mathcal{E}) = H(\matQ \mid \mathcal{E})
- H(\matQ \mid \Pi^i, \mathcal{E})$, and conditioned on $\mathcal{E}$, there
is a uniquely determined matrix $\matP_{\matR}$ each server identifies,
which is uniform
over a set of size $|\mathcal{N}|$, and so
$H(\matQ \mid \mathcal{E}) \geq H(\matP_{\matR}) = \Omega(\log_2 |\mathcal{N}|)$.

Here, $H(X)$
is the Shannon entropy of a random variable $X$, given by $H(X)=\sum_x \Pr[X = x] \log(1/\Pr[X=x])$,
and $H(X \mid Y) = \sum_{y} \Pr[Y = y] H(X \mid Y = y)$ is the conditional Shannon entropy.

Hence, if $Z$
is an indicator random variable which is $1$ if and only if $\mathcal{E}$ occurs, then using that
$I(X ; Y) \geq I(X ; Y \mid W) - H(W)$ for any random variables $X, Y,$ and $W$, and that
$I(X ; Y \mid W) = \sum_{w} \Pr[W = w] I(X ; Y \mid W = w)$,
we have
\begin{eqnarray*}
I(\Pi^i ; \matQ) & \geq & I(\Pi^i ; \matQ \mid Z) - 1\\
& \geq & I(\Pi^i ; \matQ \mid Z = 1)\Pr[Z = 1] - 1\\
& = & \Omega(km \log(smk)),
\end{eqnarray*}
mapping $X$ to $\Pi^i$, $Y$ to $\matQ$, and $W$ to $Z$ in the mutual
information bound above.
It follows that
$$\Omega(km \log(smk)) \leq I(\Pi^i ; \matQ) \leq H(\Pi^i) \leq {\bf E}_{\matP_{\matR}, rand}[|\Pi^i|],$$
where $|\Pi^i|$ denotes the length, in bits, of the sequence $\Pi^i$,
and $rand$ is the concatenation of the private randomness of all $s$ players
Note that the entropy of $\Pi^i$ is a lower bound on the expected encoding length of $|\Pi^i|$, by the Shannon coding theorem.
By linearity of expectation, $\sum_i {\bf E}_{R, rand} \|\Pi^i| = \Omega(skm \log(smk))$, which
implies there exists an $\matR$ and setting of $rand$ for which $\sum_i \|\Pi^i| = \Omega(skm \log(smk))$.
It follows that the total communication is $\Omega(smk \log(smk))$ bits. 
\end{proof}
\subsection{Lower bounds for distributed PCA on sparse matrices} \label{sec:low3}
Applying the previous theorem with $m=\phi$ gives a communication lower bound for sparse matrices.
\begin{corollary}\label{thm:main2}
Let $\matA \in \R^{\phi \times n}$ be the hard instance matrix in Section~\ref{sec:hard} (with $m$ replaced with $\phi$ in noting the number of rows in $\matA$):
$$
\matA =
\begin{pmatrix}
\tilde{\matR} & \frac{1}{B}\matI_{\phi} & \frac{1}{B}\matI_{\phi} & \dots & \frac{1}{B}\matI_{\phi} & {\bf 0}_{\phi \times t}
 \end{pmatrix}.
$$
Suppose $n = \Omega(s \phi)$, $k \leq .99 \phi$, and $1 < C < \textrm{poly}(sk \phi)$. Now, for arbitrary $m$ consider the matrix $\hat\matA \in \R^{m \times n}$ which has $\matA$ in the top part and the all-zeros $(m-\phi) \times n$ matrix in the bottom part. $\hat\matA$ admits the same column partition as $\matA$ after padding with zeros:

$$
\hat\matA =
\begin{pmatrix}
\tilde{\matR}                        & \frac{1}{B}\matI_{\phi}         & \frac{1}{B}\matI_{\phi}         & \dots & \frac{1}{B}\matI_{\phi}           & {\bf 0}_{\phi \times        t} \\
{\bf 0}_{(m-\phi) \times k}  & {\bf 0}_{(m-\phi) \times \phi}  & {\bf 0}_{(m-\phi) \times \phi}  & \dots & {\bf 0}_{(m-\phi) \times \phi}    & {\bf 0}_{(m-\phi) \times t}
 \end{pmatrix}.
$$
We denote the new sub-matrices with $\hat\matA_i \in \R^{m \times w_i}$.

Assume that there is an algorithm (protocol) which succeeds with probability at least $2/3$ in having the
$i$-th server output $\hat\matQ \hat\matA_i$, for all $i$,
where $\hat\matQ \in \R^{m \times m}$ is a projector matrix with rank at most $k$ such that $\|\hat\matA-\hat\matQ\hat\matA \|_\mathrm{F} \leq C\|\hat\matA-\hat\matA_k\|_\mathrm{F}$. Then, this algorithm requires $\Omega(sk\phi \log(sk \phi))$ bits of communication.

Note that assuming a word size of $\Theta(\log(sk \phi))$ bits, we obtain an $\Omega(sk \phi)$ word lower bound.
\end{corollary}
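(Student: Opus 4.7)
The plan is a black-box reduction to Theorem~\ref{thm:main} applied with $m$ replaced by $\phi$. Since $\hat\matA$ is obtained from $\matA \in \R^{\phi \times n}$ by appending $m-\phi$ all-zero rows, the Frobenius-norm approximation quantities are preserved: for any matrix $\matY$ split into its top $\phi$ and bottom $m-\phi$ row blocks, $\|\hat\matA - \matY\|_F^2 = \|\matA - \matY_{\mathrm{top}}\|_F^2 + \|\matY_{\mathrm{bot}}\|_F^2$, and in particular $\|\hat\matA - \hat\matA_k\|_F = \|\matA - \matA_k\|_F$. Suppose a protocol $\Pi$ succeeds on the sparse problem with total communication $T$ bits. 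Since each server can locally pad its dense input $\matA_i$ to $\hat\matA_i$ with no communication, $\Pi$ also produces, on these padded inputs, an output $\hat\matQ \hat\matA_i$ at server $i$, where $\hat\matQ$ is a rank-$\le k$ projector with $\|\hat\matA - \hat\matQ\hat\matA\|_F \le C\|\hat\matA - \hat\matA_k\|_F$.

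The key technical step is to convert these outputs into a valid output for the dense $\phi$-row instance without paying additional communication. Write $\hat\matQ$ in block form with top-left $\phi \times \phi$ block $\matQ_{11}$ and bottom-left $(m-\phi) \times \phi$ block $\matQ_{21}$. The top $\phi$ rows of $\hat\matQ \hat\matA_i$ equal $\matQ_{11}\matA_i$, so for every middle server $i \in \{2,\ldots,s-1\}$, which holds $\matA_i = (1/B)\matI_\phi$, the server recovers $\matQ_{11}$ exactly. The block decomposition gives
\begin{equation*}
\|\hat\matA - \hat\matQ\hat\matA\|_F^2 = \|\matA - \matQ_{11}\matA\|_F^2 + \|\matQ_{21}\matA\|_F^2 \;\ge\; \|\matA - \matQ_{11}\matA\|_F^2,
\end{equation*}
so $\|\matA - \matQ_{11}\matA\|_F \le C\|\matA - \matA_k\|_F$. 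Let $\matQ \in \R^{\phi \times \phi}$ be the orthogonal projector onto the column span of $\matQ_{11}$; its rank is at most $k$, and each middle server constructs it locally. Since the columns of $\matQ_{11}\matA$ lie in the range of $\matQ$ and $\matQ\matA$ is the best approximation of $\matA$ in that subspace, $\|\matA - \matQ\matA\|_F \le \|\matA - \matQ_{11}\matA\|_F \le C\|\matA - \matA_k\|_F$.

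At this point each middle server holds a common rank-$k$ projector $\matQ$ on $\R^{\phi}$ satisfying the $C$-approximation guarantee for the $\phi$-row dense instance, exactly the hypothesis needed by Lemma~\ref{lem:impl} (with $m \leftarrow \phi$). That lemma lets each middle server deterministically identify the rounded matrix $\matR$, and hence the net element $\matP_{\matR} \in \mathcal{N}$. The information-theoretic argument at the end of the proof of Theorem~\ref{thm:main} then applies verbatim to the transcript $\Pi^i$ of each of the $\Omega(s)$ middle servers, giving $I(\Pi^i;\matR) = \Omega(\log|\mathcal{N}|) = \Omega(k\phi \log(sk\phi))$ and hence total expected communication $\Omega(sk\phi\log(sk\phi))$ bits.

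The anticipated obstacle is the conversion from $\hat\matQ$, which lives in $\R^{m \times m}$, to a projector $\matQ$ on $\R^{\phi}$ with a controlled approximation loss; the two inequalities above handle this cleanly with no loss in the constant $C$. A secondary book-keeping point is that the rounding precision $B$ and the error tolerance $C < \mathrm{poly}(sk\phi)$ must be chosen so that the proof of Lemma~\ref{lem:impl} goes through with dimension $\phi$ in place of $m$; this is automatic since every bound there is polynomial in the row dimension of the underlying dense instance.
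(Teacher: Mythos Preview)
Your proof is correct and follows the paper's approach, which consists of the single sentence ``Applying the previous theorem with $m=\phi$ gives a communication lower bound for sparse matrices.'' You have carefully filled in the reduction details the paper leaves implicit---the block decomposition of $\hat\matQ$, the passage from $\matQ_{11}$ to the projector onto its column span, and the restriction of the information-theoretic argument to the $\Omega(s)$ middle servers---all of which are sound.
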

%

%
%

\subsection{Lower bounds for distributed column-based matrix reconstruction}\label{sec:low4}
In this section we develop a communication lower bound for the Distributed Column Subset Selection Problem of Definition~\ref{def:dcssp2}.

\subsubsection{A Net of discretized Deshpande-Vempala hard instances}
We start by describing the construction of a special sparse matrix from the work of Desphande and Vempala \cite{DV06}, which was further studied by
\cite{BDM11a}.
This matrix gives a lower bound (on the number of columns need to be selected in order to achieve a certain accuracy)
for the standard column subset selection problem~(see the following Fact).
Suppose $\phi \geq 2k/\epsilon$.
Consider a $(\phi+1) \times \phi$ matrix $\matB$
for which the $i$-th column is $\e_1 + \e_{i+1}$, where $\e_i$ is the $i$-th standard basis vector in $\R^{\phi+1}$.
Now define an
$((\phi+1)k) \times (\phi k)$ matrix $\matA$ with $k$ blocks along the diagonal,
where each block is equal to the matrix $\matB$. Let $n = (\phi+1)k$.

\begin{fact}(Proposition 4 in \cite{DV06}, extended in Theorem 37 of
\cite{BDM11a})\label{fact:dv}
Let $\matA$ be as above. If $\matP$ is an $n \times n$ projection operator
onto any subset of at most
$k/(2\epsilon)$ columns of $\matA$,
then
$\FNormS{\matA - \matP\matA} > (1+\epsilon)\FNormS{\matA - \matA_k}.$
\end{fact}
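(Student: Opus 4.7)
The plan is to exploit the block-diagonal structure of $\matA$ to reduce to a per-block analysis. First I would compute the SVD of $\matB$ explicitly: since $\matB\transp\matB = \matI_\phi + \matJ_\phi$ (where $\matJ_\phi$ is the all-ones matrix), the singular values of $\matB$ are $\sqrt{\phi+1}$ (once) and $1$ ($\phi-1$ times). Hence $\matA$ has $k$ singular values equal to $\sqrt{\phi+1}$ and $k(\phi-1)$ equal to $1$, and the best rank-$k$ approximation captures the top-$k$, giving $\FNormS{\matA-\matA_k}=k(\phi-1)$.

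Next I would observe that the blocks of $\matA$ are supported on disjoint row sets \emph{and} disjoint column sets, so projecting onto any subset of columns of $\matA$ decomposes as a direct sum of projections within each block. Writing $t_i$ for the number of columns selected from block $i$ (with $\sum_i t_i \leq k/(2\epsilon)$), we get $\FNormS{\matA - \matP\matA} = \sum_{i=1}^k \FNormS{\matB - \matP_i\matB}$. For the per-block quantity, given an index set $S\subseteq[\phi]$ of size $t$, the Gram matrix of the selected columns is $\matI_t + \matJ_t$, whose inverse is $\matI_t - \tfrac{1}{t+1}\matJ_t$. For any unselected column $\b_j=\e_1+\e_{j+1}$ one computes $\matV\transp\b_j = \one_t$, hence $\|\matP_S\b_j\|^2 = t/(t+1)$ and $\|\b_j-\matP_S\b_j\|^2 = 2 - t/(t+1) = (t+2)/(t+1)$. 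Summing over the $\phi-t$ unselected columns gives
\[
f(t) \;:=\; \FNormS{\matB-\matP_S\matB} \;=\; (\phi-t)\cdot\frac{t+2}{t+1} \;=\; (\phi-1) - t + \frac{\phi+1}{t+1}.
\]

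Now I would note that $f$ is convex in $t$ (since $f''(t)=2(\phi+1)/(t+1)^3 > 0$) and strictly decreasing. By Jensen's inequality, $\sum_i f(t_i) \geq k\cdot f\!\left(\tfrac{1}{k}\sum_i t_i\right)$, and monotonicity says the RHS is further minimized by saturating $\sum t_i = k/(2\epsilon)$. Thus
\[
\FNormS{\matA-\matP\matA} \;\geq\; k\cdot f\!\left(\tfrac{1}{2\epsilon}\right) \;=\; k(\phi-1) \;-\; \tfrac{k}{2\epsilon} \;+\; \frac{2\epsilon\, k(\phi+1)}{1+2\epsilon}.
\]
Subtracting $k(\phi-1)$, it remains to verify the inequality
\[
\frac{2\epsilon k(\phi+1)}{1+2\epsilon} - \frac{k}{2\epsilon} \;>\; \epsilon\, k(\phi-1),
\]
which simplifies to a condition of the form $\epsilon\phi(1-2\epsilon) > C(\epsilon)$ for an explicit constant $C(\epsilon)$, to be verified using the assumption $\phi \geq 2k/\epsilon$.

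The main obstacle will be this final arithmetic step: extracting a clean constant factor of slack from the inequality is delicate because the dominant terms on both sides scale as $\epsilon\phi$, so I must be careful to track the $1/(2\epsilon)$ correction and exploit $\phi \geq 2k/\epsilon$ (and implicitly $k\epsilon$ not too small) to obtain a net positive margin. A minor secondary issue is that the $t_i$'s should be integers, but the continuous Jensen lower bound is valid without this restriction and only loses at most an $O(1)$ factor per block, which is absorbed by the same final inequality.
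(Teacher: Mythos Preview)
The paper does not supply a proof of this statement; it is quoted as a known ``Fact'' from Deshpande--Vempala and Boutsidis--Drineas--Magdon-Ismail, so there is no in-paper argument to compare against. Your approach is exactly the standard one for this construction: the eigenvalue computation for $\matB$, the block decomposition, the closed form $f(t)=(\phi-t)\tfrac{t+2}{t+1}=(\phi-1)-t+\tfrac{\phi+1}{t+1}$ for the per-block residual, and the convexity/Jensen reduction to the balanced allocation $t_i=1/(2\epsilon)$ are all correct. (On your secondary worry: Jensen already lower-bounds the integer minimum by the real minimum, so integrality can only help you --- there is nothing to ``absorb''.)

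However, the obstacle you flag at the end is a genuine one and is \emph{not} resolvable from the hypothesis $\phi\ge 2k/\epsilon$ alone. After Jensen you need
\[
\epsilon\phi(1-2\epsilon)\;>\;\frac{1}{2\epsilon}+1-3\epsilon-2\epsilon^2,
\]
which for small $\epsilon$ forces $\phi=\Omega(1/\epsilon^2)$, not merely $\Omega(k/\epsilon)$. Concretely, with $k=1$, $\epsilon=0.01$, $\phi=200$ (so $\phi=2k/\epsilon$ exactly) and $t=k/(2\epsilon)=50$, one gets $f(50)=150\cdot 52/51\approx 152.9$, whereas $(1+\epsilon)(\phi-1)\approx 201.0$; the claimed inequality fails. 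The original references carry either a stronger size requirement on $\phi$ or an $o(1)$ slack on the right-hand side (the ratio $f(c)/(\phi-1)$ tends to $1+1/(c+1)$ only as $\phi\to\infty$). For the paper's downstream use this is harmless because $\phi$ is a free design parameter that can be taken as large as needed, but as an unconditional statement under only $\phi\ge 2k/\epsilon$ the Fact is slightly imprecise --- and your calculus correctly detects exactly that.
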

%

To prove a communication lower bound for distributed column subset selection,
we will need to transform the above matrix $\matA$ by multiplying by a rounded orthonormal matrix, as follows.
We also need to prove a similar lower bound as in the previous fact for this transformed matrix.
\begin{lemma}\label{lem:transform}
Let $\matA$, $k$, $\phi$, $1/\eps$, and $n$ be as above.
For any $n \times n$ orthonormal matrix $\matL$, let $\tilde{\matL}$ denote the matrix formed
by rounding each of the entries of $\matL$ to the nearest integer multiple of $M$,
where $M \leq 1/((\phi+1)k)^c,$ for a sufficiently
large constant $c > 0$. If $\matP$ is an $n \times n$ projection
operator onto any subset of at most $k/(6\epsilon)$ columns of $\tilde{\matL} \matA$, then
$
\FNormS{\tilde{\matL} \matA - \matP \tilde{\matL} \matA} > (1+\epsilon) \FNormS{\tilde{\matL} \mat A - [\tilde{\matL} \matA]_k}.
$
\end{lemma}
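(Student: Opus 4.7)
The plan is to reduce Lemma \ref{lem:transform} to Fact \ref{fact:dv} by two reductions: (i) undo the rotation $\matL$ so that the problem becomes one about column selection on the original Deshpande--Vempala matrix $\matA$, and (ii) control the rounding $\matE := \tilde{\matL} - \matL$ by exploiting that each entry of $\matE$ has magnitude at most $M/2 \le (nk)^{-c}/2$, so $\|\matE\|_2 \le \|\matE\|_F \le nM/2 = o(\eps)$ once $c$ is a sufficiently large constant. Since $\matL$ is orthonormal, $\matL\matA$ has the same singular values as $\matA$, and Weyl's inequality applied to $\tilde{\matL}\matA = \matL\matA + \matE\matA$ gives $\|\tilde{\matL}\matA - [\tilde{\matL}\matA]_k\|_F^2 \le (1+\tau)\|\matA - \matA_k\|_F^2$ where $\tau = o(\eps)$.

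Assume for contradiction that a subset $S$ with $|S| \le k/(6\eps)$ satisfies $\|\tilde{\matL}\matA - \matP_S\tilde{\matL}\matA\|_F^2 \le (1+\eps)\|\tilde{\matL}\matA - [\tilde{\matL}\matA]_k\|_F^2$. Let $\matC_S$ and $\matD_S$ be the columns of $\tilde{\matL}\matA$ and $\matA$ indexed by $S$, let $\matP'_S := \matD_S \matD_S^\dagger$, and let $X^\star$ be the coefficient matrix achieving $\matP_S\tilde{\matL}\matA = \matC_S X^\star$. The min-over-$X$ characterization of projection cost, the isometry of $\matL$, and the triangle inequality combine to give
\[
\|\matA - \matP'_S \matA\|_F \le \|\matA - \matD_S X^\star\|_F = \|\matL(\matA - \matD_S X^\star)\|_F \le \|\tilde{\matL}\matA - \matC_S X^\star\|_F + \|\matE\|_2 \|\matA - \matD_S X^\star\|_F,
\]
which rearranges to $\|\matA - \matP'_S \matA\|_F \le \|\tilde{\matL}\matA - \matP_S\tilde{\matL}\matA\|_F / (1 - \|\matE\|_2)$. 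Combining with the Weyl bound yields $\|\matA - \matP'_S \matA\|_F^2 \le (1+\eps)(1+\tau)(1-\|\matE\|_2)^{-2}\|\matA - \matA_k\|_F^2 \le (1 + 2\eps)\|\matA - \matA_k\|_F^2$ once $c$ is large enough.

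To finish, one shows that any projection of $\matA$ onto at most $k/(6\eps)$ of its columns has error strictly exceeding $(1+2\eps)\|\matA - \matA_k\|_F^2$; this is a quantitative strengthening of Fact \ref{fact:dv} obtained from the same block-diagonal computation. If $t_j$ columns are selected from the $j$-th block, the residual within that block evaluates to $(\phi - t_j)(t_j + 2)/(t_j + 1)$, which is convex in $t_j$ (direct check), so under $\sum_j t_j \le k/(6\eps)$ the minimum occurs at $t_j = t/k$. Dividing through by $\|\matA - \matA_k\|_F^2 = k(\phi-1)$ produces the ratio $(\phi - t/k)(t/k+2)/((t/k+1)(\phi-1))$, and plugging $t/k = 1/(6\eps)$ together with $\phi \ge 2k/\eps$ shows this ratio exceeds $1 + 2\eps$ with constant room to spare. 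This contradicts the pull-back bound.

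The main obstacle is the projection pull-back in the second paragraph: one would like to pass between $\tilde{\matL}\matA$-column projections and $\matA$-column projections without invoking a perturbation-of-pseudoinverse estimate on $\matC_S^\dagger$, which would blow up if $\matC_S$ has a small singular value. The min-over-$X$ characterization of projection cost sidesteps this by replacing the pseudoinverse with a free coefficient matrix, so that all of the quantitative work is carried by the exact isometry of $\matL$ and the spectral smallness of $\matE$, leaving the remaining steps as essentially scalar inequalities about the explicit Deshpande--Vempala instance.
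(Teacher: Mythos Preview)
Your overall route is sound and in fact slightly cleaner than the paper's. Both arguments use Weyl's inequality to control $\|\tilde{\matL}\matA - [\tilde{\matL}\matA]_k\|_F^2$ in terms of $\|\matA-\matA_k\|_F^2$. For the projection term, the paper writes $\|\tilde{\matL}\matA - \matP\tilde{\matL}\matA\|_F \ge \|\matA - \matL\transp\matP\matL\matA\|_F - o(1)$ and then asserts $\matL\transp\matP\matL = \matQ$ exactly; strictly speaking $\matP$ projects onto columns of $\tilde{\matL}\matA$, so $\matL\transp\matP\matL$ projects onto $\mathrm{span}\bigl((\matI+\matL\transp\matE)\matD_S\bigr)$, which only agrees with $\mathrm{span}(\matD_S)$ up to an $O(\|\matE\|)$ perturbation. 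Your min-over-$X$ pull-back sidesteps this cleanly, and the inequality $\|\matA-\matP'_S\matA\|_F \le \|\tilde{\matL}\matA-\matP_S\tilde{\matL}\matA\|_F/(1-\|\matE\|_2)$ is correct.

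There is, however, a genuine error in your last paragraph. You assert that with $t/k = 1/(6\eps)$ and $\phi \ge 2k/\eps$, the ratio
\[
\frac{(\phi - t/k)(t/k+2)}{(t/k+1)(\phi-1)}
\]
exceeds $1+2\eps$ ``with constant room to spare''. This is false at the bottom of the allowed parameter range: take $k=1$, $\eps = 0.01$, $\phi = 2/\eps = 200$; then $t/k \approx 16.67$ and the ratio evaluates to $(183.33)(18.67)/\bigl((17.67)(199)\bigr) \approx 0.97 < 1$. The point is that the loss factor $(\phi - t/k)/(\phi-1)$ can sit a constant below $1$, overwhelming the $(t/k+2)/(t/k+1)$ gain.

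Fortunately no such computation is needed, and what you call a ``strengthening'' is actually an immediate corollary. Apply Fact~\ref{fact:dv} with parameter $2\eps$ in place of $\eps$ (its hypothesis becomes $\phi \ge k/\eps$, which is implied by $\phi \ge 2k/\eps$): it says that any $|S| \le k/(4\eps)$ columns give error strictly exceeding $(1+2\eps)\|\matA-\matA_k\|_F^2$, and since $k/(6\eps) \le k/(4\eps)$ this is exactly the contradiction you need. This black-box use of Fact~\ref{fact:dv} with rescaled $\eps$ is precisely what the paper does (it proves the statement for $k/(2\eps)$ columns and a $(1+\eps/3)$ factor, then substitutes $\eps \mapsto 3\eps$), so once you drop the explicit block computation your argument and the paper's converge.
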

\begin{proof}
We will show that, assuming $1/\eps < n/3$,
if $\matP$ is an $n \times n$ projection operator onto any subset of at most $k/(2\epsilon)$ columns of $\tilde{\matL}\matA$,
then $\FNormS{\tilde{\matL} \matA - \matP \tilde{\matL} \matA} > (1+\epsilon/3) \FNormS{\tilde{\matL} \mat A - [\tilde{\matL} \matA]_k}.$
The lemma will then follow by replacing $\eps$ with $3\eps$.

Suppose $S$ is a subset of $k/(2\epsilon)$ columns of $\tilde{\matL} \matA$
for which
\begin{eqnarray}\label{eqn:yuck}
\FNormS{\tilde{\matL}\matA - \matP \tilde{\matL}\matA} \leq (1+\epsilon/3) \FNormS{\tilde{\matL}\matA - [\tilde{\matL}\matA]_k},
\end{eqnarray}
where $\matP$ is the projection onto the columns in $S$
and $[\tilde{\matL}\matA]_k$ is the best rank-$k$ approximation to $\tilde{\matL}\matA$
%
The proof strategy is to
relate the left hand side of (\ref{eqn:yuck}) to $\FNormS{\matA - \matQ \matA}$ where $\matQ$ is the projection onto
the columns of $\matA$ indexed by the same index set $S$, and simultaneously to relate the right hand side of (\ref{eqn:yuck}) to
$\FNormS{\matA- \matA_k}$.

We start by looking at $\FNorm{\tilde{\matL} \matA - [\tilde{\matL} \matA]_k}$. Notice that
$\tilde{\matL} \matA = \matL \matA + \matE \matA$, where $\matE$ is a matrix whose entries are all bounded in magnitude by $M$.
Therefore, $\tilde{\matL} \matA = \matL \matA + \matF$, where $\matF = \matE \matA$ and
$$\|\matF\|_2 \leq \FNorm{\matE} \FNorm{\matA} \leq \left (n^2M^2 \right)^{1/2} \cdot (2(n-1))^{1/2} \leq \left (2n^3 M^2 \right )^{1/2},$$
where we have used $\FNorm{\matA} \le (2(n-1))^{1/2}$.
By Weyl's inequality (Corollary 7.3.8 of \cite{HJ85}), for all $i \leq \phi k$,
$$|\sigma_i(\matL \matA) - \sigma_i(\matL \matA + \matF)| \leq \|\matF\|_2 \leq \left (2n^3 M^2 \right )^{1/2},$$
where $\sigma_i(\matX)$ denotes the $i$-th singular value of a matrix $\matX$. Since $\matL$ is orthonormal,
$\sigma_i(\matL \matA) = \sigma_i(\matA)$ for all $i$. It follows that
\begin{eqnarray}\label{eqn:bound0}
\FNormS{\tilde{\matL} \matA - [\tilde{\matL} \matA]_k} & = & \sum_{i > k} \sigma_i^2(\tilde{\matL} \matA)\notag \\
& = & \sum_{i > k} \left (\sigma_i(\matL \matA) \pm \left (2n^3M^2 \right )^{1/2} \right )^2 \notag \\
& = & \sum_{i > k} \left (\sigma_i(\matA) \pm \left (2n^3M^2 \right )^{1/2} \right )^2 \notag \\
& = & \FNormS{\matA - \matA_k} \pm O(n \|\matA\|_2 \left (n^3 M^2 \right )^{1/2} + n^4 M^2) \notag \\
& = & \FNormS{\matA - \matA_k} \pm O(1/n^2),
\end{eqnarray}
where the final line follows for a sufficiently small choice of $M \leq 1/n^c$, using that $\|\matA\|_2 \leq \FNorm{\matA} = O(\sqrt{n})$.

We now look at $\FNorm{\tilde{\matL}\matA - \matP \tilde{\matL}\matA}$. By the triangle inequality,
\begin{eqnarray}\label{eqn:hallucinate}
\FNorm{\tilde{\matL}\matA - \matP \tilde{\matL}\matA} & \geq & \FNorm{\matL \matA - \matP \matL \matA} - \FNorm{\matE\matA} - \FNorm{\matP \matE \matA} \notag\\
& \geq & \FNorm{\matL \matA - \matP \matL \matA} - \FNorm{\matE} \FNorm{\matA} - \|\matP\|_2 \FNorm{\matE} \FNorm{\matA} \notag\\
& = & \FNorm{\matA - \matL\transp \matP \matL \matA} - O(1/n^3),
\end{eqnarray}
where the equality uses that multiplying by $\matL\transp$ preserves norms, as well as
that $\FNorm{\matE} \leq 1/\poly(n)$ for an arbitrarily small $\poly(n)$ by making $M \leq 1/n^c$
sufficiently small, whereas $\|\matP\|_2 \leq 1$ and $\FNorm{\matA} = O(\sqrt{n})$.
We claim $\matL\transp \matP \matL = \matQ$, where $\matQ$ is the projection onto the columns in $\matA$
indexed by the set $S$. To see this, let $\matU$ be an orthonormal basis for the columns in $\matA$ spanned by
the index set $S$, so that $\matP = \matL \matU\matU\transp \matL\transp$. Then $\matL\transp \matP \matL = \matU\matU\transp = \matQ$.
Hence, using that $\FNorm{\matA-\matQ\matA} \leq \FNorm{\matA} = O(\sqrt{n})$ and squaring (\ref{eqn:hallucinate}), we have,
\begin{eqnarray}\label{eqn:bound1}
\FNormS{\tilde{\matL}\matA - \matP \tilde{\matL}\matA} \geq \FNormS{\matA-\matQ\matA} - O(1/n^2).
\end{eqnarray}
Combining (\ref{eqn:yuck}), (\ref{eqn:bound0}), and (\ref{eqn:bound1}),
\begin{eqnarray}\label{eqn:combo}
\FNormS{\matA-\matQ\matA} \leq (1+\epsilon/3)\FNormS{\matA-\matA_k} + O(1/n^2).
\end{eqnarray}
Note that $\FNorm{\matA-\matA_k} \geq 1$ for any $k < n$, since if we remove the top row of $\matA$, creating a matrix $\matB$,
then $\FNorm{\matB- \matB_k} \leq \FNorm{\matA-\matA_k}$, yet $\matB$ is the identity matrix, and so $\FNorm{\matB-\matB_k} \geq 1$. It follows
that the additive $O(1/n^2)$ error in (\ref{eqn:combo}) translates into a relative error, provided $k < n$ and for $1/\eps < n/3$.
Hence, $\FNormS{\matA - \matQ\matA} \leq (1+\epsilon)\FNormS{\matA - \matA_k}$, and therefore by Fact \ref{fact:dv},
we have $|S| > k/(2\epsilon)$.
\end{proof}

\subsubsection{Column sharing of discretized orthonormal matrices}
We need the following technical lemma about sampling random orthonormal matrices, concerning the number of
columns they have in common after rounding.
\begin{lemma}\label{lem:randomOrthonormal}
Let $1 \leq r \leq n$, and let $M \leq 1/n^c$ for a sufficiently large absolute constant $c > 0$.
Suppose we choose two independently random orthogonal matrices $\matA \in \mathbb{R}^{n \times n}$ and $\matB \in \mathbb{R}^{n \times n}$ from the Haar measure (so they each have
orthonormal columns and orthonormal rows). We then
round each of the entries in $\matA$ to the nearest integer multiple of $M$, obtaining a matrix $\tilde{\matA}$,
and similarly obtain $\tilde{\matB}$. Let $\mathcal{E}$ denote the event that there exist $r$
distinct indices $i_1, \ldots, i_{r} \in [n]$ and $r$ distinct indices $j_1, \ldots, j_{r} \in [n]$
for which $\tilde{\matA}_{*, i_{\ell}} = \tilde{\matB}_{*, j_{\ell}}$ for all
$\ell = 1, 2, \ldots, r$, where $\tilde{\matA}_{*, i_{\ell}}$ denotes the $i_{\ell}$-th column of $\tilde{\matA}$.
Then,
$\Pr[\mathcal{E}] \leq e^{-\Theta(nr \log M)}.$
\end{lemma}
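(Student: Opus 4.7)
\textbf{Proof proposal for Lemma \ref{lem:randomOrthonormal}.} The plan is a union bound over admissible index matchings combined with an anti-concentration estimate on the Stiefel manifold. Since there are at most $(n^r)^2 \cdot r! \leq n^{2r+r}$ ways to choose ordered sequences $(i_1,\ldots,i_r)$ and $(j_1,\ldots,j_r)$ of distinct indices from $[n]$, it suffices to show that for each fixed such pair,
$$
p \;:=\; \Pr\bigl[\tilde{\matA}_{*,i_\ell} = \tilde{\matB}_{*,j_\ell} \text{ for all } \ell = 1,\ldots,r\bigr] \;\leq\; M^{\Theta(nr)},
$$
since then $n^{O(r)}\cdot M^{\Theta(nr)} \leq e^{\Theta(nr\log M)}$ because $M \leq n^{-c}$ with $c$ large so that the $n^{O(r)}$ prefactor is absorbed into the $\Theta$.

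Next I would decouple $\matA$ and $\matB$ using their independence. Writing the event as a disjoint union over the common rounded value $(v_1,\ldots,v_r) \in (M\mathbb{Z}^n)^r$,
$$
p \;=\; \sum_{v_1,\ldots,v_r} \Pr_{\matA}\!\Bigl[\,\bigcap_{\ell}\{\tilde{\matA}_{*,i_\ell}=v_\ell\}\Bigr]\,\Pr_{\matB}\!\Bigl[\,\bigcap_{\ell}\{\tilde{\matB}_{*,j_\ell}=v_\ell\}\Bigr]\;\leq\;\max_{(v_1,\ldots,v_r)}\Pr_{\matA}\!\Bigl[\,\bigcap_{\ell}\{\tilde{\matA}_{*,i_\ell}=v_\ell\}\Bigr],
$$
where the last inequality pulls out the maximum over $\matA$-probabilities and uses that the $\matB$-probabilities sum to at most $1$. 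So the problem reduces to bounding the point-mass of the rounded joint distribution of $r$ columns of a single Haar-random orthogonal matrix.

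The main obstacle, and the heart of the argument, is this anti-concentration bound on the Stiefel manifold. The $r$-tuple $(\matA_{*,i_1},\ldots,\matA_{*,i_r})$ is Haar-distributed on the Stiefel manifold $V_r(\mathbb{R}^n) \subset \mathbb{R}^{nr}$, which is a smooth embedded submanifold of dimension $d = nr - \binom{r+1}{2} = \Theta(nr)$ (this holds for all $1 \leq r \leq n$, giving $d \geq nr/2$ when $r \leq n/2$ and still $d = \Theta(nr)$ when $r$ is close to $n$). The event $\bigcap_{\ell}\{\tilde{\matA}_{*,i_\ell} = v_\ell\}$ asks that $(\matA_{*,i_1},\ldots,\matA_{*,i_r})$ lie in the axis-aligned box $\prod_{\ell} (v_\ell + [-M/2, M/2]^n) \subset \mathbb{R}^{nr}$. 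I would show that the $d$-dimensional Hausdorff measure of the intersection of $V_r(\mathbb{R}^n)$ with any such box of side $M$ is at most $C(n,r)\cdot M^d$, where $C(n,r)$ is a polynomial factor in $n$ (to the power $O(r^2)$). The standard way to see this is to locally parametrize $V_r$ in a tubular neighborhood by a chart whose Jacobian is controlled by universal bounds on the principal curvatures of $V_r$; the box has diameter $M\sqrt{nr}$ in the ambient space, so it meets at most polynomially-in-$n$ many chart pieces, each contributing $O(M^d)$. Normalizing by the total Haar mass of $V_r(\mathbb{R}^n)$ (also a polynomial-in-$n$ factor, by explicit formulae for the Stiefel volume), we conclude $\max_v \Pr[\,\cdot\,] \leq n^{O(r^2)}\cdot M^d$.

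Combining the three paragraphs, $p \leq n^{O(r^2)}\cdot M^{\Theta(nr)}$, and the overall union bound gives $\Pr[\mathcal{E}] \leq n^{O(r^2)+O(r)}\cdot M^{\Theta(nr)}$. Choosing the constant $c$ in $M \leq n^{-c}$ large enough (so that $M^{\Theta(nr)}$ dominates the polynomial prefactor) collapses this to $e^{\Theta(nr \log M)}$, proving the lemma. The delicate step throughout is the curvature-dependent constant in the Stiefel volume estimate; I would handle the case where $(v_1,\ldots,v_r)$ is itself approximately orthonormal (else the event is empty for $M$ sufficiently small, since $\matA$-columns are exactly orthonormal) and then use the linearization of $V_r$ at that point, whose tangent space has the explicit form $\{(u_1,\ldots,u_r) : \langle u_\ell, v_m\rangle + \langle v_\ell, u_m\rangle = 0 \text{ for all } \ell\leq m\}$.
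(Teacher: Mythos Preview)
Your approach is genuinely different from the paper's, and the high-level strategy (union bound over index matchings, then anti-concentration) is sound. However, the heart of your argument --- the Stiefel small-ball estimate --- is only sketched, and the constant you claim for it looks wrong. A box of side $M$ in $\mathbb{R}^{nr}$ has diameter $M\sqrt{nr}$, so the intersection with $V_r(\mathbb{R}^n)$ sits inside a geodesic ball of that radius; its $d$-volume is then of order $(M\sqrt{nr})^d$, giving a prefactor $(nr)^{d/2} = n^{\Theta(nr)}$ rather than $n^{O(r^2)}$. (One can also see this by generating the $r$ columns sequentially as uniform points on nested spheres $S^{n-1}, S^{n-2}, \ldots$: each step contributes a factor $\sim (M\sqrt{n})^{n-\ell}$, and the product is $n^{d/2}M^d$.) Fortunately this larger constant is still absorbed by $M^{\Theta(nr)}$ once $M \le n^{-c}$ with $c$ large, so your conclusion survives --- but the normalization by the total Stiefel volume and the curvature control you allude to would need to be made precise to call this a proof.

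The paper takes a cleaner route that avoids any direct volume computation on the Stiefel manifold. After the same union bound over index sets, it does \emph{not} decouple $\matA$ and $\matB$; instead it observes that if $r$ rounded columns coincide then $\|\matA^r - \matB^r\|_2 \le \sqrt{rn}\,M$, and by a short triangle-inequality manipulation this forces the associated rank-$r$ projectors to satisfy $\|\matA^r(\matA^r)\transp - \matB^r(\matB^r)\transp\|_2 \le 2\sqrt{rn}\,M$. At that point the paper simply invokes a known black-box bound (Claim~5.2 of \cite{kt13}) stating that two independent Haar-random $r$-dimensional projections in $\mathbb{R}^n$ are $\delta$-close in operator norm with probability at most $e^{-\Theta(r(n-r)\log(1/\delta))}$. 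This immediately yields the lemma for $r \le n/2$, and the case $r > n/2$ follows by restricting to the first $r/2$ matched columns. The advantage of this route is that all the hard geometry is packaged into the cited result; your approach trades that citation for a self-contained but more delicate Hausdorff-measure argument.
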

\begin{proof}
We fix a subset $S$ of $r$ distinct indices $i_1, \ldots, i_r \in [n]$ and a subset $T$ of $r$ distinct
indices $j_1, \ldots, j_r \in [n]$ and show that the probability
$\tilde{\matA}_{*, i_{\ell}} = \tilde{\matB}_{*, j_{\ell}}$
simultaneously for all $\ell \in [r]$ is at most $e^{-\Theta(nr \log M)}$. It then
follows by a union bound that
$$\Pr[\mathcal{E}] \leq \binom{n}{r} \cdot \binom{n}{r} \cdot e^{-\Theta(nr \log M)} \leq \left (\frac{ne}{r} \right )^{2r} e^{-\Theta(nr \log M)}
\leq e^{-\Theta(nr \log M)},$$
as desired.

Since $\matA$ and $\matB$ are independent and their distribution is permutation-invariant,
we can assume $i_{\ell} = j_{\ell} = \ell$ for $\ell \in [r]$.
Let $\mathcal{F}$ be the event that $\tilde{\matA}_{*, i_{\ell}} = \tilde{\matB}_{*, j_{\ell}}$ occurs for all $\ell \in [r]$.
If $\mathcal{F}$ occurs, then we need $\|\matA_{*, i_{\ell}} - \matB_{*,j_{\ell}}\|_{\infty} \leq M$ for all $\ell \in [r]$,
which implies $\|\matA_{*, i_{\ell}} - \matB_{*, j_{\ell}}\|_2 \leq \sqrt{n} M$ for all $\ell \in [r]$.
Let $\matA^r$ be the leftmost
$r$ columns of $\matA$, and $\matB^r$ the leftmost $r$ columns of $\matB$. Then, if $\mathcal{F}$ occurs
\begin{eqnarray*}
\|\matA^r - \matB^r\|_2^2  \leq  \FNormS{\matA^r - \matB^r}
 = \sum_{\ell=1}^r \|\matA_{*, \ell} - \matB_{*, \ell}\|_2^2
 \leq r n M^2.
\end{eqnarray*}
Note that $\matA^r$ and $\matB^r$ have orthonormal columns, and so $\matP_{\matA^r} = (\matA^r)(\matA^r)\transp$ and
$\matP_{\matB^r} = (\matB^r)(\matB^r)\transp$ are projection matrices. Then,
\begin{eqnarray}\label{eqn:oneToTwo}
\|\matA^r-\matB^r\|_2 & \geq & \|(\matA^r)(\matA^r)\transp - \matB^r(\matA^r)\transp\|_2/2
+ \|(\matA^r)(\matB^r)\transp - (\matB^r)(\matB^r)\transp\|_2/2 \notag\\
& = & \|(\matA^r)(\matA^r)\transp - (\matA^r)(\matB^r)\transp\|_2/2 + \|(\matA^r)(\matB^r)\transp - (\matB^r)(\matB^r)\transp\|_2/2 \notag \\
& \geq & \|(\matA^r)(\matA^r)\transp - (\matB^r)(\matB^r)\transp\|_2/2,
\end{eqnarray}
where the first inequality follows since multiplying by a matrix on the right with orthonormal
rows cannot increase the spectral norm, the equality follows by taking transposes, and the
second inequality follows by the triangle inequality.

We need another result from~\cite{kt13}.
\begin{theorem}(Probability Bound - Claim 5.2 of \cite{kt13})\label{thm:probBound}
Suppose we choose two independent random subspaces $Y$ and $S$ of $\mathbb{R}^n$ each of dimension $r$ from the Haar measure, where
we denote the projection operators onto the subspaces by $\matY\matY\transp$ and $\matS\matS\transp$, respectively, where
$\matY$ and
$\matS$ have $r$ orthonormal columns. Then for any $\delta \in (0,1)$,
$\Pr[\|\matY\matY\transp - \matS\matS\transp\|_2 \leq \delta] \leq e^{-\Theta(r(n-r)\log(1/\delta))}.$
\end{theorem}
Combining (\ref{eqn:oneToTwo}) with Theorem \ref{thm:probBound}, we have that for $r \leq n/2$,
$\Pr[\mathcal{F}] \leq e^{-\Theta(rn \log(M))},$ since by taking $M \leq 1/n^c$ sufficiently small, we can
ensure $\|\matA^r - \matB^r\|_2 \leq \sqrt{rn} M \leq M^{1/2}$, and so $2M^{1/2}$ upper bounds
$\|(\matA^r)(\matA^r)\transp - (\matB^r)(\matB^r)\transp\|_2$.

Note that if $r > n/2$, then in particular we still have that
$\tilde{\matA}_{*, i_{\ell}} = \tilde{\matB}_{*, j_{\ell}}$ for all $\ell \in [r/2]$, and therefore in this
case we also have $\Pr[\mathcal{F}] \leq e^{-\Theta(rn \log M)}$. Hence,
$\Pr[\mathcal{E}] \leq e^{-\Theta(rn \log M)}$, which completes the proof.
\end{proof}

We now extend Lemma \ref{lem:randomOrthonormal} to a $k$-fold version.
\begin{lemma}\label{lem:randomExtension}
Let $M \leq 1/n^c$ for a sufficiently large absolute constant $c > 0$.
Suppose we independently choose $k$ pairs of matrices $\matA^{z}, \matB^{z} \in \mathbb{R}^{n \times n}$, $z \in [k]$,
each with orthonormal columns (and hence also orthonormal rows). We then
round each of the entries in each $\matA^{z}$ and $\matB^{z}$ to the nearest integer multiple of $M$, obtaining
matrices $\tilde{\matA^{z}}$, and $\tilde{\matB^{z}}$. Let $r = (r_1, \ldots, r_k) \in (\mathbb{Z}^{\geq 0})^k$
and let $\mathcal{E}$ be the event that for each $z \in [k]$, there exist
distinct indices $i_1, \ldots, i_{r_z} \in [n]$ and $j_1, \ldots, j_{r_z} \in [n]$
for which $\tilde{\matA^{z}}_{*, i_{\ell}} = \tilde{\matB^{z}}_{*, j_{\ell}}$ for all
$\ell = 1, 2, \ldots, r_z$.
Then,
$\Pr[\mathcal{E}] \leq e^{-\Theta(n \sum_{z=1}^k r_z \log M)}.$
\end{lemma}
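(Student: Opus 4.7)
The plan is to reduce directly to Lemma~\ref{lem:randomOrthonormal} via independence. By hypothesis the $k$ pairs $(\matA^z, \matB^z)$ are drawn independently, so the rounded pairs $(\tilde{\matA^z}, \tilde{\matB^z})$ are also mutually independent across $z \in [k]$. For each $z$, let $\mathcal{E}_z$ denote the event that there exist distinct $i_1,\dots,i_{r_z} \in [n]$ and distinct $j_1,\dots,j_{r_z} \in [n]$ with $\tilde{\matA^z}_{*, i_\ell} = \tilde{\matB^z}_{*, j_\ell}$ for all $\ell \in [r_z]$. Each $\mathcal{E}_z$ is a function of $(\tilde{\matA^z}, \tilde{\matB^z})$ alone, so the $\mathcal{E}_z$'s are mutually independent, and $\mathcal{E} = \bigcap_{z \in [k]} \mathcal{E}_z$.

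Consequently I would write $\Pr[\mathcal{E}] = \prod_{z=1}^k \Pr[\mathcal{E}_z]$, and then apply Lemma~\ref{lem:randomOrthonormal} to each factor with $r := r_z$. For each $z$ with $r_z \geq 1$, the previous lemma yields $\Pr[\mathcal{E}_z] \leq e^{-\Theta(n r_z \log M)}$ (the hypothesis $M \leq 1/n^c$ required by Lemma~\ref{lem:randomOrthonormal} is exactly what we are assuming here); for any $z$ with $r_z = 0$ the event $\mathcal{E}_z$ is vacuous and contributes a factor of $1 = e^{-\Theta(n \cdot 0 \cdot \log M)}$, so the same bound trivially applies. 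Multiplying,
\[
\Pr[\mathcal{E}] \;\leq\; \prod_{z=1}^k e^{-\Theta(n r_z \log M)} \;=\; e^{-\Theta\!\left(n \sum_{z=1}^k r_z \log M\right)},
\]
which is exactly the claim.

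Since this is a straightforward tensorization, there is no real technical obstacle; the only care needed is (i) to verify that the rounding operation preserves independence (it does, as it is a deterministic function applied coordinate-wise to each pair), and (ii) to handle the $r_z = 0$ coordinates cleanly so that the product bound is meaningful in every case. Both are immediate, so this proposal is essentially a one-line independence argument followed by an application of Lemma~\ref{lem:randomOrthonormal}.
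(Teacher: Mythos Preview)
Your proposal is correct and follows essentially the same approach as the paper: the paper's proof simply states that the result follows from Lemma~\ref{lem:randomOrthonormal} together with the joint independence of the matrices $\matA^1,\ldots,\matA^k,\matB^1,\ldots,\matB^k$. Your version is somewhat more explicit (handling the $r_z=0$ case and noting that rounding preserves independence), but the argument is the same independence-plus-product reduction.
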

\begin{proof}
This follows by Lemma \ref{lem:randomOrthonormal}, and the fact that the matrices $\matA^1, \ldots, \matA^k, \matB^1, \ldots, \matB^k$
are jointly independent. Here we also use a union bound.
\end{proof}

\begin{corollary}\label{cor:net}
Suppose $1 \leq k < c_0n$ for an absolute constant $c_0 > 0$.
Let $r = (r_1, \ldots, r_k) \in (\mathbb{Z}^{\geq 0})^k$.
Let $M \leq 1/n^c$ for a sufficiently large constant $c > 0$.
There exists a set $\mathcal{N}$ of $e^{\Theta(nt \log M)}$ $k$-tuples $(\tilde{\matA_1}, \ldots, \tilde{\matA_k})$ of matrices
each of dimensions $n \times n$ with entries that
are integer multiples of $M$ and bounded in magnitude by $1$, such that for every vector $r = (r_1, \ldots, r_k) \in (\mathbb{Z}^{\geq 0})^k$
with $\sum_{z=1}^k r_z \leq t$, it holds that
for all distinct $k$-tuples $(\tilde{\matA_1}, \ldots, \tilde{\matA_k}), (\tilde{\matB_1}, \ldots, \tilde{\matB_k})$,
there exists a $z \in [k]$ for which there are no sets $S = \{i_1, \ldots, i_{r_z}\}, T = \{j_1, \ldots, j_z\} \subset [n]$
for which $\tilde{\matA}_{*, i_{\ell}} = \tilde{\matB}_{*, j_{\ell}}$ for all $\ell = 1, 2, \ldots, r_z$.
\end{corollary}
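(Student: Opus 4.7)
The plan is a standard probabilistic method argument layered on top of Lemma~\ref{lem:randomExtension}. I would draw $N$ independent candidate tuples, where the $\mu$-th candidate consists of $k$ independent Haar-random orthogonal matrices in $\R^{n\times n}$ with every entry subsequently rounded to the nearest integer multiple of $M$ and clipped to magnitude at most $1$. The target size is $N = \exp(c_1 n t \,|{\log M}|)$ for an absolute constant $c_1>0$ to be fixed at the end; note that since $M\le 1/n^c<1$, the quantity $|{\log M}|$ agrees (up to sign) with the $\log M$ appearing in the statements of Lemma~\ref{lem:randomExtension} and Corollary~\ref{cor:net}.

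For any ordered pair of distinct samples $(\mu,\nu)$ and any fixed vector $r\in(\mathbb{Z}^{\ge 0})^k$ with $\sum_z r_z = s\ge 1$, apply Lemma~\ref{lem:randomExtension} with $\matA^z:=\matA_z^{(\mu)}$ and $\matB^z:=\matA_z^{(\nu)}$; this bounds the probability of the bad column-sharing event $\mathcal{E}(r,\mu,\nu)$ by $\exp(-\Theta(n s \,|{\log M}|))$. The crucial point is that, because the tuples were drawn independently and the lemma treats the two tuples symmetrically, its hypotheses are met for every pair simultaneously. I would then take a union bound over the $\binom{N}{2}\le N^2$ pairs of candidates and over the at most $\binom{s+k-1}{k-1}\le (s+k)^k$ vectors $r$ with $\sum_z r_z = s$, for every $s$ in the relevant range. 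This yields a total failure probability bounded by
$$N^2\sum_{s\ge 1}(s+k)^k\exp(-\Theta(ns\,|{\log M}|)),$$
and since $|{\log M}|=\Omega(\log n)$, the combinatorial prefactor $(s+k)^k$ is swamped by the per-$s$ decay, making the sum a geometric series dominated by its leading term of order $\exp(-\Theta(nt\,|{\log M}|))$ under the relevant constraint on $\sum_z r_z$. Choosing $c_1$ small enough renders the total bound strictly less than $1$, so the probabilistic method furnishes a realisation of the $N$ tuples with no forbidden sharing pattern; declare this realisation to be $\mathcal{N}$.

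The main obstacle is the bookkeeping in the union bound rather than any new probabilistic input: one must carefully verify that the combinatorial overhead (number of pairs, number of $r$-vectors, number of index subsets $S,T$ implicit in the sharing event) is absorbed into the exponent $\exp(-\Theta(n t \,|{\log M}|))$ without eating into the target net size. The hypothesis $M\le 1/n^c$ is precisely what makes this possible: the per-$r$ probability decays geometrically at rate $\exp(-\Theta(n\log n))$ per additional required shared column, while all of the combinatorial overheads grow only polynomially in $n,k,t$; hence they may be absorbed at the cost of a smaller (but still positive) constant $c_1$ in front of $nt\,|{\log M}|$.
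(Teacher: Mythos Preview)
Your approach is essentially the same as the paper's: a probabilistic-method argument that samples $e^{\Theta(nt|\log M|)}$ independent $k$-tuples of rounded Haar-orthogonal matrices, invokes Lemma~\ref{lem:randomExtension} for the per-pair, per-$r$ column-sharing probability, and then union-bounds over all pairs in the net and all admissible $r$-vectors. The paper's proof is terser (it simply bounds the number of $r$-vectors by $t^k$ and absorbs this factor into the exponent using the hypothesis $k<c_0 n$), whereas you stratify by $s=\sum_z r_z$ and sum the resulting geometric series, but the substance is identical.
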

\begin{proof}
For a fixed vector $r$ and a random choice of $(\tilde{\matA^1}, \ldots, \tilde{\matA^k}), (\tilde{\matB^1}, \ldots, \tilde{\matB^k})$,
by the guarantee of Lemma \ref{lem:randomExtension}, the complement of the
event in the statement of the corollary
happens with probability at most $e^{-\Theta(nt \log M)}$.
The number of vectors
$r$ with $\sum_{z=1}^k r_z \leq t$ is at most $t^k$, and $t^k e^{-\Theta(nt \log M)} \leq e^{-\Theta(nt \log M)}$ provided $k < c_0n$ for an absolute constant $c_0 > 0$, so we can apply a union bound over
all pairs in a random choice of $e^{\Theta(nt \log M)}$ $k$-tuples.
\end{proof}

\subsubsection{Hard instance construction}\label{sec:hard2}
We have $s$ servers holding matrices $\matA_1, \ldots, \matA_s$, respectively, where $\matA_i$ has $m$ rows and a subset
of $w_i$ columns of an $m \times n$ matrix $\matA$, where $\sum w_i = n$. For our lower bound we assume $2/\eps \leq \phi$ and
$k\phi \leq m$. Note that for $\eps$ less than a sufficiently small constant, this implies that $k < c_0 k \phi$, and so the assumption
in Corollary \ref{cor:net} is valid (since the $n$ of that corollary is equal to $(k+1)\phi$
We set $M = 1/(mn)^c$ for a sufficiently
large constant $c > 0$.

We will have $\matA_1$ being an $m \times k(\phi-1)$ matrix for which all but the first $k\phi$ rows are zero.
We assume $2/\eps \leq \phi$.
On the first $k\phi$ rows, $\matA_1$ is block diagonal containing $k$ blocks, each block
being a $\phi \times (\phi-1)$ matrix.

To specify $\matA_1$, let the parameters $t$ and $n$ of Corollary \ref{cor:net} equal $k/(2\eps)$ and $\phi$, respectively.
That corollary gives us a net $\mathcal{N}$ of $e^{\Theta(k\phi(\log(mn))/\eps)}$ $k$-tuples of matrices each of dimensions $\phi \times \phi$ with
entries that are integer multiple of $M$ and bounded in magnitude by $1$.

For a random $k$-tuple $(\matB^1, \ldots, \matB^k)$ in $\mathcal{N}$,
we create an $m \times m$ block diagonal matrix $\matB$, which is $0$ on all but its first $k\phi$ rows and first $k\phi$ columns.
On its first $k\phi$ rows, it is
a block diagonal matrix $\matB$
whose blocks along the diagonal are $\matB^1, \ldots, \matB^k$, respectively. Our matrix $\matA_1$ is
then the matrix product of $\matB$ and an matrix $\matD$, where $\matD$ is $m \times k(\phi-1)$ is formed by taking
the $k\phi \times k(\phi-1)$ matrix of Fact \ref{fact:dv} and padding it with $m-k\phi$ rows which are all zeros.
By Lemma \ref{lem:transform},
no $k/(2\epsilon)$ columns of $\matA_1$ contain a $k$-dimensional
subspace in their span which is a $(1+\epsilon)$-approximation to the best rank-$k$ approximationt to $\matA_1$.

Each $\matA_i$ for $i = 2,3,\dots,s-1$ is the $m \times m$ matrix of all zeros.
Finally, $\matA_s$ is the $m \times t$ matrix of all zeros with $t = n - (s-1)m - k(\phi-1)$, i.e.,
$
\matA =
\begin{pmatrix}
\matA_1 & {\bf 0}_{m \times m} & {\bf 0}_{m \times m} & \dots & {\bf 0}_{m \times m} & {\bf 0}_{m \times t}
 \end{pmatrix}.
$
By construction, each column of $\matA$ has at most $\phi$ non-zero entries, since the only non-zero columns
are those in $\matA_1$, and each column in $\matA_1$ has the form $\matB^z_{*,1} + \matB^z_{*,i}$ for a $z \in [k]$
and an $i \in [\phi]$, so it has at most $\phi$ non-zero entries.

\subsubsection{Main theorem}
In the Distributed Column Subset Selection
Problem in Definition~\ref{def:dcssp1}, a correct protocol should have each of the $s$ machines simultaneously output a matrix $\matC \in \mathbb{R}^{m \times c}$ with $c < n$ columns of $\matA$
such that
$$\FNormS{\matA-\matC \matC^{\dagger}\matA} \leq (1+\eps) \FNormS{\matA - \matA_k}.$$

\begin{theorem}\label{lower:cssp1}
Assume $2/\eps \leq \phi$ and $k\phi \leq \min(m,n)$.
Then, any, possibly randomized, protocol $\Pi$ which succeeds with probability at least $2/3$ in solving the Distributed Column Subset Selection Problem of Definition~\ref{def:dcssp1},
under
the promise that each column of $\matA$ has at most $\phi$ non-zero entries which are integer multiples of $1/(mn)^c$ and bounded in magnitude by $1$,
for a constant $c> 0$,
requires $\Omega(s\phi k (\log(mn))/\eps)$ bits of communication. If the word size is $\Theta(\log(mn))$, this implies the communication cost of the porticol
is $\Omega(s \phi k/\eps)$ words.
\end{theorem}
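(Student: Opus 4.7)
The plan is to reduce from learning a uniformly random element of a large combinatorial net. Invoke Corollary~\ref{cor:net} with parameters $n=\phi$, $t=k/(6\varepsilon)$, and $M=1/(mn)^c$ to obtain a family $\mathcal{N}$ of $k$-tuples $T=(\matB^1,\ldots,\matB^k)$ of $\phi\times\phi$ rounded orthonormal matrices of size $|\mathcal{N}|=e^{\Theta(k\phi\log(mn)/\varepsilon)}$. Draw $T$ uniformly from $\mathcal{N}$ and use it to build the hard instance of Section~\ref{sec:hard2}: server~$1$ holds $\matA_1=\matB\matD$, where $\matB$ is block-diagonal with the $\matB^z$ along the diagonal (padded with zeros to size $m\times m$) and $\matD$ is the (padded) Deshpande--Vempala template, while every other server $i\geq 2$ holds the zero matrix. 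Each column of $\matA$ has at most $\phi$ nonzero entries, each an integer multiple of $M$ and bounded in magnitude by $1$, so the theorem's promise is satisfied.

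Next I argue that any correct protocol forces every server to identify $T$. Upon termination each machine holds a column subset $\matC$ of $\matA=\matA_1$ with $\FNormS{\matA-\matC\pinv{\matC}\matA}\leq(1+\varepsilon)\FNormS{\matA-\matA_k}$. Letting $\matL$ be the \emph{unrounded} block-diagonal orthogonal matrix whose rounding is $\matB$, Lemma~\ref{lem:transform} applied with this $\matL$ forces $|\matC|\geq k/(6\varepsilon)=t$. Let $r_z$ be the number of columns of $\matC$ coming from block $z$, so $\sum_z r_z\geq t$. Every column of block $z$ in $\matA_1$ has the form $\matB^z_{*,1}+\matB^z_{*,i+1}$; granting the decoder free knowledge of $\matB^z_{*,1}$ for every $z$ (a broadcast of $O(sk\phi)$ words, which is a lower-order term relative to the target bound) lets it subtract to recover $r_z$ distinct columns of $\matB^z$. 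Picking any $r'\leq r$ componentwise with $\sum_z r'_z=t$ and invoking Corollary~\ref{cor:net} at $r'$, no two distinct elements of $\mathcal{N}$ can simultaneously match $r'_z$ revealed columns in every block, so the revealed data pins down $T$ uniquely.

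Finally, the communication lower bound follows by a standard information-theoretic argument in the coordinator model. Let $\Pi^i$ denote the concatenated transcript of incoming and outgoing messages at machine $i$, and let $\hat T_i(\Pi^i)$ be the estimator of $T$ produced by the decoding procedure above. For $i\geq 2$, correctness gives $\Pr[\hat T_i=T]\geq 2/3$, so by Fano's inequality $H(T\mid\Pi^i)\leq 1+(1/3)\log_2|\mathcal{N}|$ and hence $H(\Pi^i)\geq I(\Pi^i;T)=\Omega(\log_2|\mathcal{N}|)=\Omega(k\phi\log(mn)/\varepsilon)$. Using $\mathbb{E}[|\Pi^i|]\geq H(\Pi^i)$ and summing over $i=2,\ldots,s$ (the total coordinator-model communication equals $\sum_i|\Pi^i|$) yields an expected total of $\Omega(sk\phi\log(mn)/\varepsilon)$ bits, i.e.\ $\Omega(sk\phi/\varepsilon)$ words at word size $\Theta(\log(mn))$.

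The hard part is bridging from the scalar approximation-quality guarantee to the per-block column information needed to decode $T$: Lemma~\ref{lem:transform} only lower-bounds the total number of selected columns and in principle lets the adversary concentrate nearly all of $\matC$ into a handful of blocks, which would leave other blocks essentially unidentified. What rescues the reduction is that Corollary~\ref{cor:net} was engineered to hold uniformly over every vector $r$ with $\sum_z r_z\leq t$; for any such $r'$ the net is already pairwise distinguishable, so even a worst-case uneven split of the $t$ revealed columns across the $k$ blocks is enough to isolate $T$ and extract the full $\Omega(k\phi\log(mn)/\varepsilon)$-bit information content per server.
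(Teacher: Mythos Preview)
Your proposal is correct and follows essentially the same approach as the paper: the same hard instance from Section~\ref{sec:hard2}, the same appeal to Lemma~\ref{lem:transform} to force $|\matC|\geq t$, the same decoding of the block matrices $\matB^z$ via Corollary~\ref{cor:net} after accounting for the first columns $\matB^z_{*,1}$, and the same per-server information lower bound summed over $i$. The only cosmetic differences are that you broadcast $W=(\matB^1_{*,1},\ldots,\matB^k_{*,1})$ and subtract its $O(sk\phi)$-word cost, whereas the paper conditions on $W$ and subtracts $H(W)$, and you invoke Fano's inequality where the paper uses the direct bound $I(\Pi^i;\matA_1\mid\mathcal{E})\geq \log_2|\mathcal{N}|-H(W)$; your final paragraph also makes explicit the reason the uniformity of Corollary~\ref{cor:net} over all $r$ with $\sum_z r_z\leq t$ handles uneven column allocations across blocks, a point the paper leaves implicit.
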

\begin{proof}
By construction of $\matA_1$, whenever $\Pi$ succeeds,
each of the $s$ machines outputs the same set $\matC$ of at least $t \geq k/(2\epsilon)$ columns of $\matA_1$.

Let $\Pi^i$ denote the transcript (sequence of all messages) between the coordinator and the $i$-th player.
We look at the information $\Pi^i$ reveals about $\matA_1$, given the event $\mathcal{E}$ that $\Pi$ succeeds, that is,
$I(\Pi^i ; \matA_1 \mid \mathcal{E})$.

A critical observation is that any column $\matC_{*, i}$ of $\matC$, is equal to $\matB^z_{*,1} + \matB^z_{*, i}$ for some $z \in [k]$,
where $\matB^z$ is as defined above (i.e.,
$\matA_1 = \matB \cdot \matD$). Hence, given $\matB^z_{*,1}$ and $\matC_{*,i}$, one can reconstruct $\matB^z_{*,i}$. Now, for any random variable $W$,
\begin{eqnarray}\label{eqn:infoNew}
I(\Pi^i ; \matA_1 \mid \mathcal{E}) \geq I(\Pi^i ; \matA_1 \mid \mathcal{E}, W) - H(W).
\end{eqnarray}
We let $W = (\matB^1_{*,1}, \ldots, \matB^k_{*,1})$, and observe that $H(W) \leq \log_2 ((2M)^{k\phi})$ since there are $2M$ choices for each coordinate
of each $\matB^z_{*,1}$. Hence, $H(W) = O(k \phi \log (mn))$. Note that $I(\Pi^i ; \matA_1 \mid \mathcal{E}, W) = H(\matA_1) = \log_2(|\mathcal{N}|)$, since
conditioned on the protocol succeeding, and given $W$, by Corollary \ref{cor:net} we can reconstruct $\matA_1$. By (\ref{eqn:infoNew}),
it follows that
$$(\Pi^i ; \matA_1 \mid \mathcal{E}) \geq \Theta(\phi k(\log(mn))/\epsilon) - O(k \phi \log(mn)) = \Omega(\phi k (\log(mn))/\epsilon).
$$
Note that if $Z$ is an indicator random variable that indicates that $\mathcal{E}$ occurs, we have $H(Z) \leq 1$, and so
$$
I(\Pi^i ; \matA_1) \geq I(\Pi^i; \matA_1 \mid \mathcal{E})\Pr[\mathcal{E}] - 1 = \Omega(\phi k (\log(mn))/\epsilon),
$$
since $\Pr[\mathcal{E}] \geq 2/3$.

It follows that
$$
\Omega(\phi k (\log(mn))/\epsilon) \leq I(\Pi^i ; \matA_1) \leq H(\Pi^i) \leq {\bf E}_{\matA_1, rand}[|\Pi^i|],
$$
where $|\Pi^i|$ denotes the length, in bits, of the sequence $\Pi^i$ of messages,
and $rand$ is the concatenation of the private randomness of all $s$ machines.
Note that the entropy of $\Pi^i$ is a lower bound on the expected encoding length of $|\Pi^i|$, by the Shannon coding theorem.
By linearity of expectation,
$$\sum_i {\bf E}_{\matA_1, rand} |\Pi^i| = \Omega(\phi k (\log(mn))/\epsilon),$$
which
implies there exists an $\matA_1$ and setting of $rand$ for which
$$\sum_i |\Pi^i| = \Omega(s\phi k(\log(mn))/\epsilon).$$
It follows that the total communication is $\Omega(s\phi k (\log(mn))/ \epsilon)$ bits.
\end{proof}



\begin{corollary}\label{lower:cssp2}
Assume $2/\eps \leq \phi$ and $k\phi \leq \min(m,n)$.
Then, any, possibly randomized, protocol $\Pi$ which succeeds with probability at least $2/3$ in solving the Distributed Column Subset Selection Problem - rank $k$ subspace version~(see Definition~\ref{def:dcssp2}),
 promise that each column of $\matA$ has at most $\phi$ non-zero entries which are integer multiples of $1/(mn)^c$ and bounded in magnitude by $1$,
for a constant $c> 0$,
requires $\Omega(s\phi k (\log(mn))/\eps)$ bits of communication. If the word size is $\Theta(\log(mn))$, this implies the total communication across
all $s$ machines is $\Omega(s \phi k/\eps)$ words.
\end{corollary}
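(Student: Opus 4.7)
The plan is to prove Corollary \ref{lower:cssp2} by a direct reduction to Theorem \ref{lower:cssp1}. The key observation is that the Distributed Column Subset Selection Problem of Definition \ref{def:dcssp2} is strictly harder than the version in Definition \ref{def:dcssp1}: any protocol solving the former automatically solves the latter at no extra communication cost.

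More concretely, suppose we have a protocol $\Pi$ that, under the promise on $\matA$ stated in the corollary, succeeds with probability at least $2/3$ in leaving on each machine a matrix $\matC \in \R^{m \times c}$ of columns of $\matA$ and an orthonormal $\matU \in \R^{m \times k}$ with $\matU \in \text{span}(\matC)$ satisfying
$$
\FNormS{\matA - \matC \pinv{\matC}\matA} \le \FNormS{\matA - \matU\matU\transp\matA} \le (1+\eps)\FNormS{\matA - \matA_k}.
$$
The first inequality in the chain is itself part of the guarantee of Definition \ref{def:dcssp2}, so $\matC$ alone already satisfies the guarantee of Definition \ref{def:dcssp1}. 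Hence, by simply discarding the matrix $\matU$ at the end of $\Pi$ on each machine (which incurs no communication), we obtain a protocol $\Pi'$ for the problem of Definition \ref{def:dcssp1} with the same success probability and the same communication cost as $\Pi$.

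Applying Theorem \ref{lower:cssp1} to $\Pi'$ under the same promise (each column of $\matA$ has at most $\phi$ non-zero entries, which are integer multiples of $1/(mn)^c$ and bounded in magnitude by $1$, with $2/\eps \le \phi$ and $k\phi \le \min(m,n)$), we conclude that $\Pi'$ must use $\Omega(s\phi k (\log(mn))/\eps)$ bits of communication, and therefore so must the original protocol $\Pi$. Under a word size of $\Theta(\log(mn))$, this is $\Omega(s\phi k/\eps)$ words, which is exactly the bound claimed. There is no real obstacle here beyond verifying that the hard instance constructed in Section \ref{sec:hard2} satisfies the sparsity and discretization promises required by Definition \ref{def:dcssp2}, which it does by construction.
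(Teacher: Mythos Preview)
Your proof is correct and follows essentially the same approach as the paper: both observe that any protocol solving the rank-$k$ subspace version (Definition~\ref{def:dcssp2}) is automatically a protocol for the basic column subset selection problem (Definition~\ref{def:dcssp1}), so the lower bound of Theorem~\ref{lower:cssp1} applies directly. Your write-up is in fact more detailed than the paper's one-line proof, spelling out explicitly why the guarantee on $\matC$ alone suffices.
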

\begin{proof}
As any such protocol is also a protocol for the Distributed Column Subset Selection Problem of Definition~\ref{def:dcssp1}~(the problem in Definition~\ref{def:dcssp2} is more general/difficult than the problem of Definition~\ref{def:dcssp1}), the proof follows immediately from Theorem \ref{lower:cssp1}.
\end{proof}

\end{document}